\title{Robust inference for the treatment effect variance in experiments using machine learning \thanks{\scriptsize Email: \href{mailto:asb712@nyu.edu}{alejandro.sanchez.becerra@emory.edu}. I would like to thank Abhishek Ananth, Xu Cheng, Juan Estrada, Wayne Gao, Guido Imbens, Toru Kitagawa, Francesca Molinari, Mikkel Plagborg-M{\o}ller, Ulrich M\"{u}ller, Megan Reed, J\"{o}rg Stoye, Mark Watson, and Kaspar Wuthrich, for helpful comments and suggestions. I would also like to thank seminar participants at the New York Econometrics Camp 2022, NYU, the Canadian Economic Association 2022, the International Association for Applied Econometrics Conference 2022, Syracuse, LACEA/LAMES 2022, Facebook/Meta Core Data Science Group, the CFE-CM Statistics Conference 2022, the European Winter Meeting of the Econometric Society 2022, and the Duke Microeconometrics Conference for the Class of 2020 and 2021.}}
\author{
Alejandro S\'{a}nchez-Becerra \vspace{-1em} \\ Emory QTM}
\date{First Version: February 12, 2022 \\ This version: \today \vspace{-1.5em}} 
\begin{document}

% Create un-numbered title page
\clearpage
\maketitle
\thispagestyle{empty}

\begin{abstract}

  \singlespacing    
       
    %!TEX root = ./main.tex

Experimenters often collect baseline data to study heterogeneity. I propose the first valid confidence intervals for the VCATE, the treatment effect variance explained by observables. Conventional approaches yield incorrect coverage when the VCATE is zero. As a result, practitioners could be prone to detect heterogeneity even when none exists. The reason why coverage worsens at the boundary is that all efficient estimators have a locally-degenerate influence function and may not be asymptotically normal. I solve the problem for a broad class of multistep estimators with a predictive first stage. My confidence intervals account for higher-order terms in the limiting distribution and are fast to compute. I also find new connections between the VCATE and the problem of deciding whom to treat. The gains of targeting treatment are (sharply) bounded by half the square root of the VCATE. Finally, I document excellent performance in simulation and reanalyze an experiment from Malawi.
	
 	\bigskip
	\noindent\textbf{Keywords:} Debiased machine learning, treatment effect heterogeneity, experiments, non-standard inference, variance decomposition \\
	\noindent\textbf{JEL Codes} C14, C21, C55, C90
	\medskip

\end{abstract}

% Start body of paper on new page, numbered 1
%\newpage
%\setcounter{page}{1}

\onehalfspacing
%\doublespacing
\selectfont

\newpage

%!TEX root = ./main.tex

\section{Introduction}

In recent years, there has been a rapid expansion of experiments to evaluate public policy programs and corporate initiatives. There is also more evidence that the effectiveness of a program can vary across individuals. For instance,  \cite{dizon2019parents} studies a population of low-income parents in Malawi with large misperceptions about their children's school performance. She finds that a simple intervention can bridge these gaps and that the Conditional Average Treatment Effect (CATE) varies by children's initial scores. In practice, even though researchers collect baseline surveys with many characteristics, the CATE is typically estimated via regressions with one or two interactions, thus underutilizing the full set of variables. The promise of leveraging the vast and readily available baseline data has sparked more applications of supervised machine learning  \citep{crepon2021cream,davis2020rethinking,deryugina2019mortality}. Methods such as LASSO, neural networks, random forests, or boosting are data-driven and allow for more variables and flexibility.

In this paper, I focus on the unconditional variance of the CATE, the VCATE, which measures the dispersion of treatment effects predicted by a set of baseline characteristics. The VCATE has a clear interpretation, even if the CATE is nonlinear or depends on many characteristics.  \cite{chernozhukov2020generic} and \cite{ding2019decomposing} separately propose estimators with a misspecification robust interpretation, whereas  \cite{levy2021fundamental} propose an efficient estimator. Despite these recent advances, there are currently no valid confidence intervals for the VCATE. In fact, \cite{levy2021fundamental} study the performance of confidence intervals based on the efficient influence function, i.e., the conventional way. They find that coverage degrades near the boundary, reaching a low of 32\% in simulations. They speculate that poor coverage is due to the degeneracy of the efficient influence function when the VCATE is zero. As a result, conventional guarantees for $\sqrt{n}-$asymptotic normality do not apply to this part of the parameter space. Moreover, a VCATE close to zero is economically meaningful because it could reflect null effects, low effect heterogeneity, or irrelevant covariates. For such situations, which are common in practice, conventional approaches to inference could be misleading. 

This paper provides fresh insights regarding the VCATE and proposes a solution for inference in experiments. I propose (a) novel ways to interpret the VCATE for decision-making by deriving sharp bounds on the population gains of personalized treatment assignment, (b) novel estimators of the VCATE that are both misspecification-robust and efficient, combining the best features of previous approaches, and (c)  novel confidence intervals that are shape-adaptive and fast to compute.

I break down why conventional confidence intervals have incorrect size. I show that the boundary inference problem can manifest even when the CATE is linear and univariate. I solve the problem for the linear case by proposing adaptive confidence intervals that meet the high-level conditions outlined in \cite{andrews2020generic}. I then show that these can be readily extended to the class of nonlinear models in \cite{chernozhukov2020generic} that combine regression adjustments and a machine learning first stage. In addition to providing new confidence intervals, my paper has novel implications for estimation by showing that only a subset of these multi-step estimators is efficient. Adaptive inference and regression adjustments work well for various predictive models under weak assumptions.\footnote{While I specialize my results to the VCATE, my inference approach relies on more general principles: I use knowledge of the limiting distribution function, conditional on the cross-fitted estimates. Correct coverage follows from verifying high-level assumptions that can be satisfied by a wide array of machine learning method used in the first step. In principle, my approach could extend to other non-standard inference problems.} In the fully nonparametric case, I use a conservative procedure with valid coverage over multiple sample splits.\footnote{This adjustment is in the spirit of \citep{chernozhukov2020generic}, who propose robust t-tests assuming a conditionally normal distribution. Their results are not directly applicable here due to the boundary inference problem. However, I apply the principles behind their ``median-parameter'' confidence intervals to my adaptive intervals.} I derive the local power curve for the associated tests of homogeneity and their relationship to the tests in \cite{crump2008nonparametric} and \cite{ding2019decomposing}. I also propose confidence intervals for settings with cluster dependence.

I document excellent root mean square error (RMSE) performance and coverage in simulations using LASSO, even in high dimensions. I benchmark my multi-step approach against a two-step debiased machine learning estimator. As predicted by theory, all approaches are asymptotically normal, efficient, and have good coverage in highly heterogeneous designs. However, when the VCATE is zero or close to zero, coverage of two-step alternatives can be as low as 45\%. By contrast, my adaptive intervals produce coverage at the intended 95\% level and better RMSE at all regions of the parameter space. I study the robustness of the multi-step approach in both the theory and simulations. I consider situations where the predictive component is misspecified or slow to converge. I also discuss issues related to uniform vs. pointwise coverage.

%I construct a new parameter, defined as $\sqrt{VCATE}$ divided by the standard deviation of each outcome at baseline (SD). 

I apply my approach to data from \cite{dizon2019parents}, an information experiment with low-income parents in Malawi who had at least two school-age children. The intervention redesigned the way in which parents received information about their children's school performance. The endline survey measured parental beliefs about student grades and asked parents to allocate tickets to a scholarship between their children. \cite{dizon2019parents} presented graphs with a non-parametric CATE by baseline test scores, which had an approximately linear shape, and separately tested for significance using a regression with an interaction. I use LASSO to compute the VCATE for the two outcomes (parental beliefs and lottery allocations) by different characteristics of students, parents, and households.  To make the results interpretable, I focus on the standard deviation of the CATE, i.e., $\sqrt{VCATE}$, and normalize it by the standard deviation of the outcome in the control group.

My approach allows us to quantify the magnitude of effect heterogeneity. I find that the treatment effect heterogeneity explained by test scores is equivalent to 40\% of the standard deviation (SD) of the beliefs of the control group, and 16\% of the SD of the control group lottery allocation. I also find that the effect heterogeneity collectively explained by other student variables (grade, age, gender, attendance, and educational expenditures) is comparable to 11\% of the SD of beliefs in the control group. The VCATE of beliefs by student variables is significant at the 5\% level, but the VCATE of lottery outcomes by student variables is not. The combined VCATE associated with student scores and 12 other key characteristics has a similar value to the VCATE with only scores. Despite being conservative, the intervals for the VCATE are short in length in this empirical example. Using my new welfare bounds $(-|ATE| + \sqrt{VCATE + ATE^2})/2$, I predict that targeted interventions using the baseline covariates have a maximum added benefit of 7.9\% SD and 7.4\% SD (standard deviations of the outcome for the control group) on beliefs and lottery allocations, respectively.

Researchers should focus on the VCATE because it is a model-free quantity with good properties: it is well-defined even if the CATE is continuous or discrete, and it weakly increases when researchers add more covariates to their analysis. Researchers can test for homogeneity by evaluating whether confidence intervals for the VCATE include zero. In addition to testing, by quantifying the VCATE, researchers can compare the magnitude of heterogeneity relative to a benchmark, such as the variance at baseline, the VCATE for different covariates, or experiments in other sites.

\subsection{Contribution}

My first main contribution is to show that the VCATE provides a bound for the welfare gains of policy targeting. A policymaker might decide to use the information from the CATE to design personalized treatment recommendations  \citep{manski2004statistical,athey2021policy,kitagawa2018should,mbakop2021model}. One can measure utilitarian welfare by computing the expected outcome under different policies. Such policies can be further constrained to a class that respects budget limits, incentive compatibility, or fairness considerations \citep{viviano2020fair,sun2021empirical}. I show that the difference in mean outcomes between a targeted policy and a non-targeted policy using only the average treatment effect (ATE), is bounded by $\sqrt{VCATE}/2$. For instance, under homogeneity ($VCATE = 0$), there are no gains from targeting. I show that this bound holds in the population regardless of the choice of policy class and the underlying distribution. Furthermore, the bound is \textit{sharp} in the sense that it holds exactly for at least one policy and distribution.

The proposed bound on utilitarian welfare communicates information to practitioners about whether a targeting exercise is even worth pursuing, without needing to solve the targeting problem itself. The VCATE can be a supplemental quantity reported in regression analyses, or a benchmark for analysts choosing the optimal policy. If the VCATE is very low, practitioners may consider expanding the set of covariates in the analysis. To derive the bound, I use a constructive approach to solve the most adversarial distribution. I also prove a more general bound $(-|ATE| + \sqrt{VCATE + ATE^2})/2$, and show that the distribution that leads to a maximum welfare gain is one where the CATE has binary support and mean zero. The gains from targeting easily diminish if the value of the ATE is relatively higher than the VCATE.

My second contribution is related to efficient estimation and robust inference. New theory is required here because of a unique feature of the VCATE: the efficient influence function is degenerate when the CATE is homogeneous \citep{levy2021fundamental}. Classical results by \cite{newey1990semiparametric} show that any regular, efficient estimator can be decomposed as $\frac{1}{n}\sum_{i=1}^n \varphi_i + R_n$, where $n$ is the sample size, $\{\varphi_i\}_{i=1}^{n}$ are a set of i.i.d.  mean-zero influence functions, and $R_n$ is a residual with higher-order terms that are $o_p(n^{-1/2})$.\footnote{Many standard estimators can achieve this property, e.g., the ``debiased machine learning'' estimator \citep{chernozhukov2018double} or the targeted maximum likelihood estimator in \cite{levy2021fundamental}.} Conventional approaches assume that $\mathbb{V}(\varphi) > 0$ and in this case, the estimation error converges at $\sqrt{n}$ to $\mathcal{N}(0,\mathbb{V}(\varphi_i))$ by the CLT. However, when the VCATE is zero, $\mathbb{V}(\varphi_i) = 0$ as well. Hence, the limiting distribution is dominated by the higher terms in $R_n$, which may not be asymptotically normal. Therefore, while $\sqrt{n}-$estimation is still possible, t-tests that plug in an estimate of $\mathbb{V}(\varphi)$ may have incorrect coverage. By contrast, other common quantities such as the average treatment effect (ATE) or the local average treatment effect (LATE) do not have this problem because they satisfy $\mathbb{V}(\varphi_i) > 0$ uniformly  \citep{chernozhukov2018double}.

I start by analyzing a simple two step estimator, assuming that the CATE is linear in covariates and can be estimated from a regression. I show that the limiting distribution of the VCATE estimator can be written as a linear combination of a Chi-square that converges at $n$-rate and a normal distribution that converges at $\sqrt{n}-$rate. The weights are determined by the value of the VCATE, which means that the shape of the distribution changes depending on the region of the parameter space. At the boundary, it behaves like a rescaled chi-square, is $O_p(n^{-1})$, and confidence intervals with normal critical values will have incorrect coverage. For values of the VCATE bounded away from zero, the distribution is asymptotically normal as in the classical results.

In the linear case, I construct adaptive confidence intervals that account for the higher terms of the distribution. I apply the framework of \cite{andrews2020generic} to show that this produces uniform, exact coverage when the linear model is correctly specified.\footnote{This type of strategy has proven effective to deal with other non-standard problems where the shape of the limiting distribution depends on an unknown parameter, such as the AR coefficient in a time series, the effect parameter under weak instruments, or the quasi-likelihood ratio test for nonlinear regression \citep{andrews2020generic}.} The intervals are fast to compute because the expressions are all analytic. When there is a single covariate, I also show that a homogeneity test that evaluates whether zero is contained in the confidence intervals is algebraically identical to (i) a test of whether the interaction in the regression model is equal to zero, and (ii) the single-covariate homogeneity test of \cite{crump2008nonparametric}. The test is also asymptotically equivalent to \cite{ding2019decomposing}. However, these other tests only apply to series estimators and are not nested with mine in the multivariate and non-parametric cases.

I extend my results to the class of nonlinear models proposed by \cite{chernozhukov2020generic}. \cite{chernozhukov2020generic} showed that in experiments with known assignment probabilities, their models produce a meaningful pseudo-VCATE even if the functional form is misspecified. The pseudo-VCATE is non-negative, weakly lower than the VCATE, and converges to the true value under mild conditions on the estimated CATE.\footnote{This monotonicity property means that in experiments the pseudo-VCATE will not falsely detect heterogeneity, even if the machine learning stage is misspecified.} \cite{chernozhukov2020generic} argue that the pseudo-VCATE might be of independent interest as a measure of model fit.\footnote{ \cite{ding2019decomposing} also define a similar pseudo-VCATE based on randomization inference.} They describe a three-step estimator with a machine learning/prediction first-stage, a regression second-stage, and a sample-variance third stage. Related multi-step estimators have also been considered in other work \citep{guo2021machine}.

To the best of my knowledge, there are no existing asymptotic results for the multi-step VCATE estimator proposed in \cite{chernozhukov2020generic}. I fill in that gap by proving two sets of results. First, I show that all the estimators in their class converge to the true VCATE at least at $\sqrt{n}$-rate, are $o_p(n^{-1/2})$ at the boundary (as in the simple linear model), and have the convenient property that they are always non-negative.  This builds on the asymptotic expansion for the linear case I introduced above. Second, I prove that only a subset of the \cite{chernozhukov2020generic} estimators are efficient, i.e. converge at $\sqrt{n}-$ to an average of i.i.d. efficient influence functions. The key ingredient is to prove a novel finite-sample equivalence result. I find that the first order conditions of the regression step and the bias-correction component of the VCATE influence function are in fact identical, given a particular decomposition of the nuisance functions. The asymptotic results follow from fairly standard assumptions on convergence rates \citep{chernozhukov2018double,belloni2017program}. To get the limiting distribution, the only meaningful extra assumption is that the estimated CATE has bounded kurtosis (thin tails).

I show that extending the adaptive confidence intervals (CIs) to multi-step estimators is straightforward. The procedure randomly splits the data into subsets or \textit{folds} and estimates the nuisance functions and the VCATE on different folds. To compute the confidence intervals for a particular fold, the researcher can treat the second-stage regression as if the variables were given, and then construct the CIs as in the simple case. I construct median confidence intervals (CIs) to aggregate information across multiple folds. I show that the single fold procedure produces uniform, exact coverage for the pseudo-VCATE and point-wise, exact coverage for the VCATE for all points in the parameter space, at a nominal level $1-\alpha$. The multifold CIs have pointwise conservative coverage. 

Furthermore, the probability that the true VCATE is below the confidence interval bounds is uniformly bounded by $\alpha$ in large samples. This result applies to the single and multifold CIs and does not require that the first-stage estimates to converge. Instead it relies on the fact that in experiments the pseudo-VCATE is weakly lower than the true VCATE. Tests for homogeneity (whether zero is contained in the CI) belong to this broader class of tests. Having uniform size control for this class of one-sided tests means that my tests of homogeneity are robust.

This paper is also related to a growing literature on debiased-machine learning \citep{chernozhukov2016locally,chernozhukov2022automatic,belloni2014inference,belloni2017program,chernozhukov2018double}, semiparametric efficiency \citep{newey1990semiparametric}, uniform inference for non-standard problems \citep{andrews2020generic}, and tests of treatment effect homogeneity \citep{ding2019decomposing,crump2008nonparametric,heckman1997making,bitler2017can}. My approach combines results from these literatures by addressing a boundary inference problem with a machine learning stage, and applying techniques of uniform inference. A related literature also focuses on confidence intervals around point-predictions of the CATE \citep{athey2019generalized,semenova2021debiased}, rather than overall measures of dispersion.  

Section \ref{overviewframework} provides key definitions, introduces the welfare bound, and presents a version of the adaptive confidence intervals for the univariate regression case. Section \ref{adaptiveinference} frames the inference problem in a more general setting, and extends the adaptive confidence intervals for VCATE estimation with a machine learning first stage. Section \ref{largesampletheory} presents the large sample theory. Section \ref{simulations} introduces the simulations. Section \ref{sec:empirical_example} applies my approach to an empirical example from Malawi. Section \ref{conclusion} concludes.

%!TEX root = ./main.tex

\section{Overview of framework}
\label{overviewframework}
Consider a program evaluation setting in which an individual is assigned to either a treatment $(D=1)$ or a control group $(D= 0)$. The outcome of interest $Y$ depends on the treatment status. I denote the potential outcome under treatment and control status as $Y_1$ and $Y_0$, respectively, and the treatment effect as $Y_1 - Y_0$. The conditional average treatment effect (CATE) given covariates $X$ is defined as
$$ \tau(x) := \mathbb{E}[Y_1 - Y_0 \mid X = x],$$
and the average treatment effect (ATE) is defined as $\tau_{av} := \mathbb{E}[Y_1 - Y_0]$. This paper proposes an estimator of the \textit{variance of the CATE} (VCATE) defined as
$$ V_{\tau} := \mathbb{V}(\tau(X)).$$
The variance $V_\tau$ measures the dispersion of treatment effects that can be attributed to observable characteristics $X$. The value of $V_{\tau}$ depends on the choice of covariates. To understand how different covariates might impact the VCATE, let $V_{\tau}' = \mathbb{V}(\mathbb{E}[Y_1 - Y_0 \mid X'])$ be the VCATE for a different set of covariates $X'$. 
\begin{lem} If $X$ is $X'$-measurable, then $V_{\tau} \le V_{\tau}' \le \mathbb{V}(Y_1-Y_0)$.
	\label{lem:monotonicity}	
\end{lem}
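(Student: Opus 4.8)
The plan is to recognize both inequalities as instances of a single fact: the variance of a conditional expectation of a fixed random variable is monotone in the information on which we condition. Write $\Delta := Y_1 - Y_0$ for the individual treatment effect, and read the hypothesis ``$X$ is $X'$-measurable'' as the nesting of $\sigma$-algebras $\sigma(X) \subseteq \sigma(X')$, i.e. $X = g(X')$ for some measurable $g$, so that $X'$ carries weakly more information than $X$. Under this reading, $\tau(X) = \mathbb{E}[\Delta \mid \sigma(X)]$ and $\mathbb{E}[Y_1 - Y_0 \mid X'] = \mathbb{E}[\Delta \mid \sigma(X')]$, while the full treatment effect itself is $\Delta = \mathbb{E}[\Delta \mid \mathcal{F}]$ for $\mathcal{F}$ the ambient $\sigma$-algebra. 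The three quantities $V_{\tau}$, $V_{\tau}'$, and $\mathbb{V}(\Delta)$ are therefore the variances of conditional expectations of the same $\Delta$ taken along the increasing chain $\sigma(X) \subseteq \sigma(X') \subseteq \mathcal{F}$.

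The key step is the law of total variance applied to a nested pair $\mathcal{G} \subseteq \mathcal{H}$. By the tower property $\mathbb{E}[\Delta \mid \mathcal{G}] = \mathbb{E}[\,\mathbb{E}[\Delta \mid \mathcal{H}] \mid \mathcal{G}\,]$, so decomposing the variance of $\mathbb{E}[\Delta \mid \mathcal{H}]$ by conditioning on $\mathcal{G}$ gives
$$\mathbb{V}\big(\mathbb{E}[\Delta \mid \mathcal{H}]\big) = \mathbb{V}\big(\mathbb{E}[\Delta \mid \mathcal{G}]\big) + \mathbb{E}\big[\mathbb{V}(\mathbb{E}[\Delta \mid \mathcal{H}] \mid \mathcal{G})\big].$$
Since the second term is an expectation of a conditional variance, it is nonnegative, yielding $\mathbb{V}(\mathbb{E}[\Delta \mid \mathcal{G}]) \le \mathbb{V}(\mathbb{E}[\Delta \mid \mathcal{H}])$. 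I would then apply this twice: once with $(\mathcal{G}, \mathcal{H}) = (\sigma(X), \sigma(X'))$ to obtain $V_{\tau} \le V_{\tau}'$, and once with $(\mathcal{G}, \mathcal{H}) = (\sigma(X'), \mathcal{F})$ to obtain $V_{\tau}' \le \mathbb{V}(\Delta) = \mathbb{V}(Y_1 - Y_0)$, chaining the two into the claimed string of inequalities.

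There is no deep obstacle here; the result is essentially the monotonicity of explained variance under refinement of the conditioning set. The only points requiring care are (i) making the measurability interpretation of the hypothesis explicit, so that the chain of $\sigma$-algebras is genuinely nested, and (ii) a mild regularity condition, namely $\mathbb{E}[\Delta^2] < \infty$, ensuring all three variances are finite and the decomposition is valid. One could alternatively phrase the argument without $\sigma$-algebras by writing $X = g(X')$ and invoking conditional Jensen on the convex map $t \mapsto t^2$, but the total-variance decomposition is the cleanest route and makes the nonnegative gap term $\mathbb{E}[\mathbb{V}(\mathbb{E}[\Delta \mid \sigma(X')] \mid \sigma(X))]$ explicit, which also clarifies exactly when each inequality is strict.
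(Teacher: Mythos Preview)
Your proposal is correct and takes essentially the same approach as the paper: both apply the tower property to identify $\tau(X) = \mathbb{E}[\tilde{\tau}(X') \mid X]$ and then the law of total variance to extract the nonnegative gap term, with the upper bound obtained by the same decomposition using the full $\sigma$-algebra (the paper phrases this as ``setting $X' = Y_1 - Y_0$''). Your $\sigma$-algebra framing is slightly more explicit, but the argument is identical.
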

Lemma \ref{lem:monotonicity} shows that the VCATE has the following monotonicity property: if the researcher adds more covariates to the analysis, or breaks down an existing covariate into more categories, then the VCATE will be weakly larger.

The propensity score, $p(x)$, is defined as follows
\begin{equation}
\label{eq:pscore}
p(x):= \mathbb{P}(D = 1 \mid X = x).
\end{equation}
I restrict attention to experimental settings where $p(x)$ is known. The CATE can be identified under further assumptions.
\begin{assump}\text{ }(i) Stable unit treatment value assumption (SUTVA), $Y = Y_1D + (1-D)Y_0$ (ii) Strong overlap, there is a constant $\delta \in (0,1/2)$ such that $\mathbb{P}(\delta < p(X) < 1-\delta) = 1$, (iii) Selection on observables, $Y_1,Y_0 \ \indep \ D  \mid X$.	
	\label{assump:unconfound_sutva}
\end{assump}
Assumption \ref{assump:unconfound_sutva}.(i) formalizes the idea that the researcher can only observe either $Y_1$ or $Y_0$, but not both, for any particular individual. Assumption (ii) holds in randomized controlled trials with treatment probabilities bounded away from $\{0,1\}$. Assumption (iii) states that an individual's treatment probability depends on $X$ but not their potential outcomes. Let $\mu_d(x)$ be the conditional mean of $Y$ given $X$ and a fixed value of $d \in \{0,1\}$,
\begin{equation}
\label{eq:conditionalmean}
\mu_d(x) := \mathbb{E}[Y \mid D = d, X = x].
\end{equation}
Under Assumption \ref{assump:unconfound_sutva}, $\mathbb{E}[Y_d \mid X =x ] = \mu_d(x)$, and hence $\tau(x) = \mu_1(x) - \mu_0(x)$. This means that the VCATE is identified, with $V_{\tau} = \mathbb{V}(\mu_1(X)-\mu_0(X))$.

\subsection{The VCATE and policy targeting}
Practitioners can use estimates of $\tau(X)$  to decide whom to treat in future interventions \citep{athey2021policy,kitagawa2018should,manski2004statistical}. Program managers can target the treatment recipients based on their initial covariates. However, whether targeting can substantially improve average outcomes depends on the dispersion of $\tau(x)$. I show that a simple function of the VCATE bounds the marginal gains of targeting.

Let $\gamma$ denote the joint distribution of $(X,Y_1,Y_0)$, $\mathcal{X}$ a set containing the support of $X$, $\tau_\gamma(x) := \mathbb{E}_\gamma[Y_1 - Y_0 \mid X =x]$ the CATE given $\gamma$. A function which maps $x$ to a probability of treatment $\pi(x)$ is known as a statistical allocation rule \citep{manski2004statistical}. Furthermore, I denote the set of all possible allocation rules by $\Pi$, which contains all functions $\{\pi:\mathcal{X} \to [0,1]\}$.  The set $\Pi$ includes many well-known assignment rules. For instance, it includes the  ``non-targeted'' policy which assigns everyone to treatment if $\mathbb{E}_\gamma[Y_1] > \mathbb{E}_\gamma[Y_0]$ and to the control group otherwise. Moreover, the average outcome under rule $\pi$  is $\mathbb{E}_\gamma[\pi(X)Y_1 + (1-\pi(X))Y_0]$, and the marginal benefit compared to the non-targeted policy is defined as $\mathcal{U}_\gamma(\pi)  := \mathbb{E}_\gamma[\pi(X)Y_1 + (1-\pi(X))Y_0] - \max\{\mathbb{E}_\gamma[Y_1],\mathbb{E}_\gamma[Y_0]\}$. 

\begin{thm}
	Let $\Gamma$ denote the set of distributions such that $\mathbb{V}_\gamma[\tau_\gamma(X)] = V_{\tau}$. For all $\gamma \in \Gamma$ and $\pi \in \Pi$,
	$$ \mathcal{U}_\gamma(\pi) \le \underbrace{\left(\begin{array}{c} \text{Welfare} \\ \text{Optimal} \\ \text{Targeting} \end{array}\right)}_{\sup_{\pi \in \Pi} \mathbb{E}_\gamma[\pi(X)Y_1 + (1-\pi(X))Y_0]} - \underbrace{\left(\begin{array}{c} \text{Welfare} \\ \text{No} \\ \text{Targeting} \end{array}\right)}_{\max\{\mathbb{E}_\gamma[Y_1],\mathbb{E}_\gamma[Y_0]\}} \le \frac{1}{2}\sqrt{V_{\tau}}.$$
	The bound is sharp in the sense that $\mathcal{U}_\gamma(\pi) = \frac{1}{2}\sqrt{V_{\tau}}$ for at least one $\gamma \in \Gamma$ and $\pi \in \Pi$.
	\label{thm:simple_regretbound}
\end{thm}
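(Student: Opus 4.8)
The plan is to reduce the whole statement to a one-dimensional moment inequality about the random variable $T := \tau_\gamma(X)$, which under $\gamma \in \Gamma$ has mean $\tau_{av} := \mathbb{E}_\gamma[T]$ and variance $V_\tau$. First I would dispose of the left-hand inequality, which is definitional: since $\mathcal{U}_\gamma(\pi) = \mathbb{E}_\gamma[\pi(X)Y_1 + (1-\pi(X))Y_0] - \max\{\mathbb{E}_\gamma[Y_1],\mathbb{E}_\gamma[Y_0]\}$, the middle quantity in the display is exactly $\sup_{\pi \in \Pi}\mathcal{U}_\gamma(\pi)$, so $\mathcal{U}_\gamma(\pi)$ is bounded above by it for every $\pi$. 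All the content is in the right-hand inequality.

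Next I would put the optimal-targeting welfare in closed form. Writing $\mathbb{E}_\gamma[\pi(X)Y_1 + (1-\pi(X))Y_0] = \mathbb{E}_\gamma[Y_0] + \mathbb{E}_\gamma[\pi(X)(Y_1-Y_0)]$ and using that $\pi$ is $X$-measurable, the law of iterated expectations gives $\mathbb{E}_\gamma[\pi(X)(Y_1-Y_0)] = \mathbb{E}_\gamma[\pi(X)\,T]$. Maximizing pointwise over $\pi(x)\in[0,1]$ selects the threshold rule $\pi^\star(x) = \mathbf{1}\{\tau_\gamma(x) > 0\}$, so the optimal welfare equals $\mathbb{E}_\gamma[Y_0] + \mathbb{E}_\gamma[T^+]$ with $T^+ := \max\{T,0\}$, while the no-targeting welfare is $\mathbb{E}_\gamma[Y_0] + \tau_{av}^+$. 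Hence the bracketed difference equals $\mathbb{E}_\gamma[T^+] - \tau_{av}^+$.

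The key simplification is then to use the identity $t^+ = (t+|t|)/2$, which collapses this difference to $\tfrac{1}{2}\bigl(\mathbb{E}_\gamma[|T|] - |\tau_{av}|\bigr)$. A single application of Cauchy--Schwarz (equivalently Jensen) gives $\mathbb{E}_\gamma[|T|] \le \sqrt{\mathbb{E}_\gamma[T^2]} = \sqrt{V_\tau + \tau_{av}^2}$, and subadditivity of the square root gives $\sqrt{V_\tau + \tau_{av}^2} - |\tau_{av}| \le \sqrt{V_\tau}$. Chaining these produces the general bound $(-|\tau_{av}| + \sqrt{V_\tau + \tau_{av}^2})/2$ and, a fortiori, the stated bound $\tfrac{1}{2}\sqrt{V_\tau}$.

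For sharpness I would trace which inequalities bind: Cauchy--Schwarz is tight iff $|T|$ is almost surely constant, and the square-root subadditivity step is tight iff $\tau_{av}=0$. Both hold simultaneously exactly when $T$ is supported on $\{-\sqrt{V_\tau},+\sqrt{V_\tau}\}$ with equal probability, i.e.\ a mean-zero binary CATE. I would exhibit one such $\gamma \in \Gamma$ --- for instance $X \sim \mathrm{Bernoulli}(1/2)$, $Y_0 \equiv 0$, and $Y_1 = \sqrt{V_\tau}\,(2\,\mathbf{1}\{X=1\}-1)$ --- verify that $\mathbb{V}_\gamma[\tau_\gamma(X)] = V_\tau$, and check that $\pi^\star$ attains $\mathcal{U}_\gamma(\pi^\star) = \tfrac{1}{2}\sqrt{V_\tau}$. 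The argument is short once the reduction is found; the main obstacle is conceptual rather than technical, namely recognizing that the positive-part welfare gain equals $\tfrac{1}{2}(\mathbb{E}_\gamma|T| - |\mathbb{E}_\gamma T|)$, which turns a policy-optimization problem into a transparent moment inequality. The only care needed in the sharpness step is confirming that the binary construction lies in $\Gamma$ while making every inequality in the chain an equality.
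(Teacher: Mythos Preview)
Your proof is correct and takes a genuinely different route from the paper's. The paper proves Theorem~\ref{thm:simple_regretbound} as a corollary of Theorem~\ref{thm:general_regretbound}, and proves the latter by parameterizing distributions via the conditional moments $(p_1,\tau_0,\tau_1,\omega_0,\omega_1)$ of the CATE on the events $\{T\le 0\}$ and $\{T>0\}$, setting up a constrained optimization, arguing that the optimum forces $\omega_0=\omega_1=0$ (binary support), and then solving a one-parameter first-order condition in $p_1$. Your approach bypasses the optimization entirely: the identity $t^+=(t+|t|)/2$ collapses the welfare gap to $\tfrac12(\mathbb{E}_\gamma|T|-|\tau_{av}|)$, after which Cauchy--Schwarz and $\sqrt{a+b}\le\sqrt a+\sqrt b$ finish the bound in two lines. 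You obtain the intermediate bound $\tfrac12(-|\tau_{av}|+\sqrt{V_\tau+\tau_{av}^2})$ of Theorem~\ref{thm:general_regretbound} for free along the way, and your sharpness construction coincides with the paper's extremal binary CATE (indeed, tracing equality in your Cauchy--Schwarz step for general $\tau_{av}$ yields $|T|\equiv\sqrt{V_\tau+\tau_{av}^2}$, which also proves Theorem~\ref{thm:general_regretbound}'s sharpness). Your argument is substantially shorter and more elementary; the paper's constructive optimization is longer but makes the structure of near-extremal distributions more explicit.
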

\begin{thm}
	Consider distributions where $\mathbb{E}_\gamma[\tau_\gamma(X)]= \tau_{av}$ and $\mathbb{V}_\gamma[\tau_\gamma(X)] = V_{\tau}$, then $U_\gamma(\pi)\le  \frac{1}{2}\left(-|\tau_{av}| + \sqrt{V_{\tau}+\tau_{av}^2}\right)$. This bound is sharp over this subset of distributions.
	\label{thm:general_regretbound}
\end{thm}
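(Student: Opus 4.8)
The plan is to reduce the welfare-gain functional to a one-dimensional problem about the scalar random variable $T := \tau_\gamma(X)$, which by hypothesis satisfies $\mathbb{E}_\gamma[T] = \tau_{av}$ and $\mathbb{V}_\gamma[T] = V_\tau$, and then to control a single absolute moment. First I would rewrite the value of an arbitrary rule $\pi$ via the law of iterated expectations: since
$$\mathbb{E}_\gamma[\pi(X)Y_1 + (1-\pi(X))Y_0] = \mathbb{E}_\gamma[Y_0] + \mathbb{E}_\gamma[\pi(X)\tau_\gamma(X)],$$
the integrand is linear in $\pi(x) \in [0,1]$, so pointwise maximization shows the optimal rule is the (measurable) threshold $\pi^*(x) = \mathbf{1}\{\tau_\gamma(x) > 0\} \in \Pi$, which attains welfare $\mathbb{E}_\gamma[Y_0] + \mathbb{E}_\gamma[T^+]$ with $T^+ := \max\{T,0\}$. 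The non-targeted benchmark equals $\mathbb{E}_\gamma[Y_0] + \tau_{av}^+$ because $\max\{\mathbb{E}_\gamma[Y_1], \mathbb{E}_\gamma[Y_0]\} = \mathbb{E}_\gamma[Y_0] + \max\{\tau_{av}, 0\}$. Since $\mathcal{U}_\gamma(\pi) \le \sup_{\pi' \in \Pi}\mathcal{U}_\gamma(\pi')$ for every $\pi$, it suffices to bound $\mathbb{E}_\gamma[T^+] - \tau_{av}^+$, exactly the reduction underlying Theorem \ref{thm:simple_regretbound}.

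Second, I would exploit the elementary identity $a^+ = \tfrac12(|a| + a)$. Applying it to both $T^+$ and $\tau_{av}^+$ and using $\mathbb{E}_\gamma[T] = \tau_{av}$, the mean terms cancel and the gain collapses to
$$\mathbb{E}_\gamma[T^+] - \tau_{av}^+ = \tfrac12\left(\mathbb{E}_\gamma[|T|] - |\tau_{av}|\right),$$
so everything reduces to bounding $\mathbb{E}_\gamma[|T|]$. By Jensen's inequality (equivalently Cauchy--Schwarz against the constant function), $\mathbb{E}_\gamma[|T|] \le \sqrt{\mathbb{E}_\gamma[T^2]}$, and since $\mathbb{E}_\gamma[T^2] = \mathbb{V}_\gamma[T] + (\mathbb{E}_\gamma[T])^2 = V_\tau + \tau_{av}^2$, I get $\mathbb{E}_\gamma[|T|] \le \sqrt{V_\tau + \tau_{av}^2}$. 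Combining the last two displays delivers the claimed bound $\tfrac12\bigl(-|\tau_{av}| + \sqrt{V_\tau + \tau_{av}^2}\bigr)$; note this generalizes Theorem \ref{thm:simple_regretbound}, which is the special case where the ATE constraint is dropped and the maximizing choice is $\tau_{av}=0$.

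For sharpness I would trace the inequalities back to their equality conditions. The only slack enters through Jensen, which binds precisely when $|T|$ is almost surely constant, i.e. when $T$ is supported on $\{+c, -c\}$. Setting $c := \sqrt{V_\tau + \tau_{av}^2}$ forces $\mathbb{E}_\gamma[T^2] = c^2$, and placing mass $p = \tfrac12(1 + \tau_{av}/c)$ on $+c$ and $1-p$ on $-c$ reproduces both moments; the constraint $V_\tau \ge 0$ guarantees $|\tau_{av}| \le c$, so $p \in [0,1]$ and the construction is valid. To realize this as a genuine law $\gamma$ of $(X,Y_1,Y_0)$, I would take $X$ to be this two-point variable, set $Y_0 \equiv 0$ and $Y_1 = X$ so that $\tau_\gamma(x) = x$, and then $\pi^*(x) = \mathbf{1}\{x>0\}$ attains equality throughout.

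I do not anticipate a genuine analytic obstacle: the argument is essentially a sharpened Jensen inequality applied to the positive-part functional, with the ATE constraint merely shifting the relevant second moment from $V_\tau$ to $V_\tau + \tau_{av}^2$. The only points that require care are verifying that the pointwise maximization in the first step is legitimate and yields an admissible $\pi^* \in \Pi$, and checking that the two-point construction simultaneously matches both the prescribed mean $\tau_{av}$ and variance $V_\tau$ while keeping $p$ a valid probability.
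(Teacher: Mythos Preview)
Your proof is correct and takes a genuinely different, considerably more elegant route than the paper. The paper parametrizes the law of $T=\tau_\gamma(X)$ by the moments $(p_1,\tau_0,\Delta,\omega_0^2,\omega_1^2)$ of its negative and positive parts, argues in Step~2 that the within-part variances $(\omega_0^2,\omega_1^2)$ must vanish at the optimum (so the extremal law is two-point), and then in Step~3 solves a constrained concave maximization in $p_1$ via first-order conditions, treating the cases $\tau_{av}\le 0$ and $\tau_{av}>0$ separately. Your identity $a^+=\tfrac12(|a|+a)$ collapses all of this to the single line $\mathbb{E}[T^+]-\tau_{av}^+=\tfrac12(\mathbb{E}|T|-|\tau_{av}|)$, after which Cauchy--Schwarz $\mathbb{E}|T|\le\sqrt{\mathbb{E}[T^2]}$ is the only inequality invoked and sharpness follows immediately from its equality case. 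Your symmetric two-point law on $\{\pm c\}$ with $c=\sqrt{V_\tau+\tau_{av}^2}$ and mass $p=\tfrac12(1+\tau_{av}/c)$ is exactly the extremizer the paper obtains after its longer calculation (their $\tau_0^*=-c$, $\tau_1^*=c$, $p_1^*=p$). What the paper's approach buys is an explicit picture of why binary support is optimal even before knowing the answer; what yours buys is a two-line proof that handles both signs of $\tau_{av}$ uniformly and makes the role of the second moment $V_\tau+\tau_{av}^2$ transparent. The only cosmetic point: in the degenerate case $V_\tau=\tau_{av}=0$ your formula for $p$ is $0/0$, but then $T\equiv 0$ and the bound is trivially $0$, so just note this separately.
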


Theorem \ref{thm:simple_regretbound} shows the VCATE provides a welfare bound over the superset of policy classes. Furthermore, any type of restrictions on $\Pi$ such as budget constraints or incentive compatibility will achieve utilitarian welfare gains that are weakly lower than $\frac{1}{2}\sqrt{V_\tau}$. The bound in Theorem \ref{thm:simple_regretbound} and the generalization in Theorem \ref{thm:general_regretbound} provide simple bounds on the prospective gains of targeting, without needing to solve for $\pi(x)$. The bounds are most informative when $V_{\tau}$ is low. For instance, when $V_{\tau} = 0$ there is no heterogeneity explained by the observables $X$ and therefore there are no gains from targeting.  However, the fact that the bound is sharp does not imply that it is always achievable for every $\gamma \in \Gamma$, and when $V_{\tau}$ is high it is still be necessary to optimize $\pi(x)$ to determine whether personalized offers are worthwhile.

Finding the bound in Theorem \ref{thm:simple_regretbound} relies on two important insights. On one hand, the optimal policy in $\Pi$ treats an individual if and only if $\tau(x) \ge 0$ \citep{kitagawa2018should}. Substituting the optimal policy, $\sup_{\pi \in \Pi} \mathcal{U}_\gamma(\pi)$ is equal to $ \mathbb{E}_\gamma[\max\{\tau_\gamma(X),0\}] - \max\{\mathbb{E}_\gamma[\tau_\gamma(X)],0\}$. On the other hand, to avoid optimizing over all $\gamma \in \Gamma$, I break the problem down into equivalence classes based on the moments of the negative, zero, and positive components of the CATE.  I use a constructive approach to derive the most ``adversarial'' distribution. The upper bound is achieved when the CATE has a binary support, which is partly why the bounds in Theorems \ref{thm:simple_regretbound} and \ref{thm:general_regretbound} have simple closed forms.

\begin{cor}
	\label{cor:invariance_transformation}
	Let $\kappa_1,\kappa_2 \in \mathbb{R}$ and define a new outcome $\tilde{Y} = \kappa_1 + \kappa_2 Y $. The maximum welfare gain for the transformed outcome is $\frac{|\kappa_2|}{2}(-|\tau_{av}| + \sqrt{V_{\tau} + (\tau_{av})^2})$. 
\end{cor}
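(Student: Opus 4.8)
The plan is to prove Corollary~\ref{cor:invariance_transformation} by showing that the affine transformation $\tilde Y = \kappa_1 + \kappa_2 Y$ acts on the relevant quantities in a completely transparent way, and then simply invoking Theorem~\ref{thm:general_regretbound} for the transformed outcome. The key observation is that both the welfare gain functional $\mathcal U_\gamma(\pi)$ and the moments $(\tau_{av}, V_\tau)$ transform homogeneously. First I would track how the potential outcomes change: under the transformation, the potential outcomes become $\tilde Y_d = \kappa_1 + \kappa_2 Y_d$ for $d \in \{0,1\}$, so the treatment effect rescales as $\tilde Y_1 - \tilde Y_0 = \kappa_2 (Y_1 - Y_0)$ (the additive constant $\kappa_1$ cancels). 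Consequently the transformed CATE is $\tilde\tau(x) = \kappa_2 \tau(x)$, the transformed ATE is $\tilde\tau_{av} = \kappa_2 \tau_{av}$, and the transformed VCATE is $\tilde V_\tau = \kappa_2^2 V_\tau$, since variance scales by the square of the multiplicative constant and is invariant to the additive shift.

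Next I would verify that the welfare gain itself scales by $|\kappa_2|$. The point is that $\mathcal U_\gamma(\pi)$ is built from a difference of mean outcomes, so the additive constant $\kappa_1$ appears identically in both the targeted and non-targeted terms and cancels. The multiplicative constant $\kappa_2$ factors out, but one must be careful about sign: when $\kappa_2 > 0$ the $\max$ operators are preserved, whereas when $\kappa_2 < 0$ multiplying flips inequalities so that $\max\{\mathbb E_\gamma[\tilde Y_1], \mathbb E_\gamma[\tilde Y_0]\}$ picks out the other branch. Tracking this carefully, the net effect is that the optimal-targeting-minus-no-targeting gap rescales by exactly $|\kappa_2|$ regardless of the sign of $\kappa_2$; concretely, using the reduced form $\sup_{\pi}\mathcal U_\gamma(\pi) = \mathbb{E}_\gamma[\max\{\tau_\gamma(X),0\}] - \max\{\mathbb{E}_\gamma[\tau_\gamma(X)],0\}$ noted after Theorem~\ref{thm:simple_regretbound}, one checks that replacing $\tau_\gamma$ by $\kappa_2 \tau_\gamma$ multiplies this expression by $|\kappa_2|$ (for $\kappa_2<0$ one uses $\max\{-a,0\} = \max\{a,0\} - a$ to reconcile the signs).

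Having established the scaling, the conclusion is immediate: applying the bound of Theorem~\ref{thm:general_regretbound} to the transformed outcome gives
$$ \mathcal U_\gamma(\pi) \cdot |\kappa_2|^{-1} = \tilde{\mathcal U}_\gamma(\pi) \le \tfrac{1}{2}\left(-|\tilde\tau_{av}| + \sqrt{\tilde V_\tau + \tilde\tau_{av}^2}\right) = \tfrac{1}{2}\left(-|\kappa_2 \tau_{av}| + \sqrt{\kappa_2^2 V_\tau + \kappa_2^2 \tau_{av}^2}\right),$$
and pulling $|\kappa_2|$ out of both terms on the right yields the claimed bound $\frac{|\kappa_2|}{2}(-|\tau_{av}| + \sqrt{V_\tau + \tau_{av}^2})$. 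Sharpness transfers automatically, since the adversarial distribution achieving equality in Theorem~\ref{thm:general_regretbound} for the transformed problem pulls back to one achieving equality here.

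I expect the main obstacle to be the bookkeeping around the sign of $\kappa_2$: the welfare functional is defined through $\max$ operators that are not linear, so one cannot naively factor out $\kappa_2$ when it is negative. The cleanest way around this is to work with the reduced-form expression for $\sup_\pi \mathcal U_\gamma(\pi)$ and use the identity $\max\{\kappa_2 a, 0\} = |\kappa_2|\max\{\operatorname{sgn}(\kappa_2)\, a, 0\}$ together with the symmetry of the bound's dependence on $\tau_{av}$ through $|\tau_{av}|$, which shows the final bound is insensitive to the sign flip. The degenerate case $\kappa_2 = 0$ is trivial, since then $\tilde Y$ is constant and both sides vanish.
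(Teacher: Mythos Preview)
Your proposal is correct and follows essentially the same approach as the paper: compute the transformed ATE $\tilde\tau_{av}=\kappa_2\tau_{av}$ and VCATE $\tilde V_\tau=\kappa_2^2 V_\tau$, apply Theorem~\ref{thm:general_regretbound} to the transformed outcome, and factor out $|\kappa_2|$. Your additional verification that the welfare functional itself scales by $|\kappa_2|$ (with careful handling of the sign of $\kappa_2$) is more detailed than what the paper provides but is not strictly needed, since the substitution into Theorem~\ref{thm:general_regretbound} already delivers the result directly; note also that your displayed equation $\mathcal U_\gamma(\pi)\cdot|\kappa_2|^{-1}=\tilde{\mathcal U}_\gamma(\pi)$ has the scaling reversed, though this does not affect the conclusion.
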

Corollary \ref{cor:invariance_transformation} shows that the welfare bound is invariant to location shifts in the outcome, and grows linearly with scale shifts. This result implies that transformations that change the sign, e.g. $\kappa_2 = -1$, do not change the value of the welfare bound. Consequently, the bound applies regardless of whether the welfare objective is to increase a desirable outcome or to decrease an undesirable outcome.

\subsection{Inference using regressions}
\label{sec:reg_inference}

Consider a simple situation where $X$ is real valued, the treatment $D$ is experimentally assigned with constant probability, and $U$ is a mean zero error term. The researcher runs the following linear regression,
\begin{equation}
\label{eq:regression}
Y = c_1 + c_2 X + \beta_1 D + \beta_2 D X + U, \quad \mathbb{E}[(1,X,D,DX)'U] = 0.
\end{equation}
Define the auxiliary quantities $\tau^*(x) := \beta_1 + \beta_2 x$ and  $V_x := \mathbb{V}(X)$. The pseudo-VCATE is defined as
\begin{equation} V_{\tau}^* := \mathbb{V}(\tau^*(X)) = \beta_2^2V_x.
\end{equation}	
The pseudo-VCATE has a close connection to the VCATE. If the linear model describes the conditional mean $\mu_d(x)$, then $\tau^*(x) = \tau(x)$ and $V_{\tau} = V_{\tau}^*$. For instance, in models with binary $X$, the functional form is correctly specified and $V_{\tau} = V_{\tau}^*$. For now, assume that the pseudo-VCATE and the VCATE coincide. In later sections, I analyze models that allow for misspecification.

Consider a sequence of distributions $\{\gamma_n\}_{n=1}^\infty \in \Gamma^{\infty}$. I index the regression coefficients and model variances by the sample size $n$ as $\beta_{2n}$ and $(V_{xn}, V_{\tau n})$, respectively. Define an estimator of the VCATE as $\widehat{V}_{\tau n} = \widehat{\beta}_{2n}^2\widehat{V}_{xn}$, where $\widehat{\beta}_{2n}$ is the least squares estimator of \eqref{eq:regression} and $\widehat{V}_{xn} :=\frac{1}{n}\sum_i X_i^2 -\left[\frac{1}{n}\sum_i X_i \right]^2$. With some algebraic manipulations the estimation error can be decomposed as
\begin{equation}
\label{eq:decomposition_simplevtaustar}
\widehat{V}_{\tau n}-V_{\tau n}^* = V_{xn}(\widehat{\beta }_{2n}-\beta_{2n})^2\left(\frac{\widehat{V}_{xn}}{V_{xn}}\right)+ 2\beta_{2n}V_{xn}(\widehat{\beta}_{2n}-\beta_{2n})\left(\frac{\widehat{V}_{xn}}{V_{xn}}\right) + \beta_{2n}^2V_{xn}\left(\frac{\widehat{V}_{xn}}{V_{xn}}-1\right).
\end{equation}	
To derive the asymptotic distribution we can apply the central limit theorem to individual components. For generality, I state joint convergence to a normal distribution as an assumption. This holds as a special case if the observations are i.i.d. and key moments of the distribution are bounded, but may also hold under other forms of dependence. I defer stating primitive conditions until Section \ref{dml_regression}.
	\begin{assump}
	\label{assump:clt_regcoef}
	There is a sequence of distributions $\{\gamma_n\}_{n=1}^\infty \in \Gamma^{\infty}$ with associated quantities $\{V_{xn},V_{\tau n}^*,\beta_{2n},\Omega_n \}_{n=1}^{\infty}$, which are related by the identity $V_{\tau n}^* = \beta_{2n}^2V_{xn}$, and satisfy the following properties: (i) $V_{xn} > 0$, (ii) $V_{\tau n}^*$ is contained in a bounded subset of $[0,\infty)$, and (iii) $\Omega_n$ is a positive definite matrix with eigenvalues bounded way from zero and a finite upper bound.  There is a sequence of estimators  $\{\widehat{V}_{xn},\widehat{V}_{\tau n},\widehat{\beta}_{2n},\widehat{\Omega}_n \}_{n=1}^{\infty}$ which satisfy 	$\widehat{V}_{\tau n} = \widehat{\beta}_{2n}^2\widehat{V}_{xn}$. As $n \to \infty$,  $\widehat{\Omega}_n \to^p \Omega_n$, and
	\begin{equation}
	\label{eq:clt_coefficients} \Omega_n^{-1/2}\sqrt{n}\begin{pmatrix} \sqrt{V_{xn}}(\widehat{\beta}_{2n} - \beta_{2n}) \\ \frac{\widehat{V}_{xn}}{V_{xn}} - 1 \end{pmatrix} \to^d Z_n \sim \mathcal{N}(0,I_{2 \times 2}).
	\end{equation}		
\end{assump}
The normalization by $V_{xn}$ is intended to align with the decomposition in \eqref{eq:decomposition_simplevtaustar}. The $2 \times 2$ matrix $\Omega_n$ is an estimator of the covariance matrix. I present Assumption \ref{assump:clt_regcoef} as a triangular array because it makes it easier to formalize discussions of uniform coverage over the parameter space. Assumption \ref{assump:clt_regcoef} allows for cases where $V_{\tau n}^*$ is arbitrarily close to or includes zero. Let $\Omega^{1/2}$ denote the Cholesky decomposition of a matrix $\Omega$. The estimator of the VCATE converges to the empirical process $G$, defined as
\begin{equation}
\label{eq:empiricalprocess}
G(n,V_{\tau}^*,\Omega,z,\zeta) := \frac{(e_1'\Omega^{1/2}z)^2}{n} + 2 \zeta \sqrt{\frac{V_{\tau}}{n}^*}(e_1'\Omega^{1/2}z)+ \frac{V_{\tau}^*}{\sqrt{n}}(e_2'\Omega^{1/2}z),
\end{equation}
where $z \in \mathbb{R}^2$, $\zeta \in \{-1,1\}$, $e_1 = [1,0]'$ and $e_2 = [0,1]'$.
\begin{lem}
	\label{lem:decomposition_remainder_lvci_reg}		
	Suppose that Assumption \ref{assump:clt_regcoef} holds, then $\widehat{V}_{\tau n}-V_{\tau n}^* = O_p\left( \max\left\{\frac{1}{n},\sqrt{\frac{V_{\tau n}^*}{n}}\right\} \right)$, and there exists a sequence of $\zeta_n \in \{-1,1\}$, such that
	\begin{equation}
	\label{eq:decomposition_remainder_lvci_reg}
	\widehat{V}_{\tau n}-V_{\tau n}^* = G(n,V_{\tau n}^*,\Omega_n,Z_n,\zeta_n) +  o_p\left(\frac{1}{n}\right) +  o_p\left(\sqrt{\frac{V_{\tau n}^*}{n}}\right)+ o_p\left(\frac{V_{\tau n}^*}{\sqrt{n }}\right).
	\end{equation}
	
\end{lem}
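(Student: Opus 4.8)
The plan is to begin from the \emph{exact} algebraic identity \eqref{eq:decomposition_simplevtaustar} and substitute the standardized statistic supplied by Assumption \ref{assump:clt_regcoef}, keeping the scale $V_{\tau n}^*$ visible throughout so the argument survives as $V_{\tau n}^* \downarrow 0$. I take $Z_n$ to be the finite-sample standardized vector on the left of \eqref{eq:clt_coefficients} (which is $O_p(1)$, and converges to $\mathcal{N}(0,I)$, by Assumption \ref{assump:clt_regcoef}). Inverting its definition gives the two building blocks $\sqrt{V_{xn}}(\widehat{\beta}_{2n}-\beta_{2n}) = n^{-1/2}e_1'\Omega_n^{1/2}Z_n$ and $\widehat{V}_{xn}/V_{xn} = 1 + n^{-1/2}e_2'\Omega_n^{1/2}Z_n$. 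Since $V_{\tau n}^* = \beta_{2n}^2 V_{xn}$ and $V_{xn}>0$, I set $\zeta_n := \mathrm{sign}(\beta_{2n})$ (with $\zeta_n:=1$ when $\beta_{2n}=0$), so that $\beta_{2n} = \zeta_n\sqrt{V_{\tau n}^*/V_{xn}}$; this isolates the sign and expresses every quantity through $V_{\tau n}^*$, $\Omega_n$, and $Z_n$ without ever dividing by $\beta_{2n}$.

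I then substitute these into the three summands of \eqref{eq:decomposition_simplevtaustar} and collect leading terms. The first summand $V_{xn}(\widehat{\beta}_{2n}-\beta_{2n})^2(\widehat{V}_{xn}/V_{xn})$ becomes $n^{-1}(e_1'\Omega_n^{1/2}Z_n)^2(1 + n^{-1/2}e_2'\Omega_n^{1/2}Z_n)$, whose leading part is the first term of $G$ and whose remainder carries an extra $n^{-1/2}$. The second summand reduces to $2\zeta_n\sqrt{V_{\tau n}^*/n}\,(e_1'\Omega_n^{1/2}Z_n)(1 + n^{-1/2}e_2'\Omega_n^{1/2}Z_n)$, matching the cross term of $G$ with the same extra-$n^{-1/2}$ remainder; the key simplification here is $\beta_{2n}V_{xn}(\widehat{\beta}_{2n}-\beta_{2n}) = \zeta_n\sqrt{V_{\tau n}^*/n}\,e_1'\Omega_n^{1/2}Z_n$. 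The third summand equals $V_{\tau n}^* n^{-1/2}(e_2'\Omega_n^{1/2}Z_n)$ exactly, reproducing the last term of $G$ with no remainder. Summing yields $G(n,V_{\tau n}^*,\Omega_n,Z_n,\zeta_n)$ plus the remainders.

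To control the remainders I use that $\Omega_n^{1/2}$ has operator norm bounded above (Assumption \ref{assump:clt_regcoef}(iii)) and $Z_n=O_p(1)$, so every bracketed product such as $(e_1'\Omega_n^{1/2}Z_n)^2 e_2'\Omega_n^{1/2}Z_n$ is $O_p(1)$ uniformly along the array. Hence the first remainder is $n^{-3/2}O_p(1)=o_p(n^{-1})$, the cross-term remainder is $\sqrt{V_{\tau n}^*/n}\cdot n^{-1/2}O_p(1)=o_p(\sqrt{V_{\tau n}^*/n})$, and there is no third remainder; this gives \eqref{eq:decomposition_remainder_lvci_reg}. For the rate, I read off the three $G$-scales $n^{-1}$, $\sqrt{V_{\tau n}^*/n}$, and $V_{\tau n}^*/\sqrt{n}$, and note $V_{\tau n}^*/\sqrt{n} = \sqrt{V_{\tau n}^*}\cdot\sqrt{V_{\tau n}^*/n}$, which by the boundedness of $V_{\tau n}^*$ in Assumption \ref{assump:clt_regcoef}(ii) is $O(\sqrt{V_{\tau n}^*/n})$; all scales thus collapse into $\max\{1/n,\sqrt{V_{\tau n}^*/n}\}$, so $\widehat{V}_{\tau n}-V_{\tau n}^* = O_p(\max\{1/n,\sqrt{V_{\tau n}^*/n}\})$.

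The main obstacle to watch is that this is a drifting-sequence (triangular-array) statement in which $\beta_{2n}$ may approach the boundary value $0$, so a naive delta-method expansion around a fixed positive $\beta_2$ is unavailable, since dividing by $\beta_{2n}$ would blow up. The device that avoids this is precisely the reparametrization $\beta_{2n}=\zeta_n\sqrt{V_{\tau n}^*/V_{xn}}$, which trades the possibly-vanishing coefficient for the explicit scale $\sqrt{V_{\tau n}^*}$ together with a bounded sign; everything else is bookkeeping of orders. The only real care required is that the $o_p$ bounds hold uniformly over the array, and for this the bounded eigenvalues of $\Omega_n$, the boundedness of $V_{\tau n}^*$, and $V_{xn}>0$ are exactly the hypotheses being invoked.
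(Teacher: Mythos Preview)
Your proof is correct and follows essentially the same route as the paper: expand $\widehat V_{\tau n}-V_{\tau n}^*$ via the algebraic identity \eqref{eq:decomposition_simplevtaustar}, reparametrize $\beta_{2n}=\zeta_n\sqrt{V_{\tau n}^*/V_{xn}}$ to avoid dividing by a possibly vanishing coefficient, substitute the standardized vector, and read off $G$ plus remainders of the stated orders. The only cosmetic difference is that you take $Z_n$ to be the finite-sample standardized statistic (what the paper calls $\widehat Z_n$), whereas the paper writes the exact expansion in $\widehat Z_n$ and then replaces $\widehat Z_n$ by the limiting normal $Z_n$ via $\widehat Z_n = Z_n + o_p(1)$; this is where the paper's extra $o_p(V_{\tau n}^*/\sqrt n)$ term comes from, which in your version is identically zero.
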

Lemma \ref{lem:decomposition_remainder_lvci_reg} shows that the limiting distribution of $\widehat{V}_{\tau n}$ is a linear combination of a Chi-square and a normal, whose weights depend on the value of $V_{\tau n}^*$. The relative magnitude of $V_{\tau n}^*$ determines the fit of the normal approximation. In the heterogeneous case, $V_{\tau n}^* \ge \delta > 0$, $\sqrt{n}G$ converges to a normal as $n \to \infty$ because the first term in \eqref{eq:empiricalprocess} is asymptotically negligible. However, when $V_{\tau n}^* = 0$, only the first term remains and $nG$ converges to a non-central Chi-Square distribution, which is asymmetric. Using normal critical values here (even if everything else was known) would produce distorted coverage. Furthermore, when $V_{\tau n}^* = 0$ the rate of convergence is $n$, which is faster than $\sqrt{n}$, and hence the estimator is ``super consistent'' near the boundary. The error is dominated by the first stage sampling uncertainty in estimating the nuisance parameter $\beta_{2n}^2$, which converges at $n$ rate.

In practice, all three components in \eqref{eq:empiricalprocess} contribute to the limiting distribution, and this information can be used for inference. I propose an analytic approach based on the quantiles of the empirical process that can deliver exact coverage. Let  $F_{n,V_{\tau}^*,\Omega,\zeta}(v)$ be the conditional CDF of the empirical process, defined as
\begin{equation} F_{n,V_{\tau}^*,\Omega,\zeta}(v) = \mathbb{P}( G(n,V_{\tau}^*,\Omega,Z,\zeta) \le v ), \qquad Z \sim \mathcal{N}(0,I_{2 \times 2}), \quad v \in \mathbb{R}.
\end{equation}
Based on this CDF we can construct a test statistic,
$$ F_{n,V_{\tau}^*,\widehat{\Omega}_{n},\zeta}(\widehat{V}_{\tau n}-V_{\tau}^*), $$
indexed by unknown values of $(V_{\tau}^*,\zeta)$ and substituting the estimated covariance matrix $\widehat{\Omega}_{n}$.
By construction, the test statistic is contained in $[0,1]$. Similarly, I construct critical values as functions of the parameters for a nominal level $\alpha$, as follows	
\begin{equation}
\begin{aligned}
q_{\alpha/2}(n,V_{\tau}^*,\Omega,\zeta)  &:= \min
\left\{\alpha/2,F_{n,V_{\tau}^*,\Omega,\zeta}(0)\right\} \\
q_{1-\alpha/2}(n,V_{\tau}^*,\Omega,\zeta) &:= 1-\alpha+\min\left\{\alpha/2,F_{n,V_{\tau}^*,\Omega,\zeta}(0)\right\}
\end{aligned}
\label{eq:adjusted_criticalvalues}
\end{equation}
The difference in the critical values is $(1-\alpha)$ to achieve the desired coverage. The lower critical value is the minimum of the $\alpha/2$ percentile and $0$. This adjustment is meant to increase the power of tests of homogeneity (see Remark \ref{rem:adjust_critical_values}). I propose an adaptive confidence interval by substituting the $\widehat{\Omega}_{n}$, $n$, and $\widehat{V}_{\tau n}$ into the following formula
\begin{align} \label{eq:feasible_confidenceset}
\begin{split}
\widehat{CI}_{\alpha n} = \bigg\{ &V_{\tau }^* \in \mathbb{R}_{+}, \zeta \in \{-1,1\}: \\  &F_{n,V_{\tau}^*,\widehat{\Omega}_{n},\zeta}(\widehat{V}_{\tau n}-V_{\tau}^*) \in \left[ q_{\alpha/2}(n,V_{\tau}^*,\widehat{\Omega}_n,\zeta), q_{1-\alpha/2}(n,V_{\tau}^*,\widehat{\Omega}_n,\zeta)\right] \bigg\}.
\end{split}
\end{align}
The set $\widehat{CI}_{\alpha n}$ can be constructed via a grid search between $0$ and an arbitrarily high value, to test whether a particular $V_{\tau}^*$ satisfies the inequality constraints. The procedure achieves correct asymptotic size because the test statistic converges to a uniform random variable in $[0,1]$ for each value of $V_{\tau}^*$. In general, the distribution in \eqref{eq:decomposition_remainder_lvci_reg} depends on the value of $\zeta$ and I obtain a conservative interval in \eqref{eq:feasible_confidenceset} by considering the union of intervals with different values of $\zeta$. Moreover, if the off-diagonal element of $\Omega_n$ is zero, then the distribution of the empirical process in \eqref{eq:empiricalprocess} does not depend on the value of $\zeta$. This property is plausible and I introduce primitive conditions that satisfy it in Section \ref{largesampletheory}. Under those conditions the confidence interval has exact asymptotic coverage .

The procedure is fast because at each point in the grid the researcher evaluates the condition in \eqref{eq:feasible_confidenceset}, using the same estimate of $(\widehat{V}_{\tau n},\widehat{\Omega}_n)$. The critical values can be computed numerically from the quantiles of a generalized Chi-square with distribution $F$, which are available in most statistical software packages.

\begin{rem}[Equivalence of homogeneity test, $\beta_2 = 0$]
	Researchers can test for homogeneity by evaluating whether $0 \in \widehat{CI}_{\alpha n}$. By definition, $e_1'\widehat{\Omega}_n^{1/2}Z = \sqrt{\Omega_{n,11}}Z_1$, where $\Omega_{n,11}$ is the upper-left entry. Under the null, the test statistic is  $F_{n,0,\widehat{\Omega}_n,\zeta}(v) = \mathbb{P}(\widehat{\Omega}_{n,11}Z_1^2/n \le v )$, which is the CDF of a rescaled Chi-square distribution with one degree of freedom. Furthermore, the critical values are $\{0,1-\alpha\}$, given that $F_{n,0,\widehat{\Omega}_n,\zeta}(0) = 0$. Neither quantity depends on the choice of $\zeta$. Because of the normalization in \eqref{eq:clt_coefficients}, we can choose $\widehat{\Omega}_{n,11} = \widehat{V}_{xn} \widehat{\mathbb{V}}(\widehat{\beta}_{2n})$, where $\widehat{\mathbb{V}}(\widehat{\beta}_{2n})$ is an estimate of the asymptotic variance of $\widehat{\beta}_{2n}$ such as the robust sandwich estimator. Therefore, evaluating $F_{n,0,\widehat{\Omega}_n,\zeta}(\widehat{V}_{\tau n}) \in [0,1-\alpha]$ is algebraically equivalent to a test of whether $n(\widehat{\beta}_{2n}^2\widehat{V}_{xn}/(\widehat{V}_{x n}\widehat{\mathbb{V}}(\widehat{\beta}_{2n})) = n\widehat{\beta}_{2n}^2/\widehat{\mathbb{V}}(\widehat{\beta}_{2n})$ exceeds the $1-\alpha$ quantile of a Chi-square with one degree of freedom. This is identical to a test of $\beta_2 = 0$ in the regression in \eqref{eq:regression}.

\end{rem}

\begin{rem}[Adjusting critical values]
	\label{rem:adjust_critical_values}
	The critical value $q_{\alpha/2}(V_{\tau}^*,\Omega,n,\zeta)$  in \eqref{eq:adjusted_criticalvalues} is constructed to guarantee that $\widehat{V}_{\tau n} \in \widehat{CI}_{\alpha n}$. In this case, $\widehat{V}_{\tau n}$ belongs to the CI if and only if $F_{n,\widehat{V}_{\tau n},\widehat{\Omega}_{n},\zeta}(0)$ is contained in the critical region for some $\zeta \in \{-1,1\}$. The unadjusted CI with critical values $\left\{ \alpha/2, 1-\alpha/2\right\}$ is not guaranteed to contain  the test statistic.\footnote{For example, suppose that $\widehat{V}_{\tau n} = 0$. Then the empirical process has a Chi-square distribution for $V_{\tau} = 0$. Since the unadjusted critical value is bounded away from zero,  $\widehat{V}_{\tau n}$ would not be contained in the unadjusted CI.} Another rationale for doing the adjustment in \eqref{eq:adjusted_criticalvalues}, is to increase the power of the test of homogeneity, $0 \in \widehat{CI}_{\alpha n}$, relative to a test based on the unadjusted CI. The unadjusted test has correct size but the rejection region is discontinuous: it rejects when the test statistics is very close to zero or when it exceeds a threshold. Instead, the adjusted test shifts the critical region left and has the form of a Chi-squared test. It only rejects the null if the test statistic is larger than $1-\alpha$, which is a threshold that is smaller than $1-\alpha/2$ for the unadjusted CI.
\end{rem}

\begin{rem}[Comparison to other tests of homogeneity]
	\cite{crump2008nonparametric} suggest estimating $\widehat{\mu}_d(x)$ by a series estimator with $K$ terms, for subsamples $D = d \in \{0,1\}$. They propose a bias-corrected Wald statistic, which takes the form $\widehat{T}_n^{series} := \{[\widehat{\xi}_1 - \widehat{\xi}_0 ]'[\widehat{\mathbb{V}}(\widehat{\xi}_1-\widehat{\xi}_0)]^{-1}[\widehat{\xi}_1 - \widehat{\xi}_0]-(K-1)\}/\sqrt{2(K-1)}$, where $(\widehat{\xi}_1,\widehat{\xi}_0)$ are non-intercept coefficients associated with $\widehat{\mu}_1(x)$ and $\widehat{\mu}_0(x)$, respectively and $K$ is the number of covariates. For regressions with univariate $X$ as in \eqref{eq:regression}, $K=2$ and $\widehat{T}_n^{series} = (n\widehat{\beta}_{2n}^2/\widehat{\mathbb{V}}(\widehat{\beta}_{2n}) - 1) / \sqrt{2}$. Essentially this is just a transformation of the test statistic proposed above, which will produce the same acceptance/rejection result for significance level $\alpha$ (using the critical values in their equation 3.11). \cite{ding2019decomposing} study a framework with a fixed population where the only source of randomness is the experimental assignment of offers. They propose a similar Wald estimator, but replace estimates of $(\widehat{\xi}_1,\widehat{\xi}_0)$ and the asymptotic variance with randomization inference counterparts. In samples with large $n$, this leads to very similar test statistics, but may produce slightly different results in small samples.
	
	The approach that I introduce in the following section differs substantially in the way that I handle multivariate cases. For $K > 2$, the approaches are non-nested because I use sample splitting and consider a wider range of methods to estimate $\widehat{\mu}_d(x)$ than series estimators. 
\end{rem}

%!TEX root = ./main.tex

\section{Inference for nonparametric CATE}
\label{adaptiveinference}

In this section I provide an overview of the inference problems associated with efficient estimators of the VCATE and how to solve them for the nonlinear/high-dimensional case. Let $\{Y_i,D_i,X_i\}$ be i.i.d.. As shown in \cite{newey1990semiparametric}, efficient estimators can be decomposed as
\begin{equation}
\sqrt{n}(\widehat{V}_{\tau n} - V_{\tau n}) = \frac{1}{\sqrt{n}}\sum_{i=1}^n \varphi_i +  \underbrace{\  \text{Residual}_n\  }_{o_p(1)},
\label{eq:efficient_lineardecomposition}
\end{equation}
where $\varphi_i$ is an i.i.d. realization from the efficient influence function with mean zero, and the residual becomes asymptotically negligible as $n \to \infty$. The semiparametric lower bound is $\mathbb{V}(\varphi_i)$. Let $\eta(\cdot)$ be a set of nuisance functions defined as
\begin{equation}
\eta(x) := (\tau(x),\mu_0(x), p(x),\tau_{av}).
\label{eq:defn_nuisancefunctions}
\end{equation}
\cite{levy2021fundamental} showed that the efficient influence function for the VCATE is equal to $\varphi_i = \varphi(Y_i,D_i,X_i,\eta) - V_{\tau n}$, where $\varphi$ is defined as
\begin{equation} \varphi(y,d,x,\eta) := (\tau(x)-\tau_{av})^2 + 2 (\tau(x)-\tau_{av})\left[ \frac{d(y-\mu_0(x)-\tau(x))}{p(x)} - \frac{(1-d)(y-\mu_0(x))}{1-p(x)} \right].
\label{eq:efficient_influence}
\end{equation}
%This makes it easier to analyze their asymptotic properties without diving into the specifics of a procedure.
By \eqref{eq:efficient_lineardecomposition}, all  efficient estimators --regardless of their form-- are $\sqrt{n}$--asymptotically equivalent to $\frac{1}{n}\sum_{i=1}^n\varphi_i$.  Let $\varphi_{i} = \varphi(Y_i,D_i,X_i,\eta(X_i))$ be a realization of the efficient influence function. If $\mathbb{V}(\varphi_i) > 0$, then  
$$\mathbb{V}(\varphi_i)^{-1}\times \sqrt{n}(\widehat{V}_{\tau n}-V_{\tau n}) =  \mathbb{V}(\varphi_i)^{-1}\left[\frac{1}{\sqrt{n}}\sum_{i=1}^n \varphi_i \right]+ o_p(1) \to^p \mathcal{N}(0,1). $$
In this case, any confidence interval based on normal critical values and a consistent estimator of $\mathbb{V}(\varphi_i)$, produces valid coverage. For common  functionals such as the ATE, $\mathbb{V}(\varphi_i) > 0$. However, this cannot be guaranteed for the VCATE. 

\begin{lem}
	\label{lem:variance_eff_influence}
	Let $\sigma_d^2(x) := \mathbb{V}(Y \mid D =d,X=x)$. 
	\begin{equation} \mathbb{V}(\varphi_i) = \mathbb{V}((\tau(X)-\tau_{av})^2) + 4\mathbb{E}\left[(\tau(X)-\tau_{av})^2\left(\frac{\sigma_1^2(X)}{p(X)} + \frac{\sigma_0^2(X)}{p(X)}\right) \right].
	\label{eq:variance_eff_influence}
	\end{equation}
\end{lem}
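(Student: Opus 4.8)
The plan is to split the centered influence function into an $X$-measurable piece and an orthogonal ``bias-correction'' piece and handle each separately. Since the variance is unaffected by the additive constant $V_{\tau n}$, I would work with $\varphi(Y_i,D_i,X_i,\eta) = A(X_i) + B_i$, where $A(x) := (\tau(x)-\tau_{av})^2$ and $B_i := 2(\tau(X_i)-\tau_{av})W_i$ with $W_i := \frac{D_i(Y_i-\mu_0(X_i)-\tau(X_i))}{p(X_i)} - \frac{(1-D_i)(Y_i-\mu_0(X_i))}{1-p(X_i)}$. The goal is then to show that $A(X)$ and $B$ are uncorrelated, so that $\mathbb{V}(\varphi_i) = \mathbb{V}(A(X)) + \mathbb{E}[B^2]$, and to identify the two summands with the two terms in the statement.

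First I would establish the conditional moment identity $\mathbb{E}[W\mid X]=0$. Under Assumption \ref{assump:unconfound_sutva}, SUTVA gives $D Y = D Y_1$ and $(1-D)Y = (1-D)Y_0$, selection on observables gives $Y_d \ \indep\ D \mid X$, and $\mu_1(x) = \mu_0(x)+\tau(x)$. Combining these, $\mathbb{E}[D(Y-\mu_0(X)-\tau(X))\mid X=x] = p(x)\big(\mu_1(x)-\mu_0(x)-\tau(x)\big) = 0$, and likewise the control term vanishes, so $\mathbb{E}[B\mid X]=0$ and hence $\mathbb{E}[B]=0$. This conditional-mean-zero property of the Neyman-orthogonal correction immediately kills the cross term: since $A(X)$ is $X$-measurable, $\mathbb{E}[A(X)B] = \mathbb{E}[A(X)\,\mathbb{E}[B\mid X]] = 0$. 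This delivers $\mathbb{V}(\varphi_i) = \mathbb{V}(A(X)) + \mathbb{E}[B^2]$, and the first summand is exactly $\mathbb{V}((\tau(X)-\tau_{av})^2)$.

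It remains to evaluate $\mathbb{E}[B^2] = 4\,\mathbb{E}\big[(\tau(X)-\tau_{av})^2\,\mathbb{E}[W^2\mid X]\big]$ by the law of iterated expectations. The key simplification is that the two terms of $W$ have disjoint support in $D$ (since $D(1-D)=0$), so $W^2 = \frac{D(Y-\mu_0(X)-\tau(X))^2}{p(X)^2} + \frac{(1-D)(Y-\mu_0(X))^2}{(1-p(X))^2}$ with no cross term. Using $DY = DY_1$ and $\mu_0(x)+\tau(x)=\mu_1(x)$, the treated term equals $D(Y_1-\mu_1(x))^2/p(x)^2$; conditional independence then gives $\mathbb{E}[D(Y_1-\mu_1(x))^2\mid X=x] = p(x)\sigma_1^2(x)$, so its contribution to $\mathbb{E}[W^2\mid X=x]$ is $\sigma_1^2(x)/p(x)$. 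By the symmetric argument the control term contributes $\sigma_0^2(x)/(1-p(x))$, yielding $\mathbb{E}[W^2\mid X=x] = \sigma_1^2(x)/p(x) + \sigma_0^2(x)/(1-p(x))$. Substituting back produces the second summand and completes the proof.

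I expect the only delicate part to be the bookkeeping that translates the observed-outcome objects ($Y$, $\mu_d$, and $\sigma_d^2(x) = \mathbb{V}(Y\mid D=d,X=x)$) into potential-outcome objects through SUTVA and selection on observables. Once the identities $DY=DY_1$, $(1-D)Y=(1-D)Y_0$, $\sigma_d^2(x) = \mathbb{V}(Y_d\mid X=x)$, and $\mu_1 = \mu_0 + \tau$ are in hand, every conditional expectation factors cleanly through $Y_d \ \indep\ D \mid X$ and the remaining algebra is routine.
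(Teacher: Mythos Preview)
Your proposal is correct and essentially matches the paper's argument. The paper organizes the same computation via a nested law of total variance (conditioning first on $(D,X)$, then on $X$), while you phrase it as an orthogonal decomposition $\varphi = A(X) + B$ with $\mathbb{E}[B\mid X]=0$; since $A(X) = \mathbb{E}[\varphi\mid X]$ and $B = \varphi - \mathbb{E}[\varphi\mid X]$, the two presentations are equivalent and rest on the same conditional-moment calculations.
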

Both of the inner terms in \eqref{eq:variance_eff_influence} are multiplied by $(\tau(x) - \tau)$. When the VCATE is zero, $\tau(x) = \tau$ almost surely, and the influence function is degenerate.
The condition that $\mathbb{V}(\varphi_i) > 0$ does not hold uniformly over all $V_{\tau}$ in the parameter space. In this case, the distribution of $\sqrt{n}(\widehat{V}_{\tau n}-V_{\tau n})$ is dominated by the higher order terms of the residual \eqref{eq:efficient_lineardecomposition}, and the CLT cannot be applied to guarantee normality near the boundary. The linear estimator discussed in the previous section is just one example. Moreover, if the tails of $\tau(x)$ are thin, then the value of $\mathbb{V}(\varphi_i)$ is also small near the boundary.

\begin{cor}
	\label{cor:bound_varefficient}
	If $\mathbb{E}[(\tau(X)-\tau)^4] \le \kappa^2 V_{\tau}^2$ for $\kappa \in \mathbb{R}_+$, then
	$$ \mathbb{V}(\varphi_i) \le  \kappa^2 V_{\tau}^2+ 4\kappa V_{\tau}\sqrt{\mathbb{E}\left[ \left( \frac{\sigma_1^2(X)}{p(X)}+\frac{\sigma_0^2(X)}{1-p(X)} \right) \right]}. $$		
\end{cor}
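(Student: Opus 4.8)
The plan is to begin from the exact variance formula in Lemma~\ref{lem:variance_eff_influence} and bound its two summands separately, in each case reducing matters to the fourth central moment $\mathbb{E}[(\tau(X)-\tau_{av})^4]$, which the hypothesis controls by $\kappa^2 V_\tau^2$. Writing $W(X) := \sigma_1^2(X)/p(X) + \sigma_0^2(X)/(1-p(X)) \ge 0$ for the nonnegative weight in the interaction term, Lemma~\ref{lem:variance_eff_influence} reads $\mathbb{V}(\varphi_i) = \mathbb{V}((\tau(X)-\tau_{av})^2) + 4\,\mathbb{E}[(\tau(X)-\tau_{av})^2 W(X)]$, so it suffices to bound the first piece by $\kappa^2 V_\tau^2$ and the interaction piece by $4\kappa V_\tau \sqrt{\mathbb{E}[W(X)]}$, and then add.

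For the first piece I would use that the variance of any random variable is at most its raw second moment, giving $\mathbb{V}((\tau(X)-\tau_{av})^2) \le \mathbb{E}[((\tau(X)-\tau_{av})^2)^2] = \mathbb{E}[(\tau(X)-\tau_{av})^4]$, and then apply the kurtosis hypothesis directly to obtain $\mathbb{V}((\tau(X)-\tau_{av})^2) \le \kappa^2 V_\tau^2$. This step is immediate and carries no difficulty; it reproduces the first term of the claimed bound.

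The interaction term is the substantive part. I would apply the Cauchy--Schwarz inequality, separating the squared centered CATE from the weight, so that $\mathbb{E}[(\tau(X)-\tau_{av})^2 W(X)] \le \sqrt{\mathbb{E}[(\tau(X)-\tau_{av})^4]}\,\sqrt{\mathbb{E}[W(X)^2]}$, and bound the first factor by $\kappa V_\tau$ via the hypothesis; multiplying by $4$ and combining with the first piece then gives the stated inequality with the weight entering under a square root. The one point requiring care---and the only genuine obstacle---is matching the \emph{exact} form of the weight term: the clean Cauchy--Schwarz split delivers the $L^2$ norm $\sqrt{\mathbb{E}[W(X)^2]}$, whereas the displayed bound has $\sqrt{\mathbb{E}[W(X)]}$. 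Under strong overlap (Assumption~\ref{assump:unconfound_sutva}(ii)) together with uniformly bounded conditional variances $\sigma_d^2(X)$, the weight $W(X)$ is itself bounded, so its $L^2$ norm is controlled by a constant multiple of its mean; alternatively one may use the split $(\tau(X)-\tau_{av})^2 W(X) = [(\tau(X)-\tau_{av})^2\sqrt{W(X)}]\cdot[\sqrt{W(X)}]$, which yields $\sqrt{\mathbb{E}[(\tau(X)-\tau_{av})^4 W(X)]}\,\sqrt{\mathbb{E}[W(X)]}$ and again invokes boundedness of $W(X)$ on the first factor. Apart from this bookkeeping on the weight norm, the corollary follows directly from Lemma~\ref{lem:variance_eff_influence}, the variance--second-moment inequality, and a single application of Cauchy--Schwarz.
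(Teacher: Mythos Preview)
Your approach is essentially identical to the paper's: bound the variance term by the raw fourth moment $\le \kappa^2 V_\tau^2$, then apply Cauchy--Schwarz to the interaction term to get $\kappa V_\tau\sqrt{\mathbb{E}[W(X)^2]}$. Your observation about the weight norm is well-founded: the paper's own proof in fact arrives at $\sqrt{\mathbb{E}[W(X)^2]}$ (the $L^2$ norm), not $\sqrt{\mathbb{E}[W(X)]}$ as displayed in the corollary statement, so the discrepancy you flagged is a typo in the statement rather than a gap in your argument.
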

Corollary \ref{cor:bound_varefficient} shows that the variance of the efficient influence function is bounded by a quantity that scales up or down proportional to the value of $V_{\tau}$. Consequently, when $\mathbb{V}(\varphi_i)$ is relatively small, the higher order terms in the residual may still dominate.

\subsection{Pseudo-VCATE, regressions, and efficiency}

A robust way to introduce nonlinearity is to consider a regression with real-valued basis functions $M(x)$ and $S(x)$. For now, I will leave these unspecified but in the next section I will show how they can be estimated non-parametrically in a first stage.

\begin{equation}
Y = \underbrace{c_0 + M(X)c_1 + [D-p(X)]\beta_1 + [D-p(X)]S(X)\beta_2 }_{W(X,D)'\theta} + U
\label{eq:weightedreg}
\end{equation}
with weights $\lambda(X) := [p(X)(1-p(X))]^{-1}$, regressors $W(X,D)$, and parameters $\theta = [c_0,c_1,\beta_1,\beta_2]$. This specification accommodates experiments with heterogeneous assignment probabilities.\footnote{When $p(x) = 1/2$ this produces exactly the same coefficients as a regression of $Y$ on $(1,M(X),D,D\times S(X))$, but differs when the probabilities are heterogeneous. If $M(X) = S(X) = X$ as well, this reduces to \eqref{eq:regression}.} Consider the following minimizers:
\begin{equation}
\label{eq:argmin_ols}
\Theta^* := \arg\min_{\theta \in \mathbb{R}^4} \  \mathbb{E}[\lambda(X)(Y-W(X,D)'\theta)^2].
\end{equation}
\cite{chernozhukov2020generic} showed that if $\mathbb{V}(S(X)) > 0 $, $\mathbb{E}[S(X)] = 0$, and the vector $(\beta_1,\beta_2)$ are part of the solution to \eqref{eq:argmin_ols}, then $(\beta_1,\beta_2)$ are also the intercept and slope of the best linear projection of $\tau(X)$ on $S(X)$.\footnote{When $\mathbb{V}(S(X)) = 0$, $\beta_2$ does not have a unique solution in \eqref{eq:argmin_ols}, but $V_{\tau}^* = \beta_2^2\mathbb{V}(S(X)) = 0 \le V_{\tau}$ is still the best linear projection, regardless of the value of $\beta_2$.} Hence the pseudo-VCATE has an upper bound, $V_{\tau}^* = \beta_2^2\mathbb{V}(S(X)) \le V_{\tau}$. Because of this bound, if $V_{\tau} = 0$, then the pseudo-VCATE $(V_{\tau}^*)$ will not falsely detect heterogeneity even if $S(X)$ is misspecified. If anything, poor choices of $S(X)$ will possibly understate the amount of heterogeneity. When $\tau(X)$ is spanned by $S(X)$, $V_{\tau}^* = V_{\tau}$ and the two notions coincide.

To obtain a feasible estimator we define $\widehat{S}(x) := S(x) - \frac{1}{n}\sum_{i=1}^n S(X_i)$, and compute $\widehat{W}(x,d)$ by substituting $\widehat{S}(x)$ in \eqref{eq:weightedreg}. Now consider a value of $\widehat{\theta}_n$ that minimizes $\frac{1}{n}\sum_{i=1}^n \lambda(X_i)(Y_i-\widehat{W}(X_i,D_i)'\theta)^2$, by solving the first order condition

\begin{equation}
\mathcal{Q}(\widehat{\theta}_n) := \frac{1}{n}\sum_{i=1}^n \lambda(X_i)(Y-\widehat{W}(X_i,D_i)'\widehat{\theta}_n) \widehat{W}(X_i,D_i)'.
\label{eq:foc_adjustedols}
\end{equation}
The regression parameters can be used to construct the CATE and other nuisance functions. For a given $\theta \in \mathbb{R}^4$, 
\begin{equation} 
\tilde{\eta}_\theta(x) := \begin{pmatrix} \tilde{\tau}_\theta(x) \\ \tilde{\mu}_{0,\theta}(x) \\ \tilde{p}_\theta(x)  \\  \tilde{\tau}_{av,\theta} \end{pmatrix} = \begin{pmatrix}  (\widehat{W}(x,1)-\widehat{W}(x,0))'\theta \\ \widehat{W}(x,0)'\theta \\ p(x) \\ \frac{1}{n}\sum_{i=1}^n \tilde{\tau}_\theta(X_i) \end{pmatrix}.
\label{eq:pseudonuisancefunctions}
\end{equation}
Lemma \ref{lem:equivalence_efficient_simple} shows that the sample variance of the estimated CATE can be interpreted as an estimator that plugs-in \eqref{eq:pseudonuisancefunctions} to the efficient influence function in \eqref{eq:efficient_influence}.

\begin{lem} 
	\label{lem:equivalence_efficient_simple}
	Define $\widehat{V}_{xn} = \frac{1}{n}\sum_{i=1}^n\widehat{S}(X_i)^2$. Let $\varphi$, $\tilde{\eta}_\theta$ and $\mathcal{Q}(\theta)$ be defined as in \eqref{eq:efficient_influence}, 
	\eqref{eq:foc_adjustedols}, and \eqref{eq:pseudonuisancefunctions}, respectively. If $\widehat{\theta}_n = (\widehat{c}_{1n},\widehat{c}_{2n},\widehat{\beta}_{1n},\widehat{\beta}_{2n})$ solves $\mathcal{Q}(\widehat{\theta}_n) = 0$, then $\widehat{V}_{\tau n} := \widehat{\beta}_{2n}^2\widehat{V}_{xn} = \frac{1}{n}\sum_{i=1}^n \varphi(Y_i,D_i,X_i,\tilde{\eta}_{\widehat{\theta}_n})$.	
\end{lem}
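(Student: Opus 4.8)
The plan is to substitute the fitted nuisances $\tilde\eta_{\widehat\theta_n}$ from \eqref{eq:pseudonuisancefunctions} directly into the influence function \eqref{eq:efficient_influence}, average over the sample, and show the bias-correction term collapses onto a first-order condition of the regression \eqref{eq:foc_adjustedols}. First I would evaluate the pseudo-nuisances in closed form. Since $\widehat W(x,1)-\widehat W(x,0)=(0,0,1,\widehat S(x))'$, the fitted CATE is $\tilde\tau_{\widehat\theta_n}(x)=\widehat\beta_{1n}+\widehat\beta_{2n}\widehat S(x)$, and because $\widehat S$ is centered ($\frac1n\sum_i\widehat S(X_i)=0$) its sample mean is $\tilde\tau_{av,\widehat\theta_n}=\widehat\beta_{1n}$. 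Hence $\tilde\tau_{\widehat\theta_n}(X_i)-\tilde\tau_{av,\widehat\theta_n}=\widehat\beta_{2n}\widehat S(X_i)$, and the quadratic (plug-in) piece of the averaged influence function is exactly $\frac1n\sum_i\widehat\beta_{2n}^2\widehat S(X_i)^2=\widehat\beta_{2n}^2\widehat V_{xn}=\widehat V_{\tau n}$, which is the target. It remains to show the second, bias-correction, term vanishes.

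The crux is a finite-sample identity linking the two inverse-propensity pieces to the regression residual. I would first observe that the fitted regression value equals the conditional mean implied by $\tilde\eta$: writing $\tilde\mu_{d,\widehat\theta_n}(x):=\widehat W(x,d)'\widehat\theta_n$, one has $\tilde\mu_{1,\widehat\theta_n}=\tilde\mu_{0,\widehat\theta_n}+\tilde\tau_{\widehat\theta_n}$, so the regression residual is $\widehat U_i=Y_i-\widehat W(X_i,D_i)'\widehat\theta_n=Y_i-\tilde\mu_{D_i,\widehat\theta_n}(X_i)$. Evaluating the bracket in \eqref{eq:efficient_influence} at $\tilde p_{\widehat\theta_n}=p$ separately for $D_i=1$ and $D_i=0$ shows it equals $D_i\widehat U_i/p(X_i)-(1-D_i)\widehat U_i/(1-p(X_i))$, and using the algebraic identity $\frac{d}{p}-\frac{1-d}{1-p}=\frac{d-p}{p(1-p)}=\lambda(x)(d-p)$ this reduces to $\lambda(X_i)(D_i-p(X_i))\widehat U_i$.

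Combining the pieces, the bias-correction term of the averaged influence function becomes $\frac2n\sum_i\widehat\beta_{2n}\widehat S(X_i)\,\lambda(X_i)(D_i-p(X_i))\widehat U_i=2\widehat\beta_{2n}\cdot\frac1n\sum_i\lambda(X_i)\widehat U_i(D_i-p(X_i))\widehat S(X_i)$. This last average is precisely the coordinate of the first-order condition $\mathcal Q(\widehat\theta_n)=0$ dual to the regressor $[D-p(X)]\widehat S(X)$ in \eqref{eq:foc_adjustedols}, so it is zero. Adding the two pieces yields $\frac1n\sum_i\varphi(Y_i,D_i,X_i,\tilde\eta_{\widehat\theta_n})=\widehat V_{\tau n}+0$, as claimed.

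I expect the only real obstacle to be the bias-correction identity in the second paragraph: one must recognize that the doubly-robust correction, which looks like two separate inverse-propensity terms, is algebraically identical to a single weighted least-squares score $\lambda(x)(d-p(x))$ evaluated at the regression residual. Once that identity and the ``fitted value $=\tilde\mu_{D_i,\widehat\theta_n}$'' observation are in place, everything else is bookkeeping that leverages the centering of $\widehat S$ and the exact orthogonality imposed by the normal equations.
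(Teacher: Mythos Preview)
Your proposal is correct and follows essentially the same approach as the paper: both identify $\tilde\tau_{\widehat\theta_n}(X_i)-\tilde\tau_{av,\widehat\theta_n}=\widehat\beta_{2n}\widehat S(X_i)$ via the centering of $\widehat S$, use the identity $\frac{d}{p}-\frac{1-d}{1-p}=\lambda(x)(d-p)$ to rewrite the bias-correction bracket as $\lambda(X_i)(D_i-p(X_i))\widehat U_i$, and then recognize the resulting average as (a multiple of) the fourth component of $\mathcal Q(\widehat\theta_n)$, which vanishes. The only cosmetic difference is direction: the paper starts from $\mathcal Q(\theta)e_4$ and builds up to $\varphi$, whereas you start from $\varphi$ and reduce to the first-order condition.
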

Mechanically, the influence function can be decomposed into primary and bias-correction components. As an intermediate step for Lemma \ref{lem:equivalence_efficient_simple}, I show that the bias-correction terms and the fourth component of \eqref{eq:foc_adjustedols}  are proportional to each other. The optimal $\widehat{\theta}_n$ implicitly sets the average bias correction to zero. Intuitively, the linear model minimizes the covariate imbalances between the treatment and control group in-sample. Lemma \ref{lem:equivalence_efficient_simple} suggests that $\widehat{V}_{\tau n}$ could be asymptotically efficient if $\tilde{\eta}_{\widehat{\theta}_n}$ is sufficiently close to $\eta$.  In Section \ref{consistencyefficiency}, I show that my proposed semiparametric estimator can indeed achieve this.

%Combining Lemma \ref{lem:equivalence_efficient_simple} with the result in \cite{chernozhukov2020generic}, suggests that one can obtain a non-negative, efficient estimator, with a the pseudo-VCATE interpretation. The caveat is that not all choices of $S(x)$ and $M(x)$ will achieve efficiency.

As a preliminary step, it is necessary to determine which $S(x)$ and $M(x)$ ensure that $\tilde{\eta}_\theta = \eta$. Not all choices achieve this property.\footnote{This point highlights that while all regressions of the form in 
	\eqref{eq:weightedreg} proposed by \cite{chernozhukov2020generic} estimate an interpretable $V_{\tau}^*$ --regardless of the choice of $M(x)$--, not every regression in this class is efficient. The functions $S(x)$, and $M(x)$ in particular, both affect efficiency.} However, if they are chosen in such a way that $W(x,d)'\theta = \mu_d(x)$ for some $\theta \in \mathbb{R}^4$, then that's sufficient to guarantee that  $\eta_\theta = \eta$. Lemma \ref{lem:projection_weak} shows that any $\theta$ with this property is also a solution to the regression problem, and provides guidance on the choice of $S(x)$ and $M(x)$.
\begin{lem}
	\label{lem:projection_weak}
	Let $ \Theta^*$ be the optimizer set defined in \eqref{eq:argmin_ols}. If (i) $\mathbb{E}[S(X)] = 0$ and (ii) $W(x,d)'\theta = \mu_d(x)$ for some $\theta \in \mathbb{R}^4$, then $\theta \in \Theta^*$. Conditions (i) can be satisfied by setting   $S(x) = \tau(x) - \mathbb{E}[\tau]$. Condition (ii) can  be satisfied by setting $M(x) =  \mu_0(x) + p(x)\tau(x)$. In this special case, $\theta = (0,1,\mathbb{E}[\tau(X)],1)' \in \Theta^*$.
	
\end{lem}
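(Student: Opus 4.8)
The plan is to exploit the convexity of the weighted least-squares criterion in \eqref{eq:argmin_ols} and reduce membership in $\Theta^*$ to verifying a single population orthogonality (normal) equation. Since $\lambda(X) = [p(X)(1-p(X))]^{-1}$ is strictly positive and bounded under the strong-overlap part of Assumption \ref{assump:unconfound_sutva}, the map $\theta \mapsto \mathbb{E}[\lambda(X)(Y-W(X,D)'\theta)^2]$ is a convex quadratic with positive-semidefinite Hessian $2\,\mathbb{E}[\lambda(X)W(X,D)W(X,D)']$. Hence every stationary point is a global minimizer, and $\theta \in \Theta^*$ if and only if it solves the first-order condition $\mathbb{E}[\lambda(X)W(X,D)(Y-W(X,D)'\theta)] = 0$.

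The key observation is that condition (ii) collapses the regression residual to the structural residual. If $W(x,d)'\theta = \mu_d(x)$ for all $(x,d)$, then $Y - W(X,D)'\theta = Y - \mu_D(X)$, which by the definition of $\mu_d$ in \eqref{eq:conditionalmean} is conditionally mean-zero, $\mathbb{E}[Y-\mu_D(X)\mid X,D]=0$. Because $\lambda(X)W(X,D)$ is measurable with respect to $(X,D)$, the law of iterated expectations gives
\begin{equation}
\mathbb{E}[\lambda(X)W(X,D)(Y-\mu_D(X))] = \mathbb{E}\!\left[\lambda(X)W(X,D)\,\mathbb{E}[Y-\mu_D(X)\mid X,D]\right] = 0,
\end{equation}
so the first-order condition holds and $\theta \in \Theta^*$. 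This step is indifferent to the weight $\lambda$ and does not itself invoke condition (i); the role of $\mathbb{E}[S(X)]=0$ is to pin down the constructive choices below and to preserve the $V_\tau^* = \beta_2^2\mathbb{V}(S(X))$ interpretation from the surrounding discussion.

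It remains to verify the suggested constructions. Taking $S(x) = \tau(x)-\tau_{av}$ makes (i) immediate, since $\mathbb{E}[\tau(X)-\tau_{av}]=0$. For (ii), I would substitute $M(x)=\mu_0(x)+p(x)\tau(x)$ and $\theta = (0,1,\tau_{av},1)'$ into \eqref{eq:weightedreg} and simplify: the $c_0$ term vanishes, the terms $[D-p(X)]\beta_1$ and $[D-p(X)]S(X)\beta_2$ combine as $[D-p(X)](\tau_{av} + \tau(X)-\tau_{av}) = [D-p(X)]\tau(X)$, and adding $M(X)$ yields $\mu_0(X)+p(X)\tau(X)+[D-p(X)]\tau(X) = \mu_0(X)+D\tau(X)$, which equals $\mu_0(x)$ at $d=0$ and $\mu_1(x)$ at $d=1$, i.e. $\mu_d(x)$. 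Thus (ii) holds with this explicit $\theta$, and by the preceding paragraph $\theta=(0,1,\tau_{av},1)'\in\Theta^*$.

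The argument is essentially a verification, so there is no deep obstacle; the only point requiring care is matching the regression coefficients to $\mu_d(x)$ \emph{simultaneously} across $d\in\{0,1\}$. Differencing $W(x,1)'\theta=\mu_1(x)$ and $W(x,0)'\theta=\mu_0(x)$ forces $\beta_1+\beta_2 S(x)=\tau(x)$, so $\tau$ must lie in the affine span of $S$, which $S=\tau-\tau_{av}$ guarantees with $\beta_1=\tau_{av}$ and $\beta_2=1$; substituting this back into the $d=0$ equation forces $c_0+c_1 M(x)=\mu_0(x)+p(x)\tau(x)$, fixing $M$. Checking that these coefficient restrictions are mutually consistent—and that the resulting $\theta$ is finite under the maintained moment and overlap conditions—is the substantive content, after which membership in $\Theta^*$ follows automatically from the normal-equation step.
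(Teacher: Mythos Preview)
Your proof is correct and follows essentially the same route as the paper: both reduce membership in $\Theta^*$ to the population normal equation, use the law of iterated expectations with $\mathbb{E}[Y\mid X,D]=\mu_D(X)$ to verify it under condition (ii), and then check the constructive choices $S(x)=\tau(x)-\tau_{av}$, $M(x)=\mu_0(x)+p(x)\tau(x)$, $\theta=(0,1,\tau_{av},1)'$ by direct substitution. Your presentation is slightly more explicit about convexity justifying the sufficiency of the first-order condition, but the argument is the same.
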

Lemma \ref{lem:projection_weak} provides efficient choices of $S(x)$ and $M(x)$ that can be expressed in terms of conditional moments, and that for this choice, the optimal $\theta$ has a known, simple form. In practice, $S(x)$ and $M(x)$ can be estimated non-parametrically.

\subsection{Multi-step approach}
\label{dml_regression}
My proposed procedure randomly partitions the observations $\mathcal{I}_n := \{1,\ldots,n\}$ into $K$ folds of equal size $n_k := n/K$. Denote the observations in each fold by $\mathcal{I}_{nk}$, so that $\bigcup_{k=1}^{K} \mathcal{I}_{nk} = \mathcal{I}_n$, and let $\mathcal{I}_{-nk} := 
\mathcal{I}_n \backslash \mathcal{I}_{nk}$ be the set of observations that are not in fold $k$. In a slight abuse of notation, I use $\mathcal{I}_{-nk}$ when defining conditional expectations, to denote the full set of random variables associated with observations not included in fold $k$. For simplicity, I also label the fold of observation $i \in \mathcal{I}_{nk}$, by $k_i$. 

Let $\widehat{\eta}_{-k}(x) := (\widehat{\tau}_{-k}(x),\widehat{\mu}_{0,-k}(x), p(x),\widehat{\tau}_{-k,av})$ denote a prediction of the nuisance function $\eta(x)$ over the set $\mathcal{I}_{-nk}$, using the researcher's preferred prediction algorithm. This could include traditional methods such as linear regression, or more modern ``machine learning'' approaches such as LASSO, neural networks, or random forests. The only function that is known in advance is the propensity score, since I restrict attention to randomized experiments. Guided by Lemma \ref{lem:projection_weak}, define
\begin{align}
\begin{split}
S_{-k}(x) &:= \widehat{\tau}_{-k}(x) - \mathbb{E}[\widehat{\tau}_{-k}(X_i) \mid \mathcal{I}_{-nk}], \\
M_{-k}(x) &:= \widehat{\mu}_{0,-k}(x) + p(x)\widehat{\tau}_{-k}(x), \\	
\lambda(x) &:= [p(x)(1-p(x))]^{-1}, \\
W_{i}' &:= \begin{bmatrix} 1 & M_{-k_i}(X_i) & (D_i-p(X_i)) & (D_i-p(X_i))S_{-k_i}(X_i)  \end{bmatrix}
\end{split}
\label{eq:dml_defn_auxiliaryfunctions}
\end{align}
Consider a regression with weights $\lambda(X_i)$, parameters $\theta := (c_1,c_2,\beta_1,\beta_2)$, and 
\begin{equation}
\label{eq:outcomemodel_dml}
Y_{i} = 
W_i'\theta_{nk} + U_i, \qquad \mathbb{E}[W_iU_i \mid \mathcal{I}_{-nk}] = 0
\end{equation}
In practice, $\mathbb{E}[\widehat{\tau}_{-k}(X_i) \mid \mathcal{I}_{-nk}]$ needs to be estimated, and I use a sample analog:
\begin{align}\begin{split}
\widehat{S}_{-k}(x) &:= \widehat{\tau}_{-k}(x) - \widehat{\tau}_{nk,av},\qquad \widehat{\tau}_{nk,av} := \frac{1}{n_k}\sum_{i \in \mathcal{I}_{nk}}\widehat{\tau}_{-k}(X_i), \\	
\widehat{W}_{i}' &:= \begin{bmatrix} 1 & M_{-k_i}(X_i) & (D_i-p(X_i)) & (D_i-p(X_i))\widehat{S}_{-k_i}(X_i) 
\end{bmatrix}, \\
\widehat{\theta}_{nk} &:= \left[ \frac{1}{n}\sum_{i \in \mathcal{I}_{nk}} \widehat{W}_i\widehat{W}_{i}' \right]^{-1} \left[ \frac{1}{n}\sum_{i \in \mathcal{I}_{nk}} \widehat{W}_iY_i \right].
\end{split}
\label{eq:thetahat_dml}
\end{align}
Let $\widehat{\theta}_{nk} = (\widehat{c}_{1nk},\widehat{c}_{2nk},\widehat{\beta}_{1nk},\widehat{\beta}_{2nk})$ be the estimator over the subsample $\mathcal{I}_{nk}$.
The fold-specific variance of $\widehat{S}_{-k_i}(x) $ is defined as
\begin{equation}
\label{eq:vxnk_dml}
\widehat{V}_{xnk} := \frac{1}{n_k}\sum_{i \in \mathcal{I}_{nk}} 	\widehat{S}_{-k_i}(X_i)^2.
\end{equation}
The estimator of the VCATE for fold $k$ is
\begin{equation}
\label{eq:vtaunk_dml}
\widehat{V}_{\tau n k} := \widehat{\beta}_{2 nk}^2\widehat{V}_{x nk}.
\end{equation}
In this case $\widehat{V}_{xnk}$ can be viewed as a preliminary estimate of the VCATE  using the data in $\mathcal{I}_{-nk}$, whereas $\widehat{V}_{\tau nk}$ is a regression-adjusted estimator that fits the sample $\mathcal{I}_{nk}$. This adjustment will produce better results, with a pseudo-VCATE interpretation even if the first step $\widehat{S}_{-k_i}(x)$ function is noisy, misspecified, or slow to converge to $\tau(x)$. The estimator in \eqref{eq:vxnk_dml} belongs to the class of multi-step estimators defined in \cite{chernozhukov2020generic}. I add a restriction on the choice of $M_{-k}(x)$, guided by Lemma \ref{lem:projection_weak}, to ensure asymptotic efficiency.

To quantify the uncertainty in $(\widehat{\theta}_{nk},\widehat{V}_{xnk})$ I compute a robust (sandwich) estimator. I start by defining two auxiliary residuals, $\widehat{T}_i := \widehat{V}_{x n k}^{-1}\widehat{S}_{-k_i}(X_i)^2 - 1$ and $\widehat{U}_i := Y_i - \widehat{W}_i'\widehat{\theta}_{nk}$. Let $\widehat{\Pi}_{nk}$ be a $4 \times 4$ diagonal matrix with diagonal entries $(1,1,1,\widehat{V}_{xnk}^{-1/2})$. Researchers can compute estimators of the individual components of the sandwich form $\widehat{J}_{nk}$, $\widehat{H}_{nk}$, and a selection matrix $\Upsilon$ defined as follows 

\begin{equation}
\label{eq:Jmatrix_sandwich_dml} \widehat{J}_{nk} := \begin{bmatrix} \frac{1}{n_k}\sum_{i \in \mathcal{I}_k}  \lambda(X_{i})\widehat{\Pi}_{nk} \widehat{W}_{i}\widehat{W}_{i}'\widehat{\Pi}_{nk}' & 0 \\ 0 & 1 \end{bmatrix}, \qquad \Upsilon:=\begin{bmatrix} 0 & 0 & 0 & 1 & 0 \\ 0 & 0 & 0 & 0 & 1 \end{bmatrix}
\end{equation}
\begin{equation}
\label{eq:Hmatrix_sandwich_dml}
\widehat{H}_{nk} := \frac{1}{n_k}\sum_{i \in \mathcal{I}_k} \begin{bmatrix}  \lambda(X_i)^2\left[\widehat{U}_i^2\widehat{\Pi}_{nk}\widehat{W}_{i}\widehat{W}_{i}'\widehat{\Pi}_{nk}' \right]  & \lambda(X_i) \left[\widehat{U}_i\widehat{\Pi}_{nk}\widehat{W}_{i}\widehat{T}_{i} \right]   \\ \lambda(X_i) \left[\widehat{U}_i\widehat{W}_{i}'\widehat{\Pi}_{nk}'\widehat{T}_{i} \right]  & \widehat{T}_{i}^2 \end{bmatrix}.
\end{equation}
The sandwich covariance estimator is
\begin{equation}
\label{eq:omegank_hat_dml} \widehat{\Omega}_{nk} =   \Upsilon \widehat{J}_{nk}^{-1}\widehat{H}_{nk}\widehat{J}_{nk}^{-1}\Upsilon'.
\end{equation}
The population covariance matrix is
\begin{equation}
\label{eq:omegank}
\Omega_{nk} = \mathbb{V}\left(\begin{bmatrix} \lambda(X_i)(D_i-p(x_i))V_{xnk}^{-1/2}S_{-k}(X_i) U_{i} \\ 
V_{xnk}^{-1}S_{-k}(X_i)^2
\end{bmatrix} \mid \mathcal{I}_{-nk} \right). 
\end{equation}
When the VCATE is zero, then $\widehat{V}_{xnk}$ (as a consistent estimator of $V_{\tau n}$) should converge to zero along the asymptotic sequence. To prevent asymptotic degeneracy, we need to rescale the estimands along the lines of Assumption \ref{assump:clt_regcoef}. The random variable $V_{xnk}^{-1/2}S_{-k}(X_i)$ is normalized to (conditionally) have variance one by design, even if $S_{-k}(X_i)$ converges to zero. This requires two much weaker conditions: (i) that $V_{xnk} > 0$, i.e. there is some noise in estimating the CATE;\footnote{I also propose an extension that allows for $V_{xnk} = 0$ in Remark \ref{rem:ci_degenerate}.} (ii)  $\Omega_{nk}$ has eigenvalues bounded away from zero. To ensure this, the tails  of $S_{-k}(X_i)$ need to be thin.\footnote{One sufficient additional restriction is that $\mathbb{E}[U_{i} \mid X_{i},D_{i},\mathcal{I}_{-nk}] = 0$ (the model is correctly specified), $\mathbb{V}(U_{i} \mid X_{i},D_{i} \mid \mathcal{I}_{-nk})$ is bounded away from zero, and $S_{-k}(X_i)$ has bounded kurtosis. In that case the off-diagonal elements of $\Omega_{nk}$ are zero and the diagonals are uniformly bounded. Positive-definiteness may also hold in a neighborhood where the nuisance functions are close to the true value and $\mathbb{E}[U_{i} \mid X_{i},D_{i},\mathcal{I}_{-nk}] \approx 0$.}

\begin{assump}[Moment Bounds]
	\label{assump:momentbounds_dml}
	Suppose that there exists a constant $\delta \in (0,1)$ such that for each fold $k$, almost surely, (i) $\mathbb{E}[M_{-k}(X_i)^2 \lambda(X_i) \mid \mathcal{I}_{-nk}] - \mathbb{E}[M_{-k}(X_i)\lambda(X_i)\mid \mathcal{I}_{-nk}]\mathbb{E}[\lambda(X_i) \mid \mathcal{I}_{-nk}] \ge \delta$, (ii) $\mathbb{E}[M_{-k}(X_i)^4 \mid \mathcal{I}_{-nk}] \le 1/\delta$, (iii) $\mathbb{E}[U_{i}^4 \mid \mathcal{I}_{-nk}] \le (1/\delta)$, and (iv) $\mathbb{E}[S_{-k}(X_i)^4 \mid \mathcal{I}_{-nk}] \le (1/\delta) \mathbb{E}[S_{-k}^2 \mid \mathcal{I}_{-nk}]^2$, (v) $\mathbb{E}[\widehat{\tau}_{-k}(X)^2 \mid \mathcal{I}_{-nk}] < 1/\delta$.
\end{assump}
Assumption \ref{assump:momentbounds_dml}.(i) is a rank condition that ensures that the auxiliary regressor $M_{-k}(X_i)$ is not degenerate. Assumption  \ref{assump:momentbounds_dml}.(ii) ensures that the second-moment of the candidate regressor $M_{-k}(X_i)$ is bounded. Assumption  \ref{assump:momentbounds_dml}.(iii) is a standard condition indicating that the fourth moment of the residuals are bounded. Assumption \ref{assump:momentbounds_dml}.(iv) is a bounded kurtosis condition indicating that the out-of-sample, machine learning predictions of $\tau(x)$ have thin tails. Finally, Assumption \ref{assump:momentbounds_dml}.(v) is a bound on the variance of the first-stage VCATE.

\begin{assump}[Non-degeneracy]
	\label{assump:boundsquantities}	
	The following properties hold almost surely over sequences of random data realizations  $\{\mathcal{I}_{n1},\ldots,\mathcal{I}_{nK}\}_{n=1}^{\infty}$. Conditional on $\mathcal{I}_{-nk}$: (i) $V_{xn k} > 0$, (ii) $V_{xnk}$ has a finite upper bound, (iii) $V_{\tau nk}^* := \beta_{2nk}^2V_{xnk}$ is contained in a bounded subset of $[0,\infty)$, and (iv) $\Omega_{nk}$ defined in \eqref{eq:omegank} is a positive definite matrix with bounded eigenvalues.
\end{assump}

\begin{assump}[Random Sampling]
	\label{assump:randomsampling}	
	The observations $\{Y_{0i},Y_{1i},D_{i},X_{i}\}_{i}^n$ are i.i.d. across $i$ for fixed $n$, and drawn from a sequence of data generating processes $\{\gamma_n\}_{n=1}^{\infty}$.
\end{assump}
Theorem \ref{thm:clt_lvci} shows how these primitive conditions imply an analog of Assumption \ref{assump:clt_regcoef} for the cross-fitted case.
\begin{thm}
	\label{thm:clt_lvci} Consider a sequence of random data realizations $\{\mathcal{I}_{n1},\ldots,\mathcal{I}_{nK}\}_{n=1}^{\infty}$ with associated quantities $\{V_{xnk},V_{\tau n k}^*,\beta_{2nk},\Omega_{nk} \}_{n=1}^{\infty}$ for each $k$, as well as a sequence of estimators  $\{\widehat{\beta}_{2nk},\widehat{V}_{xnk},\widehat{V}_{\tau nk}^*,\widehat{\Omega}_{nk} \}_{n=1}^{\infty}$ computed from \eqref{eq:thetahat_dml}, \eqref{eq:vxnk_dml}, \eqref{eq:vtaunk_dml}, and \eqref{eq:omegank_hat_dml}, respectively. Suppose that these quantities satisfy Assumptions \ref{assump:unconfound_sutva}.(ii),  \ref{assump:momentbounds_dml}, \ref{assump:boundsquantities}, and \ref{assump:randomsampling}. Then as $n_k \to \infty$, for all $k \in \{1,\ldots, K\}$, Conditional on a sequence of $\mathcal{I}_{-nk}$,
	
	%\label{eq:clt_coefficients_dml}
	\begin{enumerate}[(i)]
		\setlength{\parskip}{0pt}
		\item \quad $
		 \Omega_{nk}^{-1/2}\sqrt{n_k}\begin{pmatrix} \sqrt{V_{xnk}}(\widehat{\beta}_{2nk} - \beta_{2nk}) \\ \frac{\widehat{V}_{xnk}}{V_{xnk}} - 1 \end{pmatrix}\mid \mathcal{I}_{-nk} \to Z_{nk} \sim \mathcal{N}(0,I_{2 \times 2}).$
		 \item \quad $\widehat{\Omega}_{nk} \to^p \Omega_{nk}$.
	\end{enumerate}
\end{thm}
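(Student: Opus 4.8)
The plan is to condition on $\mathcal{I}_{-nk}$ throughout, so that the first-stage functions $\widehat{\tau}_{-k}, \widehat{\mu}_{0,-k}, M_{-k}, S_{-k}$ become fixed (non-random) objects that satisfy the bounds of Assumptions \ref{assump:momentbounds_dml} and \ref{assump:boundsquantities} almost surely, and the only remaining randomness is the i.i.d. fold-$k$ sample drawn from $\gamma_n$. Both claims then reduce to an LLN and a CLT for triangular arrays that are i.i.d. within each row but whose summand distributions drift with $n$ and with the realized held-out fold. The first thing I would record is the orthogonality induced by the factor $(D-p(X))$: since $\mathbb{E}[\lambda(X)(D-p(X))h(X)\mid X]=0$ for any function $h$ of $X$, the $\lambda$-weighted population design matrix $J_{nk}$ is block diagonal, separating the $(1,M)$ block from the $((D-p),(D-p)S)$ block. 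Within the treatment block, $\mathbb{E}[\lambda(D-p)^2]=1$, $\mathbb{E}[\lambda(D-p)^2 S]=\mathbb{E}[S]=0$, and $\mathbb{E}[\lambda(D-p)^2 S^2]=V_{xnk}$, so this block is diagonal and, after the rescaling by $\Pi_{nk}$ (the diagonal matrix with entries $(1,1,1,V_{xnk}^{-1/2})$), it becomes the identity. This is the structural fact that makes the simple form of $\Omega_{nk}$ in \eqref{eq:omegank} correct.

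Using this structure I would derive the asymptotic linear representation $\sqrt{n_k V_{xnk}}\,(\widehat{\beta}_{2nk}-\beta_{2nk}) = \frac{1}{\sqrt{n_k}}\sum_{i \in \mathcal{I}_{nk}} \lambda(X_i)(D_i-p(X_i))V_{xnk}^{-1/2}S_{-k}(X_i)U_i + o_p(1)$, where $U_i := Y_i - W_i'\theta_{nk}$. Because the rescaled Jacobian block equals the identity, no sandwich prefactor survives and the influence term coincides exactly with the first row of the vector inside $\Omega_{nk}$. For the second coordinate I would write $\widehat{S}_{-k}(X_i) = S_{-k}(X_i) - \bar{S}$ with $\bar{S} := \frac{1}{n_k}\sum_{i\in\mathcal{I}_{nk}} S_{-k}(X_i) = O_p(n_k^{-1/2})$, so that $\widehat{V}_{xnk}/V_{xnk}-1 = \frac{1}{n_k}\sum_{i\in\mathcal{I}_{nk}}[V_{xnk}^{-1}S_{-k}(X_i)^2 - 1] - V_{xnk}^{-1}\bar{S}^2$, and the last term is $O_p(n_k^{-1})$, hence negligible at the $\sqrt{n_k}$ scale. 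I would similarly verify that replacing $\widehat{S}$ by $S$ inside $\widehat{W}_i$ (the estimated-centering correction) and plugging in $\theta_{nk}$ contribute only $o_p(1)$ after rescaling. The needed convergence $\frac{1}{n_k}\sum \lambda \Pi_{nk} W_iW_i'\Pi_{nk}' \to^p J_{nk}$ follows from a conditional LLN under the second-moment bounds in Assumption \ref{assump:momentbounds_dml}(i),(ii),(iv) together with $\widehat{V}_{xnk}\to^p V_{xnk}$.

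With the two conditionally-i.i.d., mean-zero influence terms stacked, their conditional covariance is $\Omega_{nk}$ by construction; I would then invoke a Lindeberg CLT for triangular arrays to get joint asymptotic normality, premultiply by $\Omega_{nk}^{-1/2}$, and use positive-definiteness with eigenvalues bounded away from zero (Assumption \ref{assump:boundsquantities}(iv)) to conclude (i). For part (ii), I would establish $\widehat{J}_{nk}\to^p J_{nk}$ and $\widehat{H}_{nk}\to^p H_{nk}$ by conditional LLNs, using that $\widehat{U}_i = Y_i-\widehat{W}_i'\widehat{\theta}_{nk} \to U_i$, $\widehat{T}_i \to V_{xnk}^{-1}S_{-k}^2-1$, and $\widehat{V}_{xnk}\to V_{xnk}$ from part (i), with the fourth-moment bounds of Assumption \ref{assump:momentbounds_dml}(iii),(iv) controlling the averages of the squared and cross-product terms; the block-diagonal-identity structure then collapses the sandwich $\Upsilon J_{nk}^{-1}H_{nk}J_{nk}^{-1}\Upsilon'$ to the plain variance in \eqref{eq:omegank}.

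I expect the main obstacle to be the triangular-array CLT itself. Because the summand distributions drift with $n$ and depend on the realized first stage, finite variance alone is not enough; I must verify the Lindeberg condition uniformly along the sequence using only the fourth-moment and bounded-kurtosis controls. This is precisely where the thin-tail restriction on $\widehat{\tau}_{-k}$ in Assumption \ref{assump:momentbounds_dml}(iv) does its work, ensuring that the rescaled regressor $V_{xnk}^{-1/2}S_{-k}$ cannot concentrate its mass in a way that defeats normality near the boundary $V_{xnk}\to 0$; a careful truncation argument (or a uniform $2+\delta$ bound extracted from the kurtosis control) will be needed to close this step, and the negligibility of the estimated-centering and nuisance-plug-in corrections after the $V_{xnk}^{-1/2}$ rescaling is the other point requiring care.
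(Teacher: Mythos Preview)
Your proposal is essentially correct and follows the same architecture as the paper: condition on $\mathcal{I}_{-nk}$, exploit the block-diagonal structure of the $\lambda$-weighted design matrix induced by $\mathbb{E}[(D-p(X))h(X)\mid X]=0$, note that the rescaled treatment block is the identity so the sandwich collapses to $\Omega_{nk}$, and then invoke a Lindeberg--Feller CLT for triangular arrays using the fourth-moment and bounded-kurtosis controls (these give a uniform Lyapunov bound, so your ``truncation or $2+\delta$'' worry is resolved directly by Assumption~\ref{assump:momentbounds_dml}(iii),(iv)).

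The one place where the paper is sharper than your sketch is the estimated-centering correction. You propose to show that replacing $\widehat{S}_{-k}$ by $S_{-k}$ in $\widehat{W}_i$ is $o_p(1)$ after rescaling. The paper instead observes that $\widehat{W}_i - W_i$ is a scalar multiple of the \emph{third} regressor $(D_i-p(X_i))$ added to the fourth column; writing $\uline{\widehat{W}}_{nk}=\uline{W}_{nk}(I+\Delta_{nk})$ with $\Delta_{nk}$ having a single nonzero entry in position $(3,4)$, one gets the \emph{exact} identity $\Upsilon[\widehat{\theta}_{nk}^{*\prime},\widehat{V}_{xnk}/V_{xnk}]' = \Upsilon[\tilde{\theta}_{nk}^{*\prime},\widehat{V}_{xnk}/V_{xnk}]'$, i.e.\ the feasible and infeasible rescaled $\beta_2$-coefficients coincide exactly, with no asymptotic remainder. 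This is the familiar OLS invariance under adding a multiple of one regressor to another: the absorbed coefficient ($\beta_1$) changes, the target coefficient ($\beta_2$) does not. Your $o_p(1)$ route would succeed, but near the boundary it forces you to track $V_{xnk}^{-1/2}$-scaled remainders; the exact identity sidesteps this entirely and is worth recording.
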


Theorem \ref{thm:clt_lvci} presents a central limit theorem for the components of $\widehat{V}_{\tau nk}^* = \widehat{\beta}_{2nk}^2\widehat{V}_{xnk}$, properly rescaled and conditional on $\mathcal{I}_{-nk}$. This result holds regardless of whether the nuisance parameters are properly specified and primarily relies on the  independence of the folds. By Lemma \ref{lem:decomposition_remainder_lvci_reg}, conditional on $\mathcal{I}_{-nk}$,
\begin{equation}
\label{eq:decomposition_remainder_lvci_reg_dml}
\widehat{V}_{\tau n k}-V_{\tau n k}^* = G(n_k,V_{\tau n k}^*,\Omega_{nk},Z_{nk}) +  o_p\left(\frac{1}{n_k}\right) +  o_p\left(\sqrt{\frac{V_{\tau n k}^*}{n_k}}\right)+ o_p\left(\frac{V_{\tau n k}^*}{\sqrt{n_k}}\right).
\end{equation}
Then it is possible to construct adaptive confidence intervals, substituting the sample size $n_k$ and estimated statistics $(\widehat{V}_{\tau nk},\widehat{\Omega}_{nk})$.
\begin{align} \label{eq:feasible_confidenceset_dml}
	\begin{split}
		\widehat{CI}_{\alpha n k} = \bigg\{ &V_{\tau }^* \in \mathbb{R}_{+}, \zeta \in \{-1,1\}: \\  &F_{n_k,V_{\tau}^*,\widehat{\Omega}_{nk},\zeta}(\widehat{V}_{\tau nk}-V_{\tau}^*) \in \left[ q_{\alpha/2}(n_k,V_{\tau}^*,\widehat{\Omega}_{nk},\zeta), q_{1-\alpha/2}(n_k,V_{\tau}^*,\widehat{\Omega}_{nk},\zeta)\right] \bigg\}.
	\end{split}
\end{align}
The confidence intervals take the same form as in the regression case in \eqref{eq:feasible_confidenceset}, except that now the inputs are obtained from the cross-fitted regression step. The confidence interval is fast to compute because $(\widehat{V}_{\tau nk},\widehat{\Omega}_{nk})$ only needs to be computed once. It is worth noting that because the confidence interval only uses information in fold $\mathcal{I}_{nk}$, the effective sample size is $n_k$. While this does not affect the nominal asymptotic size of the confidence interval, it may affect the power of tests against specific alternatives.

\subsubsection{Ensemble estimator}

We can construct an ``ensemble'' to aggregate across folds, defined as follows
\begin{equation}
\widehat{V}_{\tau n} := \frac{1}{K}\sum_{k=1}^K \widehat{\beta}_{2 n k}^2 \widehat{V}_{xnk}.
\label{eq:multistep_estimator}
\end{equation}
In Section \ref{consistencyefficiency}, I show that this ensemble estimator is efficient.

\subsubsection{Splitting uncertainty and median intervals}

So far in this section we have used the data from a single split or fold of the data. However, the choice of fold $k$ or the particular split may lead to different values of $\widehat{V}_{\tau nk}$ and hence distinct confidence intervals. \cite{chernozhukov2020generic} propose an aggregation procedure based on ``median parameter'' confidence intervals, inspired by false-discovery rate adjustments. Their proposed conditional t-tests are not directly applicable here because $\widehat{V}_{\tau nk}$ conditionally converges to a generalized Chi-square. However, I show that the basic idea can still be adapted.

Let $K$ be the total number of folds, obtained across one or more splits of the data. For instance, a 2-fold sample with 10 splits would have $K=20$, Let $\inf \widehat{CI}_{\alpha nk}$ and $\sup \widehat{CI}_{\alpha nk}$ denote the lower and upper bounds of $\widehat{CI}_{\alpha nk}$, respectively, and $\text{Med}_K\{\cdots\}$ denote the median over a  set indexed by $k = \{1,\ldots,K\}$. If $K$ is even then two quantities might be tied for the median, and in that case I compute their midpoint. The multifold confidence interval is defined  as

%Now define, $\widehat{\phi}_{k}(V_{\tau}^*,\alpha) = 1$ if $V_{\tau}^* \in \widehat{CI}_{\alpha nk}$ and zero otherwise, for $\widehat{CI}_{\alpha nk}$ in the form of \eqref{eq:feasible_confidenceset_dml}. The multifold confidence interval is:

\begin{equation}
\widehat{CI}_{\alpha n}^{\text{multifold}} = \left[ \text{Med}_{K}\left\{ \inf \widehat{CI}_{\frac{\alpha}{2} nk} \right\}, \text{Med}_{K}\left\{ \sup \widehat{CI}_{\frac{\alpha}{2} nk} \right\} \right].
\label{eq:multisplit_cis}
\end{equation}

%\begin{equation}
%\widehat{CI}_{\alpha n}^{\text{multifold}} = \left\{V_{\tau}^* \in \mathbb{R}_{+}:  \frac{1}{K}\sum_{k} \widehat{\phi}_{k}(V_{\tau}^*,\alpha/2) \ge \frac{1}{2} \right\}.
%\label{eq:multisplit_cis}
%\end{equation}
Intuitively, the $K$ fold-specific intervals ``vote'' to include a particular value, and $V_{\tau}^* \in \widehat{CI}_{\alpha n}^{\text{multifold}}$ only if there is a majority vote. The ``median'' interval $\widehat{CI}_{\alpha n}^{\text{multifold}}$ contains values within the median lower bound and the median upper bound across folds.  To control the overall false discovery rate, I adjust the nominal size to $\alpha/2$. This adjustment produces a conservative interval because it assumes a worst-case dependence structure between the folds and the splits, regardless of the size of $K$. In some instances, the asymptotic coverage probability may be strictly higher than $(1-\alpha)$, particularly when there is a lot of heterogeneity.\footnote{For instance, given a single split, Theorem \ref{thm:rootn_consistency_and_efficiency} implies that the $\{\sqrt{n_k}(\widehat{V}_{\tau nk}-V_\tau)\}_{k=1}^K$ are asymptotically uncorrelated. However, near the boundary, the estimators converge at a rate faster than $\sqrt{n}$ and their relative dependence structure at that rate is unclear.} At the boundary, with low effect heterogeneity or none at all, it is much harder to asses the dependence structure between the fold-specific estimators. One of the benefits of using a worst-case approach is that it provides coverage guarantees under weak assumptions. Moreover, the empirical example illustrates that even though these intervals are conservative, they may have a short length in practice.

%!TEX root = ./main.tex

\section{Large Sample Theory}
\label{largesampletheory}

\subsection{$\sqrt{n}$-Consistency, Efficiency, and Boundary Rates}
\label{consistencyefficiency}
Let $\gamma \in \Gamma$ denote a probability distribution over i.i.d observations $(Y_{1i},Y_{0i},D_i,X_i)$. I use the notation $\mathbb{E}_{\gamma}[\cdot]$ and $\mathbb{P}_{\gamma}(\cdot)$ to denote the expectation and probability under $\gamma$, respectively. Let $S_{-k}(x)$ be the function defined in \eqref{eq:dml_defn_auxiliaryfunctions}. The true value of the CATE and VCATE is given by $\tau_\gamma$ and $V_{\tau}(\gamma)$, respectively. The pseudo-VCATE is given by
% I define the unconditional and conditional $L_2$ norms as $\Vert f \Vert_{\gamma,2} = \sqrt{\mathbb{E}_\gamma[\Vert f(X)\Vert^2]}$ and $\Vert f \Vert_{\gamma,\mathcal{I}_{-nk},2} = \sqrt{\mathbb{E}_\gamma[\Vert f(X)\Vert^2 \mid \mathcal{I}_{-nk}]}$, where $\Vert \cdot \Vert$ denotes the spectral norm. 
\begin{equation} 
\label{eq:defn_pseudovcate}
V_{\tau}^*(\gamma,\mathcal{I}_{-nk}) := V_{\tau}(\gamma) - \inf_{(\beta_1,\beta_2)\in \mathbb{R}^2}\mathbb{E}_{\gamma}[(\tau_\gamma(X) - \beta_1 - \beta_2 S_{-k}(X))^2  \mid \mathcal{I}_{-nk}].
\end{equation}
Define the estimation error of the CATE in the $L_2$ norm as
\begin{equation}
\omega\left(\gamma \right):= \sqrt{\mathbb{E}_\gamma[\Vert \widehat{\tau}_{-k}(X) - \tau_\gamma(X) \Vert^2]}. \end{equation}
We can bound the difference between the pseudo-VCATE and its true value:
\begin{thm}[Bias of the pseudo-VCATE] 
	\label{lem:convergence_lvci_to_vci} Under the distribution $\gamma \in \Gamma$,
	\begin{equation} \mathbb{E}_\gamma[| V_{\tau}(\gamma)-V_{\tau}^*(\gamma,\mathcal{I}_{-nk}) |] \le \min\left\{16 \times \omega(\gamma)^2 ,V_{\tau}(\gamma)\right\}.
	\label{eq:nonasymptotic_bound_pseudovtau}
	\end{equation}
\end{thm}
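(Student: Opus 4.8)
The plan is to observe that, by the definition of the pseudo-VCATE, the quantity inside the expectation is exactly the minimized mean-squared error of the linear projection of $\tau_\gamma(X)$ onto $\{1, S_{-k}(X)\}$:
\[
R(\gamma,\mathcal{I}_{-nk}) := V_\tau(\gamma) - V_\tau^*(\gamma,\mathcal{I}_{-nk}) = \inf_{(\beta_1,\beta_2)\in\mathbb{R}^2} \mathbb{E}_\gamma[(\tau_\gamma(X) - \beta_1 - \beta_2 S_{-k}(X))^2 \mid \mathcal{I}_{-nk}].
\]
Since this is an infimum of expectations of squares, $R(\gamma,\mathcal{I}_{-nk}) \ge 0$ almost surely, so the absolute value in the statement is redundant and I only need an upper bound on $R$ itself. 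Both halves of the $\min$ then follow from evaluating the objective at cleverly chosen (feasible, not necessarily optimal) coefficients, since the infimum can only be smaller than any particular value.

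For the $V_\tau(\gamma)$ bound, I would plug in $\beta_2 = 0$ and $\beta_1 = \mathbb{E}_\gamma[\tau_\gamma(X)]$, which yields $R(\gamma,\mathcal{I}_{-nk}) \le \mathbb{V}_\gamma(\tau_\gamma(X)) = V_\tau(\gamma)$ pointwise in $\mathcal{I}_{-nk}$; taking the outer expectation preserves the inequality. For the estimation-error bound, the key observation is that $S_{-k}(x) = \widehat{\tau}_{-k}(x) - \mathbb{E}[\widehat{\tau}_{-k}(X)\mid\mathcal{I}_{-nk}]$ is an affine function of $\widehat{\tau}_{-k}(x)$, so the linear span of $\{1, S_{-k}\}$ coincides with that of $\{1, \widehat{\tau}_{-k}\}$ and the infimum over $(\beta_1,\beta_2)$ equals $\inf_{a,b}\mathbb{E}_\gamma[(\tau_\gamma(X) - a - b\,\widehat{\tau}_{-k}(X))^2 \mid \mathcal{I}_{-nk}]$. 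Evaluating at $a = 0$, $b = 1$ gives $R(\gamma,\mathcal{I}_{-nk}) \le \mathbb{E}_\gamma[(\tau_\gamma(X) - \widehat{\tau}_{-k}(X))^2 \mid \mathcal{I}_{-nk}]$, and taking the outer expectation over the training fold and applying the law of iterated expectations produces $\mathbb{E}_\gamma[R] \le \omega(\gamma)^2$.

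Combining the two bounds gives $\mathbb{E}_\gamma[|V_\tau(\gamma) - V_\tau^*(\gamma,\mathcal{I}_{-nk})|] = \mathbb{E}_\gamma[R] \le \min\{\omega(\gamma)^2, V_\tau(\gamma)\}$, which implies the stated inequality a fortiori since $\omega(\gamma)^2 \le 16\,\omega(\gamma)^2$. I do not expect a serious obstacle here: the content is entirely the choice of feasible $(\beta_1,\beta_2)$ together with the span-equivalence observation. The only point requiring care is the bookkeeping of conditional versus unconditional expectations — the projection and the definition of $V_\tau^*$ condition on $\mathcal{I}_{-nk}$, whereas $\omega(\gamma)^2$ integrates it out — and confirming that $S_{-k}$ is genuinely population-centered, so that no extra error from estimating $\mathbb{E}[\widehat{\tau}_{-k}(X)\mid\mathcal{I}_{-nk}]$ enters. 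The generous constant $16$ presumably leaves slack to absorb exactly such centering or cross-fitting discrepancies in a less sharp version of the argument; the clean route above already delivers the conclusion with constant one.
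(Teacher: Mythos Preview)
Your argument is correct and in fact sharper than the paper's: you obtain the bound with constant $1$ in place of $16$. The paper takes a different route for the estimation-error half. Rather than invoking the span equivalence $\operatorname{span}\{1,S_{-k}\}=\operatorname{span}\{1,\widehat{\tau}_{-k}\}$ and plugging in $(a,b)=(0,1)$, it writes down the optimal slope $\beta_2^*$ explicitly, bounds $(\beta_2^*-1)^2\,\mathbb{V}_F(S_{-k}(X))$ via Cauchy--Schwarz, and then applies the $L_2$ triangle inequality to the residual $\tau_\gamma(X)-\tau_{\gamma,av}-S_{-k}(X)-(\beta_2^*-1)S_{-k}(X)$; this picks up a factor $2^2$. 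A further factor $2^2$ enters when the centered error $(\tau_\gamma-\widehat{\tau}_{-k})-\mathbb{E}_F[\tau_\gamma-\widehat{\tau}_{-k}]$ is bounded by the uncentered one, yielding the $16$. Your span-equivalence observation sidesteps both losses at once, since evaluating the projection objective directly at $\widehat{\tau}_{-k}$ bypasses any analysis of the optimizer and any centering step. The paper's approach has the mild advantage of making the role of $\beta_2^*\approx 1$ explicit, which connects to later arguments about the regression coefficients; your approach buys a cleaner constant and a shorter proof. Your handling of the conditional-versus-unconditional bookkeeping is also correct: the pointwise bound $R(\gamma,\mathcal{I}_{-nk})\le \mathbb{E}_\gamma[(\tau_\gamma(X)-\widehat{\tau}_{-k}(X))^2\mid \mathcal{I}_{-nk}]$ integrates to $\omega(\gamma)^2$ by iterated expectations, exactly as you say.
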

Theorem \ref{lem:convergence_lvci_to_vci} derives a non-asymptotic bound for the VCATE as the minimum of two key quantities: (i) the conditional $L_2$ error between the candidate function and the true CATE, and (ii) the true value of the VCATE. This proof only relies on the definition in \eqref{eq:defn_pseudovcate}. For instance, when $V_{\tau}(\gamma) = 0$, then $V_{\tau}(\gamma)-V_{\tau}^*(\gamma,\mathcal{I}_{-nk}) = 0$, regardless of whether $\widehat{\tau}_{-k}(\cdot)$ is properly specified. The difference between the two quantities is also small if $\omega(\gamma)$ is sufficiently close to zero. In the multi-step approach, $\omega(\gamma)$ captures the first-stage uncertainty from estimating the CATE, which decreases with sample size. I consider the following convergence condition.
\begin{assump}[Convergence CATE]
	\label{assump:convergencenuisance}
	$\sqrt{n_k}\omega(\gamma_n)^2 = o(1)$ as $n \to \infty$.	
\end{assump}

Assumption \ref{assump:convergencenuisance} imposes an $L_2$ consistency condition on the CATE.  A large class of machine learning models can meet this requirement. For example, \cite{bickel2009simultaneous} and \cite{belloni2014inference} evaluate rates of convergence under sparse models, \cite{chen1999improved} for neural networks, and \cite{wager2015adaptive} for regression trees and random forest.

\begin{thm}[Faster than $\sqrt{n}$ convergence near boundary] 
	\label{thm:superconsistency}	
	Consider a sequence of data generating processes $\{\gamma_n\}_{n=1}^{\infty}$ where $V_{\tau}(\gamma_n) \to 0$ as $n \to \infty$ and Assumptions \ref{assump:unconfound_sutva},  \ref{assump:momentbounds_dml}, \ref{assump:boundsquantities}, \ref{assump:randomsampling} and \ref{assump:convergencenuisance}  hold. Define $\Delta_{nk} := \left( \widehat{V}_{\tau n k} - V_{\tau}(\gamma_n) \right)$ for the estimator defined in \eqref{eq:vtaunk_dml}.  Then (i) $\sqrt{n_k}\Delta_{nk} = o_p(1)$, and (ii) if in addition $n_k^{1/2+\rho}V_{\tau}(\gamma_n)=o(1) $ for $\rho \in [0,1/2)$, then $n_k^{1/2+\rho}\Delta_{nk} = o_p(1)$.
	
\end{thm}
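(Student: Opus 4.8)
The plan is to split the total error into an estimation-error piece and a pseudo-VCATE bias piece,
$$\Delta_{nk} = \big(\widehat{V}_{\tau nk} - V_{\tau nk}^*\big) + \big(V_{\tau nk}^* - V_{\tau}(\gamma_n)\big),$$
and control each at the required rate. The unifying observation is that $0 \le V_{\tau nk}^* \le V_{\tau}(\gamma_n) \to 0$: the pseudo-VCATE is nonnegative and, being the projected component of $\mathbb{V}(\tau)$, is bounded above by the true VCATE. Hence every occurrence of $V_{\tau nk}^*$ in the stochastic expansion is sandwiched by a deterministic null sequence, which is what makes the boundary rates attainable.

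For the estimation error I would invoke the cross-fitted expansion \eqref{eq:decomposition_remainder_lvci_reg_dml}, which holds conditional on $\mathcal{I}_{-nk}$ by Theorem \ref{thm:clt_lvci} and Lemma \ref{lem:decomposition_remainder_lvci_reg}. Scaling the three components of $G$ in \eqref{eq:empiricalprocess} by $n_k^{1/2+\rho}$ produces $n_k^{\rho-1/2}(e_1'\Omega_{nk}^{1/2}Z_{nk})^2$, then $2\zeta n_k^{\rho}\sqrt{V_{\tau nk}^*}\,(e_1'\Omega_{nk}^{1/2}Z_{nk})$, and finally $n_k^{\rho}V_{\tau nk}^*\,(e_2'\Omega_{nk}^{1/2}Z_{nk})$; the three $o_p$ remainders scale analogously. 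Since $\Omega_{nk}$ has eigenvalues bounded away from zero and infinity almost surely and $Z_{nk}\to \mathcal{N}(0,I)$, the quadratic and linear forms are conditionally $O_p(1)$ uniformly in $\mathcal{I}_{-nk}$. It then suffices to check three rate facts: $n_k^{\rho-1/2}\to 0$ because $\rho<1/2$ (this kills the leading chi-square term), $n_k^{\rho}\sqrt{V_{\tau nk}^*} \le n_k^{\rho}\sqrt{V_{\tau}(\gamma_n)} = o(1)$, and $n_k^{\rho}V_{\tau nk}^* \le n_k^{\rho}V_{\tau}(\gamma_n) = o(1)$; the latter two both follow from the hypothesis $n_k^{1/2+\rho}V_{\tau}(\gamma_n)=o(1)$ together with $\rho<1/2$ (e.g.\ $n_k^{2\rho}V_\tau(\gamma_n)=o(n_k^{\rho-1/2})=o(1)$ yields the square-root bound). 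Part (i) is then the special case $\rho=0$, where only $n_k^{-1/2}\to 0$ and $V_{\tau nk}^*\to 0$ are needed.

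For the bias piece I would apply Theorem \ref{lem:convergence_lvci_to_vci}, which bounds the unconditional expectation $\mathbb{E}_{\gamma_n}[|V_{\tau nk}^* - V_{\tau}(\gamma_n)|]$ by $\min\{16\,\omega(\gamma_n)^2,\, V_{\tau}(\gamma_n)\}$, and convert to a probability statement via Markov's inequality. For part (i), scaling by $\sqrt{n_k}$ and using the $\omega^2$ branch gives $\sqrt{n_k}\,\mathbb{E}_{\gamma_n}[|V_{\tau nk}^* - V_{\tau}(\gamma_n)|]\le 16\sqrt{n_k}\,\omega(\gamma_n)^2 = o(1)$ by Assumption \ref{assump:convergencenuisance}. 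For part (ii), scaling by $n_k^{1/2+\rho}$ and using the $V_{\tau}(\gamma_n)$ branch gives $n_k^{1/2+\rho}\,\mathbb{E}_{\gamma_n}[|V_{\tau nk}^* - V_{\tau}(\gamma_n)|]\le n_k^{1/2+\rho}V_{\tau}(\gamma_n) = o(1)$ directly from the extra hypothesis. In both cases Markov yields an $o_p(1)$ bias contribution; adding the estimation-error bound completes the claim.

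The main technical obstacle is bookkeeping rather than any single deep step: one must verify that for each scaling exponent every one of the six terms of the expansion stays $o_p(1)$, and in particular that $\rho<1/2$ and the rate condition $n_k^{1/2+\rho}V_\tau(\gamma_n)=o(1)$ together cover precisely the chi-square term and the cross/normal terms, respectively, while the bias is handled by whichever branch of the $\min$ is active. A secondary, routine point is the passage from conditional to unconditional convergence: because the conditional probability bounds hold almost surely and are dominated by $1$, bounded convergence upgrades each conditional $o_p(1)$ to an unconditional $o_p(1)$, so the stated conclusions follow.
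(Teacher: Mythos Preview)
Your proposal is correct and follows essentially the same approach as the paper: the identical decomposition into an estimation-error piece $\widehat V_{\tau nk}-V_{\tau nk}^*$ and a bias piece $V_{\tau nk}^*-V_\tau(\gamma_n)$, controlled respectively via the expansion of Lemma~\ref{lem:decomposition_remainder_lvci_reg}/Theorem~\ref{thm:clt_lvci} together with the sandwich $0\le V_{\tau nk}^*\le V_\tau(\gamma_n)$, and via Theorem~\ref{lem:convergence_lvci_to_vci} plus Markov (using the $\omega^2$ branch for part~(i) and the $V_\tau$ branch for part~(ii)), with bounded convergence to pass from conditional to unconditional $o_p(1)$. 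The only cosmetic difference is that you spell out the three $G$-terms one by one, whereas the paper just cites the $O_p(\max\{n_k^{-1},\sqrt{V_{\tau nk}^*/n_k}\})$ rate from Lemma~\ref{lem:decomposition_remainder_lvci_reg}.
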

Theorem \ref{thm:superconsistency} shows that multi-step estimators of the VCATE converge to zero faster than $\sqrt{n_k}$ near the boundary. I formalize ``near'' by considering sequences of distributions where the VCATE approaches zero. Theorem \ref{thm:superconsistency} relies on the non-asymptotic bound in Theorem \ref{lem:convergence_lvci_to_vci}, the normal approximation in Theorem \ref{thm:clt_lvci}, and the empirical process in Lemma \ref{lem:decomposition_remainder_lvci_reg}. There is no requirement on the rate of convergence of $\widehat{\mu}_{-0k}(\cdot)$ (and consequently on the generated regressor $M_{-k_i}(\cdot)$), only an assumption that $p(x)$ is known and that the CATE is estimated at a sufficiently fast rate. Furthermore, if the true CATE is nearly flat in the sense that for $\rho \in [0,1/2)$, then $n_k^{1/2+\rho}V_{\tau}(\gamma_n) = o(1)$ (or even exactly equal to zero), then the estimator has a faster rate guarantee.

To prove efficiency we have the stronger requirement that all the nuisance functions converge to their true value in the $L_4$ norm and at $n_k^{1/4}$ rate in the $L_2$ norm.

\begin{assump}[Regularity conditions]
	\label{assump:regularityconditions}
	Define the residuals $U_i = Y_i - \mathbb{E}_{\gamma_n}[Y_i \mid D_i,X_i]$. (i) $\mathbb{E}_{\gamma_n}[\Vert Y_i \Vert^4]$, $\mathbb{E}_{\gamma_n}[\Vert U_i\Vert^4]$, $\mathbb{E}_{\gamma_n}[\Vert \eta(X_i) \Vert^4]$, (ii) $\mathbb{E}_{\gamma_n}[\Vert \widehat{\eta}_{-k}(X_i) \Vert^4]$ are uniformly bounded, (iii) $\mathbb{E}_{\gamma_n}\left[ \Vert \widehat{\eta}_{-k}(X_) - \eta(X_i) \Vert^4\right] \to 0$, (iv) $\sqrt{n_k}\mathbb{E}_{\gamma_n}[\Vert\widehat{\eta}_{-k}(X_i)-\eta(X_i)\Vert^2] = o(1)$ for all $k \in \{1,\ldots,K\}$.	
\end{assump}

The next step is to show that the estimation error of the fold-specific VCATE converges at $\sqrt{n_k}$ to an average of efficient influence functions.

\begin{thm}[$\sqrt{n}$ Consistency and Efficiency]
	\label{thm:rootn_consistency_and_efficiency}
	Consider a sequence of data generating processes $\{\gamma_n\}_{n=1}^{\infty}$ where $V_{\tau}(\gamma_n) \to 0$ as $n \to \infty$ and Assumptions \ref{assump:unconfound_sutva},  \ref{assump:momentbounds_dml}, \ref{assump:boundsquantities}, \ref{assump:randomsampling}, \ref{assump:convergencenuisance}, and \ref{assump:regularityconditions}  hold. Then
	$$ \sqrt{n_k}(\widehat{V}_{\tau nk}-V_{\tau}(\gamma_n)) = \frac{1}{\sqrt{n_k}}\sum_{i \in \mathcal{I}_{nk}} \varphi_i + o_p(1). $$
\end{thm}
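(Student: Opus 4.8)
The plan is to reduce efficiency to the standard debiased-machine-learning argument by exploiting the finite-sample equivalence in Lemma \ref{lem:equivalence_efficient_simple}. That lemma rewrites the fold-$k$ estimator as an empirical average of the efficient influence function at a reconstructed nuisance, $\widehat{V}_{\tau nk}=\frac{1}{n_k}\sum_{i\in\mathcal{I}_{nk}}\varphi(Y_i,D_i,X_i,\tilde\eta_{\widehat\theta_{nk}})$, with $\tilde\eta_{\widehat\theta_{nk}}$ given by \eqref{eq:pseudonuisancefunctions} and $\varphi$ by \eqref{eq:efficient_influence}. Since $\varphi_i=\varphi(Y_i,D_i,X_i,\eta)-V_\tau(\gamma_n)$, it is enough to show that the plug-in average and the oracle average $\frac{1}{n_k}\sum_{i\in\mathcal{I}_{nk}}\varphi(Y_i,D_i,X_i,\eta)$ differ by $o_p(n_k^{-1/2})$; the target representation then follows by subtracting $V_\tau(\gamma_n)$.

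First I would pin down the rate at which the reconstructed nuisance approaches $\eta$. Lemma \ref{lem:projection_weak} identifies the efficient basis under which the population projection coefficient $\theta^*$ reproduces $\eta$ exactly, and with the cross-fitted $S_{-k}$ and $M_{-k}$ of \eqref{eq:dml_defn_auxiliaryfunctions} the reconstruction inherits two errors: the first-stage estimate $\widehat\eta_{-k}$, which is $L_4$-bounded and $o_p(n_k^{-1/4})$ in $L_2$ by Assumption \ref{assump:regularityconditions}.(iii)--(iv), feeding the generated regressors; and the finite-dimensional least-squares coefficient $\widehat\theta_{nk}$, which is $\sqrt{n_k}$-consistent for $\theta^*_{nk}$ by Theorem \ref{thm:clt_lvci}. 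It is precisely the choice $M_{-k}(x)=\widehat\mu_{0,-k}(x)+p(x)\widehat\tau_{-k}(x)$ that makes $\tilde\mu_0$ and $\tilde\tau$ simultaneously consistent, which is what the orthogonality step below requires; this is why only the subset of \cite{chernozhukov2020generic} estimators using this $M$ is efficient.

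The core is a two-term decomposition of $R_{nk}:=\frac{1}{n_k}\sum_{i\in\mathcal{I}_{nk}}[\varphi(Y_i,D_i,X_i,\tilde\eta_{\widehat\theta_{nk}})-\varphi(Y_i,D_i,X_i,\eta)]$ into an empirical-process piece and a bias piece. The empirical-process piece, the deviation of $R_{nk}$ from its conditional mean given the nuisances, is handled by cross-fitting: conditioning on $\mathcal{I}_{-nk}$ freezes the generated regressors, so that the $L_2$ shrinkage of the nuisance error together with the moment bounds of Assumption \ref{assump:momentbounds_dml} drive it to $o_p(n_k^{-1/2})$. The bias piece, its conditional mean, is where Neyman orthogonality of $\varphi$ annihilates the first-order term and leaves a remainder bounded by products of nuisance errors; the $o_p(n_k^{-1/4})$ $L_2$-rate in Assumption \ref{assump:regularityconditions}.(iv) then renders these quadratic terms $o_p(n_k^{-1/2})$. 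The in-sample fact, noted after Lemma \ref{lem:equivalence_efficient_simple}, that the regression first-order conditions set the average bias correction exactly to zero is the finite-sample manifestation of this orthogonality and is what supplies the oracle bias-correction fluctuation ``for free.''

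The main obstacle is that $\tilde\eta_{\widehat\theta_{nk}}$ is not a pure cross-fitted nuisance: the low-dimensional $\widehat\theta_{nk}$ is estimated on the very fold $\mathcal{I}_{nk}$ over which $\varphi$ is averaged, so the independence that cross-fitting normally supplies is broken. I would resolve this by separating $\widehat\theta_{nk}$ from the $\mathcal{I}_{-nk}$-measurable coefficient $\theta^*_{nk}$ and using that the plug-in average is a quadratic function of $\theta$. A second-order expansion around $\theta^*_{nk}$ produces a Hessian term of order $\|\widehat\theta_{nk}-\theta^*_{nk}\|^2=O_p(n_k^{-1})$ and a linear term whose gradient, by orthogonality at the truth, is negligible up to the first-stage error, i.e. $o_p(n_k^{-1/4})$; multiplying by $\|\widehat\theta_{nk}-\theta^*_{nk}\|=O_p(n_k^{-1/2})$ leaves $o_p(n_k^{-1/2})$, and the fold-$k$ correlation between the two factors is harmless because both are small. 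Verifying that this gradient genuinely vanishes---not merely that it is bounded---is the delicate step, and it is exactly where the joint consistency of $\tilde\mu_0$ and $\tilde\tau$ from Lemma \ref{lem:projection_weak} is indispensable.
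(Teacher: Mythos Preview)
Your proposal captures the strong-heterogeneity half of the argument but misses the essential case split that the paper relies on. The paper's proof divides sequences into (i) $V_\tau(\gamma_n)\to 0$ and (ii) $V_\tau(\gamma_n)\to V_\tau>0$, and the DML-style expansion you outline is only deployed in case (ii). Your claim that ``$\widehat\theta_{nk}$ is $\sqrt{n_k}$-consistent for $\theta^*_{nk}$ by Theorem \ref{thm:clt_lvci}'' is the step that breaks in case (i): Theorem \ref{thm:clt_lvci} delivers $\sqrt{n_k}\,\sqrt{V_{xnk}}(\widehat\beta_{2nk}-\beta_{2nk})=O_p(1)$, not $\sqrt{n_k}(\widehat\beta_{2nk}-\beta_{2nk})=O_p(1)$. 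When $V_\tau(\gamma_n)\to 0$ the generated regressor $S_{-k}(X)$ degenerates, $V_{xnk}\to 0$, and the unnormalized $\widehat\beta_{2nk}$ need not converge to $1$ at any useful rate; consequently your gradient-times-coefficient-error bound $o_p(n_k^{-1/4})\cdot O_p(n_k^{-1/2})$ fails because the second factor is not $O_p(n_k^{-1/2})$. Lemma C.5 in the paper, which establishes $\widehat\theta_{nk}-\theta_n=o_p(n_k^{-1/4})$, explicitly assumes $V_{\tau n}\to V_\tau>0$ for this reason.

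For case (i) the paper abandons the orthogonality expansion entirely and instead uses a direct triangle inequality: $\sqrt{n_k}|\widehat V_{\tau nk}-V_\tau(\gamma_n)|=o_p(1)$ by the super-consistency result of Theorem \ref{thm:superconsistency}, while $\sqrt{n_k}|n_k^{-1}\sum_{i\in\mathcal I_{nk}}\varphi_i-V_\tau(\gamma_n)|=o_p(1)$ because $\mathbb V(\varphi_i)\to 0$ by Corollary \ref{cor:bound_varefficient}. Both sides of the asserted equality are thus $o_p(1)$ separately, so their difference is too. This case is not a technicality you can absorb into the Case-(ii) machinery; it is exactly the boundary regime where the influence function degenerates, the regression is ill-conditioned, and the standard DML argument has no purchase. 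Your sketch of Case (ii) is close in spirit to the paper's Taylor expansion (Lemmas C.3--C.5), but without the near-homogeneity branch the proof is incomplete.
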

Theorem \ref{thm:rootn_consistency_and_efficiency} shows that the fold-specific estimator converges at $\sqrt{n_k}$-rate to an average of i.i.d influence function. This requires standard regularity conditions. The proof of Theorem \ref{thm:rootn_consistency_and_efficiency} is non-standard due to the multi-step nature of the procedure. I start by applying Lemma \ref{lem:equivalence_efficient_simple}, which shows hows to write $\widehat{V}_{\tau nk}$ as an average of estimated influence functions. I break down the proof into sequences where $V_{\tau}(\gamma_n)$ converges to zero and those where it's bounded away from zero. For the first part, I leverage (a) the boundary convergence result in Theorem \ref{thm:superconsistency}, (b) the bound for $\mathbb{V}(\varphi_i)$ in Lemma \ref{cor:bound_varefficient}. For the second part, I provide a novel decomposition of regression adjusted nuisance functions. The key is to prove that the regression parameters $\widehat{\theta}_{nk}$ converge at $n_k^{1/4}$ rate to the values in Lemma \ref{lem:projection_weak}  for sequences where $V_{\tau}(\gamma_n) \to V_{\tau} > 0$. Once in this form, the rest of the proof relies on a traditional Taylor expansion argument.

The ensemble estimator $\widehat{V}_{\tau n}$ combines information from the whole sample. By definition $n = n_k \times K$ and $K$ is finite, which means that algebraically $\sqrt{n}(\widehat{V}_{\tau n}-V_{\tau}(\gamma_n)) = \frac{\sqrt{n_k}}{\sqrt{K}}\sum_{k=1}^{K}(\widehat{V}_{\tau nk}-V_{\tau}(\gamma_n))$, and by Theorem \ref{thm:rootn_consistency_and_efficiency},
$$ \sqrt{n}(\widehat{V}_{\tau n}-V_{\tau}(\gamma_n)) = \left[ \frac{1}{\sqrt{n_k K}}\sum_{k=1}^K\sum_{i \in \mathcal{I}_{nk}} \varphi_i + 
\frac{1}{\sqrt{K}}\sum_{k=1}^K o_p(1)\right] = \frac{1}{\sqrt{n}}\sum_{i=1}^n \varphi_i + o_p(1). $$
This means that aggregating the estimators restores full efficiency, satisfying the property described in \eqref{eq:efficient_lineardecomposition}.

\subsection{Asymptotic Coverage}

I start by showing that the single fold confidence interval has uniform coverage for the pseudo-VCATE, and exact coverage under an additional assumption.

\begin{assump}[Exact coverage condition]
	\label{assump:exactcoverage}
	Let $\Omega_{nk,12}$ be the off-diagonal element of $\Omega_{nk}$. For each $t > 0$, $\underset{n \to \infty}{\lim\sup}\sup_{\gamma \in \Gamma} \mathbb{P}_{\gamma}(\sqrt{V_{\tau}^*(\gamma,\mathcal{I}_{-nk})}|\Omega_{nk,12}| > t) = 0$. 	
	
	% Consider a sequence of distributions  $\{\gamma_n\}_{n=1}^{\infty} \in \Gamma^{\infty}$ and data realizations $\{\mathcal{I}_{-nk}\}_{n=1}^{\infty}$. If $\{V_{\tau}^*(\gamma_{n},\mathcal{I}_{-nk})\}_{n=1}^{\infty}$ is a convergent sequence, then $\lim_{n \to \infty} V_{\tau}^*(\gamma_{n},\mathcal{I}_{-nk})\Omega_{nk,12} \to 0$. This properly  holds almost surely over data realizations, and also holds for all convergent subsequences.
\end{assump}

%\begin{assump}[Covariance condition]
%	\label{assump:exactcoverage}
%	Let $\Omega_{nk,12}$ be the off-diagonal element of $\Omega_{nk}$. Consider a sequence of distributions  $\{\gamma_n\}_{n=1}^{\infty} \in \Gamma^{\infty}$ and data realizations $\{\mathcal{I}_{-nk}\}_{n=1}^{\infty}$. If $\{V_{\tau}^*(\gamma_{n},\mathcal{I}_{-nk})\}_{n=1}^{\infty}$ is a convergent sequence, then $\lim_{n \to \infty} \sqrt{V_{\tau}^*}(\gamma_{n},\mathcal{I}_{-nk})\Omega_{nk,12} \to 0$. This properly  holds almost surely over data realizations, and also holds for all convergent subsequences.
%\end{assump}

Assumption \ref{assump:exactcoverage} states that the product of the pseudo-VCATE and the off-diagonal element of the limiting covariance matrix in \eqref{eq:omegank} needs to converge to zero uniformly.

\begin{thm}[Uniform Coverage of Pseudo-VCATE]
	\label{thm:exact_coverage_foldlvci}
	Let $\Gamma$ denote a set of distributions, constrained in such a way that Assumptions \ref{assump:unconfound_sutva},  \ref{assump:momentbounds_dml}, \ref{assump:boundsquantities}, and \ref{assump:randomsampling} hold. Let $\widehat{CI}_{\alpha n k} $ and $V_{\tau}^*(\gamma,\mathcal{I}_{-nk})$ be defined as in \eqref{eq:feasible_confidenceset_dml} and \eqref{eq:defn_pseudovcate}, respectively. Then
	\begin{equation}
	\label{eq:lower_uniformcoverage_pseudovcate}
	1-\alpha \le \underset{n \to \infty}{\lim\inf}\inf_{\gamma \in \Gamma} \mathbb{P}_{\gamma}\left( V_{\tau}^*(\gamma,\mathcal{I}_{-nk}) \in \widehat{CI}_{\alpha n k} \right) 
	\end{equation}
	If Assumption \ref{assump:exactcoverage} also holds, then
	\begin{equation}
	\label{eq:upper_uniformcoverage_pseudovcate}
	\underset{n \to \infty}{\lim\sup}\sup_{\gamma \in \Gamma} \mathbb{P}_{\gamma}\left( V_{\tau}^*(\gamma,\mathcal{I}_{-nk}) \in \widehat{CI}_{\alpha n k} \right) \le 1-\alpha.
	\end{equation}
\end{thm}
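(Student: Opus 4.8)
The plan is to reduce both inequalities to a statement along a single drifting sequence of distributions and then exploit a probability-integral-transform argument, conditioning throughout on $\mathcal{I}_{-nk}$. Following the generic framework of \cite{andrews2020generic}, it suffices to show that for every sequence $\gamma_n \in \Gamma$ the conditional coverage probability converges to $1-\alpha$; the bounds in Assumptions \ref{assump:momentbounds_dml} and \ref{assump:boundsquantities} guarantee that along any subsequence one can extract a further subsequence on which $V_{\tau nk}^*$ (suitably rescaled), the matrices $\Omega_{nk}$, and $F_{n_k,V_{\tau nk}^*,\Omega_{nk},\zeta}(0)$ converge. Fixing such a sequence, let $\zeta_{nk}\in\{-1,1\}$ be the sign in the expansion \eqref{eq:decomposition_remainder_lvci_reg_dml}. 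By Theorem \ref{thm:clt_lvci} and Lemma \ref{lem:decomposition_remainder_lvci_reg}, $\widehat{V}_{\tau nk}-V_{\tau nk}^{*}=G(n_k,V_{\tau nk}^{*},\Omega_{nk},Z_{nk},\zeta_{nk})+R_{nk}$ with $Z_{nk}\sim\mathcal{N}(0,I)$ and $R_{nk}$ of strictly smaller order than each additive term of $G$; rescaling by the dominant scale of $G$ shows the ratio converges in distribution to a nondegenerate limit with continuous CDF.

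For the lower bound \eqref{eq:lower_uniformcoverage_pseudovcate} I evaluate the test statistic at $V_\tau^{*}=V_{\tau nk}^{*}$ with $\zeta=\zeta_{nk}$. Using $\widehat{\Omega}_{nk}\to^{p}\Omega_{nk}$ (Theorem \ref{thm:clt_lvci}(ii)) and continuity of $F$ in its arguments, the scale-equivariance of $F$ together with the continuous mapping theorem and the probability integral transform give $F_{n_k,V_{\tau nk}^{*},\widehat{\Omega}_{nk},\zeta_{nk}}(\widehat{V}_{\tau nk}-V_{\tau nk}^{*})\to^{d}\mathrm{Unif}(0,1)$, the remainder $R_{nk}$ being absorbed since the limiting CDF has no atoms. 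The critical values in \eqref{eq:adjusted_criticalvalues} satisfy $q_{1-\alpha/2}-q_{\alpha/2}=1-\alpha$ and lie in $[0,1]$, so the probability that this statistic falls in $[q_{\alpha/2},q_{1-\alpha/2}]$ tends to exactly $1-\alpha$. Because $V_{\tau nk}^{*}\in\widehat{CI}_{\alpha nk}$ only requires \emph{some} $\zeta$ to satisfy the constraint, the conditional coverage is at least $1-\alpha$; integrating the bounded conditional probabilities over $\mathcal{I}_{-nk}$ yields \eqref{eq:lower_uniformcoverage_pseudovcate}.

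For the upper bound \eqref{eq:upper_uniformcoverage_pseudovcate} I must show the union over $\zeta$ adds no coverage asymptotically, which is precisely where Assumption \ref{assump:exactcoverage} enters. Writing $a:=e_1'\Omega_{nk}^{1/2}Z$ and $b:=e_2'\Omega_{nk}^{1/2}Z$ and decomposing $b=(\Omega_{nk,12}/\Omega_{nk,11})\,a+\epsilon$ with $\epsilon$ independent of $a$, the $\zeta$-dependence of \eqref{eq:empiricalprocess} collapses into the single coefficient $c_\zeta=2\zeta\sqrt{V_{\tau nk}^{*}/n_k}+V_{\tau nk}^{*}\Omega_{nk,12}/(\sqrt{n_k}\,\Omega_{nk,11})$ multiplying $a$; since $a$ is symmetric and $a^{2}$ is even, the law of $G$ depends on $\zeta$ only through $|c_\zeta|$. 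The reverse triangle inequality gives $\big||c_{+1}|-|c_{-1}|\big|\le 2V_{\tau nk}^{*}|\Omega_{nk,12}|/(\sqrt{n_k}\,\Omega_{nk,11})$, and comparing this shift to the standard deviation $2\sqrt{V_{\tau nk}^{*}/n_k}\,\Omega_{nk,11}^{1/2}$ of the $\zeta$-bearing term yields the \emph{scale-free} ratio $\sqrt{V_{\tau nk}^{*}}\,|\Omega_{nk,12}|/\Omega_{nk,11}$. By Assumption \ref{assump:boundsquantities}(iv), $\Omega_{nk,11}\ge\lambda_{\min}(\Omega_{nk})$ is bounded away from zero, so Assumption \ref{assump:exactcoverage} forces this ratio to zero. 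Coupling $G_{+1}$ and $G_{-1}$ through common $(a,\epsilon)$ and bounding the density of the dominant continuous component of $G$ then gives $\sup_{v}\big|F_{n_k,V_{\tau nk}^{*},\Omega_{nk},1}(v)-F_{n_k,V_{\tau nk}^{*},\Omega_{nk},-1}(v)\big|\to0$; hence the two acceptance events coincide in the limit, the union equals the single-$\zeta$ event, conditional coverage converges to exactly $1-\alpha$, and integration delivers \eqref{eq:upper_uniformcoverage_pseudovcate}.

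The main obstacle is this last Kolmogorov-distance step for the upper bound. The difficulty is that $G$ changes scale across the parameter space --- it is $O_p(n_k^{-1})$ at the boundary and $O_p(n_k^{-1/2})$ away from it --- so the coupling error must be shown negligible relative to the correct local scale in every regime at once; the virtue of the ratio $\sqrt{V_{\tau nk}^{*}}\,|\Omega_{nk,12}|/\Omega_{nk,11}$ is that it is invariant to this rescaling and matches Assumption \ref{assump:exactcoverage} exactly. I would verify the density bound by splitting into the three canonical regimes ($n_kV_{\tau nk}^{*}\to\lambda<\infty$; $n_kV_{\tau nk}^{*}\to\infty$ with $V_{\tau nk}^{*}\to0$; and $V_{\tau nk}^{*}$ bounded away from zero), noting that in the first regime $\zeta$ drops out automatically from the symmetry of $a$ even without Assumption \ref{assump:exactcoverage}, so the assumption is only binding in the other two. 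Uniformity over $\Gamma$ in both directions then follows from the standard subsequencing argument, with the moment bounds of Assumption \ref{assump:momentbounds_dml} guaranteeing the convergence of the extracted sub-subsequences.
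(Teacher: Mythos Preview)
Your proposal is correct and follows essentially the same architecture as the paper: reduce to drifting subsequences \`a la \cite{andrews2020generic}, condition on $\mathcal{I}_{-nk}$, invoke Theorem \ref{thm:clt_lvci} and Lemma \ref{lem:decomposition_remainder_lvci_reg} to get the expansion, use the probability integral transform for the correct $\zeta$ to obtain the lower bound, and then argue that under Assumption \ref{assump:exactcoverage} the choice of $\zeta$ is asymptotically irrelevant. Your identification of the scale-free quantity $\sqrt{V_{\tau nk}^{*}}\,|\Omega_{nk,12}|$ (up to the bounded factor $\Omega_{nk,11}$) as the controlling object is exactly what the paper isolates.

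The one substantive difference is in how the upper bound is executed. The paper packages the case analysis into a supporting lemma (Lemma \ref{lem:coverage_generic}) and, in the regime $n_kV_{\tau nk}^{*}\to\infty$, works directly with the normal limit: it writes the limiting variance as $\sigma(\zeta)^{2}$, computes the coverage $\Phi\big(\tfrac{\sigma(\zeta^{*})}{\sigma(\zeta)}z_{1-\alpha/2}\big)-\Phi\big(\tfrac{\sigma(\zeta^{*})}{\sigma(\zeta)}z_{\alpha/2}\big)$, and bounds its deviation from $1-\alpha$ by a first-order Taylor expansion in $|\sigma(\zeta)-\sigma(\zeta^{*})|$, which is in turn bounded by $|\sqrt{V_{\tau}^{*}}\,\Omega_{12}|$. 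Your route instead couples $G_{+1}$ and $G_{-1}$ through the common $(a,\epsilon)$ and bounds the Kolmogorov distance between the two CDFs, observing that the law depends on $\zeta$ only through $|c_{\zeta}|$ and that $\big||c_{+1}|-|c_{-1}|\big|\le|c_{+1}+c_{-1}|$. Both arguments arrive at the same controlling quantity, but the paper's Taylor-in-$\Phi$ computation yields an explicit bound $\tilde{\alpha}(t)$ without needing a uniform density bound on $G$, whereas your coupling argument is more unified across regimes but requires the density step you flag. The paper also handles the integration over $\mathcal{I}_{-nk}$ slightly more explicitly, via the truncation set $\mathcal{F}(\gamma,t)=\{|\sqrt{V_{\tau}^{*}}\,\Omega_{nk,12}|\le t\}$ and the decomposition $\mathbb{E}_{\gamma}[\rho]\le\sup_{\mathcal{F}(\gamma,t)}\rho+\mathbb{P}_{\gamma}(\mathcal{I}_{-nk}\notin\mathcal{F}(\gamma,t))$, which makes the passage from Assumption \ref{assump:exactcoverage} to the unconditional statement transparent.
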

Theorem \ref{thm:exact_coverage_foldlvci} shows that the confidence intervals always have uniform coverage of the pseudo-VCATE of at least $(1-\alpha)$.\footnote{The theorem only uses Assumptions \ref{assump:unconfound_sutva},  \ref{assump:momentbounds_dml}, \ref{assump:boundsquantities}, and \ref{assump:randomsampling} to verify normality  in Assumption \ref{assump:clt_regcoef}. A broad class of confidence intervals of the form in \eqref{eq:feasible_confidenceset} constructed from regression adjusted estimators will satisfy these uniformity properties.} The key is to prove that the confidence intervals yield coverage under arbitrary sequences of distributions, which includes cases where $V_{\tau n}^*$ is either equal to zero or approaches zero as $n \to \infty$. The proof builds on the approximation of Lemma \ref{lem:decomposition_remainder_lvci_reg} and shows that for every sequence, the test statistic for a particular $\zeta_{n_k} \in \{-1,1\}$ converges to a uniform distribution. This sequential characterization suffices to apply generic results in \cite{andrews2020generic}, which guarantee uniform coverage even in non standard cases like this one. Coverage over the pseudo-VCATE holds regardless of whether the nuisance functions are slow to converge or even misspecified.

The intervals are in general conservative because we're not plugging in the unknown $\zeta$, and instead define a robust confidence interval as the union of CIs with given $\zeta \in \{-1,1\}$. However, the key insight is that $\zeta$ only affects the coverage when the pseudo-VCATE is bounded away from zero. If Assumption \ref{assump:exactcoverage} holds, the value of $\zeta$ doesn't enter the asymptotic distribution of the estimator. I show that this condition holds automatically if the nuisance functions converge to their true value at a sufficiently fast rate.

\begin{lem}[Verify Exact Coverage]
	\label{lem:verifyexactcoverage}
	Let $\Gamma$ denote a set of distributions that satisfy Assumptions \ref{assump:unconfound_sutva},  \ref{assump:momentbounds_dml}, \ref{assump:boundsquantities},  \ref{assump:randomsampling}, \ref{assump:convergencenuisance}, and \ref{assump:regularityconditions}. Then Assumption \ref{assump:exactcoverage} also holds.
\end{lem}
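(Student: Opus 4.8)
The plan is to reduce the statement to showing that the off-diagonal entry $\Omega_{nk,12}$ of the limiting covariance matrix in \eqref{eq:omegank} is negligible, and then to pin down $\Omega_{nk,12}$ explicitly as a moment of the CATE projection residual. Since Assumption \ref{assump:boundsquantities} makes $V_{\tau}^*(\gamma,\mathcal{I}_{-nk})$ uniformly bounded, it suffices to prove that $\Omega_{nk,12}\to^p 0$ along every sequence $\{\gamma_n\}$; the uniform $\limsup\sup$ statement in Assumption \ref{assump:exactcoverage} then follows from the standard subsequence characterization of uniform convergence employed in \cite{andrews2020generic}. Throughout I would condition on $\mathcal{I}_{-nk}$, treating $\widehat{\tau}_{-k}$, $S_{-k}$ (from \eqref{eq:dml_defn_auxiliaryfunctions}) and $V_{xnk}$ as fixed functions.

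The first step is a closed form for $\Omega_{nk,12}$. Writing $A_i$ and $B_i$ for the two coordinates of the vector in \eqref{eq:omegank}, note $B_i=V_{xnk}^{-1}S_{-k}(X_i)^2$ is $X_i$-measurable while $A_i=\lambda(X_i)(D_i-p(X_i))V_{xnk}^{-1/2}S_{-k}(X_i)U_i$ carries the assignment noise. Using experimental randomization $D_i\indep(Y_{1i},Y_{0i})\mid X_i$ together with $\lambda=[p(1-p)]^{-1}$, a direct computation gives
\[
\mathbb{E}[\lambda(X_i)(D_i-p(X_i))U_i\mid X_i,\mathcal{I}_{-nk}]=\tau_\gamma(X_i)-\tilde{\tau}(X_i)=:r(X_i),
\]
where $\tilde{\tau}=\beta_1^*+\beta_2^*S_{-k}$ is the best linear projection of $\tau_\gamma$ onto $(1,S_{-k})$ established in \cite{chernozhukov2020generic}. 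Because $r$ is the projection residual, $\mathbb{E}[r\mid\mathcal{I}_{-nk}]=\mathbb{E}[S_{-k}r\mid\mathcal{I}_{-nk}]=0$, so the product-of-means term vanishes and $\Omega_{nk,12}=V_{xnk}^{-3/2}\mathbb{E}[S_{-k}^3 r\mid\mathcal{I}_{-nk}]$. Normalizing by $\tilde{S}:=V_{xnk}^{-1/2}S_{-k}$, whose conditional kurtosis is at most $1/\delta$ by Assumption \ref{assump:momentbounds_dml}.(iv), and applying H\"older with exponents $(4/3,4)$ gives
\[
|\Omega_{nk,12}|\le\mathbb{E}[|\tilde{S}|^3|r|\mid\mathcal{I}_{-nk}]\le(1/\delta)^{3/4}\,\mathbb{E}[r^4\mid\mathcal{I}_{-nk}]^{1/4}.
\]

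It remains to show $\mathbb{E}[r^4\mid\mathcal{I}_{-nk}]\to^p 0$, and this is the delicate step, because $V_{xnk}$ may itself shrink to zero along boundary sequences and one must keep it from entering the final rate. I would write $r=(\tau_c-S_{-k})+(1-\beta_2^*)S_{-k}$ with $\tau_c=\tau_\gamma-\mathbb{E}[\tau_\gamma]$ and bound each piece in $L_4$. The first piece is the centered first-stage error $(\tau_\gamma-\widehat{\tau}_{-k})-\mathbb{E}[\tau_\gamma-\widehat{\tau}_{-k}\mid\mathcal{I}_{-nk}]$, of conditional $L_4$ norm at most $2\|\tau_\gamma-\widehat{\tau}_{-k}\|_{L_4}$. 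For the second piece the projection formula yields $|1-\beta_2^*|\le\|\tau_c-S_{-k}\|_{L_2}/\sqrt{V_{xnk}}$, while the kurtosis bound yields $\|S_{-k}\|_{L_4}\le(1/\delta)^{1/4}\sqrt{V_{xnk}}$; the two factors of $\sqrt{V_{xnk}}$ cancel, so $|1-\beta_2^*|\,\|S_{-k}\|_{L_4}\le(1/\delta)^{1/4}\|\tau_c-S_{-k}\|_{L_2}$, uniformly in $V_{xnk}$. Collecting terms gives $\mathbb{E}[r^4\mid\mathcal{I}_{-nk}]^{1/4}\le C_\delta\,\mathbb{E}[|\widehat{\tau}_{-k}(X)-\tau_\gamma(X)|^4\mid\mathcal{I}_{-nk}]^{1/4}$ for a constant $C_\delta$ depending only on $\delta$. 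Assumption \ref{assump:regularityconditions}.(iii) forces the unconditional fourth-moment error to zero, so Markov's inequality delivers the conditional version in probability, and combining with the boundedness of $V_{\tau}^*$ from Assumption \ref{assump:boundsquantities} yields $\sqrt{V_{\tau}^*(\gamma,\mathcal{I}_{-nk})}\,|\Omega_{nk,12}|\to^p 0$, which is Assumption \ref{assump:exactcoverage}. The main obstacle is precisely this last step: isolating the projection residual $r$ inside $\Omega_{nk,12}$ and controlling it in $L_4$ so that the possibly degenerate $V_{xnk}$ drops out of the rate rather than blowing it up.
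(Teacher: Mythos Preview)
Your argument is correct and takes a genuinely different route from the paper. The paper splits along convergent subsequences of $V_{\tau}(\gamma_{n_\ell})$: when $V_{\tau}(\gamma_{n_\ell})\to 0$ it exploits $V_{\tau}^*\le V_{\tau}\to 0$ together with the eigenvalue bound on $\Omega_{nk}$ to kill the product $\sqrt{V_{\tau}^*}\,|\Omega_{nk,12}|$ without ever touching $\Omega_{nk,12}$ itself; when $V_{\tau}(\gamma_{n_\ell})\to V_{\tau}>0$ it argues that $V_{xnk}$ stays bounded away from zero, passes to the correctly specified limit via dominated convergence, and then uses $\mathbb{E}[U_i^*\mid D_i,X_i]=0$ to conclude $\Omega_{nk,12}=o_p(1)$. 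Your approach, by contrast, computes $\Omega_{nk,12}=\mathbb{E}[\tilde{S}^3 r\mid\mathcal{I}_{-nk}]$ in closed form and bounds it directly by the $L_4$ first-stage error, the key point being that the kurtosis bound makes $|1-\beta_2^*|\,\|S_{-k}\|_{L_4}$ independent of $V_{xnk}$, so no case split is needed. What your approach buys is a single inequality valid over the whole parameter space and an explicit rate for $\Omega_{nk,12}$ in terms of $\|\widehat{\tau}_{-k}-\tau\|_{L_4}$; what the paper's approach buys is a cleaner conceptual link to the ``asymptotically correct specification'' story (the off-diagonal vanishes precisely because the limiting regression residual is conditionally mean zero), at the cost of handling the boundary separately.
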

As a special case, when the model is correctly specified, i.e. $W_i'\theta = \mu_d(x)$ for some $\theta \in \mathbb{R}^4$, then $\Omega_{nk,12} = 0$ by construction. Lemma \ref{lem:verifyexactcoverage} states that we only need a model that is correctly specified asymptotically, given the rates in Assumptions \ref{assump:convergencenuisance} and \ref{assump:regularityconditions}. Then for non-boundary cases, $\Omega_{nk}$ converges to the population analog under correct specification. These conditions also imply point-wise coverage of the true VCATE.

\begin{thm}[Pointwise, Exact Coverage of VCATE]
	\label{thm:exact_coverage_foldlvci_pointwise}
	Let $\Gamma$ denote a set of distributions that satisfy Assumptions \ref{assump:unconfound_sutva},  \ref{assump:momentbounds_dml}, \ref{assump:boundsquantities},  \ref{assump:randomsampling}, \ref{assump:convergencenuisance}, and \ref{assump:regularityconditions}. Then
	$$ \inf_{\gamma \in \Gamma}  \underset{n \to \infty}{\lim\inf} \ \mathbb{P}_{\gamma}\left( V_{\tau}(\gamma) \in \widehat{CI}_{\alpha nk} \right) = \sup_{\gamma \in \Gamma} \underset{n \to \infty}{\lim\sup} \ \mathbb{P}_{\gamma}\left( V_{\tau}(\gamma) \in \widehat{CI}_{\alpha nk} \right) = 1-\alpha. $$

\end{thm}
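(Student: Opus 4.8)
The plan is to obtain coverage of the fixed true target $V_\tau(\gamma)$ by piggy-backing on the exact coverage of the random pseudo-target $V_\tau^*(\gamma,\mathcal{I}_{-nk})$ that the same interval already enjoys, and then controlling the gap between the two targets. As a first step I would apply Lemma \ref{lem:verifyexactcoverage}: under Assumptions \ref{assump:convergencenuisance} and \ref{assump:regularityconditions} (together with the maintained conditions), the exact-coverage condition of Assumption \ref{assump:exactcoverage} holds, so both conclusions of Theorem \ref{thm:exact_coverage_foldlvci} are available. Consequently, for every $\gamma \in \Gamma$ the single-fold interval covers the pseudo-VCATE with probability tending to $1-\alpha$, i.e. $\mathbb{P}_\gamma(V_\tau^*(\gamma,\mathcal{I}_{-nk}) \in \widehat{CI}_{\alpha nk}) \to 1-\alpha$, which is driven by the fact that the feasible statistic evaluated at the pseudo-target converges to a Uniform$[0,1]$ law.

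Second, I would bridge from $V_\tau^*(\gamma,\mathcal{I}_{-nk})$ to $V_\tau(\gamma)$. Theorem \ref{lem:convergence_lvci_to_vci} bounds the gap by $\mathbb{E}_\gamma[|V_\tau(\gamma)-V_\tau^*(\gamma,\mathcal{I}_{-nk})|]\le \min\{16\,\omega(\gamma_n)^2,\,V_\tau(\gamma)\}$, and Assumption \ref{assump:convergencenuisance} gives $\sqrt{n_k}\,\omega(\gamma_n)^2=o(1)$, so Markov's inequality yields $V_\tau(\gamma)-V_\tau^*(\gamma,\mathcal{I}_{-nk})=o_p(n_k^{-1/2})$. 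For fixed $\gamma$ I then split into two regimes. If $V_\tau(\gamma)=0$, the bound forces $V_\tau^*(\gamma,\mathcal{I}_{-nk})=0=V_\tau(\gamma)$ almost surely (the best linear predictor of an a.s.-constant CATE has zero residual), so the coverage events for the two targets coincide and coverage equals $1-\alpha$ exactly. If $V_\tau(\gamma)>0$, the estimator admits the efficient asymptotic-linear representation of Theorem \ref{thm:rootn_consistency_and_efficiency} (whose argument also delivers a $\sqrt{n_k}$-normal limit with positive variance when $V_\tau(\gamma)$ is bounded away from zero), and the $o_p(n_k^{-1/2})$ displacement of the target is negligible at the $n_k^{-1/2}$ scale of the interval.

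To make the second regime precise I would show $F_{n_k,V_\tau(\gamma),\widehat\Omega_{nk},\zeta}(\widehat V_{\tau nk}-V_\tau(\gamma)) = F_{n_k,V_\tau^*,\widehat\Omega_{nk},\zeta}(\widehat V_{\tau nk}-V_\tau^*) + o_p(1)$ for each $\zeta$. From the explicit form of $G$ in \eqref{eq:empiricalprocess}, at a fixed $V_\tau(\gamma)>0$ the quadratic term is $O(n_k^{-1})$ while the two Gaussian terms are $O(n_k^{-1/2})$; thus $F$ climbs from $0$ to $1$ across an argument window of width $O(n_k^{-1/2})$, and perturbing both its argument $\widehat V_{\tau nk}-V$ and its index $V$ by $o_p(n_k^{-1/2})$ shifts the statistic by only $o_p(1)$. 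Since $F_{n_k,V_\tau(\gamma),\Omega,\zeta}(0)\to 1/2$ away from the boundary, the adjusted critical values in \eqref{eq:adjusted_criticalvalues} converge to $\{\alpha/2,1-\alpha/2\}$, so an $o_p(1)$ shift of a statistic with a Uniform$[0,1]$ limit leaves the limiting coverage at $1-\alpha$. Combining the two regimes gives $\lim_n \mathbb{P}_\gamma(V_\tau(\gamma)\in\widehat{CI}_{\alpha nk})=1-\alpha$ for each $\gamma$, whence $\inf_\gamma\liminf_n=\sup_\gamma\limsup_n=1-\alpha$.

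The main obstacle is exactly this last continuity argument in the $V_\tau(\gamma)>0$ regime: I must show that a \emph{random} $o_p(n_k^{-1/2})$ displacement of the target does not disturb the limiting law of the test statistic. This requires joint smoothness of $F_{n_k,V,\Omega,\zeta}$ in its index $V$ and in its argument, at the $n_k^{-1/2}$ scale, together with an anti-concentration (no-atom) property of the limiting generalized-chi-square/normal distribution, so that coverage is continuous in the target at the required rate. Handling the two channels of perturbation simultaneously—through the subscript of $F$ and through its argument—and making the $o_p(1)$ control uniform over $\zeta\in\{-1,1\}$ is the delicate part; the remaining bookkeeping reduces to the already-established Uniform$[0,1]$ limit and the convergence of the critical values.
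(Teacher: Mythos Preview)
Your proposal is correct and follows essentially the same architecture as the paper's proof: both split into the cases $V_\tau(\gamma)=0$ (where $V_\tau^*(\gamma,\mathcal{I}_{-nk})=0$ a.s.\ so the two coverage events coincide) and $V_\tau(\gamma)>0$ (where the $o_p(n_k^{-1/2})$ gap from Theorem~\ref{lem:convergence_lvci_to_vci} is absorbed at the $\sqrt{n_k}$ scale). The only cosmetic difference is that for the second case the paper argues directly that $\Omega_{nk}\to\Omega$ deterministically (via the computation in the proof of Lemma~\ref{lem:verifyexactcoverage}) and then re-runs the mild-heterogeneity limit of Lemma~\ref{lem:coverage_generic} at the oracle target $V_\tau(\gamma)$, whereas you perturb from the already-covered pseudo-target; both routes dispatch your ``main obstacle'' because at fixed $V_\tau(\gamma)>0$ the limiting law is a nondegenerate normal with smooth CDF.
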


Theorem \ref{thm:exact_coverage_foldlvci_pointwise} shows that if the nuisance functions converge at a sufficiently fast rate, then the proposed intervals achieve point-wise exact coverage. The confidence intervals provide correct size coverage for all regions of the parameter space, including $V_{\tau}(\gamma) = 0$.
%Even though convergence is point-wise instead of uniform, it is still superior to other approaches that break down  when $V_{\tau}(\gamma) = 0$.

Proving uniform coverage of the VCATE (rather than the pseudo-VCATE) is more challenging in the non-parametric case without much stronger conditions on the convergence rates of the nuisance functions. The lack of uniformity stems from a difficulty in controlling the ratio $\sqrt{n_k}\omega(\gamma_n) / \sqrt{V_{\tau}(\gamma_n)}$, which measures the relative error in estimating the CATE vs. the overall level of the VCATE. By the bound in \eqref{lem:convergence_lvci_to_vci}, this ratio is easy to control when $n_kV_{\tau}(\gamma_n) = o(1)$ (near homogeneity) or $V_{\tau}(\gamma_n) \to V_\tau > 0$ (strong heterogeneity). However, it is possible to construct sequences, e.g., $n_kV_{\tau}(\gamma_n) \to v > 0$,  where $(\widehat{V}_{\tau nk} - V_{\tau}^*(\gamma_n,\mathcal{I}_{-nk}))$ converges to zero at a faster or comparable rate to the error of the pseudo-VCATE. There may be distortions in coverage in smaller samples. I illustrate this issue in the simulations.

\begin{rem}[Uniform inference for one-sided tests] \label{rem:onesidedtests} Uniform inference is only challenging for two-sided tests. If instead, the researcher is only interested in left-sided tests, then uniform inference is still possible. To do so, we can make explicit use of the inequality $V_{\tau}^*(\gamma,\mathcal{I}_{-nk}) \le V_{\tau}(\gamma)$.  If $V_{\tau}(\gamma) < \inf_{V_{\tau}^*} \widehat{CI}_{\alpha nk}$ (the lower bound of the CI), then $V_{\tau}^*(\gamma,\mathcal{I}_{-nk}) \notin \widehat{CI}_{\alpha nk}$. Therefore, for all $\gamma \in \Gamma$,
\begin{equation}
	\label{eq:boundprob_lowerci}
	\mathbb{P}_{\gamma}\left( V_{\tau}(\gamma) \ge \inf \widehat{CI}_{\alpha n k} \right) \le \mathbb{P}_{\gamma}\left( V_{\tau}^*(\gamma,\mathcal{I}_{-nk}) \notin \widehat{CI}_{\alpha n k} \right).
\end{equation}
I prove a weaker uniformity result for one-sided tests building on Theorem \ref{thm:exact_coverage_foldlvci}.
\begin{cor}
	\label{cor:uniformity low}
	If Assumptions \ref{assump:unconfound_sutva},  \ref{assump:momentbounds_dml}, \ref{assump:boundsquantities}, and \ref{assump:randomsampling} hold, then
	$$ \underset{n \to \infty}{\lim\sup}\sup_{\gamma \in \Gamma} \mathbb{P}_{\gamma}\left( V_{\tau}(\gamma) < \inf \widehat{CI}_{\alpha n k} \right) \le \underset{n \to \infty}{\lim\sup}\sup_{\gamma \in \Gamma} \mathbb{P}_{\gamma}\left( V_{\tau}^*(\gamma,\mathcal{I}_{-nk}) \notin \widehat{CI}_{\alpha n k} \right) \le \alpha. $$	
\end{cor}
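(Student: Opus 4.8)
The plan is to chain the two inequalities in the display, each of which reduces to an object already in hand: the population monotonicity $V_{\tau}^*(\gamma,\mathcal{I}_{-nk}) \le V_{\tau}(\gamma)$ and the lower uniform coverage bound of Theorem \ref{thm:exact_coverage_foldlvci}. The monotonicity itself is immediate from the definition \eqref{eq:defn_pseudovcate}: the pseudo-VCATE equals $V_{\tau}(\gamma)$ minus the nonnegative projection residual $\inf_{(\beta_1,\beta_2)}\mathbb{E}_\gamma[(\tau_\gamma(X)-\beta_1-\beta_2 S_{-k}(X))^2 \mid \mathcal{I}_{-nk}]$, so no convergence of the nuisance estimates is needed to establish it.

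For the left inequality I would argue the event inclusion directly, as already noted in \eqref{eq:boundprob_lowerci}. On the event $\{V_{\tau}(\gamma) < \inf \widehat{CI}_{\alpha nk}\}$ the monotonicity gives $V_{\tau}^*(\gamma,\mathcal{I}_{-nk}) \le V_{\tau}(\gamma) < \inf \widehat{CI}_{\alpha nk}$, so the pseudo-VCATE lies strictly below the lower endpoint and hence cannot belong to the interval. This yields the inclusion $\{V_{\tau}(\gamma) < \inf \widehat{CI}_{\alpha nk}\} \subseteq \{V_{\tau}^*(\gamma,\mathcal{I}_{-nk}) \notin \widehat{CI}_{\alpha nk}\}$. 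Taking $\mathbb{P}_\gamma$, then $\sup_{\gamma \in \Gamma}$, and finally $\limsup_{n\to\infty}$ — each a monotone operation — preserves the inequality and delivers the left bound.

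For the right inequality I would pass from non-coverage to coverage and invoke Theorem \ref{thm:exact_coverage_foldlvci}. Writing $\mathbb{P}_\gamma(V_{\tau}^* \notin \widehat{CI}_{\alpha nk}) = 1 - \mathbb{P}_\gamma(V_{\tau}^* \in \widehat{CI}_{\alpha nk})$ and using that $\sup$ of $1-(\cdot)$ equals $1$ minus the $\inf$, I get $\sup_{\gamma}\mathbb{P}_\gamma(V_{\tau}^* \notin \widehat{CI}_{\alpha nk}) = 1 - \inf_{\gamma}\mathbb{P}_\gamma(V_{\tau}^* \in \widehat{CI}_{\alpha nk})$. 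Applying $\limsup_n$ and the identity $\limsup_n(1-a_n) = 1 - \liminf_n a_n$ turns the right side into $1 - \liminf_n \inf_\gamma \mathbb{P}_\gamma(V_{\tau}^* \in \widehat{CI}_{\alpha nk})$. The lower uniform coverage bound \eqref{eq:lower_uniformcoverage_pseudovcate}, which holds under exactly Assumptions \ref{assump:unconfound_sutva}, \ref{assump:momentbounds_dml}, \ref{assump:boundsquantities}, and \ref{assump:randomsampling}, states that this $\liminf\inf$ is at least $1-\alpha$, so the expression is at most $\alpha$.

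There is no substantive analytic obstacle, since Theorem \ref{thm:exact_coverage_foldlvci} already supplies the uniform pseudo-VCATE coverage that does the heavy lifting; the work is purely in the bookkeeping of the order of operations. The two points deserving care are that the complement interchanges $\liminf\inf$ with $\limsup\sup$ correctly, and that the weak inequality $V_{\tau}^* \le V_{\tau}$ is compatible with the strict inequality defining the event (the strictness $V_{\tau} < \inf \widehat{CI}_{\alpha nk}$ is inherited by $V_{\tau}^*$ even when $V_{\tau}^* = V_{\tau}$, so the inclusion is valid at the boundary). I would emphasize that the argument never requires the first-stage estimates to converge — it uses only the population inequality and the pseudo-VCATE coverage — which is precisely why the one-sided claim stays uniform while the two-sided guarantee of Theorem \ref{thm:exact_coverage_foldlvci_pointwise} is only pointwise.
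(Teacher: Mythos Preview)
Your proposal is correct and follows essentially the same approach as the paper: the event inclusion from the monotonicity $V_{\tau}^*(\gamma,\mathcal{I}_{-nk})\le V_{\tau}(\gamma)$ (equation \eqref{eq:boundprob_lowerci}) gives the left inequality, and the complement identity $\limsup_n\sup_\gamma(1-a_{n,\gamma})=1-\liminf_n\inf_\gamma a_{n,\gamma}$ applied to Theorem \ref{thm:exact_coverage_foldlvci} gives the right one. Your remarks on the strict-inequality boundary case and on why no first-stage convergence is needed are accurate and add useful clarity beyond the paper's terse proof.
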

\end{rem}
Corollary \ref{cor:uniformity low} is empirically relevant for interpreting confidence intervals that do not include zero. It states that the asymptotic probability of having $V_{\tau}(\gamma) \in \left[0,\inf \widehat{CI}_{\alpha nk}\right)$ is uniformly less than $\alpha$. Tests of homogeneity belong to this class and therefore have the correct size when $V_\tau(\gamma) = 0$. Moreover, the result in Corollary \ref{cor:uniformity low} is much stronger because it guarantees that a broader class of one-sided tests also has the correct size. It is important to emphasize that I do not impose any assumptions on rates of convergence of $(\widehat{\eta}_{-k}(x)-\eta(x))$, but only the inequality on the pseudo-VCATE. Consequently, while estimating $\mu(x)$ and $\tau(x)$ may be important for increasing the power of tests of homogeneity, it is not necessary for controlling their size.

\subsection{Multifold Coverage}

The multi-fold confidence interval covers the VCATE asymptotically.

\begin{thm}
	\label{thm:uniform_coverage_variational_lvci}
	Let $\Gamma$ be a set of distributions that satisfy Assumptions \ref{assump:unconfound_sutva},  \ref{assump:momentbounds_dml}, \ref{assump:boundsquantities},  \ref{assump:randomsampling}. Then
	$$ \underset{n \to \infty}{\lim\sup}\sup_{\gamma \in \Gamma} \mathbb{P}_{\gamma}\left( V_{\tau}(\gamma) < \inf \widehat{CI}_{\alpha n k}^{multifold} \right) \le \alpha. $$	
	If Assumptions \ref{assump:convergencenuisance}, and \ref{assump:regularityconditions} also hold, then
	$$	\sup_{\gamma \in \Gamma} \underset{n \to \infty}{\lim\sup} \ \mathbb{P}_{\gamma}\left(V_{\tau}(\gamma) \notin \widehat{CI}_{\alpha n}^{\text{multifold}}\right)  \le \alpha.$$

\end{thm}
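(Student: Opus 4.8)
The plan is to reduce the two median-based coverage events to elementary majority-count events and then control those with Markov's inequality, exploiting the fact that the factor-of-two lost in passing from a median to a majority count is exactly offset by the $\alpha/2$ nominal level used in \eqref{eq:multisplit_cis}. The decisive feature of this route is that it uses only linearity of expectation across the $K$ folds, so it requires \emph{no} assumption on the joint dependence of the fold-specific intervals; this is precisely what delivers validity for an arbitrary number of splits and matches the ``worst-case dependence'' interpretation discussed after \eqref{eq:multisplit_cis}. As a preliminary bookkeeping step I would note that each $\widehat{CI}_{\frac{\alpha}{2}nk}$ is nonempty with endpoints in $[0,\infty)$ — it contains the point estimate by the critical-value adjustment of Remark \ref{rem:adjust_critical_values} — so the medians and the events below are well defined.

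First I would establish the set inclusions. Writing $a_k := \inf \widehat{CI}_{\frac{\alpha}{2}nk}$ and $b_k := \sup \widehat{CI}_{\frac{\alpha}{2}nk}$, a short order-statistic argument shows that $V_{\tau}(\gamma) < \text{Med}_K\{a_k\}$ forces at least $K/2$ of the $a_k$ to strictly exceed $V_{\tau}(\gamma)$ (the even-$K$ midpoint convention still yields this count, since the midpoint exceeding $V_\tau$ forces the upper of the two central order statistics, hence the top half, to exceed $V_\tau$), and symmetrically $V_{\tau}(\gamma) > \text{Med}_K\{b_k\}$ forces at least $K/2$ of the $b_k$ to fall below $V_{\tau}(\gamma)$. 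Thus
\[
\{V_{\tau}(\gamma) < \inf\widehat{CI}^{\text{multifold}}_{\alpha n}\} \subseteq \Big\{\textstyle\sum_{k=1}^K \mathbf{1}\{V_{\tau}(\gamma) < \inf\widehat{CI}_{\frac{\alpha}{2}nk}\} \ge K/2 \Big\},
\]
and analogously on the upper side. Markov's inequality together with linearity of expectation then gives
\[
\mathbb{P}_\gamma\big(V_{\tau}(\gamma) < \inf\widehat{CI}^{\text{multifold}}_{\alpha n}\big) \le \frac{2}{K}\sum_{k=1}^K \mathbb{P}_\gamma\big(V_{\tau}(\gamma) < \inf\widehat{CI}_{\frac{\alpha}{2}nk}\big),
\]
with the matching inequality for $\mathbb{P}_\gamma(V_{\tau}(\gamma) > \sup\widehat{CI}^{\text{multifold}}_{\alpha n})$ in terms of the upper-side per-fold failures.

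For the first (uniform, one-sided) claim I would bound each summand by Corollary \ref{cor:uniformity low} applied at nominal level $\alpha/2$, which gives $\limsup_n \sup_{\gamma\in\Gamma}\mathbb{P}_\gamma(V_{\tau}(\gamma) < \inf\widehat{CI}_{\frac{\alpha}{2}nk}) \le \alpha/2$ under Assumptions \ref{assump:unconfound_sutva}, \ref{assump:momentbounds_dml}, \ref{assump:boundsquantities}, \ref{assump:randomsampling}. Since $K$ is finite, I may pass $\sup_\gamma$ and $\limsup_n$ through the sum (using subadditivity of $\limsup$ for finite sums) to obtain $\tfrac{2}{K}\cdot K\cdot \tfrac{\alpha}{2} = \alpha$, the stated bound. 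For the second (pointwise, two-sided) claim I would add the lower- and upper-side Markov bounds and invoke Theorem \ref{thm:exact_coverage_foldlvci_pointwise} at level $\alpha/2$, which yields, for each fixed $\gamma$, a per-fold total miscoverage $\mathbb{P}_\gamma(V_{\tau}(\gamma)<\inf\widehat{CI}_{\frac{\alpha}{2}nk}) + \mathbb{P}_\gamma(V_{\tau}(\gamma)>\sup\widehat{CI}_{\frac{\alpha}{2}nk}) \to \alpha/2$. Summing the two Markov inequalities and interchanging $\limsup_n$ with the finite sum gives $\limsup_n \mathbb{P}_\gamma(V_\tau(\gamma)\notin\widehat{CI}^{\text{multifold}}_{\alpha n}) \le \tfrac{2}{K}\sum_k \tfrac{\alpha}{2} = \alpha$ for every $\gamma$, so taking $\sup_\gamma$ preserves the bound $\alpha$.

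I expect the work to be bookkeeping rather than any deep estimate: the only points needing care are verifying the ``at least $K/2$ violations'' inclusion under the even-$K$ midpoint convention and justifying the interchange of $\limsup_n$ with the finite fold sum. The conceptual crux — and the source of the result's robustness — is that the $2/K$ Markov constant cancels the $\alpha/2$ level \emph{exactly}, so control of the joint law of the folds is never needed; all the probabilistic content is already contained in the single-fold statements (Corollary \ref{cor:uniformity low} for the uniform one-sided bound and Theorem \ref{thm:exact_coverage_foldlvci_pointwise} for the pointwise two-sided bound), and the median step only recombines them.
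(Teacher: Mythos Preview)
Your proposal is correct and follows essentially the same approach as the paper: reduce the median event to a majority-count event, apply Markov's inequality to the resulting indicator average, and then invoke the single-fold coverage results at level $\alpha/2$ so that the factor $2$ cancels. The only cosmetic differences are that the paper collapses the fold average to a single term by exchangeability of the random split (rather than your subadditivity-of-$\limsup$ over a finite sum), and for the one-sided part it cites Theorem~\ref{thm:exact_coverage_foldlvci} together with the inequality \eqref{eq:boundprob_lowerci} directly, whereas you cite the packaged Corollary~\ref{cor:uniformity low}; both routes are equivalent.
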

The first part of Theorem \ref{thm:uniform_coverage_variational_lvci} shows that the multifold CI uniformly controls the size of one-sided tests. The second part shows that if the nuisance functions converge to their true value asymptotically, then the multifold confidence interval provides point-wise size-control for two-sided tests. Coverage of the true parameter will be weakly larger that $(1-\alpha)$ asymptotically.

\subsection{Power}

The test of homogeneity has power against local alternatives.
\begin{lem}
	\label{lem:powerlocal_alternatives}
	Consider a sequence of distributions $\{\gamma_n\}_{n=1}^\infty$ and $\{\mathcal{I}_{-nk}\}_{n=1}^{\infty}$, where $\Omega_{nk} \to \Omega_{\infty}$ and $n_kV_{\tau}^*(\gamma_n,\mathcal{I}_{-nk}) = v + o(1)$, for $v \in [0,\infty)$. Assume that \ref{assump:unconfound_sutva},  \ref{assump:momentbounds_dml}, \ref{assump:boundsquantities}, and \ref{assump:randomsampling} hold. Let $\Omega_{\infty,11}$ be the upper-left entry of $\Omega_{\infty}$, $\Phi(\cdot)$ be the standard normal CDF, and $z_{1-\alpha}$ be the $(1-\alpha)-$quantile. Then
	$$ \lim_{n \to \infty} \mathbb{P}_{\gamma_n}(0 \notin \widehat{CI}_{\alpha nk}\mid \mathcal{I}_{-nk}) = 1-\Phi\left(z_{1-\alpha}-\frac{\sqrt{v}}{\sqrt{\Omega_{\infty,11}}}\right) + \Phi\left( -\frac{\sqrt{v}}{\sqrt{\Omega_{\infty,11}}} \right). $$
\end{lem}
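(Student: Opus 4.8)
The plan is to reduce the two-sided event $\{0 \notin \widehat{CI}_{\alpha nk}\}$ to a single scalar threshold crossing and then identify the conditional limiting law of that scalar along the local sequence. First I would specialize the confidence set \eqref{eq:feasible_confidenceset_dml} to the candidate value $V_\tau^* = 0$. At this value the empirical process in \eqref{eq:empiricalprocess} collapses to its leading quadratic term, so $F_{n_k,0,\widehat{\Omega}_{nk},\zeta}(\cdot)$ is the CDF of $\widehat{\Omega}_{nk,11}Z_1^2/n_k$ and is free of $\zeta$, and by \eqref{eq:adjusted_criticalvalues} the critical values reduce to $\{0,1-\alpha\}$. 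Hence $0 \in \widehat{CI}_{\alpha nk}$ iff $F_{n_k,0,\widehat{\Omega}_{nk}}(\widehat{V}_{\tau nk}) \le 1-\alpha$, which by monotonicity of the chi-square CDF is equivalent to $T_{nk} := n_k\widehat{V}_{\tau nk}/\widehat{\Omega}_{nk,11} \le \chi^2_{1,1-\alpha}$, where $\chi^2_{1,1-\alpha} = z_{1-\alpha/2}^2$ is the $(1-\alpha)$-quantile of a chi-square with one degree of freedom. The problem is thus to compute $\lim_n \mathbb{P}_{\gamma_n}(T_{nk} > \chi^2_{1,1-\alpha} \mid \mathcal{I}_{-nk})$.

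Second, I would obtain the conditional limit of $T_{nk}$. Conditioning on $\mathcal{I}_{-nk}$, Theorem \ref{thm:clt_lvci} delivers $Z_{nk}\sim\mathcal{N}(0,I_2)$ for the rescaled $(\widehat{\beta}_{2nk},\widehat{V}_{xnk})$ and $\widehat{\Omega}_{nk}\to^p\Omega_{nk}$, and \eqref{eq:decomposition_remainder_lvci_reg_dml} writes $\widehat{V}_{\tau nk}-V_{\tau nk}^*$ through $G$. Multiplying by $n_k$ gives $n_kG = (e_1'\Omega_{nk}^{1/2}Z_{nk})^2 + 2\zeta_{nk}\sqrt{n_kV_{\tau nk}^*}(e_1'\Omega_{nk}^{1/2}Z_{nk}) + \sqrt{n_k}V_{\tau nk}^*(e_2'\Omega_{nk}^{1/2}Z_{nk})$, and I would read off each piece under $n_kV_{\tau nk}^* = v + o(1)$ and $\Omega_{nk}\to\Omega_\infty$: the quadratic term tends to $\Omega_{\infty,11}Z_1^2$, the cross term to $2\zeta_{nk}\sqrt{v}\sqrt{\Omega_{\infty,11}}Z_1$ because $e_1'\Omega_{nk}^{1/2}Z_{nk}=\sqrt{\Omega_{nk,11}}Z_1$ and $\sqrt{n_kV_{\tau nk}^*}\to\sqrt v$, while the third term is $O_p(v/\sqrt{n_k})=o_p(1)$; the three $o_p$ remainders of \eqref{eq:decomposition_remainder_lvci_reg_dml}, each scaled by $n_k$, are $o_p(1)$ by the same accounting. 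Adding $n_kV_{\tau nk}^*\to v$ and completing the square, $n_k\widehat{V}_{\tau nk}\to^d (W+\zeta\sqrt v)^2$ with $W:=\sqrt{\Omega_{\infty,11}}Z_1\sim\mathcal{N}(0,\Omega_{\infty,11})$.

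Third, dividing by $\widehat{\Omega}_{nk,11}\to^p\Omega_{\infty,11}$ and invoking the continuous mapping theorem yields $T_{nk}\to^d (Z_1+\zeta\delta)^2$, a non-central chi-square with one degree of freedom and non-centrality $\delta^2=v/\Omega_{\infty,11}$; symmetry of $Z_1$ makes this law identical for $\zeta=\pm1$, so the unknown sign $\zeta_{nk}$ is immaterial. Because the limit has a continuous CDF at the fixed threshold $\chi^2_{1,1-\alpha}$, the portmanteau theorem gives $\lim_n\mathbb{P}_{\gamma_n}(T_{nk}>\chi^2_{1,1-\alpha}\mid\mathcal{I}_{-nk}) = \mathbb{P}(|Z_1+\zeta\delta|>z_{1-\alpha/2})$; splitting the two tails produces $1-\Phi(z_{1-\alpha/2}-\delta)+\Phi(-z_{1-\alpha/2}-\delta)$ with $\delta=\sqrt v/\sqrt{\Omega_{\infty,11}}$, which is the local power curve of the homogeneity test.

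The main obstacle is the rate bookkeeping in the second step: after multiplying \eqref{eq:decomposition_remainder_lvci_reg_dml} by $n_k$ I must confirm that the cross term survives at exactly $O_p(1)$, carrying the non-centrality and hence all the local power, while every other non-leading term vanishes, which requires $\sqrt{n_k}V_{\tau nk}^*\to0$ and that the $o_p(1/n_k)$, $o_p(\sqrt{V_{\tau nk}^*/n_k})$, and $o_p(V_{\tau nk}^*/\sqrt{n_k})$ remainders stay $o_p(1)$ on the sequence where $V_{\tau nk}^*$ is of order $v/n_k$. This relies on the conditional CLT of Theorem \ref{thm:clt_lvci} holding along the triangular array and on $\Omega_{nk}$ having eigenvalues bounded away from zero (Assumption \ref{assump:boundsquantities}), which secures the joint convergence of $(T_{nk},\widehat{\Omega}_{nk,11})$ needed for the continuous mapping and quantile-inversion steps.
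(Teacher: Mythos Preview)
Your argument is essentially the same as the paper's: specialize the confidence set at $V_\tau^*=0$ so that the critical region reduces to $[0,1-\alpha]$ for the chi-square CDF, translate this into the threshold event $n_k\widehat V_{\tau nk}\le \widehat\Omega_{nk,11}\,\chi^2_{1,1-\alpha}$, use the decomposition \eqref{eq:decomposition_remainder_lvci_reg_dml} along the local sequence to obtain $n_k\widehat V_{\tau nk}\to^d (\sqrt{\Omega_{\infty,11}}Z_1+\sqrt v)^2$, divide through by $\widehat\Omega_{nk,11}$, and evaluate the non-central chi-square tail. The rate bookkeeping you flag as the main obstacle is handled exactly as the paper does in its ``near homogeneity'' analysis.

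One point worth recording: your final expression uses $z_{1-\alpha/2}$ while the lemma as stated uses $z_{1-\alpha}$. Your version is the correct one. The $(1-\alpha)$-quantile of $\chi^2_1$ is $z_{1-\alpha/2}^2$, not $z_{1-\alpha}^2$, so the rejection event is $|Z_1+\delta|>z_{1-\alpha/2}$ and the power is $1-\Phi(z_{1-\alpha/2}-\delta)+\Phi(-z_{1-\alpha/2}-\delta)$ with $\delta=\sqrt v/\sqrt{\Omega_{\infty,11}}$. The displayed formula in the lemma (and the corresponding line in the paper's proof) evaluates to $\alpha+1/2$ at $v=0$, contradicting the text immediately following the lemma that says the power equals $\alpha$ at $v=0$; this is a typo in the paper, and your derivation fixes it.
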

Lemma \ref{lem:powerlocal_alternatives} computes the power curve for a sequence of local alternatives. When $v = 0$ the power is equal to $\alpha$, whereas when $v \to \infty$ the power tends to one. This shows that tests of homogeneity have local power   the null. When the pseudo-VCATE is bounded away from zero, the test rejects with probability approaching one.

\section{Extensions}

\begin{rem}[Clustered Standard Errors]
		\label{rem:clustered_ses}
	In some cases, assuming that units $i$ are independent may be strong. For example, in \cite{dizon2019parents} units are randomized at the household level, and it is reasonable to expects that units within a household have correlated outcomes and covariates. To deal with this dependence structure, suppose that the sample can be partitioned into $C$ clusters, $c \in \{1,\ldots,C\}$, which are independent and identically distributed. The researcher can compute $\widehat{\beta}_{2nk}$, $\widehat{V}_{\tau nk}$ and $\widehat{CI}_{\alpha nk}$ via cross-fitting by randomly partitioning entire clusters rather than the individual observations. 
	\begin{lem}
		\label{lem:clustered_se} Let $\{r_{nk}\}_{n=1}^{\infty}$ be a sequence of positive scalars. Suppose that $V_{xnk} > 0$, $\Omega_{nk}$ is positive definite with positive eigenvalues, and that conditional on a sequence $\{\mathcal{I}_{-nk}\}_{n=1}^{\infty}$, $r_{nk}^{-1}\widehat{\Omega}_{nk} \to^p \Omega_{nk}$, $n_k/r_{nk} \to \infty$, and
		\begin{equation}
		\label{eq:clt_coefficients_cluster} \Omega_{nk}^{-1/2}\sqrt{\frac{n_k}{r_{nk}}}\begin{pmatrix} \sqrt{V_{xn}}(\widehat{\beta}_{2nk} - \beta_{2nk}) \\ \frac{\widehat{V}_{xnk}}{V_{xnk}} - 1 \end{pmatrix} \mid \mathcal{I}_{-nk} \to^d Z_n \sim \mathcal{N}(0,I_{2 \times 2}).
		\end{equation}	
		Then $\widehat{CI}_{\alpha n k}$, substituting the arguments $(n_k,\widehat{V}_{\tau n k},\widehat{\Omega}_{nk})$, satisfies Theorem \ref{thm:exact_coverage_foldlvci}.
	\end{lem}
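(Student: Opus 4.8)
The plan is to reduce the clustered problem to the i.i.d.\ setting already handled by Theorem~\ref{thm:exact_coverage_foldlvci}, by reinterpreting the hypotheses as Assumption~\ref{assump:clt_regcoef} holding with an \emph{effective} sample size $\tilde{n}_k := n_k/r_{nk}$ and a rescaled covariance estimator $\tilde{\Omega}_{nk} := r_{nk}^{-1}\widehat{\Omega}_{nk}$. The footnote to Theorem~\ref{thm:exact_coverage_foldlvci} notes that its proof invokes the primitive conditions only to verify the conditional normality in Assumption~\ref{assump:clt_regcoef}; the clustered hypotheses supply exactly this high-level normality directly, so it suffices to show that the confidence interval built from $(n_k,\widehat{V}_{\tau nk},\widehat{\Omega}_{nk})$ coincides with the one built from the effective quantities $(\tilde{n}_k,\widehat{V}_{\tau nk},\tilde{\Omega}_{nk})$.

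I would first isolate a scaling invariance of the empirical process $G$ in \eqref{eq:empiricalprocess}. Since $\Omega^{1/2}$ denotes the Cholesky factor, $(r\Omega)^{1/2} = \sqrt{r}\,\Omega^{1/2}$ for any $r > 0$, and substituting this into each of the three terms of \eqref{eq:empiricalprocess} yields the identity
\begin{equation*}
G(n, V_{\tau}^*, r\Omega, z, \zeta) = G(n/r, V_{\tau}^*, \Omega, z, \zeta), \qquad r > 0.
\end{equation*}
Taking $r = r_{nk}$ and writing $\widehat{\Omega}_{nk} = r_{nk}\tilde{\Omega}_{nk}$ gives $G(n_k, V_{\tau}^*, \widehat{\Omega}_{nk}, z, \zeta) = G(\tilde{n}_k, V_{\tau}^*, \tilde{\Omega}_{nk}, z, \zeta)$. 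Because the conditional CDF $F$ is obtained by integrating $G$ against $Z \sim \mathcal{N}(0, I_{2\times 2})$, and the adjusted critical values in \eqref{eq:adjusted_criticalvalues} are functions of $F_{\cdots}(0)$, both $F$ and the pair $q_{\alpha/2}, q_{1-\alpha/2}$ inherit the same invariance. The membership condition defining $\widehat{CI}_{\alpha nk}$ in \eqref{eq:feasible_confidenceset_dml} is therefore unchanged when $(n_k, \widehat{\Omega}_{nk})$ is replaced by $(\tilde{n}_k, \tilde{\Omega}_{nk})$, establishing the claimed equality of the two intervals point by point in the grid search.

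It then remains to verify that the effective quantities satisfy Assumption~\ref{assump:clt_regcoef} along the conditioning sequence $\{\mathcal{I}_{-nk}\}$. The hypothesis $n_k/r_{nk} \to \infty$ gives $\tilde{n}_k \to \infty$; the hypothesis $r_{nk}^{-1}\widehat{\Omega}_{nk} \to^p \Omega_{nk}$ gives $\tilde{\Omega}_{nk} \to^p \Omega_{nk}$; and \eqref{eq:clt_coefficients_cluster} is precisely the normalized central limit theorem of \eqref{eq:clt_coefficients} with $n$ replaced by $\tilde{n}_k$. The assumed positive-definiteness of $\Omega_{nk}$ together with $V_{xnk} > 0$ supply the remaining non-degeneracy requirements. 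Consequently Lemma~\ref{lem:decomposition_remainder_lvci_reg} applies with $\tilde{n}_k$ in place of $n$, so that at the true pseudo-VCATE the test statistic $F_{\tilde{n}_k, V_{\tau nk}^*, \tilde{\Omega}_{nk}, \zeta_{nk}}(\widehat{V}_{\tau nk} - V_{\tau nk}^*)$ converges to a uniform random variable on $[0,1]$, which is the only input the generic argument of \cite{andrews2020generic} requires. The conclusion of Theorem~\ref{thm:exact_coverage_foldlvci} therefore holds verbatim for the effective interval, and by the invariance it holds for $\widehat{CI}_{\alpha nk}$ as originally constructed; the exact-coverage half transfers unchanged because Assumption~\ref{assump:exactcoverage} constrains $\Omega_{nk}$, which is itself unaffected by the rescaling.

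The main obstacle is conceptual rather than technical: recognizing that the naive substitution of the raw sample size $n_k$ and the clustered sandwich matrix $\widehat{\Omega}_{nk}$ into the same formula automatically compensates for the reduced effective sample size, precisely because $G$ is invariant under the joint rescaling $(n, \Omega) \mapsto (n/r, r\Omega)$. Once this identity is isolated, no new limit theory is required---the clustering is absorbed entirely into the rate $r_{nk}$ and the inflated covariance, and every other step is inherited from the i.i.d.\ analysis.
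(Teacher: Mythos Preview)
Your proof is correct and takes a cleaner route than the paper's. The paper also recognizes that the hypotheses amount to Assumption~\ref{assump:clt_regcoef} holding with effective sample size $\tilde n_k=n_k/r_{nk}$, but it then re-runs a modified version of Lemma~\ref{lem:coverage_generic}: for each regime (near homogeneity, mild heterogeneity) it rescales the estimator by $n_\ell/r_{n_\ell}$, computes the limit, and separately expands $\tfrac{n_\ell}{r_{n_\ell}}G(n_\ell,V_{\tau n_\ell}^*,\widehat\Omega_{n_\ell},Z,\zeta)$ term by term to show the estimated quantiles converge to the same limit. Your approach short-circuits this by isolating the exact algebraic identity $G(n,V_\tau^*,r\Omega,z,\zeta)=G(n/r,V_\tau^*,\Omega,z,\zeta)$, which propagates to $F$ and to the critical values, so that the interval built from $(n_k,\widehat\Omega_{nk})$ is \emph{identical as a set} to the one built from $(\tilde n_k,\tilde\Omega_{nk})$. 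This reduces the clustered case to a direct invocation of Lemma~\ref{lem:coverage_generic} with $\tilde n_k$ in place of $n$, with no new asymptotic computation; the paper's case-by-case verification is then seen to be redundant. The trade-off is minor: the paper's expansion makes the role of the condition $n_k/r_{nk}\to\infty$ visible in each regime, whereas in your argument it enters only to ensure $\tilde n_k\to\infty$ so that the existing lemmas apply.
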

	
	Lemma \ref{lem:clustered_se} proposes high-level conditions that ensure that confidence intervals have correct coverage. The quantity $\sqrt{n_k/r_{nk}}$ is the effective rate of convergence, which features prominently in problems with cluster dependence \citep{mackinnon2022cluster}. For example, if the observations are fully correlated within clusters and the clusters have equal size, then $r_{nk}$ is the cluster size, $n_k/r_{nk} = C$, and the estimators in \eqref{lem:clustered_se} converge at $\sqrt{C}$ rate (the total number of clusters). The analyst does not need to specify the quantity $r_{nk}$ to apply the procedure, but merely specify an estimator of the covariance matrix that meets the rate requirement. Under minor modifications to the existing proofs, we can also prove analogs of Theorems \ref{thm:exact_coverage_foldlvci_pointwise} and \ref{thm:uniform_coverage_variational_lvci}.
	
	 We can construct estimators that satisfy Lemma \ref{lem:clustered_se}. Let $\mathcal{I}_{nkc}$ be the set of units in fold $k$ and cluster $c$, and let $\mathcal{C}_{nk}$ be the indexes of the clusters selected for fold $k$.
	$$ \widehat{H}_{nk}^{cluster} := \frac{1}{n_k}\sum_{c \in \mathcal{C}_{nk}}\left( \sum_{i \in \mathcal{I}_{nkc}} \begin{bmatrix}  \lambda(X_i)\widehat{U}_i\widehat{W}_{i} \\ \widehat{T}_{i} \end{bmatrix} \right)\left( \sum_{i \in \mathcal{I}_{nkc}} \begin{bmatrix} \lambda(X_i)\widehat{U}_i\widehat{W}_{i} \\ \widehat{T}_{i} \end{bmatrix} \right)'.$$
	The clustered standard errors are $\widehat{\Omega}_{nk}^{cluster} =   \widehat{\Upsilon}_{nk}\widehat{J}_{nk}^{-1}\widehat{H}_{nk}^{cluster}\widehat{J}_{nk}^{-1}\widehat{\Upsilon}_{nk}'$, where $\widehat{J}_{nk},\widehat{\Upsilon}_{nk}$ are computed as outlined in \eqref{eq:Jmatrix_sandwich_dml}.

\end{rem}

\begin{rem}[Confidence intervals when $V_{xnk} = 0$]
	\label{rem:ci_degenerate}
	When the conditional mean is constant, i.e. $\mu_d(x) = \mathbb{E}[Y_d]$, prediction models with corner solutions like LASSO may estimate a constant conditional mean, i.e.  $\widehat{\mu}_{d,-k}(x) = \widehat{\mu}_{d,av}$, $\widehat{\tau}_{-k}(x) = \widehat{\mu}_{1,-k}(x)  - \widehat{\mu}_{0,-k}(x) $ is constant, and consequently $V_{xnk} = 0$.\footnote{It is still possible to have $V_{xnk} > 0$ almost surely even if $V_{\tau} = 0$, as long as $\mu_0(x)$ is not constant.} This violates Assumption \ref{assump:boundsquantities}.(i), and it is challenging to construct a confidence interval with exact coverage. One alternative is to construct an ensemble of sparse and non-sparse estimators of the CATE in the first-stage. Another alternative is to use degenerate confidence intervals:
	\begin{equation}
	\label{eq:degenerate_ci}
	 \widehat{CI}_{\alpha n k}^{0} = \begin{cases} \widehat{CI}_{\alpha n k} & \text{if }V_{xnk} \ne 0, \\ [0,0] &\text{if }V_{xnk} = 0.  \end{cases}
	\end{equation}
	The confidence intervals collapse to zero when the $\widehat{\tau}_{-k}(x)$ prediction is degenerate. For example, in LASSO researchers can check whether the coefficients are zero, in tree-based methods when there are no splits, or whether $\widehat{V}_{xnk} = 0$. We can also define an analogous multifold confidence interval.
	\begin{equation}
	\widehat{CI}_{\alpha n}^{\text{0,multifold}} = \left[ \text{Med}_{K}\left\{ \inf \widehat{CI}_{\frac{\alpha}{2} nk} \right\}, \text{Med}_{K}\left\{ \sup \widehat{CI}_{\frac{\alpha}{2} nk} \right\} \right].
	\label{eq:multisplit_cis_degenerate}
	\end{equation}	
	I study the asymptotic properties of these confidence intervals.
	\begin{lem}
		\label{lem:degenerate_cate}
		Let $\Gamma$ denote a set of distributions that satisfy Assumptions \ref{assump:unconfound_sutva},  \ref{assump:momentbounds_dml}, and \ref{assump:randomsampling}. Suppose that Assumption \ref{assump:boundsquantities} holds, except for the requirement that $V_{xnk} = 0$. Then (i)
		\begin{equation}
		\label{eq:uniformconservativecoverage_degenerate}
		\underset{n \to \infty}{\lim\inf}\inf_{\gamma \in \Gamma} \ \mathbb{P}_{\gamma}\left( V_{\tau}^*(\gamma,\mathcal{I}_{-nk}) \in \widehat{CI}_{\alpha n k}^{0} \right) \ge 1-\alpha 
		\end{equation}
		(ii) If Assumptions \ref{assump:convergencenuisance}, and \ref{assump:regularityconditions} also hold, then
		\begin{equation} \inf_{\gamma \in \Gamma} \underset{n \to \infty}{\lim\inf} \ \mathbb{P}_{\gamma}\left( V_{\tau}(\gamma) \in \widehat{CI}_{\alpha n k}^{0} \right) \ge 1-\alpha
		\label{eq:degenerate_pointwise}
		\end{equation}
		\begin{equation}
		\label{eq:multifold_zero_ci}
		\inf_{\gamma \in \Gamma} \underset{n \to \infty}{\lim\inf} \ \mathbb{P}_{\gamma}\left( V_{\tau}(\gamma) \in \widehat{CI}_{\alpha n k}^{0,multifold} \right) \ge 1-\alpha.
		\end{equation}	
	\end{lem}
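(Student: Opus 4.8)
The plan is to reduce the three claims to coverage results already established for the non-degenerate procedure---Theorems~\ref{thm:exact_coverage_foldlvci} and~\ref{thm:exact_coverage_foldlvci_pointwise}, Corollary~\ref{cor:uniformity low}, and Theorem~\ref{thm:uniform_coverage_variational_lvci}---by conditioning on the out-of-fold sample $\mathcal{I}_{-nk}$ and partitioning the probability space into the degenerate event $\{V_{xnk}=0\}$ and its complement. The crux is that degeneracy of the first stage pins down the pseudo-VCATE exactly. Conditional on $\mathcal{I}_{-nk}$, $V_{xnk}=0$ holds precisely when $\widehat{\tau}_{-k}(\cdot)$ is $\gamma$-a.s.\ constant (so that $\widehat{V}_{xnk}=0$ as well and $\widehat{CI}^{0}_{\alpha nk}=[0,0]$); in that case the projection in~\eqref{eq:defn_pseudovcate} collapses to a projection of $\tau_\gamma(X)$ onto a constant, whose residual variance is $\mathbb{V}_\gamma(\tau_\gamma(X))=V_\tau(\gamma)$, so $V_\tau^*(\gamma,\mathcal{I}_{-nk})=0$. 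Hence $[0,0]$ covers the pseudo-VCATE with conditional probability one on the degenerate event.

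For part (i), I would express the coverage probability as $\mathbb{E}_\gamma[\,\mathbb{P}_\gamma(V_\tau^*\in\widehat{CI}^{0}_{\alpha nk}\mid\mathcal{I}_{-nk})\,]$. On $\{V_{xnk}=0\}$ the conditional coverage equals one by the crux observation. On $\{V_{xnk}>0\}$ every part of Assumption~\ref{assump:boundsquantities} holds, $\Omega_{nk}$ is well defined, and $\widehat{CI}^{0}_{\alpha nk}=\widehat{CI}_{\alpha nk}$; the conditional central limit theorem of Theorem~\ref{thm:clt_lvci} together with the empirical-process approximation of Lemma~\ref{lem:decomposition_remainder_lvci_reg}---the ingredients behind Theorem~\ref{thm:exact_coverage_foldlvci}---drive the conditional coverage to at least $1-\alpha$ along every sequence of non-degenerate realizations. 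Since both pieces are asymptotically bounded below by $1-\alpha$ and all probabilities lie in $[0,1]$, integrating over $\mathcal{I}_{-nk}$ and using the sequence-wise characterization of uniform coverage from the proof of Theorem~\ref{thm:exact_coverage_foldlvci} yields~\eqref{eq:uniformconservativecoverage_degenerate}.

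For the pointwise VCATE coverage in part (ii) I would fix $\gamma$ and distinguish two cases. If $V_\tau(\gamma)=0$ then $0\le V_\tau^*(\gamma,\mathcal{I}_{-nk})\le V_\tau(\gamma)=0$, so covering the pseudo-VCATE is the same event as covering the true VCATE and~\eqref{eq:degenerate_pointwise} follows from part (i). If $V_\tau(\gamma)>0$, the degenerate event must vanish asymptotically: on $\{V_{xnk}=0\}$ the constant estimate obeys $\mathbb{E}_\gamma[\Vert\widehat{\tau}_{-k}(X)-\tau_\gamma(X)\Vert^2\mid\mathcal{I}_{-nk}]\ge V_\tau(\gamma)$, so $\omega(\gamma_n)^2\ge V_\tau(\gamma)\,\mathbb{P}_\gamma(V_{xnk}=0)$ and Assumption~\ref{assump:convergencenuisance} forces $\mathbb{P}_\gamma(V_{xnk}=0)\le\omega(\gamma_n)^2/V_\tau(\gamma)\to 0$. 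Consequently $\widehat{CI}^{0}_{\alpha nk}$ and $\widehat{CI}_{\alpha nk}$ coincide with probability tending to one, and the bound $\ge 1-\alpha$ is inherited from the pointwise exact coverage of Theorem~\ref{thm:exact_coverage_foldlvci_pointwise}. This is exactly where Assumptions~\ref{assump:convergencenuisance} and~\ref{assump:regularityconditions} enter; the inequality $\omega^2\ge V_\tau\,\mathbb{P}(V_{xnk}=0)$ is the constant-projection instance of the bias bound in Theorem~\ref{lem:convergence_lvci_to_vci}.

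For the multifold claim~\eqref{eq:multifold_zero_ci} I would first prove a degenerate analog of Corollary~\ref{cor:uniformity low}, namely $\limsup_n\sup_\gamma\mathbb{P}_\gamma(V_\tau(\gamma)<\inf\widehat{CI}^{0}_{\alpha nk})\le\alpha$: on $\{V_{xnk}=0\}$ we have $\inf\widehat{CI}^{0}_{\alpha nk}=0\le V_\tau(\gamma)$ so the event is empty, while on $\{V_{xnk}>0\}$ the bound is exactly Corollary~\ref{cor:uniformity low} via the inequality $V_\tau^*\le V_\tau$. Equipped with this per-fold one-sided bound (and its upper-tail counterpart under the additional assumptions), the median construction is handled as in the proof of Theorem~\ref{thm:uniform_coverage_variational_lvci}: $V_\tau(\gamma)$ lies below $\mathrm{Med}_K\{\inf\widehat{CI}^{0}_{\alpha/2,nk}\}$ only if it lies below at least half the fold-level lower bounds, and the worst-case-dependence adjustment to level $\alpha/2$ converts the per-fold guarantee into the overall $\alpha$ bound. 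I expect the main obstacle to be the uniformity bookkeeping in part (i)---making the split into degenerate and non-degenerate realizations rigorous while retaining the $\inf_\gamma$ over the full class, since Theorem~\ref{thm:exact_coverage_foldlvci} must be applied on the data-dependent sub-collection of sequences with $V_{xnk}>0$. The remaining steps are essentially bridges to results already in hand.
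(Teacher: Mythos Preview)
Your proposal is correct and follows the same overall architecture as the paper: partition into the degenerate event $\{V_{xnk}=0\}$, on which the pseudo-VCATE is zero and $[0,0]$ covers with probability one, and the non-degenerate event, on which the machinery behind Theorem~\ref{thm:exact_coverage_foldlvci} applies; then handle pointwise VCATE coverage by the dichotomy $V_\tau(\gamma)=0$ versus $V_\tau(\gamma)>0$; and finally reduce the multifold claim to the single-fold one via the Markov-type argument of Theorem~\ref{thm:uniform_coverage_variational_lvci}.

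There are two places where your route and the paper's diverge slightly. First, for the uniformity bookkeeping in part~(i)---which you correctly flag as the delicate step---the paper handles it concretely by augmenting the subsequence parameter to $h_n=(nV_{\tau n}^*,V_{\tau n}^*,\operatorname{vec}(\Omega_n),\zeta_n,V_{xn})$ and then invoking Assumption~B and Corollary~2.1 of \cite{andrews2020generic} after verifying the coverage bound separately on subsequences with $V_{xn_\ell}=0$ and $V_{xn_\ell}>0$. This makes the ``data-dependent sub-collection'' issue you worry about disappear, since the degenerate/non-degenerate split becomes part of the deterministic parameter sequence. Second, for part~(ii) with $V_\tau(\gamma)>0$, the paper argues via pointwise convergence of $\widehat{\tau}_{-k}$ and dominated convergence that $V_{xnk}\to V_\tau(\gamma)>0$, whereas you use the cleaner inequality $\omega(\gamma)^2\ge V_\tau(\gamma)\,\mathbb{P}_\gamma(V_{xnk}=0)$ to force $\mathbb{P}_\gamma(V_{xnk}=0)\to 0$ directly from Assumption~\ref{assump:convergencenuisance}. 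Your argument is more elementary and uses only the $L_2$ rate, not the moment bound~\ref{assump:momentbounds_dml}.(v); either suffices here.
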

	To prove this result I focus on the coverage for subsequences where $V_{xnk} = 0$ and $V_{xnk} > 0$, and apply the results for conservative coverage results in \cite{andrews2020generic}. In subsequences where $V_{xnk} = 0$, then $V_{\tau}^*(\gamma,\mathcal{I}_{-nk}) = 0$ which means that coverage of the pseudo-VCATE is equal to one. In subsequences where $V_{xnk} > 0$ and assuming that $\Omega_{nk}$ has eigenvalues bounded away from zero,  then we can apply similar arguments as before to prove $1-\alpha$ coverage. To prove point-wise coverage, I separate the cases where $V_{\tau}(\gamma) = 0$ and $V_{\tau}(\gamma) > 0$. In the latter case, I show that $V_{xnk}$ is point-wise bounded away from zero, though not uniformly. The proof of Lemma \ref{lem:degenerate_cate} does not rely on the i.i.d. assumption, and can also accommodate cluster dependence. In the empirical example, I compute confidence intervals with clustered standard errors and degenerate CATE predictions.
	
	When there's more heterogeneity and the nuisance functions are estimated accurately, then $V_{xnk} > 0$ with high probability. However, when $V_{\tau n} \approx 0$ and $\mathbb{V}(\mu_0(X)) \approx 0$, then procedures like LASSO may imply $V_{xnk} = 0$ \citep{fu2000asymptotics}, which means that marginally heterogeneous CATEs could be estimated as homogeneous. This could be impact the power of tests of homogeneity. The size for two-sided tests is not uniformly bounded. Furthermore, the multifold confidence interval allows for some quantification of uncertainty across folds/splits: the CI is degenerate only if more than half the fold/split-specific CIs are degenerate. 
	
	Furthermore, the degenerate CI has correct size control for one-sided tests.	
	
	\begin{cor}
	\label{cor:one_sidedtests_degenerate}
	Under the assumptions of Lemma \ref{lem:degenerate_cate}.(i),
	\begin{equation}\label{eq:onesided_degenerate_singlefold} \underset{n \to \infty}{\lim\sup}\sup_{\gamma \in \Gamma} \mathbb{P}_{\gamma}\left( V_{\tau}(\gamma) < \inf \widehat{CI}_{\alpha n k}^{0} \right) \le \alpha.
	\end{equation}
	\begin{equation}\label{eq:onesided_degenerate_multifold} \underset{n \to \infty}{\lim\sup}\sup_{\gamma \in \Gamma} \mathbb{P}_{\gamma}\left( V_{\tau}(\gamma) < \inf \widehat{CI}_{\alpha n k}^{0,multifold} \right) \le \alpha.
	\end{equation}	
	\end{cor}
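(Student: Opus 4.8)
The plan is to reduce both displays to the one-sided pseudo-VCATE coverage already in hand from Lemma \ref{lem:degenerate_cate}.(i), using the fundamental inequality $V_{\tau}^*(\gamma,\mathcal{I}_{-nk}) \le V_{\tau}(\gamma)$ exactly as in Remark \ref{rem:onesidedtests}. For the multifold claim I then layer on a median-aggregation argument of the false-discovery-rate type, made robust to arbitrary fold dependence by a Markov bound on the number of folds that ``reject.''

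For the single-fold bound \eqref{eq:onesided_degenerate_singlefold}, I first establish the pointwise event containment $\{V_{\tau}(\gamma) < \inf \widehat{CI}_{\alpha nk}^{0}\} \subseteq \{V_{\tau}^*(\gamma,\mathcal{I}_{-nk}) \notin \widehat{CI}_{\alpha nk}^{0}\}$. On $\{V_{xnk} \ne 0\}$ this is precisely the reasoning behind \eqref{eq:boundprob_lowerci}: since $V_{\tau}^* \le V_{\tau} < \inf \widehat{CI}_{\alpha nk}^{0}$, the pseudo-VCATE lies strictly below the lower endpoint and is excluded. On $\{V_{xnk} = 0\}$ the interval collapses to $[0,0]$ with $V_{\tau}^* = 0$, so the conditioning event $\{V_{\tau}(\gamma) < 0\}$ is empty and the containment is trivial. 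Taking $\sup_{\gamma}$ and $\limsup_{n}$ and invoking Lemma \ref{lem:degenerate_cate}.(i),
$$ \limsup_{n}\sup_{\gamma}\mathbb{P}_{\gamma}\!\left(V_{\tau}(\gamma) < \inf \widehat{CI}_{\alpha nk}^{0}\right) \le \limsup_{n}\sup_{\gamma}\mathbb{P}_{\gamma}\!\left(V_{\tau}^* \notin \widehat{CI}_{\alpha nk}^{0}\right) = 1 - \liminf_{n}\inf_{\gamma}\mathbb{P}_{\gamma}\!\left(V_{\tau}^* \in \widehat{CI}_{\alpha nk}^{0}\right) \le \alpha, $$
which gives \eqref{eq:onesided_degenerate_singlefold}. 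The very same chain applied with $\alpha/2$ in place of $\alpha$ delivers the per-fold bound I will need below.

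For the multifold bound \eqref{eq:onesided_degenerate_multifold}, I exploit the structure of the median of the fold-specific lower endpoints. Writing $A_{k} := \{V_{\tau}(\gamma) < \inf \widehat{CI}_{\frac{\alpha}{2} nk}^{0}\}$, an ordering argument (checked separately for odd $K$ and for even $K$ with the midpoint convention) shows that if the median of the $K$ endpoints exceeds $V_{\tau}(\gamma)$ then at least $K/2$ of them individually exceed $V_{\tau}(\gamma)$, i.e. $\sum_{k=1}^{K}\mathbf{1}(A_{k}) \ge K/2$. Markov's inequality then yields
$$ \mathbb{P}_{\gamma}\!\left(V_{\tau}(\gamma) < \inf \widehat{CI}_{\alpha nk}^{0,multifold}\right) \le \mathbb{P}_{\gamma}\!\left(\sum_{k=1}^{K}\mathbf{1}(A_{k}) \ge \frac{K}{2}\right) \le \frac{2}{K}\sum_{k=1}^{K}\mathbb{P}_{\gamma}(A_{k}). $$
Since $K$ is fixed, I pass $\sup_{\gamma}$ and $\limsup_{n}$ through the finite sum by subadditivity, and bound each summand by the single-fold result at level $\alpha/2$, namely $\limsup_{n}\sup_{\gamma}\mathbb{P}_{\gamma}(A_{k}) \le \alpha/2$. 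This produces $\limsup_{n}\sup_{\gamma}\,(2/K)\sum_{k}\mathbb{P}_{\gamma}(A_{k}) \le (2/K)\cdot K\cdot(\alpha/2) = \alpha$, establishing \eqref{eq:onesided_degenerate_multifold}.

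The main obstacle is the median-aggregation step. The fold-specific intervals may be arbitrarily dependent, and near the boundary, where the estimators converge faster than $\sqrt{n_k}$, their joint behavior is genuinely opaque, so no independence can be assumed. Bounding the count $\sum_{k}\mathbf{1}(A_{k})$ by Markov is exactly what insulates the argument from this worst-case dependence, and the deflation of the per-fold nominal level from $\alpha$ to $\alpha/2$ is precisely what pays for it. The remaining work is bookkeeping: verifying the ``median exceeds threshold implies at least half the endpoints exceed threshold'' implication in both parity cases, and confirming that the finite-$K$ interchange of $\limsup_{n}\sup_{\gamma}$ with the sum is legitimate.
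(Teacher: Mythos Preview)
Your proposal is correct and follows essentially the same approach as the paper: the single-fold bound comes from the event containment $\{V_\tau(\gamma) < \inf \widehat{CI}^0_{\alpha nk}\} \subseteq \{V_\tau^*(\gamma,\mathcal{I}_{-nk}) \notin \widehat{CI}^0_{\alpha nk}\}$ combined with Lemma~\ref{lem:degenerate_cate}.(i) (mirroring Corollary~\ref{cor:uniformity low}), and the multifold bound comes from the median-aggregation plus Markov argument (mirroring Theorem~\ref{thm:uniform_coverage_variational_lvci}), in each case swapping in the degenerate-CI coverage result for the non-degenerate one. Your explicit treatment of the $V_{xnk}=0$ branch and the median parity cases adds detail the paper leaves implicit, but the underlying structure is identical.
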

    The tests of homogeneity have the correct size when $V_\tau(\gamma) = 0$. Corollary \ref{cor:one_sidedtests_degenerate} guarantees that the probability of falsely rejecting a class of one-sided test is uniformly bounded in large samples.
	
	% For proof of Lemma \ref{lem:degenerate_cate} to hold, it is important to always have a matrix $\Omega_{nk}$ with bounded eigenvalues for subsequences where $V_{xnk}$ is non-zero but small. This condition is plausible. For instance, consider the lower right entry of $\Omega_{nk}$ in \eqref{eq:omegank} and suppose that $\tau_{-k}(X_i) = \widehat{\beta}_{1,-k} + \widehat{\beta}_{2,-k}X_i$, where $(\widehat{\beta}_{1,-k},\widehat{\beta}_{2,-k})$ are coefficients from the first stage estimation by a method like LASSO. Then $\mathbb{E}[S_{-k}(X_i)^4 \mid \mathcal{I}_{-nk}]= \widehat{\beta}_{2,-k}^4 \mathbb{E}[(X_i - \mathbb{E}[X_i \mid \mathcal{I}_{-nk}])^4 \mid \mathcal{I}_{-nk}] = \widehat{\beta}_{2,-k}^4\mathbb{E}[(X_i - \mathbb{E}[X_i])^4]$ and $V_{xnk}^2 = \widehat{\beta}_{2,-k}^4\mathbb{V}(X_i)^2$. Therefore $\mathbb{E}[S_{-k}(X_i)^4 \mid \mathcal{I}_{-nk}]/V_{xnk}$ does not depend on $\widehat{\beta}_{2,-k}$ (nor how close it is to zero), only on the  kurtosis of $X_i$. In more complicated non-linear models/high-dimensional models, this ratio may depend on a combination of estimated parameters and the distribution of $X_i$. A similar argument can be made for the upper left corner of $\Omega_{nk}$ assuming that $\mathbb{V}(U_i \mid D_i,X_i,\mathcal{I}_{-nk}) $ has a lower bound. 
	
\end{rem}

\begin{rem}[Monotonic Transformations]
	\label{rem:monotonic}
	It may be useful to report the standard deviation of the CATE, which is  $\sqrt{VCATE}$. I propose the following confidence interval:
	\begin{equation} \widehat{CI}_{\alpha n}^{0,multifold,sqrt}  = \left\{\sqrt{V_{\tau}^*}: V_{\tau}^* \in \widehat{CI}_{\alpha n}^{0,multifold}\right\}.
	\label{eq:sqrt_transform_ci}
	\end{equation}
	Since the square root is a strictly increasing transformation and the VCATE is non-negative, then $\sqrt{V_{\tau}(\gamma)} \in \widehat{CI}_{\alpha n}^{0,multifold,sqrt} $ if and only $V_{\tau}(\gamma) \in \widehat{CI}_{\alpha n}^{0,multifold}$. Since the events are equivalent, the transformed confidence interval preserves the coverage probabilities and will have valid coverage by Lemma \ref{lem:degenerate_cate}.

\end{rem}
%!TEX root = ./main.tex

\section{Simulations}
\label{simulations}

%\FloatBarrier
I use a simulation design to study the properties of the VCATE estimators. The baseline covariates are distributed as $[X_0,X_1] \in \mathcal{N}(0,\Sigma_x)$, where $\rho = 0.5$ and
$$ \Sigma_x = 
\begin{bmatrix} I_{J \times J} & \rho I_{J \times J} \\
\rho  I_{J \times J} & I_{J \times J} \end{bmatrix}.$$ 
The random variables $X_0$ and $X_1$ are standard normal vectors of dimension $J$. The covariance between pairs of components $X_{0j}$ and $X_{1j'}$ is equal to $\rho = 0.5$ for when $j = j'$, but zero otherwise. The outcome is generated from a model where $Y = DY_0 + (1-D)$, $D$ is generated by a Bernoulli draw with probability $0.5$, and
\begin{align}
	\begin{split}
	Y_0 &= c + \beta_{0}'X_0 + U_0 \sqrt{\tilde{\sigma}_0^2 + \kappa_{0}'X_0X_0'\kappa_{0}} \\
	Y_1 &= (c + \tau) + \beta_{0}'X_0 +\beta_{\tau}'X_1 + U_1 \sqrt{\tilde{\sigma}_1^2 + \kappa_{1}'X_1X_1'\kappa_{1}},
	\end{split}
	\label{eq:outcome_simulation}
\end{align}
where $c,\tau \in \mathbb{R}$, $\beta_0,\beta_\tau,\kappa_0,\kappa_1 \in \mathbb{R}^p$. The errors $(U_0,U_1)$ are independent of the covariates $(U_0,U_1) \indep (X_0,X_1)$, and distributed as standard normals $[U_0,U_1]' \in \mathcal{N}(0_{2 \times 1},I_2)$. The key model quantities have closed-form expressions. The conditional means at baseline and the CATE are given by $\mu_1(x) = \alpha + \beta_0'x_0$ and $\tau(x) = \tau + \beta_\tau'Z_1$, respectively. The conditional variances are $\sigma_d^2(x) = \kappa_d'x_dx_d'\kappa_d$ for $d \in \{0,1\}$. This formulation incorporates heteroskedasticity. Covariates that influence the outcomes at baseline may also affect the treatment effects.

The regressors are constructed in such a way that $\mathbb{E}[X_dX_d'] = I_p$ for $d \in \{0,1\}$. This implies simple expressions for the variances of the model, $\mathbb{V}(U_d) = \tilde{\sigma}_d^2 + \kappa_d'\kappa_d$,
$$ V_{\tau} = \beta_\tau'\beta_\tau, \qquad \mathbb{V}(Y_0) = \beta_0'\beta_0 + \tilde{\sigma}_d^2 + \kappa_0'\kappa_0, $$
 $$ \mathbb{V}(Y_1) = \beta_0'\beta_0 + \beta_\tau'\beta_\tau + 2(1-\rho)\beta_0'\beta_\tau + \tilde{\sigma}_1^2 + \kappa_1'\kappa_1, $$

I choose an approximately sparse specification for \eqref{eq:outcome_simulation} where the coefficients decay exponentially at a rate of decay of $\lambda = 0.7$. Let $\ell_j = \sqrt{\left(\frac{1-\lambda}{1-\lambda^J}\right)\lambda^{1-j}}$ be a geometric sequence, which  satisfies $\sum_{j=1}^J \left(\frac{1-\lambda}{1-\lambda^J}\right)\lambda^{1-j} = 1$. Given user-specified parameters $(V_{\mu},V_{\tau},\sigma_0^2,\sigma_1^2)$, the coefficients for the entries $j \in \{1,\ldots,J \}$ are determined by $\beta_{0,j} = \ell_j\sqrt{V_{\mu}}$, $\beta_{\tau,j} = \ell_j \sqrt{V_\tau}$, $\kappa_{d,j} = \ell_j \sqrt{\sigma_d^2-\tilde{\sigma}_d^2}$, for $d \in \{0,1\}$. Since $\sum_{j=1}^J \left(\frac{1-\lambda}{1-\lambda^J}\right)\lambda^{1-j} = 1$, then $\beta_0'\beta_0 = V_{\mu}$, $\beta_\tau\beta_\tau = V_{\tau}$, and $\beta_0'\beta_\tau = \sqrt{V_\mu V_\tau}$. We can obtain analogous expressions for the variances of the unobserved components, so that $\tilde{\sigma}_d^2 + \kappa_0'\kappa_0 = \sigma_d^2$ for $d \in \{0,1\}$.

I choose an average effect size of $\tau = 0.15$, that is coherent with the recent meta-analyses of economic experiments in \cite{vivalt2015heterogeneous}. To make sure that the magnitudes are interpretable, I normalize the coefficients so that the variance for the control group is $\mathbb{V}(Y_0) = 1$, by setting set $c = 1$, $\sigma_d = 0.7$,  $\tilde{\sigma}_d = 0.21$, and $V_{\mu} = 0.3$. The design is easy to scale for different values of $V_{\tau}$ and $J$. My design is similar to that in \cite{belloni2014inference} but I choose $\Sigma_x$ and the sparsity structure in such a way that $V_{\tau}$ has a closed form expression. I use LASSO to estimate $\mu_1(x)$ and $\mu_0(x)$, tuned via cross-validation. The coefficients of this model are consistent given this sparse linear structure, even in high dimensions. I randomly simulate 2000 datasets to compute each of the estimators, and split them into $K=2$ folds.

\begin{figure}[t]
	\centering
	\includegraphics[scale=0.45]{./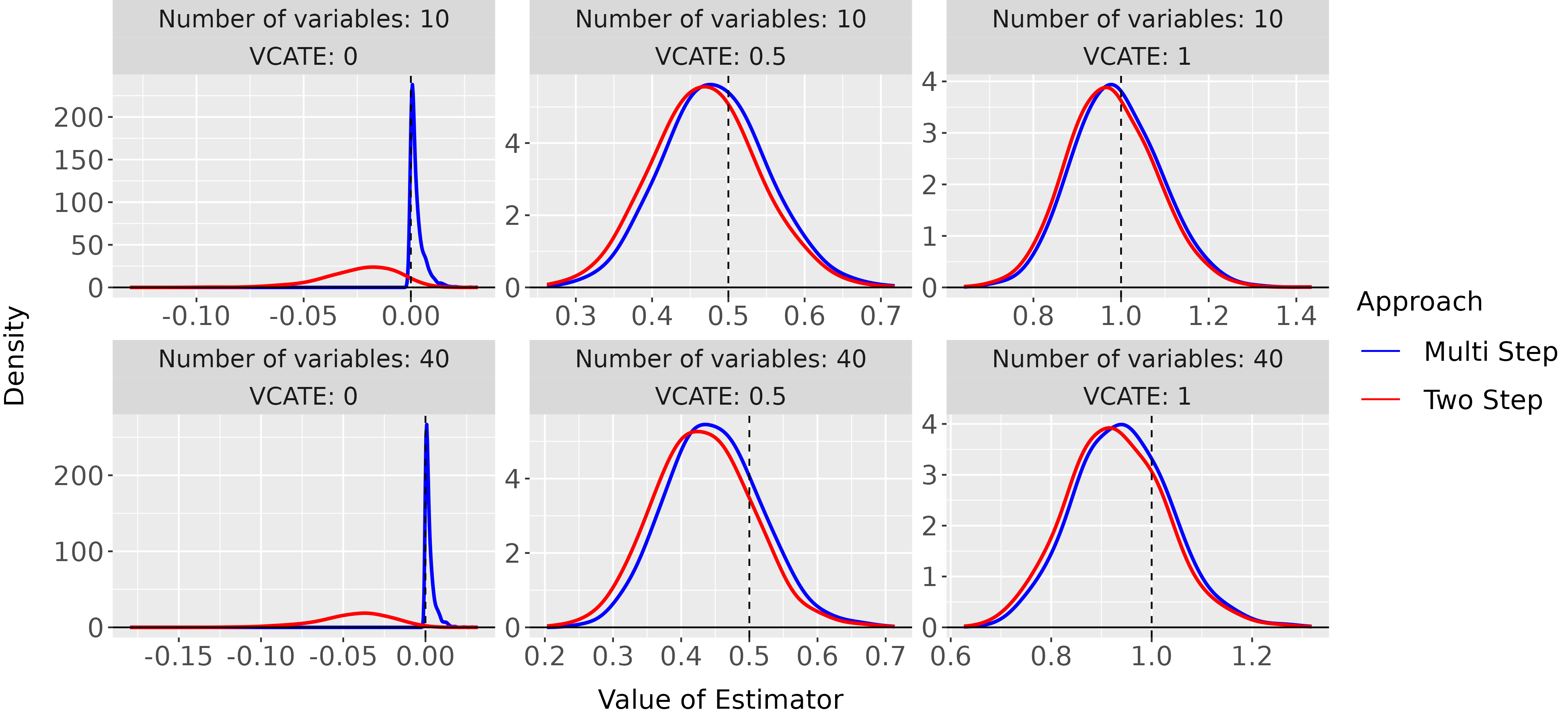}	
	\caption[Density]{\scriptsize \textbf{(Density of estimators)} The figure shows the distributions of $\widehat{V}_{\tau n}$ (multi-step) and $\widehat{V}_{\tau n}^{\text{two-step}}$, defined in \eqref{eq:multistep_estimator} and \eqref{eq:twostep_dml}, respectively, for $n = 2500$, $K = 2$ folds, and a single split. The horizontal panels show designs with homogeneity ($V_{\tau} = 0$, left), moderate heterogeneity ($V_{\tau} = 0.5$, middle) and high-heterogeneity ($V_{\tau} = 1$, right). The vertical panels have a low dimension case with $2J = 10$ (top), and a high-dimension case with $2J = 40$ (bottom).  }
	\label{fig:densityplots_vtauhat}		
\end{figure}

Figure \ref{fig:densityplots_vtauhat} considers a simulation with $n = 2500$. The figure displays a density plot for the multi-step estimator, $\widehat{V}_{\tau n}$ defined in \eqref{eq:multistep_estimator}, and a two-step debiased machine learning estimator computed as:
\begin{equation} \widehat{V}_{\tau n}^{\text{two-step}} := \frac{1}{n}\sum_{i =1}^{n}\varphi(Y_i,D_i,X_i,\widehat{\eta}_{-k_i}(X_i)),
\label{eq:twostep_dml}
\end{equation}
where $\widehat{\eta}_{-k}(x) = (\widehat{\tau}_{-k}(x),\widehat{\mu}_{0,-k}(x),p(x),\widehat{\tau}_{n,av})$ and $\widehat{\tau}_{n,av} = \frac{1}{n}\sum_{i=1}^n \widehat{\tau}_{-k_i}(X_i)$. When $V_{\tau} > 0$ the efficient influence function is non-degenerate. In high-heterogeneity regimes both converge to the same limiting distribution.\footnote{This is shown in Theorem \ref{thm:rootn_consistency_and_efficiency} for the multistep approach and can be shown for the two-step using standard arguments, e.g. \cite{chernozhukov2018double}.} However, when $V_{\tau} = 0$, the influence function is degenerate and they may converge at different rates. We see that the multi-step approach is much more precise. This can be explained by the fast boundary convergence rates derived in Theorem \ref{thm:superconsistency}. The two step approach can also produce negative estimates of $V_{\tau}$, which is an undesirable feature, whereas the multi-step estimator is always non-negative. Both estimators have higher bias when the dimension increases because there is more first-stage noise.

\begin{figure}[t]
	\centering
	\includegraphics[scale=0.45]{./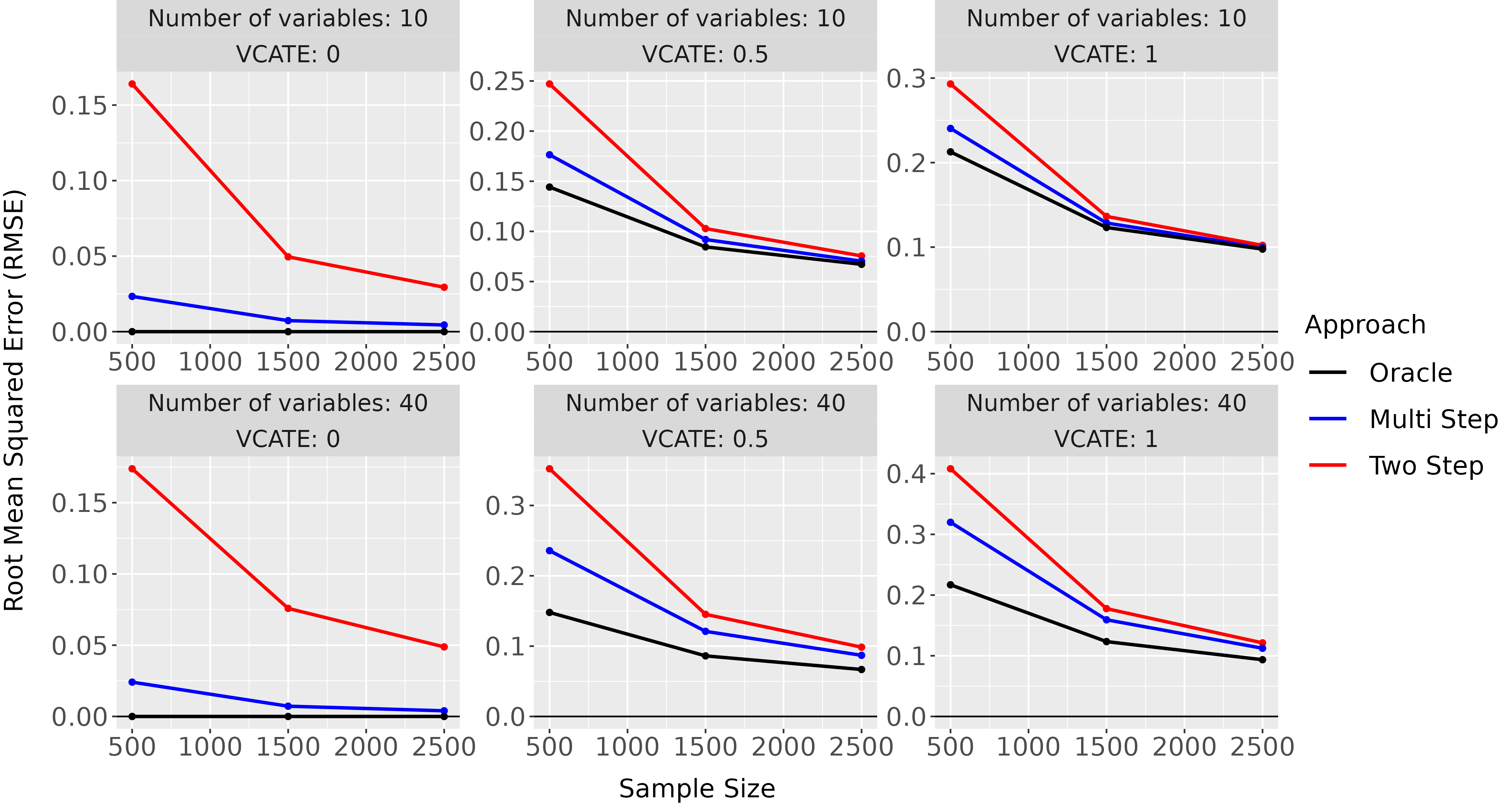}
	\caption{\scriptsize \textbf{(Root Mean Square Error)} The figure shows the root mean-square error of $\widehat{V}_{\tau n}$ (multi-step) and $\widehat{V}_{\tau n}^{\text{two-step}}$, defined in \eqref{eq:multistep_estimator} and \eqref{eq:twostep_dml}, respectively, for different sample sizes, $K=2$ folds, and a single split. The `oracle'' estimator is constructed by substituting $\widehat{\eta}_{-k}(X_i) = \eta(X_i)$ in \eqref{eq:twostep_dml}.  The horizontal panels show designs with homogeneity ($V_{\tau} = 0$, left), moderate heterogeneity ($V_{\tau} = 0.5$, middle) and high-heterogeneity ($V_{\tau} = 1$, right). The vertical panels have a low dimension case with $2J = 10$ (top), and a high-dimension case with $2J = 40$ (bottom).  }
	\label{fig:mse_by_vtau}	
\end{figure}
Figure \ref{fig:mse_by_vtau} plots the root mean-square error (RMSE) of $\widehat{V}_{\tau n}$ and $\widehat{V}_{\tau n}^{two-step}$ for different sample sizes. I compute the semiparametric efficiency bound by computing the RMSE of an ``oracle'' estimator that substitutes $\widehat{\eta}_{-k}(X_i) = \eta(X_i)$ in \eqref{eq:twostep_dml}. The results show that as the sample size increases, both estimators achieve a higher level of accuracy and their variance approaches the semi-parametric lower bound (the RMSE of the oracle). As expected by Corollary \ref{cor:bound_varefficient}, the semiparametric lower bound is zero at the boundary. The differences in RMSE shorten with higher $V_{\tau}$ and in lower dimensional settings $(2J = 10)$ (which have lower first-stage noise).

\begin{figure}[t]
	\centering
	\includegraphics[scale=0.45]{./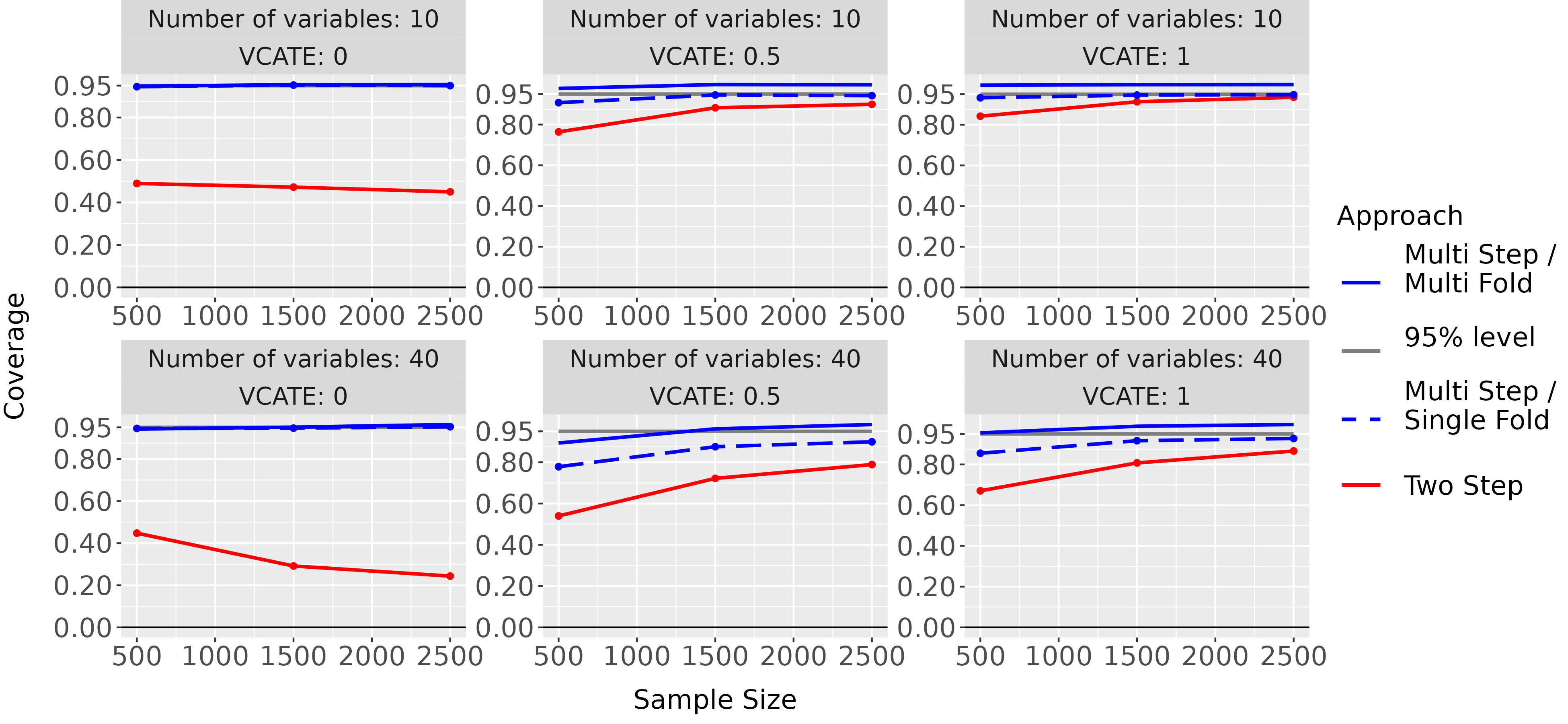}	
	\caption{\scriptsize \textbf{(Coverage of $V_{\tau}$)} The figure shows the 95\% coverage probabilities at different sample sizes, for the multi-fold CIs and single fold confidence intervals defined in \eqref{eq:degenerate_ci} and \eqref{eq:multisplit_cis_degenerate}, respectively, for $K = 2$ folds, single split. The two-step CIs are constructed as $(1/n)\sum_{i=1}^n \widehat{\varphi}_i \pm 1.96\sqrt{\smash[b]{\widehat{V}_{\varphi}/n}}$, where $\widehat{\varphi}_i$ is the summand  in \eqref{eq:twostep_dml} and $\widehat{V}_{\varphi}$ is an estimate of its sample variance. The horizontal panels show a regime with homogeneity ($V_{\tau} = 0$), moderate heterogeneity $(V_{\tau} = 0.5)$ and high-heterogeneity $(V_{\tau} = 1)$. The vertical panels show low dimension case with $2J = 10$ (top), a high-dimension case with $2J = 40$ (bottom). The dotted vertical lines denote the true value of the VCATE. 
		}
	\label{fig:coverage_by_vtau}		
\end{figure}

Figure \ref{fig:coverage_by_vtau} shows the coverage of $V_{\tau}$ for the different proposed confidence intervals (CIs). For the multi-step approach, I consider the single splits CIs in \eqref{eq:degenerate_ci} and the conservative multi-fold CIs from  and \eqref{eq:multisplit_cis_degenerate}. The two-step CIs are constructed as $\frac{1}{n}\sum_{i=1}^n \widehat{\varphi}_i \pm 1.96\sqrt{\widehat{V}_{\varphi}/n}$, where $\widehat{\varphi}_i$ is the summand  in \eqref{eq:twostep_dml} and $\widehat{V}_{\varphi}$ is an estimate of its sample variance. The coverage of the two-step approach is very low under homogeneity, and there is no improvement as sample size increases when $V_{\tau} = 0$. The coverage of the two-step estimator only improves with higher $n$, in high heterogeneity designs. By contrast, both multi-step approaches cover the parameter at the intended level, and coverage improves with higher sample size. For fixed $n$, coverage degrades for both cases when the number of covariates is higher.

Figure \ref{fig:localcoverage_by_vtau} explores the differences in covering the VCATE vs the pseudo-VCATE when $n = 2500$ and $2J = 10$ for a fine-grained set of values of $V_{\tau}$. Panel (a) reflects a dip in coverage close to the boundary. My theory predicts that the multistep CIs have exact coverage when $V_{\tau} = 0$, but may not cover uniformly close to the boundary (see discussion after Theorem \ref{thm:exact_coverage_foldlvci_pointwise}). The mulit-fold CIs have conservative coverage. Conversely, Figure \ref{fig:localcoverage_by_vtau},  Panel (b) shows the multi-step CIs always uniformly cover the pseudo-VCATE, as predicted by theory. This provides a robustness guarantee for how to interpret the CIs. The two-step approach has much lower coverage and no guarantees when $V_{\tau}  = 0$ in either panel.

\begin{figure}[!ht]
	\centering
	\includegraphics[scale=0.45]{./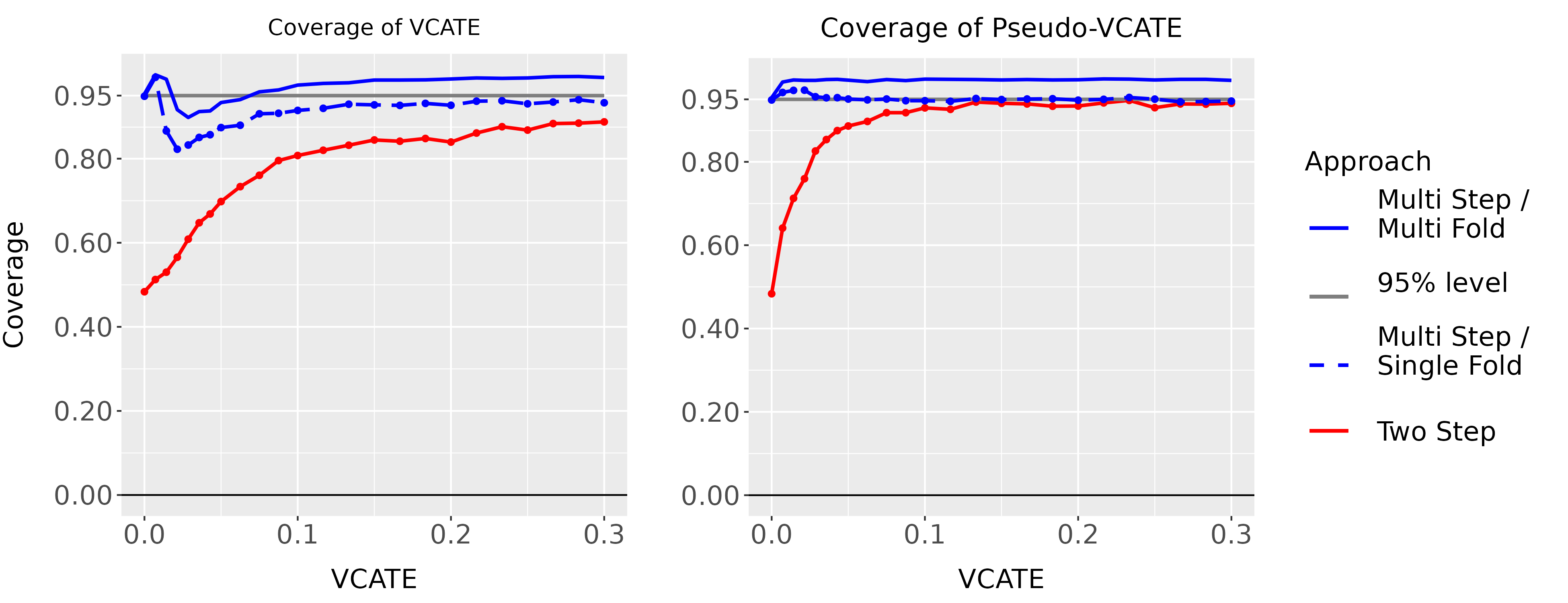}	
	\caption{\scriptsize \textbf{(Coverage of $V_{\tau}$ vs $V_\tau^*$)}	The figure shows the coverage of the VCATE (left) and the fold-specific pseudo-VCATE (right) for a set of fine-grained values of the true $V_{\tau}$, given $n=2500$, $K = 2$, single split, and $2J = 10$. The multi-fold CIs and single fold CIS, defined in \eqref{eq:degenerate_ci} and \eqref{eq:multisplit_cis_degenerate}, respectively. The two-step CIs are constructed as $(1/n)\sum_{i=1}^n \widehat{\varphi}_i \pm 1.96\sqrt{\smash[b]{\widehat{V}_{\varphi}/n}}$, where $\widehat{\varphi}_i$ is the summand  in \eqref{eq:twostep_dml} and $\widehat{V}_{\varphi}$ is an estimate of its sample variance. For the multi-fold and two-step approaches in the right panel, I report  coverage of the median $V_{\tau nk}^*$ across folds.}
	\label{fig:localcoverage_by_vtau}		
\end{figure}

\begin{figure}[h]
	\centering
	\includegraphics[scale=0.45]{./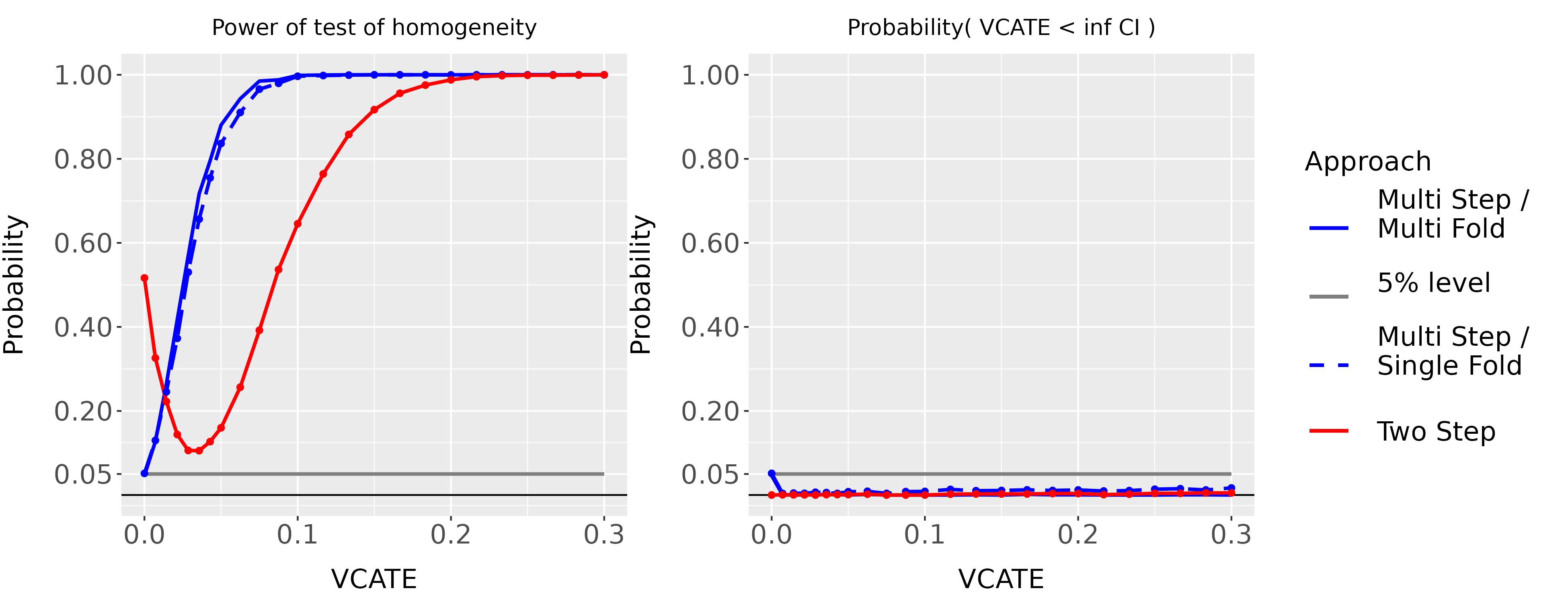}	
	\caption{\scriptsize \textbf{(Power and size)} Panel(a) shows the power of tests of homogeneity (whether zero is contained in the confidence interval) for a set of fine-grained values of the true $V_{\tau}$, given $n=2500$, $K = 2$, single split, and $2J = 10$. Panel (b) shows the size of one-sided tests, i.e., the probability that the VCATE is strictly below the CI bounds. The multi-fold CIs and single fold CIs are defined in \eqref{eq:degenerate_ci} and \eqref{eq:multisplit_cis_degenerate}, respectively. The two-step CIs are constructed as $(1/n)\sum_{i=1}^n \widehat{\varphi}_i \pm 1.96\sqrt{\smash[b]{\widehat{V}_{\varphi}/n}}$, where $\widehat{\varphi}_i$ is the summand  in \eqref{eq:twostep_dml} and $\widehat{V}_{\varphi}$ estimates its sample variance. }
	\label{fig:power_by_vtau}
\end{figure}

Figure \ref{fig:power_by_vtau}.(left) shows the power of tests of homogeneity in a simulation with $n = 2500$. The multi-step, single fold approach has correct size control and has local power, in line with the result of Lemma \ref{lem:powerlocal_alternatives}. The power of the test using the multifold approach is similar to using a single fold. The right panel shows the probability that the VCATE is strictly below the CI bounds. As predicted by theory, this probability is uniformly bounded by $\alpha = 0.05$ for the single and multifold approaches (see Corollaries \ref{cor:uniformity low} and \ref{cor:one_sidedtests_degenerate}, and Theorem \ref{thm:uniform_coverage_variational_lvci}). The two-step approach has a non-monotonic power curve with incorrect size. The size of one-sided tests in the right panel is uniformly bounded by $\alpha = 0.05$, though this may be partly the fact that $\widehat{V}_{\tau n}^{two-step}$ can take negative values and has a negative bias (see Figure \ref{fig:densityplots_vtauhat}).

%\FloatBarrier

%!TEX root = ./main.tex

\section{Empirical Example}

In this section, I illustrate my approach using data from a large-scale information experiment conducted by \cite{dizon2019parents}. The study, which covered 39 school districts, involved an intervention to provide low-income parents of at least two children with information about their children's school performance. Half the households where assigned to the information intervention and the rest were assigned to the control group. \cite{dizon2019parents} showed that, at baseline, parents faced large information gaps regarding their children's grades and class ranking. Even though schools produced a report card, 60\% of parents were unaware of their child's performance. Many parents reported that they did not receive the report card (children either lost them or did not take them home), or had trouble interpreting the report card structure, primarily due to low literacy levels. 

The intervention was designed to present details of their children's school performance in an easily accessible way.
\cite{dizon2019parents} showed that the information gaps (the difference between believed and true test scores) went down as a result of the intervention, and the amount of updating varied depending on students' initial test scores. \cite{dizon2019parents} also introduced a real-stakes scenario where parents received a series of lottery tickets for a scholarship paying for four years of high school. Parents had to decide how to allocate tickets between two siblings. If there were more than two siblings residing in the household, the survey team selected two at random. The results showed that parents allocated tickets towards their better performing child.

To test for heterogeneity, \cite{dizon2019parents} ran a linear regression of parental beliefs on initial scores, treatment, and an interaction as in \eqref{eq:regression}, and reported estimates $\bar{X} = 46.8$ (on a scale of 100) and $(\widehat{\beta}_1,\widehat{\beta}_2) = (-25.9,0.40)$ in their Tables 1 and 2, respectively. The coefficient $\widehat{\beta}_2$ captures how much the treatment effects vary (on a $0$ to $100$ scale) for a each additional point in students initial scores. The VCATE combines information about the coefficient and the initial variability in scores. From the data, I also estimate the variance of the control $\widehat{V}_x = 305.53$, and estimate the VCATE as $\widehat{\beta}_2^2\widehat{V}_x = 48.89$. Taking the square root and normalizing by the standard deviation of the outcome for the control group ($\widehat{V}_{(Y \mid D= 0)} = 311.72$), produces $0.40$. This means that the magnitude of treatment effect heterogeneity explained by scores is comparable to 40\% of the standard deviation of beliefs in the control group.

The VCATE can also help us understand the magnitude of treatment effect heterogeneity in the experiment using multiple covariates. I use LASSO for the first-stage predictions, two folds per split with 20 splits, and estimate $\{\widehat{\Omega}_{nk}\}_{k=1}^K$ using clustered standard errors at the household level and the formulas for the confidence intervals defined in \eqref{eq:multisplit_cis_degenerate}. The results are not very sensitive to the number of splits. For ease of exposition, I report point-estimates and CIs for $\sqrt{\frac{V_\tau}{\mathbb{V}(Y_0)}}$, which is the standard deviation of the CATE divided by the standard deviation of the outcome for the control group. I compute confidence intervals for the square root via the transformation proposed in \eqref{eq:sqrt_transform_ci}.

%\FloatBarrier

\label{sec:empirical_example}
\begin{table}[hbt!]
	\centering
	\begin{adjustbox}{width=\columnwidth,center}

	{\small
		\begin{tabular}{lcccccccc}
			\hline
			% ~ & ~ & ~ & \multicolumn{4}{c}{Lasso Estimation}  \\
			\\
			~ & N & Clusters & Estimate & 95 & \multicolumn{2}{c}{Welfare  Bounds} \\ 
			~ & ~ & (HH) & $\sqrt{V_\tau/\mathbb{V}(Y_0)}$ & \% C.I. &  Thm. \ref{thm:simple_regretbound} & Thm. \ref{thm:general_regretbound} \\
			
			\\	
			\hline \\
			\multicolumn{5}{l}{Panel (a): Endline Beliefs. $\frac{ATE}{\sqrt{\mathbb{V}(Y_0)}}$: -0.42 (0.03)}      \\ 
			&  &  &  &  &  &  &     \\ 
        Scores & 5244 & 2626 & 0.40 & [0.32, 0.48] & 0.201 & 0.081 \\ 
Parent Years of Education & 5208 & 2608 & 0.05 & [0.00, 0.13] & 0.023 & 0.001 \\ 
Scores + Parents' Education & 5208 & 2608 & 0.40 & [0.32, 0.48] & 0.199 & 0.079 \\ 
Above Median Educ. Expenses & 5244 & 2626 & 0.06 & [0.00, 0.15] & 0.028 & 0.002 \\ 
Respondent Variables & 4722 & 2365 & 0.03 & [0.00, 0.12] & 0.015 & 0.001 \\ 
Household Variables & 5244 & 2626 & 0.02 & [0.00, 0.10] & 0.009 & 0.000 \\ 
Student Variables & 4959 & 2532 & 0.11 & [0.04, 0.20] & 0.057 & 0.007 \\ 
All variables  & 4464 & 2278 & 0.40 & [0.31, 0.48] & 0.199 & 0.079 \\
			&  &  &  &  &  &  &     \\ 
			\multicolumn{5}{l}{Panel (b): Secondary School Lottery. $\frac{ATE}{\sqrt{\mathbb{V}(Y_0)}}$: 0.00 (0.00)}   &  &  &     \\ \hline
			&  &  &  &  &  &  &     \\ 
        Scores & 5258 & 2629 & 0.16 & [0.07, 0.24] & 0.078 & 0.078 \\ 
Parent Years of Education & 5222 & 2611 & 0.00 & [0.00, 0.00] & 0.000 & 0.000 \\ 
Scores + Parents' Education & 5222 & 2611 & 0.16 & [0.07, 0.24] & 0.078 & 0.078 \\ 
Above Median Educ. Expenses & 5258 & 2629 & 0.03 & [0.00, 0.07] & 0.015 & 0.015 \\ 
Respondent Variables & 4736 & 2368 & 0.00 & [0.00, 0.00] & 0.000 & 0.000 \\ 
Household Variables & 5258 & 2629 & 0.00 & [0.00, 0.00] & 0.000 & 0.000 \\ 
Student Variables & 4971 & 2535 & 0.06 & [0.00, 0.15] & 0.029 & 0.027 \\ 
All variables & 4476 & 2281 & 0.15 & [0.05, 0.24] & 0.076 & 0.074 \\  
			\hline
		\end{tabular}
	}
	\end{adjustbox}

	\caption[empirical_estimates]{\scriptsize \textbf{(Empirical Estimates)} Each panel computes the ATE and VCATE normalized by the standard deviation of each outcome for the control group. The $N$ varies depending on the missing values for the covariates and the outcome. Each line within panels (a) and (b) considers 8 different sets of covariates measured at baseline which include test scores, years of parental education, an indicator for whether annual educational expenditures the previous year (uniforms, fees, school supplies) are above the median, other respondent variables (gender, age, is literate, is farmer), household variables (number of kids, single-parent), and student variables (grade, age, gender, attendance). I estimated clustered covariance matrices at the household level.  The function $\mu_d(x)$ is computed using LASSO with 10-fold cross-validation, and the estimates are computed using 2-fold cross-fitting with 20 splits. The point estimates are the median values of $\widetilde{V}_{\tau n}$ across splits. I compute the bound in Column 7, using the ATE from the corresponding subsample.  }
	\vspace{-0.1in}
	\label{tab:empirical_estimates}
\end{table}

%\FloatBarrier

Table \ref{tab:empirical_estimates} computes the ATE and the $\sqrt{VCATE}$ for two outcomes (parental beliefs and lottery allocations) and 8 different sets of covariates. Panel (a) shows that, on average, parents downgrade their beliefs about test scores by 42\% of the standard deviation (SD) of the beliefs of the control group. The treatment effect heterogeneity explained by test scores is equivalent to 40\% of the standard deviation (SD) of the beliefs of the control group. This is statistically significant at the 5\% level and has a comparable magnitude to the ATE. The confidence intervals are relatively short in length. However, applying the bounds from Theorem \ref{thm:general_regretbound}  and Corollary \ref{cor:invariance_transformation} shows that differentiating treatment offers based on scores could further lower beliefs by at most 8.1\% SDs of the beliefs in the control group. In this case the ATE is already fairly high compared to $\sqrt{V_{\tau}}$, so in spite of the large heterogeneity, the marginals gains from targeting would be modest. Panel (b) presents the results for the secondary school lottery. The ATE is estimated precisely at zero, because the lottery tickets had to be divided as a zero sum between the siblings. The VCATE measures how much the dispersion in the allocation depends on the covariates. The standard deviation of the VCATE explained by initial scores is 16\% of the SD of the control group lottery allocation, and the maximum welfare gains are around 7.8\% SD.

The student variables (grade, age, gender, attendance, and educational expenditures) collectively explain 11\% of the SD of parental beliefs. This is statistically significant at the 5\% level. The magnitude is around a fourth of the variation for test scores, and the maximum welfare gain from targeting is 0.7\% of the SD of parental beliefs in the control group.  I find that other subsets of covariates do not produce statistically significant estimates of the VCATE at the 5\% level. The added welfare of personalizing treatment assignment using these covariates is also very low.

The estimates that use all the covariates are computed over a smaller subsample with non-missing values across all variables. Despite the large number of variables and the smaller sample, the estimates of the VCATE remain relatively stable across specifications. The VCATE computed from a rich set of respondent, household, and student covariates has a comparable magnitude to the VCATE that only includes student scores. The confidence intervals are also similar. The estimates of the maximum welfare gains from targeting using all covariates are 7.9\% SD for beliefs and 7.4\% SD for lottery outcomes, respectively, which are similar to the welfare gains computed using only scores.

%!TEX root = ./main.tex

\section{Conclusion}
\label{conclusion}
I propose an efficient estimator of the variance of treatment effects that can be attributed to baseline characteristics and propose novel adaptive confidence intervals that produce valid coverage. I analyze issues of non-standard inference that arise in this context, and how to address them. I also explore the economic significance of the VCATE for policymakers and researchers, by showing that the $\sqrt{VCATE}/2$ bounds the marginal gains of targeted policies. Overall, this paper proposes a broadly applicable approach to measure treatment effect heterogeneity in experiments.

\bibliographystyle{elsarticle-harv}
\bibliography{bibliographyml}

%\newpage
\counterwithin{assump}{section}

\newpage
\begin{appendices}
%!TEX root = ./main.tex

\section{Critical Values}
\label{sec:appendix_criticalvalues}
%Computing critical values for $V_{\tau}^*$ relies on the empirical process
%
%$$ G(n,V_{\tau}^*,\Omega,Z,\zeta) := \frac{(e_1'\Omega^{1/2}Z)^2}{n} + 2\zeta \sqrt{\frac{V_{\tau}}{n}^*}(e_1'\Omega^{1/2}Z)+ \frac{V_{\tau}^*}{\sqrt{n}}(e_2'\Omega^{1/2}Z).$$
Define the correlation $\rho = \Omega_{12}/\sqrt{\Omega_{11}\Omega_{22}}$. By definition of the Cholesky decomposition, $(e_1'\Omega^{1/2}Z) = \sqrt{\Omega_{11}}Z_1,$ and $(e_2'\Omega^{1/2}Z) = \sqrt{\Omega_{22}(1-\rho^2)}Z_2 + \rho \sqrt{\Omega_{22}} Z_1$. Substituting these terms into the expression for $G$
\begin{align*}
	&G(n,V_{\tau}^*,\Omega,Z,\zeta) \\
%	&= \frac{\Omega_{11}}{n}Z_1^2 + 2 \zeta  \sqrt{\frac{V_{\tau}}{n}^*}\sqrt{\Omega_{11}}Z_1+\frac{V_{\tau}^*}{\sqrt{n}}\left[ \sqrt{\Omega_{22}(1-\rho^2)}Z_2 + \rho \sqrt{\Omega_{22}} Z_1 \right] \\
	&= \underbrace{\frac{\Omega_{11}}{n}}_{\nu_1}Z_1^2 + \underbrace{2\left[ \zeta \sqrt{\frac{V_{\tau}^*\Omega_{11}}{n}} + \frac{V_{\tau}^*}{2\sqrt{n}} \rho \sqrt{\Omega_{22}}   \right]}_{\kappa_1} Z_1+\underbrace{\frac{V_{\tau}^*}{\sqrt{n}}\left[\sqrt{\Omega_{22}(1-\rho^2)} \right]}_{\kappa_2}Z_2.
\end{align*}
This has a quadratic form, $G(n,V_{\tau}^*,\Omega,z,\zeta) = \nu_1 \left(Z_1 + \frac{\kappa_1}{2 \nu_1 } \right)^2 + \kappa_2Z_2 - \frac{\kappa_1^2}{4 \nu_1 }$, which fits the form of a generalized chi-square \citep{das2021method}. To compute critical values we compute feasible analogs $(\widehat{\nu}_1,\widehat{\kappa}_1,\widehat{\kappa}_2)$ from an estimate of $\Omega$.
%\begin{align*} G(n,V_{\tau}^*,\Omega,z,\zeta)
%	&=  \nu_1 Z_1^2 +  \kappa_1 Z_1 + \kappa_2 Z_2 = \nu_1 \left(Z_1 + \frac{\kappa_1}{2 \nu_1 } \right)^2 + \kappa_2Z_2 - \frac{\kappa_1^2}{4 \nu_1 },
%\end{align*}
%where $Z_2$ is a standard normal, and $\left(Z_1 + \frac{\kappa_1}{2 \nu_1 } \right)^2$ is a non-central chi-square with one degree of freedom and non-centrality parameter $ \frac{\kappa_1^2}{\nu_1^2}$. By construction $Z_1$ and $Z_2$ are independent, which means that this distribution fits the form of a generalized chi-square \citep{das2021method}. To compute critical values we compute feasible analogs $(\widehat{\nu}_1,\widehat{\kappa}_1,\widehat{\kappa}_2)$ from the estimated matrix $\widehat{\Omega}_n$. Once in this form, the quantiles can be obtained using standard statistical software packages.

%!TEX root = ./main.tex

%\newpage
\section{Proofs Main Document}

\begin{proof}[Proof of Lemma \ref{lem:monotonicity}]
	Define $\tilde{\tau}(X') := \mathbb{E}[Y_1-Y_0 \mid X']$. Since $X$ is $X'$-measurable, then by the law of iterated expectations, $\mathbb{E}[\tilde{\tau}(X')\mid X] = \tau(X)$. By the law of total variance $V_{\tau}' = \mathbb{V}(\tau(X)) + \mathbb{E}[\mathbb{V}(\tilde{\tau}(X')\mid X)] \ge V_\tau$. We can prove the upper bound by setting $X' = Y_1 - Y_0$.
\end{proof}

\begin{proof}[Proof of Theorem \ref{thm:simple_regretbound}] The result is a special case of Theorem \ref{thm:general_regretbound}. The most adversarial distribution is one where $\tau_{av} = 0$.
\end{proof}

\begin{proof}[Proof of Theorem \ref{thm:general_regretbound}]
	
	Our goal is to find $\mathcal{R} := \sup_{\gamma \in \Gamma}\sup_{\pi \in \Pi} \ \mathcal{U}_\gamma(\pi)$, and to prove that $\mathcal{U}_\gamma(\pi) = \mathcal{R}$ for at least one $\gamma \in \Gamma$ and $\pi \in \Pi$.
	
	Define two random variables $T := \mathbb{E}_\gamma[Y_1 - Y_0 \mid X = x]$ and $M := \mathbbm{1}\{T >0 \}$. By adding/subtracting $\mathbb{E}_\gamma[Y_0]$ and applying the law of iterated expectations, $\mathcal{U}_\gamma(\pi) = \mathbb{E}_\gamma[\pi(X)T] -\max\{0,\mathbb{E}_\gamma[T] \} $. The optimal policy $\pi^*(X) = \mathbbm{1}\{T \ge 0\}$, which belongs to $\Pi$, i.e. the ``first-best''  \citep{kitagawa2018should}. This means that $\mathcal{R} = \sup_{\gamma \in \Gamma} \ \mathbb{E}_\gamma[\max\{0,T\}] - \max\{0,\mathbb{E}_\gamma[T]\}$.  \\
	
	\textbf{Step 1: (Problem  Equivalence) } The set of distributions $\Gamma$ is very large. Instead I focus on the problem over a set of equivalence classes. For $m \in \{0,1\}$, define the moments $p_m = \mathbb{P}_\gamma(M = m)$, $\tau_m = \mathbb{E}_\gamma[T \mid M = m]$, and $\omega_m = \mathbb{V}_\gamma(T \mid M = m)$. By definition, $\mathcal{U}_\gamma(\pi^*) = p_1 \tau_1 - \max\{ 0, \tau_{av} \}$, where $p_1\tau_1$ is the proportion of people that benefit from treatment times their conditional mean. Let $\Delta := \tau_1 - \tau_0$ be the mean difference between those that benefit from the program and those that do not. By definition $\Delta > 0$ because $\tau_1 > 0$ and $\tau_0 \le 0$. The conditional treatment effect is $\mathbb{E}_\gamma[T \mid M] =  \tau_0 + M \Delta$, and $ \tau_{av} =  \tau_0 + p_1 \Delta  $. Rearranging these expressions, $\tau_1 = \tau_{av} + (1-p_1)\Delta$ and $\mathcal{U}_\gamma(\pi^*) = p_1 (\tau_{av} + (1-p_1) \Delta) - \max\{ 0, \tau_{av} \}$.	By definition $\mathbb{V}_\gamma(\mathbb{E}_\gamma[T \mid M]) = \Delta^2p_1(1-p_1)$. By applying the law of total variance $V_{\tau} = p_1 \omega_1^2 + (1-p_1)\omega_0^2 + \Delta^2(p_1)(1-p_1)$. Then
	\begin{align}
	\begin{split}
	\mathcal{R} &= \sup_{\{p_1,\tau_0,\Delta,\omega_0^2,\omega_1^2\}}  p_1\tau_{av} + p_1(1-p_1) \Delta - \max\{ 0, \tau_{av} \}, \\
	s.t. \qquad & p_1 \in [0,1], \tau_0 \le 0, \tau_0 + \Delta > 0, \Delta > 0, \omega_1^2,\omega_0^2 \ge 0, \tau_0 + p_1\Delta = \tau_{av}, \\
	& p_1 \omega_1^2 + (1-p_1)\omega_0^2 + p_1(1-p_1)\Delta^2 = V_\tau.
	\end{split}
	\end{align}
	The values of $\tau_{av}$ and $V_{\tau}$ impose the following constraints on the feasible set:
	\begin{table}[!htbp]
		\centering
		\begin{tabular}{l|cc}
			& $V_{\tau} = 0$ & $V_{\tau} > 0$  \\
			\hline
			$\tau_{av} \le 0$ &  $p_1 = 0$, $\tau_0 = \tau_{av}$, $\omega_0 = 0$ & $p_1 \in [0,1)$ \\
			$\tau_{av} > 0$ & $p_1 = 1$, $\tau_1 = \tau_{av}$, $\omega_1 = 0$  & $p_1 \in (0,1] $  
		\end{tabular}
	\end{table}
	
	By the form of the objective, if $p_1 \in \{0,1\}$ then $\mathcal{R} = 0$. Therefore, if $V_{\tau} = 0$, then $\mathcal{R} = 0$ and the result of the theorem holds. Any value of the remaining parameters that satisfies the sign constraints will be feasible. Without loss, we focus on $V_{\tau} > 0$. \\

	\textbf{Step 2: (Optimum has binary support)} Consider a situation where $(p_1,\omega_1^2,\omega_0^2)$ are fixed and $p_1 \in (0,1)$. From the variance equation, $ \Delta^* = \sqrt{\frac{V_\tau - p_1\omega_1^2 - (1-p_1)\omega_0^2}{p_1(1-p_1)}}$, and from the mean $\tau_0^* = \tau_{av} - p_1\Delta^*$ and $\tau_1^* = \tau_{av} + (1-p_1) \Delta^*$. 
		
	% We now verify which values of $(p_1,\omega_1^2,\omega_0^2)$ satisfy the other restrictions in the system.
	A solution is feasible as long as $\omega_1^2,\omega_0^2 \ge 0$, $p_1\omega_1^2 + (1-p_1)\omega_0^2 \le V_\tau$, $\tau_0^* \le 0$, $\tau_1^* > 0$, and $\Delta^* =\tau_1^*-\tau_0^*> 0$. The optimal $\tau_0^*$ is strictly increasing in $(\omega_1^2,\omega_0^2)$ and $\tau_1^*$ is strictly decreasing. If a given value of $(\omega_1^2,\omega_0^2)$ is feasible, then another candidate which shrinks it to zero will still satisfy the constraints. Moreover, the objective function is strictly increasing in $\Delta^*$, which in turn is strictly decreasing in $(\omega_1^2,\omega_0^2)$. Therefore the optimum is $(\omega_1^2,\omega_0^2) = (0,0)$, i.e. binary support for the CATE, $\Delta^* = \sqrt{V_\tau / p_1(1-p_1)}$, $\tau_0^* = \tau_{av} - \sqrt{p_1/(1-p_1)}\sqrt{V_{\tau}}$, and $\tau_1^* = \tau_{av} + \sqrt{(1-p_1)/p_1}\sqrt{V_{\tau}}$.
	
	Only some values of $p_1$ are feasible, i.e. satisfy the constraints $\Delta^* > 0$, $\tau_0^* \le 0$, and $\tau_1^* > 0$. The function $p_1/(1-p_1)$ is strictly increasing in $p$ with a range in $(0,\infty)$. The feasible values depend on the sign of $\tau_{av}$. If (i) $\tau_{av} \le 0$, then $p_1 \in \left(0,\frac{V_{\tau}}{\tau_{av}^2 + V_{\tau}}\right]$. If (ii) $\tau_{av} > 0$, $p_1 \in \left[\frac{\tau_{av}^2}{\tau_{av}^2 + V_{\tau}},1\right)$. \\
	
	\textbf{Step 3: (Solve points of support)}
	 For case (i) the objective function is $ \mathcal{R} = \max_{p_1\in \left[0,\frac{V_{\tau}}{\tau^2 + V_{\tau}}\right]} \quad p_1 \tau_{av} + \sqrt{p_1(1-p_1)}\sqrt{V_\tau}$. The unique solution to the FOC is $p_1^* = (1/2) - (1/2)\sqrt{\tau_{av}^2 /(\tau_{av}^2 + V_{\tau})}$, and is interior because  $\sqrt{\tau_{av}^2/(\tau_{av}^2 + V_{\tau})} \ge \tau_{av}^2 / (\tau_{av}^2 + V_{\tau})$. By strict concavity, it is a global optimum, and $\mathcal{R} = \tau_{av}/2 -(\tau_{av}|\tau_{av}|)/(2 \sqrt{\tau_{av}^2 +V_\tau}) + V_\tau/(2\sqrt{\tau_{av}^2 + V_\tau})$. Since $\tau_{av} \le 0$, this can be simplified to $\frac{1}{2}(-|\tau_{av}| + \sqrt{\tau_{av}^2 + V_{\tau}})$.	For case (ii), $\tau_{av} > 0$, the objective function is $(p_1 - 1)\tau_{av} + \sqrt{p_1(1-p_1)}\sqrt{V}_{\tau}$, subject to $1 \ge p_1 \ge \frac{\tau_{av}^2}{\tau_{av}^2+V_\tau}$. The unique solution to the FOC is $p_1^* =  (1/2) + (1/2)\sqrt{\tau_{av}^2 /(\tau_{av}^2 + V_{\tau})}$, it is interior because $p_1^* \ge \frac{1}{2} + \frac{1}{2}\frac{\tau_{av}^2}{\tau_{av}^2 + V_{\tau}} \ge \frac{\tau_{av}^2}{\tau_{av}^2+V_\tau}$, and produces the desired $\mathcal{R}$.

\end{proof}

\begin{proof}[Proof of Corollary \ref{cor:invariance_transformation}] The potential outcomes under a linear transformation are $\kappa_1 + \kappa_2 Y_0$ and $\kappa_1 + \kappa_2 Y_1$, respectively. The treatment effect is $\kappa_2 (Y_1 - Y_0)$, which does not depend on $\kappa_1$, and the transformed CATE is $\kappa_2\tau(x)$. The ATE is $\kappa_2\tau_{av}$ and the VCATE is $\kappa^2V_{\tau}$. The result follows from substituting the transformed values into Theorem \ref{thm:general_regretbound} and factorizing $|\kappa_2|$.

\end{proof}

	\begin{proof}[Proof of Lemma \ref{lem:decomposition_remainder_lvci_reg}	]
		We decompose $\widehat{V}_{\tau n} = \widehat{\beta}_{2n}^2 \widehat{V}_{xn}$ into components that map into those of Assumption \ref{assump:clt_regcoef}, by centering the key terms.
		\begin{equation}
		\label{eq:decomposition_hatVtau_star_expanded_simple}
		\begin{aligned}
		&\widehat{V}_{\tau n} - V_{\tau n}^* =  \widehat{\beta}_{2n}^2\widehat{V}_{xn} - \beta_{2n}^2V_{xn} = (\widehat{\beta}_{2n}-\beta_{2n} + \beta_{2n})^2\frac{V_{xn}}{V_{xn}}(\widehat{V}_{xn} - V_{xn} + V_{xn}) - \beta_{2n}^2V_{xn} \\
		& =[\sqrt{V}_{xn}(\widehat{\beta}_{2n}-\beta_{2n})]^2+ 2[\sqrt{V_{xn}}\beta_{2n}][\sqrt{V}_{xn}(\widehat{\beta}_{2n}-\beta_{2n})]+ [V_{xn}\beta_2^2]\left(\frac{\widehat{V}_{xn}}{V_{xn}}-1\right)  \\
		&\quad 
		+ [\sqrt{V_{xn}}(\widehat{\beta}_{2n}-\beta_{2n})]^2\left( \frac{\widehat{V}_{xn}}{V_{xn}}-1\right) + 2[\sqrt{V_{xn}}\beta_{2n}][\sqrt{V_{xn}}(\widehat{\beta}_{2n}-\beta_{2nk})]\left( \frac{\widehat{V}_{xn}}{V_{xn}}-1\right).
		\end{aligned}
		\end{equation}
		Let $\widehat{Z}_{n} = n^{1/2}\Omega_{n}^{-1/2}[ \sqrt{V}_{xn}(\widehat{\beta}_{2n}-\beta_{2n}),\widehat{V}_{xn}/V_{xn}-1]'$ be a normalized statistic, and let $e_1 = [1,0]'$, $e_2 = [0,1]'$ be vectors that select the first and second coordinates, respectively. We can always find a $\zeta_n \in \{-1,1\}$ that solves $\zeta_{n}\sqrt{V_{\tau n}} = \beta_{2n}\sqrt{V_{xn}}$. If $V_{\tau n} > 0$ this is the sign of $\beta_{2n}$ but otherwise any $\zeta_n \in \{-1,1\}$ will solve the equation. By substituting the definition of $\widehat{Z}_n$, $V_{\tau n} = \beta_{2n}^2V_{xn}$, and 
		\begin{align}
		\begin{split}
			\widehat{V}_{\tau n} - V_{\tau n}^*
			&= \frac{(e_1'\Omega_n\widehat{Z}_n)^2}{n} + 2 \zeta_n \sqrt{V_{\tau n}^*}\left( \frac{e_1'\Omega_n^{1/2}\widehat{Z}_n}{\sqrt{n}}\right) + V_{\tau n}^*\frac{e_2'\Omega_n^{1/2}\widehat{Z}_n}{\sqrt{n}} \\
			&\quad+ \frac{ (e_1'\Omega_n^{1/2}\widehat{Z}_n)^2(e_2'\Omega_n^{1/2}\widehat{Z}_n)}{n^{3/2}} + 2 \zeta_n \sqrt{V_{\tau n}^*}\frac{(e_1'\Omega_n^{1/2}\widehat{Z}_n)(e_2'\Omega_n^{1/2}\widehat{Z}_n)}{n}.
			\end{split}
			\label{eq:decomposition_empiricalprocess_proof}
		\end{align}
		By Assumption \ref{assump:clt_regcoef}, $\widehat{Z}_n \to Z_n + o_p(1)$. Moreover, since $\Omega_n$ has bounded eigenvalues, $\Omega_n^{1/2}\widehat{Z}_n = \Omega_n^{1/2}Z_n + o_p(1)$. This means that 
		
		\begin{equation}
		\begin{aligned}
		\widehat{V}_{\tau n} - V_{\tau n}^*&= \frac{(e_1'\Omega_n^{1/2}Z_n)^2}{n} + 2 \zeta_n \sqrt{\frac{V_{\tau n}^*}{n}}e_1'\Omega_n^{1/2}Z_n + \frac{V_{\tau n}^*}{\sqrt{n}}e_2'\Omega_n^{1/2}Z_n \ + \ \text{Residual}_n, %\\ 
	    \end{aligned}	
		\label{eq:decomposition_empiricalprocess_proof2}
		\end{equation}
		where the residual is $o_p\left(n^{-1}\right) +  o_p\left(\sqrt{V_{\tau n}^*/n}\right)+ o_p\left(V_{\tau n}^*/\sqrt{n }\right) + O_p(n^{-3/2}) + o_p(n^{-1}\sqrt{V_{\tau n}^*})$. The fourth and fifth terms of the residual are $o_p(n^{-1})$ and $o_p(n^{-1/2}\sqrt{V_{\tau n}^*})$, and hence asymptotically negligible.  The leading term in \eqref{eq:decomposition_empiricalprocess_proof2} is $O_p\left( \max\left\{1/n,\sqrt{V_{\tau n}^*/n}\right\} \right)$.

	\end{proof}

\begin{proof}[Proof of Lemma \ref{lem:variance_eff_influence}]

Let $\varphi_i = \varphi(Y_i,D_i,X_i,\eta)$ be a realization of the influence function in \eqref{eq:efficient_influence} and compute $\mathbb{V}[\varphi_i \mid D_i=d,X_i =x] = 4 (\tau(x)-\tau_{av})^2\left[ \frac{d\sigma_1^2(x)}{p(x)^2} + \frac{(1-d)\sigma_0^2(x)}{(1-p(x))^2} \right]$ and $\mathbb{E}[\varphi_i \mid D_i=d,X_i =x] =  (\tau(x)-\tau_{av})^2$.

By the law of iterated expectations $\mathbb{E}[\varphi_i \mid X_i=x]=(\tau(x)-\tau_{av})^2$. By applying the law of total variance recursively, $\mathbb{V}(\varphi_i) = \mathbb{V}(\mathbb{E}[\varphi_i \mid X_i]) + \mathbb{E}[\mathbb{E}[\mathbb{V}(\varphi_i \mid D_i,X_i)\mid X]] + \mathbb{E}[\mathbb{V}(\mathbb{E}[\varphi_i \mid D_i,X_i]\mid X_i)]$. This produces, $ \mathbb{V}(\varphi_i) = \mathbb{V}((\tau(X_i)-\tau_{av})^2) +4 \mathbb{E}\left[(\tau(X_i)-\tau_{av})^2\left( \frac{\sigma_1^2(X_i)}{p(X_i)} + \frac{\sigma_0^2(X_i)}{(1-p(X_i))}\right)\right]$.
\end{proof}

\begin{proof}[Proof of Corollary \ref{cor:bound_varefficient}]
	$\mathbb{V}((\tau(X_i)-\tau_{av})^2) \le \mathbb{E}[(\tau(X_i)-\tau_{av})^4] \le \kappa^2 V_{\tau}$. By the Cauchy-Schwarz inequality, the term $\mathbb{E}\left[(\tau(X_i)-\tau_{av})^2\left( \frac{\sigma_1^2(X_i)}{p(X_i)} + \frac{\sigma_0^2(X_i)}{(1-p(X_i))}\right)\right] $ is bounded by $\kappa V_{\tau}\sqrt{\mathbb{E}\left[ \left( \frac{\sigma_1^2(X_i)}{p(X_i)} + \frac{\sigma_0^2(X_i)}{(1-p(X_i))}\right)^2\right]}$.
\end{proof}

\begin{proof}[Proof of Lemma \ref{lem:equivalence_efficient_simple}]
	Let $\mathcal{Q}(\theta)$ be the Jacobian of the least squares problem, defined in \eqref{eq:foc_adjustedols}. Substituting $\widehat{W}(X_i,D_i)'e_4 = (D_i-p(X_i))\widehat{S}(X_i)$ and $\lambda(X_i) = [p(X_i)(1-p(X_i))]^{-1}$ into the definition, $\beta_2\mathcal{Q}(\theta)e_4 = \frac{1}{n}\sum_{i=1}^n \frac{[D_i-p(X_i)]}{p(X_i)(1-p(X_i))}\beta_2\widehat{S}(X_i)(Y_i - \widehat{W}(X_i,D_i))$. By the definition of the nuisance functions in $\tilde{\eta}_\theta(x)$ in \eqref{eq:pseudonuisancefunctions}, $\widehat{W}(X_i,D_i)'\theta = \tilde{\mu}_{0,\theta}(X_i) + D_i\tilde{\tau}_\theta(X_i)$, $\tilde{\tau}_\theta(x) = \beta_1 + \beta_2 \widehat{S}(X_i)$, and $\tilde{\tau}_\theta = \beta_1 + \left[\frac{1}{n}\sum_{i=1}^n \widehat{S}(X_i)\right]\beta_2$. Furthermore, since $\frac{1}{n}\sum_{i=1}^n\widehat{S}(X_i) = 0$, then $\beta_2\widehat{S}(X_i) = \tilde{\tau}_\theta(X_i)-\tilde{\tau}_{av,\theta}$ and
	$$\beta_2\mathcal{Q}(\theta)e_4 = \frac{1}{n}\sum_{i=1}^n \frac{[D_i-p(X_i)]}{p(X_i)(1-p(X_i))}\beta_2\widehat{S}(X_i)(Y_i -  \tilde{\mu}_{0,\theta}(X_i) - D_i\tilde{\tau}_\theta(X_i)).$$
	Notice that $D_i - p(X_i) = D_i(1-p(X_i)) + p(X_i)(1-D_i)$. This implies that $\frac{D_i-p(X_i)}{p(X_i)} = \frac{D_i}{p(X_i)} - \frac{(1-D_i)}{1-p(X_i)}$. The next step is to substitute the estimated parameters. Then $ \frac{1}{n}\sum_{i=1}^n \varphi(Y_i,D_i,X_i,\tilde{\eta}_{\widehat{\theta}_n})  =  \frac{1}{n}\sum_{i=1}^n (\tilde{\tau}_{\widehat{\theta}_n}(X_i)-\tilde{\tau}_{av,\widehat{\theta}_n})^2 + \widehat{\beta}_{2n}\mathcal{Q}(\widehat{\theta}_n)e_4$. The first term simplifies to $\widehat{\beta}_{2n}^2\left[\frac{1}{n}\sum_{i=1}^n \widehat{S}(X_i)^2\right] = \widehat{\beta}_{2n}^2\widehat{V}_{x n}$. Finally, $\mathcal{Q}(\widehat{\theta}_n) = 0$ implies that the second term is zero. 
	
\end{proof}

\begin{proof}[Proof of Lemma \ref{lem:projection_weak}]
	
	The parameter $\theta = (\alpha_1,\alpha_2,\beta_1,\beta_2) \in \Theta^*$ solves
	\begin{equation}
		\label{eq:ols_id_condition_proof}
	\mathbb{E}\left[\lambda(X)W(X,D)Y\right] - \mathbb{E}\left[\lambda(X)W(X,D)W(X,D)'\right]\theta = 0_{4 \times 1}.
	\end{equation}
	If there exists a $\theta \in \mathbb{R}^4$ such that $\mathbb{E}[Y\mid D=d,X=x] = \mu_d(x) = W(x,d)'\theta$, then by the law of iterated expectations $\mathbb{E}\left[\lambda(X)W(X,D)Y \right] = \mathbb{E}[\lambda(X)W(X,D)W(X,D)'\theta]$. Such a $\theta$ would automatically satisfy \eqref{eq:ols_id_condition_proof}. 	Now suppose that $S(x) = \tau(x) - \tau_{av}$, $M(x) = \mu_0(x) + p(x)\tau(x)$, and $\theta = (0,1,\tau_{av},1)$. For this choice, $
		W(x,d)'\theta = 0 + (\mu_0(x)+p(x)\tau(x)) + (d-p(x))\tau_{av} + (d-p(x))(\tau(x)-\tau_{av})$.	This simplifies to $W(x,d)'\theta = \mu_0(x) + d\tau(x) = \mu_d(x)$ and $\theta$ solves \eqref{eq:ols_id_condition_proof}.

\end{proof}

\begin{proof}[Proof of Theorem \ref{thm:clt_lvci}.(i)]
	Define a vector $\uline{Y}_{nk} \in \mathbb{R}^{n_k}$ with the outcomes in fold $\mathcal{I}_{nk}$, $\uline{W}_{nk}$ be an $n_k \times 4$ matrix of regressors, $\uline{U}_{nk} \in \mathbb{R}^{n_k}$  be a vector of errors, and $\Lambda_{nk}$ be a $n_k \times n_k$ diagonal matrix with entries $\{ \lambda(X_i) \}$. Let $\uline{\widehat{W}}_{nk}$ be an $n_k \times 4$ matrix of generated regressors defined in \eqref{eq:thetahat_dml}, and let $\Pi_{nk}$ be a $4 \times 4$ diagonal matrix with entries $(1,1,1,V_{xnk}^{-1/2})$. Let $\theta_{nk}^* := (\alpha_{1nk},\alpha_{2nk},\beta_{1nk},\sqrt{V}_{xnk}\beta_{2nk})$, and define an infeasible estimator $\widetilde{\theta}_{nk}^* :=  \Pi_{nk}^{-1}\widetilde{\theta}_{nk} := \left(\Pi_{nk}'\uline{W}_{nk}'\Lambda_{nk}\uline{W}_{nk}\Pi_{nk}/n_k \right)^{-1}\left(\Pi_{nk}'\uline{W}_{nk}'\Lambda_{nk}Y_{nk}/n_k\right)$.
	
	Define $Q_{ww,nk} := \mathbb{E}[\lambda(X_i)W_iW_i' \mid \mathcal{I}_{-nk}]$. By Lemma \ref{lem:designmatrix_dml},
	\begin{equation}
	\label{eq:designmatrix_tildew_normalized}
	\Pi_{nk}'Q_{ww,nk}\Pi_{nk} =  \begin{pmatrix} \mathbb{E}[\lambda(X_i) \mid \mathcal{I}_{-nk}] & \mathbb{E}[\lambda(X_i)M_{-k}(X_i) \mid \mathcal{I}_{-nk}] & 0 & 0 \\
	\mathbb{E}[\lambda(X_i)M_{-k}(X_i) \mid \mathcal{I}_{-nk}] & \mathbb{E}[\lambda(X_i)M_{-k}(X_i)^2 \mid \mathcal{I}_{-nk}] & 0 & 0 \\
	0 & 0 & 1 & 0\\ 
	0 & 0 & 0 & 1\\ \end{pmatrix}.
	\end{equation}
	By Assumptions \ref{assump:unconfound_sutva}.(ii), \ref{assump:momentbounds_dml}.(i) and (ii), $\Pi_{nk}'Q_{ww,nk}\Pi_{nk}$  has bounded eigenvalues. Algebraically, $\frac{\Pi_{nk}'\uline{W}_{nk}'\Lambda_{nk}\uline{W}_{nk}\Pi_{nk}}{n_k} = \frac{1}{n_k}\sum_{i \in \mathcal{I}_{nk}}\Pi_{nk}'\lambda(X_i)W_iW_i'\Pi_{nk}$. Since the terms are conditionally i.i.d, it converges to \eqref{eq:designmatrix_tildew_normalized}. By Assumptions \ref{assump:momentbounds_dml}.(iii) and (iv) $\lambda(X_i)\Pi_{nk}W_iU_i$ is bounded in the $L_2$ norm, and $\sqrt{n_k}(\widehat{\theta}_{nk}^*-\theta_{nk}) = (\Pi_{nk}'Q_{ww,nk}\Pi_{nk})^{-1}\frac{\Pi_{nk}'\uline{W}_{nk}'\Lambda_{nk}\uline{U}_{nk}}{\sqrt{n_k}} + o_p(1)$. Also, $V_{xnk}^{-1}\widehat{V}_{xnk} - 1 
	= \frac{1}{n_k}\sum_{i \in \mathcal{I}_{nk}}[V_{xnk}^{-1}S_{-k}(X_i)^2 -1 ] - \left[\frac{1}{n_k}\sum_{i \in \mathcal{I}_{nk}}V_{xnk}^{-1/2}S_{-k}(X_i) \right]^2$. The second term is $O_p(n^{-1})$ because the summand has mean zero and unit variance.
	$$ A_{i} := \begin{pmatrix} \lambda(X_i)\Pi_{nk}W_iU_i \\ V_{xnk}^{-1}S_{-k}(X_i) - 1 \end{pmatrix}, \quad J_{nk}^* := \begin{bmatrix} (\Pi_{nk}'Q_{ww,nk}\Pi_{nk})^{-1} & \uline{0}_4 \\ \uline{0}_4' & 1 \end{bmatrix}.$$
	By \eqref{eq:designmatrix_tildew_normalized}, $(\Pi_{nk}'Q_{ww,nk}\Pi_{nk})^{-1}$ has a block-diagonal structure, with a one in the bottom-right cell, and hence $\Upsilon [J_{nk}^*]^{-1} = \Upsilon$. This means that
	\begin{equation}
		\label{eq:pre_clt_sum_dml_reg}
		\begin{aligned}  \sqrt{n_k}\Upsilon\begin{pmatrix} \widetilde{\theta}_n^* - \theta_n^* \\ V_{xnk}^{-1}\widehat{V}_{xnk} - 1\end{pmatrix} &= \Upsilon\left[J_{nk}^*\right]^{-1}\left[\frac{1}{\sqrt{n}_k}\sum_{i \in \mathcal{I}_{nk}} A_i \right]+ o_p(1) \\
			&=\frac{1}{\sqrt{n_k}} \sum_{i \in \mathcal{I}_{nk}} \begin{pmatrix} V_{xnk}^{-1/2}\lambda(X_i)(D_i-p(X_i))S_{-k}(X_i)U_{i} \\ V_{xnk}^{-1/2}S_{-k}(X_i)^2 - 1  \end{pmatrix} + o_p(1).
		\end{aligned}
	\end{equation}

	Define  $\widehat{\tau}_{nk,av} := \tfrac{1}{n_k}\sum_{i \in \mathcal{I}_{nk}} \widehat{\tau}_{-nk}(X_i)$ and $\tau_{nk,av} :=  \mathbb{E}[\widehat{\tau}_{-nk}(X_i) \mid \mathcal{I}_{-nk}]$. Algebraically, $\uline{\widehat{W}}_{nk} = \uline{W}_{nk}(I_{4 \times 4} + \Delta_{nk})$, where $\Delta_{nk} = [\uline{0}_4,\uline{0}_4,\uline{0}_4, e_3 (\widehat{\tau}_{nk,av}-\tau_{nk,av})]$ and $e_3 = [0,0,1,0]'$. The bias in centering the fourth column of $\widehat{W}_{nk}$ is a scalar times the third regressor. Substituting $\uline{\widehat{W}}_{nk} = \uline{W}_{nk}\Pi_{nk}\Pi_{nk}^{-1}(I + \Delta_{nk})$,
	\begin{equation} \widehat{\theta}_{nk}^* 
	= \left(\Pi_{nk}^{-1}(I+\Delta_{nk})\Pi_{nk}\right)^{-1}\left( \Pi_{nk}'\uline{W}_{nk}'\Lambda_{nk}\uline{W}_{nk}\Pi_{nk} \right)^{-1}\left( \Pi_{nk}'\uline{W}_{nk}'\Lambda_{nk}\uline{Y}_{nk} \right),
	\label{eq:decomp_fesasibleestimator}
	\end{equation}
	which simplifies to  $\widehat{\theta}_{nk}^*  = \left(\Pi_{nk}^{-1}(I+\Delta_{nk})\Pi_{nk}\right)^{-1}\tilde{\theta}_{nk}^*$. Since $\Delta_{nk}\Delta_{nk} = 0_{4 \times 4}$, then $\left( \Pi_{nk}^{-1}(I+\Delta_{nk})\Pi_{nk}\right)^{-1} =  \left( I + V_{xnk}^{-1/2}\Delta_{nk}\right)^{-1} = I-V_{xnk}^{-1/2}\Delta_{nk}$, and	
		\begin{equation}
	\label{eq:algebraic_equivalence_hatbeta_2n}
	\begin{pmatrix} \widehat{\beta}_{2n}^* \\ V_{xnk}^{-1}\widehat{V}_{xnk} \end{pmatrix} := \Upsilon \begin{pmatrix} \left( \Pi_{nk}^{-1}\widetilde{\Delta}_{nk}\Pi_{nk}\right)^{-1}  & \uline{0}_{4 \times 1}  \\
	\uline{0}_{1 \times 4} & 1 \end{pmatrix}\begin{pmatrix} \widetilde{\theta}_{nk}^* \\ V_{xnk}^{-1}\widehat{V}_{xnk} \end{pmatrix} = \Upsilon \begin{pmatrix} \widetilde{\theta}_{nk}^* \\ V_{xnk}^{-1}\widehat{V}_{xnk} \end{pmatrix}.	\end{equation}  
	The second equality follows from the fact that the selection matrix $\Upsilon$ only extracts the fourth and fifth rows of the vector. We can plug-in \eqref{eq:algebraic_equivalence_hatbeta_2n} and the identity $\Upsilon[\theta_{nk}^{*'},1]' = [\beta_{2nk}^*,1]'$ into \eqref{eq:pre_clt_sum_dml_reg}. By Assumptions \ref{assump:boundsquantities} and \ref{assump:randomsampling}, the observations are conditionally i.i.d. given $\mathcal{I}_{-nk}$, and $\Omega_{nk}$ has bounded eigenvalues. Then by the Lindeberg-Feller CLT,
	\begin{equation}
	\label{eq:normalconvergence_dmlblp_proof}
	\sqrt{n_k}\Omega_{nk}^{-1/2}\begin{pmatrix} \widehat{\beta}_{2nk}^*-\beta_{2nk}^* \\
	V_{xnk}^{-1}\widehat{V}_{xnk} - 1 \end{pmatrix} \mid \mathcal{I}_{-nk} \to^d \mathcal{N}(0_{2 \times 1},I_{2\times 2}).
	\end{equation} 

	\end{proof}

	\begin{proof}[Proof of Theorem \ref{thm:clt_lvci}.(ii)]		
		
	Let $\tilde{\Delta}_{nk} = I+\Delta_{nk}$, where the non-zero term in $\Delta_{nk}$ is an average with mean zero and variance $V_{xnk}$. Under the assumptions of part (a), $\widehat{\Pi}_{nk}\Pi_{nk} = I+o_p(1)$ and $\Pi_{nk}\tilde{\Delta}_{nk}\Pi_{nk}^{-1} = I + V_{xnk}^{-1/2}\Delta_{nk} = I + o_p(1)$. Decomposing $\widehat{\Pi}_{nk}\uline{\widehat{W}}_{nk}'\Lambda_{nk}\uline{\widehat{W}}_{nk}\widehat{\Pi}_{nk}$ in a similar way to \eqref{eq:decomp_fesasibleestimator},
	$$(\widehat{\Pi}_{nk}\tilde{\Delta}_{nk}'\Pi_{nk}^{-1})(\Pi_{nk}'\uline{W}_{nk}'\Lambda_{nk}\uline{W}_{nk}\Pi_{nk})(\Pi_{nk}^{-1}\tilde{\Delta}_{nk}\widehat{\Pi}_{nk}) = \Pi_{nk}'Q_{ww,nk}\Pi_{nk} + o_p(1)$$.
   Hence $\widehat{J}_{nk} = J_{nk}^* + o_p(1)$. Substituting $\uline{\widehat{W}}_{nk}$, into the upper-left block of $\widehat{H}_{nk}$,  
   \begin{equation} (\widehat{\Pi}_{nk}\tilde{\Delta}_{nk}'\Pi_{nk}^{-1}) \left[\frac{1}{n_k}\sum_{i \in \mathcal{I}_{-nk}} \widehat{U}_i^2\lambda(X_i)^2 \Pi_{nk}'W_iW_i'\Pi_{nk} \right](\Pi_{nk}^{-1}\tilde{\Delta}_{nk}\widehat{\Pi}_{nk}).
   \label{eq:variance_se_H_decomp}
   \end{equation}
	As before the outer terms are $I + o_p(1)$. For the inner terms we apply \eqref{eq:decomp_fesasibleestimator}, $\widehat{U}_{i} = Y_i -\widehat{W}_i'\widehat{\theta}_{nk} = Y_i - W_i'\tilde{\Delta}_{nk}\widehat{\theta}_{nk} = Y_i - W_i'\Pi_{nk}\Pi_{nk}^{-1}\tilde{\Delta}_{nk}\Pi_{nk}\Pi_{nk}^{-1}\widehat{\theta}_{nk} = Y_i - W_i'\Pi_{nk}'\widehat{\theta}_{nk}^* = U_i - W_i'\Pi_{nk}(\widehat{\theta}_{nk}^*-\theta_{nk}^*)$.  The inner term of \eqref{eq:variance_se_H_decomp} is $\mathbb{E}[U_i^2\lambda(X_i)^2 \Pi_{nk}'W_iW_i'\Pi_{nk} \mid \mathcal{I}_{-nk}] + o_p(1)$ under Assumption  \ref{assump:unconfound_sutva}.(ii) (overlap) and \ref{assump:momentbounds_dml} (bounds on moments of $U_i$ and $\Pi_{nk}W_i$). In particular, \ref{assump:momentbounds_dml}.(iv) ensures that $\mathbb{E}[\Vert \Pi_{nk}W_i \Vert^4 \mid \mathcal{I}_{-nk}]$ is uniformly bounded.
	
	Define $\bar{S}_{nk} := \frac{1}{n_k}\sum_{i \in \mathcal{I}_{-nk}} S_{-k}(X_i)$ and $\widehat{S}_{-k}(X_i) = \widehat{\tau}_{-k}(X_i) - \frac{1}{n_k}\sum_{i\in \mathcal{I}_{-nk}}\widehat{\tau}_{-k}(X_i)$. Adding/subtracting the mean,  $\widehat{S}_{-k}(X_i) = S_{-k}(X_i) - \bar{S}_{nk}$. Algebraically, $\frac{1}{n_k}\sum_{i \in \mathcal{I}_{nk}}\widehat{T}_{i}^2 = \widehat{V}_{xnk}^{-2}\left( \frac{1}{n_k}\sum_{i \in \mathcal{I}_{nk}}[\widehat{S}_{-k}(X_i)^2 -\widehat{V}_{xnk}]^2\right) = \widehat{V}_{xnk}^{-2} \left( \frac{1}{n_k}\sum_{i \in \mathcal{I}_{nk}}\widehat{S}_{-k}(X_i)^4 - \widehat{V}_{xnk}^2 \right)$. This simplifies to $ \widehat{V}_{xnk}^{-2} \left( \frac{1}{n_k}\sum_{i \in \mathcal{I}_{nk}}[S_{-k}(X_i)-\bar{S}_{nk}]^4  \right) - 1 $. By a binomial expansion,
		\begin{align*}
		\frac{1}{n_k}\sum_{i \in \mathcal{I}_{nk}}\frac{S_{-k}(X_i)^4}{\widehat{V}_{xnk}^{2}}
		&\quad + \underbrace{\left(\frac{V_{xnk}}{\widehat{V}_{xnk}}\right)^2}_{\to^p 1}  \sum_{\ell = 1}^4 {4 \choose \ell } (-1)^{\ell}\underbrace{\frac{\bar{S}_{nk}^{\ell}}{V_{xnk}^{\ell/2}}}_{o_p(1)}\underbrace{\left[\frac{1}{n_k}\sum_{i \in \mathcal{I}_{nk}}V_{xnk}^{2-\ell/2}S_{-k}(X_i)^{4-\ell}\right]}_{O_p(1)} - 1.
	\end{align*}
	By Assumption \ref{assump:momentbounds_dml}.(iv), the bound on $V_{xnk}$, and the moment bounds for $V_{xnk}^{-1/2}S_{-k}(X_i)$, then for $\ell \in \{1,2,3,4\}$, $\left[\frac{1}{n_k}\sum_{i \in \mathcal{I}_{nk}}V_{xnk}^{2-\ell/2}S_{-k}(X_i)^{4-\ell}\right]$ is $O_p(1)$. By the weak law of large number, $V_{xnk}^{-1/2}\bar{S}_{nk} = o_p(1)$ and by \eqref{eq:normalconvergence_dmlblp_proof}, $\widehat{V}_{xnk}/V_{xnk} \to^p 1$. Hence $ \frac{1}{n_k}\sum_{i \in \mathcal{I}_{nk}}\widehat{T}_{i}^2 = \mathbb{E}[V_{xnk}^{-1}S_{-k}(X_i)^4 \mid \mathcal{I}_{-nk}]-1 + o_p(1) = \mathbb{V}(V_{xnk}^{-1/2}S_{-k}(X_i)^2) + o_p(1)$. This shows that the diagonals of $\widehat{H}_{nk}$ converge to their population analogs. The proof of convergence for the off-diagonals is similar. By the form of $Q_{ww,nk}$ in \eqref{eq:designmatrix_tildew_normalized}, $\Upsilon J_{nk}^{-1} = \Upsilon$. Hence $\Upsilon\widehat{J}_{nk}^{-1}\widehat{H}_{nk} \widehat{J}_{nk}\Upsilon' = \Upsilon J_{nk}^{^*-1}H_{nk} J_{nk}^*\Upsilon' + o_p(1) = \Upsilon H_{nk} \Upsilon' + o_p(1) = \Omega_{nk} + o_p(1)$.
		
	\end{proof}

\begin{proof}[Proof of Lemma \ref{lem:convergence_lvci_to_vci}]
	
	We start by proving that $\Vert V_{\tau}(\gamma) - V_{\tau}^*(\gamma, \mathcal{I}_{-nk}) \Vert \le V_{\tau}$. Since the second term of \eqref{eq:defn_pseudovcate} is always non-negative, $V_{\tau}^*(\gamma,\mathcal{I}_{-nk}) \le V_{\tau}(\gamma)$.  Also, since $\beta_1 =\tau_{\gamma,av}$ and $\beta_2 = 0$ are part of the feasible set, then the second term is at most $V_{\tau}(\gamma) = \mathbb{E}_\gamma[(\tau_\gamma(X)-\tau_{\gamma,av})^2]$. This shows that $V_{\tau}^*(\gamma,\mathcal{I}_{-nk}) \ge 0$.
	
	To prove the second bound, we examine the solution to the best linear projection. Since the folds are independent, conditioning on $\mathcal{I}_{-nk}$ is only important to be able to handle $S_{-k}(x)$ as a deterministic function. For ease of exposition, let $F$ denote the conditional distribution of $\gamma$ given $\mathcal{I}_{-nk}$, and let $\Vert g \Vert_{F,2} = \sqrt{\mathbb{E}_F[\Vert g(X) \Vert^2]}$ denote the $L_2$ norm.
	 Since $\mathbb{E}_F[S_{-k}(X)] = 0$, the optimal solution is $\beta_{1F}^* = \tau_{F,av}$. When $\mathbb{V}_F(S_{-k}(X)) = 0$, the optimal $\beta_{2F}^*$ is indeterminate, and $V_{\tau}^*(\gamma,\mathcal{I}_{-nk}) = 0$. Otherwise, 
	$$(\beta_{2F}^* - 1)=  \frac{Cov_F(S_{-k}(X),\tau_{F}(X))}{\mathbb{V}_F(S_{-k}(X))}- 1 =  \frac{Cov_F(S_{-k}(X),\tau_{F}(X)-S_{-k}(X))}{\mathbb{V}_F(S_{-k}(X))}.$$
	By the Cauchy-Schwarz inequality, $(\beta_{2F}^*-1)^2\mathbb{V}_F(S_{-k}(X)) \le \mathbb{E}_{F}[(\tau_\gamma(X)-\tau_{F,av}-S_{-k}(X))^2]$. Rearranging \eqref{eq:defn_pseudovcate} and substituting the optimum:
	\begin{align*} V_{\tau} - V_{\tau}^* =   \mathbb{E}_F[(\tau_\gamma(X)-\tau_{av}-S_{-k}(X) - (\beta_2^*-1)S_{-k}(X))^2].
	\end{align*}
	By the triangle inequality for the $L_2$ norm, and the above Cauchy-Schwartz bound,
	\begin{equation}
	\label{eq:secondbound_pseudovtau_diff}
	V_{\tau}(\gamma) - V_{\tau}^*(\gamma,\mathcal{I}_{-nk}) \le \left[ 2 \Vert \tau_{F}(X)-\tau_{F,av} -S_{-k}(X)\Vert_{F,2} \right]^2.
	\end{equation}
	Substituting $S_{-k}(x) = \widehat{\tau}_{-k}(x)-\mathbb{E}_{F}[\widehat{\tau}_{-k}(X)]$, \eqref{eq:secondbound_pseudovtau_diff} can be rewritten as $4 \Vert (\tau_\gamma(X) -\widehat{\tau}_{-k}(X)) - (\mathbb{E}_F[\tau_\gamma(X)]-\mathbb{E}_{F}[\widehat{\tau}_{-k}(X)]) \Vert^2 \le 16 \Vert \widehat{\tau}_{-k} - \tau_{\gamma} \Vert_{F,2}^2$. By the law of iterated expectations $\omega(\gamma) := \mathbb{E}_{\gamma}[\Vert \widehat{\tau}_{-k}(X) - \tau_{\gamma}(X) \Vert^2] = \mathbb{E}_{\gamma}[\Vert \widehat{\tau}_{-k} - \tau_{\gamma} \Vert_{F,2}^2]$. Combining the two bounds and applying Jensen's inequality,
	$$ \mathbb{E}_\gamma[|V_{\tau}(\gamma)-V_{\tau}^*(\gamma,\mathcal{I}_{-nk})|] \le \mathbb{E}_\gamma[\min\{16 \Vert \widehat{\tau}_{-k} - \tau \Vert_{F,2}^2,V_{\tau}(\gamma)\}] \le \min\{16\omega(\gamma),V_{\tau}(\gamma)\}.$$

\end{proof}

\begin{proof}[Proof of Theorem \ref{thm:superconsistency} ]
	Decompose $\Delta_{nk} = \Delta_{nk1} + \Delta_{nk2}$, 	where $\Delta_{nk1} := V_{\tau}(\gamma_n) - V_{\tau}^*(\gamma_n,\mathcal{I}_{-nk})$, and $ \Delta_{nk2} :=  \widehat{V}_{\tau nk}-V_{\tau}^*(\gamma_n,\mathcal{I}_{-nk})$. First, by Lemma \ref{lem:convergence_lvci_to_vci} and Assumption \ref{assump:convergencenuisance}, $\mathbb{E}_{\gamma_n}[\Vert \Delta_{nk1} \Vert] \le \min\{16\omega(\gamma_n)^2,V_{\tau}(\gamma_n)\} = o(n_k^{-1/2})$. By Markov's inequality, $\Delta_{nk1} = o_p(n_k^{-1/2})$. Second, conditional on a sequence $\{\mathcal{I}_{-nk}\}_{n=1}^{\infty}$, by Theorem \ref{thm:clt_lvci} and Lemma \ref{lem:decomposition_remainder_lvci_reg},  $\Vert \Delta_{nk2} \Vert = O_p\left( \max\left\{\frac{1}{n_k},\sqrt{V_\tau^*(\gamma_n,\mathcal{I}_{-nk})}/\sqrt{n_k} \right\} \right)$. Since $V_{\tau}^*(\gamma_n,\mathcal{I}_{-nk}) \le V_{\tau}(\gamma_n) = o(1)$, then almost surely, for all $\varepsilon > 0$, $\psi_{n,\varepsilon}(\mathcal{I}_{-nk}) := \mathbb{P}_{\gamma_n}(n_k^{1/2}\Vert \Delta_{nk2} \Vert > \varepsilon  \mid \mathcal{I}_{-nk}) \to 0$. By iterated expectations and the bounded convergence theorem, $\lim_{n_k \to \infty} \mathbb{P}_{\gamma_n}(n_k^{1/2}\Vert \Delta_{nk2} \Vert < \varepsilon ) = \lim_{n_k \to \infty} \mathbb{E}_{\gamma_n}[\psi_{n,\varepsilon}(\mathcal{I}_{-nk})]  =  \mathbb{E}_{\gamma_n}[\lim_{n_k \to \infty}\psi_{n,\varepsilon}(\mathcal{I}_{-nk})] = 0$.
	
	If, in addition, $n_k^{1/2 + \rho}V_{\tau}(\gamma_n)\to 0$ for $\rho \in [0,1/2)$, then $n_k^{1/2+\rho}\Delta_{nk1} = o_p(1)$ (the second bound dominates). Since $n_k^{\rho}V_{\tau}(\gamma_n) \to 0$ as well and $\rho < 1/2$, then conditional on $\mathcal{I}_{-nk}$, $n_k^{1/2+\rho}\Delta_{nk2} = o_p(1)$. We can apply the bounded convergence theorem once again to show that $\mathbb{P}_{\gamma_n}(n_k^{1/2+\rho}\Vert \Delta_{nk2} \Vert < \varepsilon ) = o(1)$.
	
\end{proof}

\begin{proof}[Proof of Theorem \ref{thm:rootn_consistency_and_efficiency}]
	Consider a sequence of distributions $\{\gamma_n\}_{n=1}^\infty$, with associated nuisance functions $(\tau_n(x),\mu_{0n}(x),p_n(x),\tau_{n,av})$.
	
	\textbf{Case 1 (Near Homogeneity):} $V_{\tau}(\gamma_n) \to 0$. By the triangle inequality,	
	\begin{align}
	\label{eq:triangleineq_vhattau_vtaun_nearhomogeneity_dml}
	\begin{split}
	\sqrt{n_k}\left\Vert \widehat{V}_{\tau nk}- \frac{1}{n_k}\sum_{i \in \mathcal{I}_{nk}}\varphi_i \right\Vert 
	&\quad\le \underbrace{\sqrt{n_k} \left\Vert \widehat{V}_{\tau nk} -V_{\tau}(\gamma_n) \right\Vert}_{\xi_{n1}} +
	\underbrace{\sqrt{n_k} \left\Vert \frac{1}{n_k}\sum_{i \in \mathcal{I}_{nk}}\varphi_i - V_{\tau}(\gamma_n) \right\Vert}_{\xi_{n2}}.
	\end{split}
	\end{align}\
	By Theorem \ref{thm:superconsistency}, $\xi_{n1} = o_p(1)$. Since the $\varphi_i$ are i.i.d with mean $V_{\tau}(\gamma_n)$, then $\mathbb{E}_{\gamma_n}[\xi_{n2}^2] = \mathbb{V}_{\gamma_n}(\varphi_i)$. Let $\sigma_{dn}^2(x) = \mathbb{V}_{\gamma_n}(Y_i \mid X_i=x,D_i=d)$ and $p_n(x) = \mathbb{P}_{\gamma_n}(D_i \mid X_i =x)$. By Lemma \ref{lem:variance_eff_influence} and Corollary \ref{cor:bound_varefficient}, 
	$ \mathbb{V}_{\gamma_n}(\varphi_i) \le  \kappa^2 V_{\tau}(\gamma_n)^2+ 4\kappa V_{\tau}(\gamma_n)\sqrt{\mathbb{E}_{\gamma_n}\left[ \left( \frac{\sigma_{1n}^2(X_i)}{p_n(X_i)}+\frac{\sigma_{0n}^2(X_i)}{1-p_n(X_i)} \right) \right]}$. Since $V_{\tau}(\gamma_n) \to 0$ as $n \to \infty$, then $\mathbb{E}[\xi_{n2}^2] \to 0$ and hence $\xi_{n2} = o_p(1)$. \\
	
	\textbf{Case 2 (Strong Heterogeneity):} $V_{\tau}(\gamma_n) \to V_{\tau} \in (0,\infty)$.  For fixed $(x,d)$, the cross-fitted regressors are $\widehat{W}(x,d) = [ 1, \widehat{\mu}_{0,-k}(x) + p_n(x)\widehat{\tau}_{-k}(x), (d-p_n(x)), (d-p(x))\widehat{S}_{-k}(x)]$. Let $e_\ell $ be a $4 \times 1$ vector with a one in the $\ell^{th}$ coordinate and zero otherwise. Analogous to \eqref{eq:pseudonuisancefunctions}, define the regression adjusted nuisance functions as
	\begin{align}
		\widehat{\eta}_{-k,\widehat{\theta}_{nk}}(x) &=  [e_1 (\widehat{W}(x,1) -\widehat{W}(x,0))' +e_2 \widehat{W}(x,0)']\widehat{\theta}_{nk} + e_3 p_n(x) + e_4 \widehat{\tau}_{nk,av}.
	\label{eq:pseudonuisancefunctions_dml}
	\end{align}
	By applying Lemma \ref{lem:equivalence_efficient_simple} to the subsample in $i \in \mathcal{I}_{nk}$,
	$$ \widehat{V}_{\tau n k} = \frac{1}{n_k}\sum_{i \in \mathcal{I}_k} \varphi(Y_i,D_i,X_i,\widehat{\eta}_{-k,\widehat{\theta}_{nk}}), $$	
	where $\eta_{-k,\widehat{\theta}_{nk},r}(x) := \eta(x) + r (\eta_{-k,\widehat{\theta}_{nk}}(x)-\eta(x))$. The true nuisance function $\eta$ depends on the distribution indexed by $n$, but we drop the subscript to simplify the notation. By a second-order term in the Taylor expansion around $r = 0$, for some $\tilde{r} \in (0,1)$,
	
	\begin{align}
	\label{eq:taylorexpansion_dml}
	\begin{split}
	&\sqrt{n_k}(\widehat{V}_{\tau n k}-V_\tau(\gamma_n)) = \frac{1}{\sqrt{n_k}}\sum_{i\in \mathcal{I}_{nk}} [\varphi(Y_i,D_i,X_i,\eta)-V_{\tau}(\gamma_n)]\\
	&\qquad + \frac{1}{\sqrt{n_k}}\sum_{i\in \mathcal{I}_{nk}} \frac{\partial \varphi(Y_i,D_i,X_i,\eta)}{\partial \eta'}(\eta_{-k,\widehat{\theta}_{nk}}(X_i)-\eta(X_i)) \\
	&\qquad + \frac{1}{n_k}\sum_{i=1}^n \sqrt{n_k} (\eta_{-k,\widehat{\theta}_{nk}}(X_i)-\eta(X_i))'\frac{\partial^2 \varphi(Y_i,D_i,X_i,\eta_{-k,\widehat{\theta}_{nk},\tilde{r}})}{\partial \eta'}(\eta_{-k,\widehat{\theta}_{nk}}(X_i)-\eta(X_i)).
	\end{split}
	\end{align}
	Our ultimate goal is to show that the second and third terms of the expansion are $o_p(1)$. To keep the notation concise, let $\varphi_i(\eta) := \varphi(Y_i,D_i,X_i,\eta)$. One of the main challenges is that the nuisance functions are estimated in multiple steps, combining information from the  $\mathcal{I}_{nk}$ and $\mathcal{I}_{-nk}$ subsamples. The key is to decompose these different sources of uncertainty. Define $(\widehat{\lambda}_{nk}-\lambda_{nk}) := [(\widehat{\theta}_{nk}-\theta_{n}),\widehat{\tau}_{nk,av}(\widehat{\theta}_{nk}-\theta_{n}),(\widehat{\tau}_{nk,av}-\tau_{n,av})]'$, where $\theta_n = (0,1,\tau_{n,av},1)$.  By Lemma \ref{lem:decomposition_nuisance}, there exist matrices $\Psi_{-nk}(x)$ and $\Delta_{-nk}(x)$ that are $(x,\mathcal{I}_{-nk})$-measurable, such that for all $x \in \mathcal{X}$,
	\begin{equation}
	\label{eq:decomp_nuisance_dml_eff_proof}
	\widehat{\eta}_{-k,\widehat{\theta}_{nk}}(x) - \eta(x) = \Psi_{-nk}(x)(\widehat{\lambda}_{nk}-\lambda_{nk}) + \Delta_{-nk}(x).
	\end{equation}
	Let $e_\ell$ be a $4 \times 1$ elementary basis vector. Lemma \ref{lem:decomposition_nuisance} shows that  $e_3'[\widehat{\eta}_{-k,\widehat{\theta}_{nk}}(x) - \eta(x)] = 0$ and $ e_3' \Delta_{-nk}(x) = 0$ because the experimental propensity scores are known. Lemma \ref{lem:decomposition_nuisance} also shows that $ e_4' \Delta_{-nk}(x) = 0$ by construction. Substituting \eqref{eq:decomp_nuisance_dml_eff_proof},
	\begin{align*}
	&\frac{1}{\sqrt{n_k}}\sum_{i \in \mathcal{I}_{nk}}\frac{\partial \varphi_i(\eta)}{\partial \eta'}(\eta_{-k,\widehat{\theta}_{nk}}(X_i)-\eta(X_i)) 
	=\frac{1}{\sqrt{n_k}}\sum_{i \in \mathcal{I}_{nk}} \sum_{\ell =1}^4 \frac{\partial \varphi_i(\eta)}{\partial \eta_{\ell}}e_\ell'(\eta_{-k,\widehat{\theta}_{nk}}(X_i)-\eta(X_i)) \\
	&= \underbrace{\left[\frac{1}{\sqrt{n_k}}\sum_{i \in \mathcal{I}_{nk}}\sum_{\ell \ne 3}\frac{\partial \varphi_i(\eta)}{\partial \eta_{\ell}}e_\ell' \Psi_{-nk}(X_i)\right]}_{\Lambda_{1nk}}(\widehat{\lambda}_{nk}-\lambda_{nk}) + \underbrace{\left[\frac{1}{\sqrt{n_k}}\sum_{i \in \mathcal{I}_{nk}}\sum_{\ell =1}^2\frac{\partial \varphi_i(\eta)}{\partial \eta_{\ell}}e_\ell'\Delta_{-nk}(X_i) \right]}_{\Lambda_{2nk}}.
	\end{align*}
	Lemma \ref{lem:convergence_regression_parameters} implies that $\widehat{\lambda}_{nk}-\lambda_{nk} = o_p(1)$.\\
	
	(i) Prove that  $\Lambda_{1nk} = O_p(1)$: By Lemma \ref{lem:derivative_influence_function}.(a) the conditional mean of $\frac{\partial \varphi_i(\eta)}{\partial (\eta_{1},\eta_2,\eta_4)'}$ given $(x,\mathcal{I}_{-nk})$ is  $[0,0,-2(\tau(x)-\tau_{n,av})]$ and by Lemma  \ref{lem:decomposition_nuisance}, $e_4'\Psi_{-nk} = c'$. Also, $\mathbb{E}_{\gamma_n}[ \tau_n(X_i) \mid \mathcal{I}_{-nk}] = \mathbb{E}_{\gamma_n}[ \tau_n(X_i)] = \tau_{n,av}$ by fold independence.
	By the law of iterated expectations, $\mathbb{E}_{\gamma_n}\left[ \frac{\partial \varphi_i(\eta)}{\partial \eta_\ell }e_\ell' \Psi_{-nk}(X_i) \mid \mathcal{I}_{-nk}\right] = 0$ for $\ell \in \{1,2,4\}$. 	By Assumption \ref{assump:regularityconditions}, $\mathbb{E}_{\gamma_n}[\Vert \eta(X_i)\Vert^4]$ and $\mathbb{E}_{\gamma_n}\left[ \left\Vert U_i \right\Vert^4 \right]$ are uniformly bounded by a constant $C < \infty$. By Assumption \ref{assump:unconfound_sutva}.(ii), $p_n(x)$ is contained in $[\delta,1-\delta]$.  Applying Lemma  \ref{lem:derivative_influence_function}.(b), $\mathbb{E}_{\gamma_n}\left[ \left\Vert \partial \varphi_i(\eta)/\partial \eta_\ell \right\Vert^4 \right]^{1/4} \le (16/\delta)C < \infty$. By Lemma \ref{lem:decomposition_nuisance}.(d),  and the triangle inequality, $\mathbb{E}_{\gamma_n}\left[ \left\Vert e_\ell' \Psi_{-nk}(X_i)\right\Vert^4\right]^{1/4} \le C [1+\mathbb{E}_{\gamma_n}[\Vert \widehat{\eta}_{-k}(X_i)\Vert^4]^{1/4}]$. By the bound in Assumption \ref{assump:regularityconditions}.(iii) and the Cauchy-Schwartz inequality, $\mathbb{E}_{\gamma_n}\left[\Vert \frac{\partial \varphi_i(\eta)}{\partial \eta_{\ell}}e_\ell' \Psi_{-nk}(X_i)\Vert \right]<\infty$.  We can combine these moment bounds to apply Lemma \ref{lem:convergence_standardizedsums}.(a), hence $\Lambda_{1nk}= O_p(1)$. \\
	
	(ii) Prove that $\Lambda_{2nk} = o_p(1)$: By Lemma \ref{lem:derivative_influence_function}.(a), $\mathbb{E}_{\gamma_n}\left[ \frac{\partial \varphi_i(\eta)}{\partial \eta_{\ell}}e_\ell' \Delta_{-nk}(X_i) \mid \mathcal{I}_{-nk}\right] = 0 $ for $\ell \in \{1,2\}$. Using similar arguments $\mathbb{E}_{\gamma_n}\left[ \left\Vert \partial \varphi_i(\eta)/\partial \eta_{\ell}\right\Vert^4\right] < C$. By Lemma \ref{lem:decomposition_nuisance}.(b),  $\Vert \Delta_{nk}(x) \Vert \le C \times \Vert \widehat{\eta}_{-k}(x) - \eta(x) \Vert$. By Assumption \ref{assump:regularityconditions}.(ii), $\mathbb{E}_{\gamma_n}\left[ \Vert \widehat{\eta}_{-k}(x) - \eta(x) \Vert^4\right] = o(1)$, which means that $\mathbb{E}_{\gamma_n}\left[\left\Vert \frac{\partial \varphi_i(\eta)}{\partial \eta_{\ell}}e_\ell' \Psi_{-nk}(X_i)\right\Vert^2\right]$ is $o(1)$. By Lemma  \ref{lem:convergence_standardizedsums}.(b), $\Lambda_{2nk} = o_p(1)$.\\

	Let $\Xi = [e_1,e_2,e_4]$ be a $4 \times 3$ matrix such that $\Vert \Xi \Vert \le 1$. Let $\Lambda_{3nk}$ be the second-order on the right-hand side of \eqref{eq:taylorexpansion_dml}. Since the propensity score is known, 	
	\begin{align*}
	&	\Lambda_{3nk} :=  \\
	& \frac{1}{n_k}\sum_{i \in \mathcal{I}_{nk}} \sqrt{n_k} (\eta_{-k,\widehat{\theta}_{nk}}(X_i)-\eta(X_i))'\Xi\frac{\partial^2 \varphi_i(\eta_{-k,\widehat{\theta}_{nk},\tilde{r}})}{\partial (\eta_1,\eta_2,\eta_4)\partial (\eta_1,\eta_2,\eta_4)'}\Xi'(\eta_{-k,\widehat{\theta}_{nk}}(X_i)-\eta(X_i)).
	\end{align*}
	(iii) Prove that $\Lambda_{3nk} = o_p(1)$: By Lemma \ref{lem:derivative_influence_function}.(d), $\left\Vert \frac{\partial^2 \varphi_i(\eta_{-k,\widehat{\theta}_{nk},\tilde{r}})}{\partial (\eta_1,\eta_2,\eta_4)\partial (\eta_1,\eta_2,\eta_4)'}\right\Vert \le 18 \times \sqrt{3}/\delta$. For scalars, $a,b \in \mathbb{R}$, $(a+b)^2 \le 4(a^2 + b^2)$. By \eqref{eq:decomp_nuisance_dml_eff_proof} and the triangle inequality,  $ \Vert \Lambda_{3nk} \Vert \le 4\times \frac{18\sqrt{3}}{\delta}\left( \sqrt{n_k}\Vert \widehat{\lambda}_{nk}-\lambda_{nk}\Vert^2 \left[ \frac{1}{n_k}\sum_{i \in \mathcal{I}_{nk}} \Vert \Psi_{-nk}(X_i)\Vert^2 \right] + \left[ \frac{1}{n_k}\sum_{i \in \mathcal{I}_{nk}} \sqrt{n_k}\Vert \Delta_{-nk}(X_i)\Vert^2 \right] \right)$.  By Lemma \ref{lem:convergence_regression_parameters}, $\sqrt{n_k}\Vert \widehat{\lambda}_{nk}-\lambda_{nk}\Vert^2 = o_p(1)$. By Lemma \ref{lem:decomposition_nuisance}.(d), $\Vert \Psi_{-nk}(x) \Vert \le C[1+ 2\Vert \widehat{\eta}_{-k}(x) \Vert]$. By the moment bound in Assumption \ref{assump:regularityconditions}.(i) and Markov's inequality, then $\frac{1}{n_k}\sum_{i \in \mathcal{I}_{nk}} \Vert \Psi_{-nk}(X_i)\Vert^2 = O_p(1)$. By Lemma \ref{lem:decomposition_nuisance}.(e), $\Vert \Delta_{-nk}(x) \Vert \le C \times \Vert \widehat{\eta}_{-k}(x) - \eta(x) \Vert$. By Assumption \ref{assump:regularityconditions}.(iii), $\sqrt{n_k}\mathbb{E}_{\gamma_n}[\Vert\widehat{\eta}_{-k}(X_i)-\eta(X_i)\Vert^2] = o(1)$, then by Lemma \ref{lem:convergence_standardizedsums}.(c), $\frac{1}{n_k}\sum_{i \in \mathcal{I}_{nk}} \sqrt{n_k}\Vert \Delta_{-nk}(X_i)\Vert^2 = o_p(1)$. Combining these results, $\Lambda_{3nk} = o_p(1)$.
	
\end{proof}

\begin{proof}[Proof of Theorem \ref{thm:exact_coverage_foldlvci}]	
	Let $\rho(\gamma,\mathcal{I}_{-nk}) = \mathbb{P}_{\gamma}(V_{\tau}^*(\gamma,\mathcal{I}_{-nk}) \in \widehat{CI}_{\alpha nk} \mid \mathcal{I}_{-nk})$ denote the conditional probability that the pseudo-VCATE is contained in the confidence interval. Let $\mathcal{F}(\gamma)$ be the support of $\mathcal{I}_{-nk}$, and let $\mathcal{F}(\gamma,t) \subseteq \mathcal{F}(\gamma)$. Almost surely,
	\begin{align} \inf_{\mathcal{I}_{-nk} \in  \mathcal{F}(\gamma)} \rho(\gamma,\mathcal{I}_{-nk})
	&\le \mathbb{E}_{\gamma}[\rho(\gamma,\mathcal{I}_{-nk})] \le \sup_{\mathcal{I}_{-nk}\in \mathcal{F}(\gamma,t)}\rho(\gamma,\mathcal{I}_{-nk}) + \mathbb{P}_{\gamma}(\mathcal{I}_{-nk} \notin \mathcal{F}(\gamma,t)).
	\label{eq:inequality_uniformcoverage}
	\end{align}
	The left inequality considers the worst-case coverage. The right inequality applies the law of iterated expectations by the event $\mathcal{I}_{-nk} \in \mathcal{F}^*(\gamma,t)$, then bounds $\mathbb{P}_{\gamma}(\mathcal{I}_{-nk} \in \mathcal{F}(\gamma,t)) \le 1$ and $\rho(\gamma,\mathcal{I}_{-nk}) \le 1$ to simplify the expressions.  Applying limits,
	\begin{equation} \underset{n \to \infty}{\lim\inf}\inf_{\gamma \in \Gamma} \inf_{\mathcal{I}_{-nk} \in  \mathcal{F}(\gamma)} \rho(\gamma,\mathcal{I}_{-nk}) \le \underset{n \to \infty}{\lim\inf}\inf_{\gamma \in \Gamma} \mathbb{P}_{\gamma}\left( V_{\tau}^*(\gamma,\mathcal{I}_{-nk}) \in \widehat{CI}_{\alpha n k} \right),
	\label{ew:bound_modifieddistribution}
	\end{equation}
	and an analogous result for the upper bound. The data in $\mathcal{I}_{-nk}$ only affects fold $k$ through the estimated nuisance functions $\widehat{\eta}_{-k}(x)$. To prove uniform coverage we will derive the bounds for a class of distributions where the $(Y_i,D_i,X_i)$ in fold $k$ is distributed as $\gamma$ and the plug-in nuisance functions are deterministic. We will define $\mathcal{F}(\gamma,t)$ as the set where $|V_{\tau}^*(\gamma,\mathcal{I}_{-nk})\Omega_{nk,12}| \le t$ for some fixed $t > 0$.	
	
	Let $\{\gamma_n\}_{n=1}^{\infty}$ and $\{\mathcal{I}_{-nk}\}_{n=1}^{\infty}$ denote a sequence of distributions and data realizations, respectively.  Theorem \ref{thm:clt_lvci} verifies that under  Assumptions \ref{assump:unconfound_sutva},  \ref{assump:momentbounds_dml}, \ref{assump:boundsquantities}, and \ref{assump:randomsampling}, the conditional CDFs $\{F_{\gamma_n\mid \mathcal{I}_{-nk}} \}_{n=1}^{\infty}$ satisfy the high-level Assumption \ref{assump:clt_regcoef}, almost surely. Furthermore, $\widehat{CI}_{\alpha nk}$ satisfies the form in \eqref{eq:feasible_confidenceset}, substituting $(\widehat{V}_{\tau nk},\widehat{\Omega}_{nk})$. Therefore, Lemma \ref{lem:coverage_generic}.(a) shows that the uniform lower bound on the asymptotic coverage probability is $(1-\alpha)$. If Assumption \ref{assump:exactcoverage} also holds, then for all $t > 0$, $\underset{n \to \infty}{\lim\sup}\sup_{\gamma \in \Gamma} \mathbb{P}_{\gamma}(\mathcal{I}_{-nk} \notin \mathcal{F}^*(\gamma,t)) = 0$. Hence by Lemma \ref{lem:coverage_generic}.(ii), the upper bound is $(1-\alpha)$ plus a term that can be made arbitrarily small by choosing $t$ close to zero.

\end{proof}

\begin{proof}[Proof of Lemma \ref{lem:verifyexactcoverage}]
	Consider an arbitrary sequence of distributions $\{\gamma_n\}_{n=1}^{\infty} \in \Gamma^{\infty}$ and a convergent subsequence where $V_{\tau}(\gamma_{n_\ell}) \to V_{\tau}$. Since $V_{\tau}^*(\gamma_{n_\ell},\mathcal{I}_{-n_\ell k})\le V_{\tau n_\ell}$ and $\Omega_{n_\ell k}$ has bounded eigenvalues, if $V_{\tau} = 0$ then $\lim_{\ell \to \infty} \mathbb{P}_{\gamma_{n_\ell}}(\sqrt{V_{\tau}^*(\gamma_{n_\ell},\mathcal{I}_{-n_\ell k})}|\Omega_{n_\ell k,12}| > t) = 0$. Now suppose that $V_{\tau} > 0$. By Assumption \ref{assump:convergencenuisance}, the nuisance functions converge to their true value. Then $\widehat{\tau}_{-k}(x)$ converges point-wise to $\tau(x)$, and by the moment bound in \ref{assump:momentbounds_dml}.(v) and the dominated convergence theorem, $V_{xn_\ell k} = \mathbb{V}_{\gamma_{n_\ell}}(\widehat{\tau}_{-k}(X) \mid \mathcal{I}_{-n_\ell k}) = V_{\tau}(\gamma_{n_\ell})+o_p(1) \ge V_{\tau} + o_p(1)$. By \eqref{eq:omegank},	
	\begin{equation}
	\label{eq:omegank_proof1}
	\Omega_{n_\ell k} = \mathbb{V}_{\gamma_{n_\ell}}\left(\begin{bmatrix} \lambda_{n_\ell}(X_i)(D_i-p_{n_\ell}(x_i))V_{xn_\ell k}^{-1/2}S_{-k}(X_i) U_{i} \\ 
	V_{xn_\ell k}^{-1}S_{-k}(X_i)^2
	\end{bmatrix} \mid \mathcal{I}_{-n_\ell k} \right),
	\end{equation}
	where $U_{in_\ell } = Y_{i} -  W_{i}'\theta_{n_\ell k}$. By Assumption \ref{assump:regularityconditions}, for fixed $\{D_i=d,X_i = x\}$, $W_{i}'$ point-wise converges to $W_{i}^{*'} := [1, \mu_{0n_\ell}(x) + p_n(x)\tau_{n_\ell}(x), (d-p_{n_\ell}(x)) , (d-p_{n_\ell}(x))\tau_{n_\ell}(x)]$. By Lemma \ref{lem:convergence_regression_parameters}, $\theta_{n_\ell k} \to \theta_{n_\ell} := [0,1,\tau_{n_\ell,av},1]$. By Lemma \ref{lem:projection_weak}, $W_i^{*'}\theta_{n_\ell} = \mu_{d,n_\ell}(x)$, and hence $U_{in_\ell}^* := Y_i - W_i^{*'}\theta_{n_\ell}$. Since $\Omega_{n_\ell k}$ is almost surely bounded by Assumption \ref{assump:momentbounds_dml} over random partitions $\mathcal{I}_{-n_\ell k}$, then applying the dominated convergence theorem,
	\begin{equation}
	\label{eq:omegank_proof2}
	\Omega_{n_\ell k} = \mathbb{V}_{\gamma_{n_\ell}}\left(\begin{bmatrix} \lambda_{n_\ell}(X_i)(D_i-p_{n_\ell }(x_i))V_{\tau}^{-1/2}(\tau_{n_\ell }(X_i)-\tau_{n_\ell ,av}) U_{in_\ell}^* \\ 
	V_{\tau}^{-1}(\tau_{n_\ell}(X_i)-\tau_{n_\ell,av})^2
	\end{bmatrix} \mid \mathcal{I}_{-n_\ell k} \right) + o_p(1),
	\end{equation}
	Since $\mathbb{E}_{\gamma_n}[U_{in_\ell}^* \mid D_i = d,X_i = x,\mathcal{I}_{-n_\ell k}] = 0$ and the second component of \eqref{eq:omegank_proof2} only depends on $X_i$. By iterated expectations, $\Omega_{n_\ell k,12} = o_p(1)$ and the limiting probability is $\lim_{\ell \to \infty} \mathbb{P}_{\gamma_{n_\ell}}(\sqrt{V_{\tau}^*(\gamma_{n_\ell},\mathcal{I}_{-n_\ell k})}|\Omega_{n_\ell k,12}| > t) = 0$. Hence we verified Assumption $B^*$ in \cite{andrews2020generic}. Uniform consistency follows from their Corollary 2.1.
	
\end{proof}

\begin{proof}[Proof of Theorem \ref{thm:exact_coverage_foldlvci_pointwise}]
	We break-down the proof into cases.
	
	Case (i): When $V_{\tau}(\gamma) = 0$, then $V_{\tau}(\gamma,\mathcal{I}_{-nk}) = 0$ almost surely. Therefore, $n_k( \widehat{V}_{\tau n k} -V_{\tau}^*(\gamma_n,\mathcal{I}_{-nk})) = n_k( \widehat{V}_{\tau n k}^*-V_{\tau}(\gamma))$. Define $\rho(\gamma,\mathcal{I}_{-nk})$ as in Theorem \ref{thm:exact_coverage_foldlvci}. Then $\underset{n \to \infty}{\lim\inf}\inf_{\mathcal{I}_{-nk} \in  \mathcal{F}(\gamma)} \rho(\gamma,\mathcal{I}_{-nk}) \le \underset{n \to \infty}{\lim\inf} \mathbb{P}_{\gamma}\left( V_{\tau}^*(\gamma,\mathcal{I}_{-nk}) \in \widehat{CI}_{\alpha n k} \right)$. We can prove an analogous upper bound. This implies that we only need to derive coverage bounds under sequences of conditional distributions where $V_{\tau}^*(\gamma_n,\mathcal{I}_{-nk}) = 0$. Exact coverage follows from the proof of the near-homogeneity case in Lemma \ref{lem:coverage_generic}.
	
	Case (ii) When $V_{\tau}(\gamma) > 0$. By Assumption \ref{assump:convergencenuisance}, $\sqrt{n_k}\omega(\gamma)^2 \to 0$ as $n_k \to \infty$. Then by Lemma \ref{lem:convergence_lvci_to_vci}, $\sqrt{n_k}|  V_{\tau}(\gamma)-V_{\tau}^*(\gamma,\mathcal{I}_{-nk}) | = o_p(1)$, and by the continuous mapping theorem, $\sqrt{V_{\tau}^*(\gamma,\mathcal{I}_{-nk})} \to^p \sqrt{V_{\tau}(\gamma)} > 0$. By \eqref{eq:omegank_proof2} in the proof of Lemma \ref{lem:verifyexactcoverage} and for fixed $\gamma_n = \gamma$, $\Omega_{nk} = \Omega + o_p(1)$, where $\Omega$ is a population covariance matrix that does not depend on $\mathcal{I}_{-nk}$. By applying similar limits to the mild heterogeneity case in Lemma \ref{lem:coverage_generic}  we can show that $(\widehat{V}_{\tau nk}-V_{\tau}(\gamma))$ is $\sqrt{n_k}$ asymptotically equivalent to an empirical process  indexed by the oracle $V_{\tau}(\gamma)$. We obtain exact coverage due to Lemma \ref{lem:verifyexactcoverage}.
\end{proof}

\begin{proof}[Proof of Corollary \ref{cor:uniformity low}]
	By the first part of Theorem \ref{thm:exact_coverage_foldlvci},
	$$ \underset{n \to \infty}{\lim\sup}\sup_{\gamma \in \Gamma} \mathbb{P}_{\gamma}\left( V_{\tau}^*(\gamma,\mathcal{I}_{-nk}) \notin \widehat{CI}_{\alpha n k} \right) = 1-  \underset{n \to \infty}{\lim\inf}\inf_{\gamma \in \Gamma}\mathbb{P}_{\gamma}\left( V_{\tau}^*(\gamma,\mathcal{I}_{-nk}) \in \widehat{CI}_{\alpha n k} \right) \le \alpha.$$
	The result follows taking limits on either side of the inequality in \eqref{eq:boundprob_lowerci}.	
\end{proof}

\begin{proof}[Proof of Theorem \ref{thm:uniform_coverage_variational_lvci} ]
	By the definition in \eqref{eq:multisplit_cis},
	$$ V_{\tau}(\gamma) \ge \inf \widehat{CI}_{\alpha n}^{\text{multifold}} \iff \frac{1}{K}\sum_{k=1}^K \mathbbm{1}\left\{ V_{\tau}(\gamma) \ge \inf \widehat{CI}_{\alpha nk} \right\} \ge \frac{1}{2}.$$
	By negating the statement, computing expectations, and applying Markov's inquality,
	\begin{align*} \mathbb{P}_{\gamma}(V_{\tau}(\gamma) < \inf \widehat{CI}_{\alpha n}^{\text{multifold}}) &=
		\mathbb{E}_{\gamma}\left[ \mathbbm{1}\left\{\left(\frac{1}{K}\sum_{k} \mathbbm{1}\{ V_{\tau}(\gamma) < \inf \widehat{CI}_{\frac{\alpha}{2} nk}\} \right)  > 1/2 \right\} \right] \\
		&\le 2 \frac{1}{K}\sum_{k} \mathbb{E}_{\gamma}[\mathbbm{1}\{ V_{\tau}(\gamma) < \inf \widehat{CI}_{\frac{\alpha}{2} nk}\}] \\
		&\le 2 \mathbb{P}_{\gamma}\left(V_{\tau}(\gamma) < \inf \widehat{CI}_{\frac{\alpha}{2} nk}\right) \le 2 \mathbb{P}_{\gamma}\left(V_{\tau}(\gamma,\mathcal{I}_{-nk}^*) \notin \widehat{CI}_{\frac{\alpha}{2} nk} \right)
	\end{align*}
	The last line follows from the fact that the folds are split at random and the inequality in \eqref{eq:boundprob_lowerci} holds almost surely. By Theorem \ref{thm:exact_coverage_foldlvci} the asymptotic size is uniformly bounded by $2(\alpha/2) = \alpha$.	By construction, $
	\mathbb{P}_{\gamma}(V_{\tau}(\gamma) \notin \widehat{CI}_{\alpha n}^{\text{multifold}})  = \mathbb{P}_{\gamma}(V_{\tau}(\gamma) < \inf \widehat{CI}_{\alpha n}^{\text{multifold}}) + \mathbb{P}_{\gamma}(V_{\tau}(\gamma) > \sup \widehat{CI}_{\alpha n}^{\text{multifold}}) $. Applying similar arguments as before, $ \mathbb{P}_{\gamma}(V_{\tau}(\gamma) \notin \inf \widehat{CI}_{\alpha n}^{\text{multifold}}) \le  2 \mathbb{P}_{\gamma}\left(V_{\tau}(\gamma) \notin \widehat{CI}_{\frac{\alpha}{2} nk}\right)$. If Assumptions \ref{assump:convergencenuisance}, and \ref{assump:regularityconditions} also hold, then Theorem \ref{thm:exact_coverage_foldlvci_pointwise} implies that the right-hand side is point-wise bounded by $\alpha$.
	
\end{proof}

\begin{proof}[Proof of Lemma \ref{lem:powerlocal_alternatives}]
	Under the null hypothesis, $G(n_k,0,\widehat{\Omega}_{nk},Z, \zeta) = (e_1'\widehat{\Omega}_{nk}^{1/2}Z)^2/n_k$, where $\widehat{\Omega}_{nk,11}^{1/2}Z$, where $Z \sim \mathcal{N}(0,1)$. The adjusted critical values are $q_{\alpha/2}(n_k,0,\widehat{\Omega}_{nk},\zeta) = 0$ and  $q_{1-\alpha/2}(n_k,0,\widehat{\Omega}_{nk},\zeta) = \widehat{\Omega}_{nk,11} z_{1-\alpha}^2 / n_k$. Then $0 \in \widehat{CI}_{\alpha nk}$ if and only if $ 0 \le \widehat{V}_{\tau n k} - 0 \le \widehat{\Omega}_{nk,11} z_{1-\alpha}^2 / n_k$. Following similar steps to the ``near homogeneity'' regime in Theorem \ref{thm:exact_coverage_foldlvci}, $n_k(\widehat{V}_{\tau n k} -V_{\tau}^*(\gamma_n,\mathcal{I}_{-nk})) = (\Omega_{\infty,11}^{1/2}Z_{nk} + \sqrt{v})^2 - v + o_p(1)$, where $Z_{nk} \sim \mathcal{N}(0,1)$. Consequently, $n_k\widehat{V}_{\tau n k} = (\Omega_{\infty,11}^{1/2}Z_{nk} + \sqrt{v})^2 + o_p(1)$. Then
	\begin{align*} \mathbb{P}_{\gamma_n}(0 \in \widehat{CI}_{\alpha nk}\mid \mathcal{I}_{-nk}) &= \mathbb{P}_{\gamma_n}\left(0 \le n_k\widehat{V}_{\tau nk}  \le  \widehat{\Omega}_{nk,11} z_{1-\alpha}^2  \right) \\
	&= \mathbb{P}_{\gamma_n}\left( 0 \le (\Omega_{\infty,11}^{1/2}Z_{nk}+ \sqrt{v})^2  \le   \Omega_{\infty,11} z_{1-\alpha}^2 \right) + o(1) \\
	&= \mathbb{P}_{\gamma_n}\left( -\frac{\sqrt{v}}{\sqrt{\Omega_{\infty,11}}} \le Z_{nk} +o_p(1)  \le  z_{1-\alpha}-\frac{\sqrt{v}}{\sqrt{\Omega_{\infty,11}}}  \right) + o(1).
	\end{align*}
	and hence $ \lim_{n \to \infty} \mathbb{P}_{\gamma_n}(0 \in \widehat{CI}_{\alpha nk}\mid \mathcal{I}_{-nk}) = \Phi\left(z_{1-\alpha}-\sqrt{v}/\sqrt{\Omega_{\infty,11}}\right) - \Phi\left( -\sqrt{v}/\sqrt{\Omega_{\infty,11}} \right)$. The final result is obtained by $1-\mathbb{P}_{\gamma_n}(0 \in \widehat{CI}_{\alpha nk}\mid \mathcal{I}_{-nk})$.
\end{proof}

\begin{proof}[Proof of Lemma \ref{lem:clustered_se}]
	The first part of the proof is identical to that of Theorem \ref{thm:exact_coverage_foldlvci} in terms of setting up the problem, and defining a sequence the conditional distributions $\{F_{\gamma_n\mid \mathcal{I}_{-nk}} \}_{n=1}^{\infty}$. By equation \eqref{eq:clt_coefficients_cluster} this sequence satisfies Assumption \ref{assump:clt_regcoef} almost surely with an effective sample size $\tilde{n} = n/ r_{n}$ and a particular choice of covariance estimator. To complete the proof, we develop a modified version of Lemma \ref{lem:coverage_generic} to prove uniform coverage under cluster dependence. 
	Consider a sequence of  distributions $\{\gamma_n\}_{n=1}^\infty \in \Gamma^{\infty}$ and  a subsequence $\{n_\ell\}_{\ell=1}^{\infty}$, where $\frac{n_\ell}{r_{n_\ell}} V_{\tau n_\ell}^* \to^p v \in [0,\infty)$, $V_{\tau n_\ell}^* \to^p 0$, and $\Omega_{n_\ell} \to \Omega$ as $n_{\ell} \to \infty$. Applying Lemma \ref{lem:decomposition_remainder_lvci_reg} , and factorizing terms as in Lemma \ref{lem:coverage_generic},	
	\begin{equation} \frac{n_\ell}{r_{n_\ell}}( \widehat{V}_{\tau n_\ell}^*-V_{\tau n_\ell}^*) =  (e_1'\Omega^{1/2}Z_{n_\ell} + \sqrt{v})^2 - v + o_p(1).\label{eq:cluster_limit_nearhomogeneity}
	\end{equation}
	Now we need to show that the quantiles of the empirical process converge to the same limit. The quantity $\frac{n_\ell}{r_{n_\ell}} G(n_\ell ,V_{\tau n_\ell}^*,\widehat{\Omega}_{n_\ell},Z)$ is equal to 
	\begin{equation}
	n_\ell \frac{(e_1'r_{n_\ell}^{-1/2}\widehat{\Omega}_{n_\ell}^{1/2}Z)^2}{n_\ell} + 2 \sqrt{\frac{n_\ell V_{\tau n_\ell}^*}{r_{n_\ell}}}(e_1'r_{n_\ell}^{-1/2}\widehat{\Omega}_{n_\ell}^{1/2}Z)+ \sqrt{\frac{n_\ell V_{\tau n_\ell}^*}{r_{n_\ell}}} \sqrt{V_{\tau n_\ell}^*} (e_2'r_{n_\ell}^{-1/2}\widehat{\Omega}_{n_\ell}^{1/2}Z).
	\label{eq:cluster_limit_nearhomogeneity_empiricalprocess}
	\end{equation}
	In the first term the $n_\ell$ components cancel out and $r_{n_\ell}^{-1}\widehat{\Omega}_{n_\ell} \to^p \Omega$. Similarly, since $\frac{n_\ell}{r_{n_\ell}} V_{\tau n_\ell}^* \to^p v$ the second term of \eqref{eq:cluster_limit_nearhomogeneity_empiricalprocess} converges to $2\sqrt{v}e_1'\Omega^{1/2}Z$, and a suitable factorization with the first and second the expression in \eqref{eq:cluster_limit_nearhomogeneity}. Since $n_\ell/r_{n_\ell} \to \infty$ by assumption, then $V_{\tau n_\ell}^* \to 0$ and the third term of \eqref{eq:cluster_limit_nearhomogeneity_empiricalprocess} is $o_p(1)$. Therefore the estimated quantiles are consistent. Proving consistency of the quantiles for the mild hereterogeneity proceeds analogously. Once we prove that the quantiles are asymptotically correct, the rest of the proof is the same as in Lemma \ref{lem:coverage_generic}.		
\end{proof}

\begin{proof}[Proof of Theorem \ref{lem:degenerate_cate} ]
	The first part of the proof is identical to that of Theorem \ref{thm:exact_coverage_foldlvci} in terms of setting up the problem. In this case the sequence of conditional distributions $\{F_{\gamma_n\mid \mathcal{I}_{-nk}} \}_{n=1}^{\infty}$ only satisfies Assumption \ref{assump:clt_regcoef} for subsequences where $V_{xnk} = 0$. Instead, I will modify the first part of the proof of Lemma \ref{lem:coverage_generic}.(i) for a class of regression-adjusted CIs with possible degeneracy. Consider a sequence of  distributions $\{\gamma_n\}_{n=1}^\infty \in \Gamma^{\infty}$ and let $h_n := (n V_{\tau n}^*,V_{\tau n}^*,vec(\Omega_n),\zeta_n,V_{xn})$ be a sequence of parameters where $V_{xn}$ is the variance of $S_{-k}(X_i)$. Our goal is to show that for $h \in \mathcal{H}$ and all subsequences $\{n_\ell\}_{\ell =1}^{\infty}$ where $h_{n_\ell} \to h \in \mathcal{H}$,
	$$ 	\lim_{n_\ell \to \infty} \mathbb{P}_{\gamma_{n_\ell}}\left( V_{\tau n_\ell}^* \in \widehat{CI}_{\alpha n k}^0 \right)  \ge 1-\alpha. $$
	Partition the subsequences in such a way that $h_{n_\ell}$ either has $V_{xn_\ell} = 0$ or $V_{xn_\ell} > 0$. When $V_{xn_\ell} = 0$, then $\widehat{V}_{\tau n_\ell}^*$ is exactly degenerate and the confidence interval \eqref{eq:degenerate_ci} covers the pseudo-VCATE with probability one. For the sequences where $V_{xn_\ell} > 0$, we can apply the remaining cases from Lemma \ref{lem:coverage_generic} which have coverage $1-\alpha$. This satisfies Assumption B in \cite{andrews2020generic} and we can prove uniform conservative coverage of the pseudo-VCATE applying their Corollary 2.1
	
	We prove point-wise coverage by cases. When $V_{\tau}(\gamma) = 0$, then $V_{\tau}^*(\gamma,\mathcal{I}_{-nk}) = 0$ almost surely. By applying the result above and the near homogeneity case in Theorem \ref{thm:exact_coverage_foldlvci} we find that coverage is at least $(1-\alpha)$. Now consider the case where $V_{\tau}(\gamma)$ is bounded away from zero. By Assumption \ref{assump:convergencenuisance}, the nuisance functions converge to their true value. Then $\widehat{\tau}_{-k}(x)$ converges point-wise to $\tau(x)$, and by the moment bound in \ref{assump:momentbounds_dml}.(v) and the dominated convergence theorem, $\mathbb{V}_{\gamma_n}(\widehat{\tau}_{-k}(X) ) = V_{\tau}(\gamma)+o(1)$, which is bounded away from zero. Then we can apply the mild heterogeneity results in Theorem \ref{thm:exact_coverage_foldlvci} to prove \eqref{eq:degenerate_pointwise}. The proof of \eqref{eq:multifold_zero_ci} is identical to that of Theorem \ref{thm:uniform_coverage_variational_lvci}, substituting the confidence intervals $\widehat{CI}_{\alpha nk}^0$ instead of $\widehat{CI}_{\alpha nk}$. Point-wise coverage of the fold-specific confidence intervals holds by \eqref{eq:degenerate_pointwise}. 	
	
\end{proof}

\begin{proof}[Proof of Corollary \ref{cor:one_sidedtests_degenerate}] The proof of \eqref{eq:onesided_degenerate_singlefold} and  \eqref{eq:onesided_degenerate_multifold}  follows a similar structure to  Corollary \ref{cor:uniformity low} and Theorem \ref{thm:uniform_coverage_variational_lvci}, respectively. In each case the only thing that changes is that we apply the uniformity result for degenerate CIs in \eqref{eq:uniformconservativecoverage_degenerate} (Lemma \ref{lem:degenerate_cate}), rather than the non-degenerate uniformity result in Theorem \ref{thm:exact_coverage_foldlvci}.
	
\end{proof}

% \newpage
\section{Supporting Lemmas and Proofs}

\begin{lem}
	\label{lem:coverage_generic}
	Suppose that $\Gamma$ is a set of distributions constrained in such a way that Assumption \ref{assump:clt_regcoef} holds. Let $V_\tau^*(\gamma) = \beta_2(\gamma)^2V_x(\gamma)$ and $\Omega(\gamma)$ be the pseudo-VCATE and covariance matrix associated with $\gamma$, respectively. If $\widehat{CI}_{\alpha n}$ is a confidence interval obtained by substituting $(\widehat{V}_{\tau n},\widehat{\Omega}_n)$ into \eqref{eq:feasible_confidenceset}, then
	\begin{enumerate}[(i)]
		\setlength{\parskip}{0pt}
		\item $1-\alpha \le \underset{n \to \infty}{\lim\inf}\inf_{\gamma \in \Gamma} \mathbb{P}_{\gamma}\left( V_{\tau}^*(\gamma) \in \widehat{CI}_{\alpha n k} \right)$
		\item If in addition, $\Omega_{12}(\gamma)V_{\tau}(\gamma) \le t$, then  $\underset{n \to \infty}{\lim\sup}\sup_{\gamma \in \Gamma} \mathbb{P}_{\gamma}\left( V_{\tau}^*(\gamma,\mathcal{I}_{-nk}) \in \widehat{CI}_{\alpha n k} \right) \le (1-\alpha) + \tilde{\alpha}(t)$, where $\tilde{\alpha}(t) \ge 0$ and $\lim_{t \to 0}\tilde{\alpha}(t) = 0$.
	\end{enumerate}	
\end{lem}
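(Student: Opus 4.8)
The plan is to reduce the uniform claim to the analysis of drifting subsequences and then invoke the generic size-control machinery of \cite{andrews2020generic}. Since $\liminf_n\inf_{\gamma\in\Gamma}$ of a coverage probability is realized along some sequence $\{\gamma_n\}$, and along any such sequence one can extract a further subsequence on which the localization parameters $(nV_{\tau n}^*,V_{\tau n}^*,\Omega_n,\zeta_n)$ converge (in the extended reals, allowing $nV_{\tau n}^*\to\infty$), it suffices to compute the limit of $\mathbb{P}_{\gamma_n}(V_{\tau n}^*\in\widehat{CI}_{\alpha n})$ subsequence by subsequence and check it lies in $[1-\alpha,\,1-\alpha+\tilde\alpha(t)]$. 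I let $\zeta_n\in\{-1,1\}$ be the true sign from Lemma \ref{lem:decomposition_remainder_lvci_reg} (solving $\zeta_n\sqrt{V_{\tau n}^*}=\beta_{2n}\sqrt{V_{xn}}$) and write $E_\zeta$ for the event that the $\zeta$-indexed statistic $T_\zeta:=F_{n,V_{\tau n}^*,\widehat\Omega_n,\zeta}(\widehat V_{\tau n}-V_{\tau n}^*)$ falls in its critical region $[q_{\alpha/2}(\cdot,\zeta),q_{1-\alpha/2}(\cdot,\zeta)]$, so that $\{V_{\tau n}^*\in\widehat{CI}_{\alpha n}\}=E_{+1}\cup E_{-1}$.

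The engine is a probability-integral-transform argument at the regime-specific rate. By Lemma \ref{lem:decomposition_remainder_lvci_reg}, at the true value $\widehat V_{\tau n}-V_{\tau n}^*=G(n,V_{\tau n}^*,\Omega_n,Z_n,\zeta_n)+R_n$ with $Z_n\to^d Z\sim\mathcal N(0,I)$ and $R_n$ negligible relative to the leading term. Rescaling \eqref{eq:empiricalprocess} by $n$, by $\sqrt{n/V_{\tau n}^*}$, or by $\sqrt n$ according to whether $nV_{\tau n}^*\to c<\infty$, $nV_{\tau n}^*\to\infty$ with $V_{\tau n}^*\to 0$, or $V_{\tau n}^*\to V_\tau^*>0$, the statistic converges to a continuous limit law $L$ (noncentral $\chi^2$, centered normal, centered normal, respectively). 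Because $F_{n,V_{\tau n}^*,\widehat\Omega_n,\zeta_n}$ is by construction the CDF of the same functional $G$ evaluated at an independent $\mathcal N(0,I)$, and $\widehat\Omega_n\to^p\Omega_n$, the identically rescaled reference CDF converges uniformly to the CDF of $L$ by P\'olya's theorem; composing it with the converging argument and using continuity of $L$ yields $T_{\zeta_n}\to^d\mathrm{Unif}(0,1)$.

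From here the lower bound is immediate. The adjusted critical values in \eqref{eq:adjusted_criticalvalues} are continuous in the converging inputs (through $F(0)$ and $\widehat\Omega_n$) and bracket an interval of width exactly $1-\alpha$ inside $[0,1]$, so $\mathbb{P}(E_{\zeta_n})\to 1-\alpha$ by Slutsky; since the coverage event contains $E_{\zeta_n}$, this gives part (i) after the subsequence reduction. For the upper bound I write $\mathbb{P}(E_{+1}\cup E_{-1})=\mathbb{P}(E_{\zeta_n})+\mathbb{P}(E_{-\zeta_n}\cap E_{\zeta_n}^c)$ and control the second term. In the near-homogeneous and intermediate regimes the limit $L$ is $\zeta$-free (it enters only through the square of a symmetric Gaussian shifted by $\pm\sqrt c$, or through a $\zeta$-independent variance), so both rescaled reference CDFs share the same limit; as they are evaluated at the common argument $\widehat V_{\tau n}-V_{\tau n}^*$, I get $T_{+1}-T_{-1}\to^p 0$ and the extra term disappears.

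The main obstacle is the strong-heterogeneity regime, where the $\sqrt n$-rescaled limit is $\mathcal N(0,\sigma_\zeta^2)$ with $\sigma_\zeta^2=4V_\tau^*\Omega_{11}+V_\tau^{*2}\Omega_{22}+4\zeta\,V_\tau^{*3/2}\Omega_{12}$, so the two signs genuinely differ through $\Omega_{12}$. There the argument converges to the true-sign law while $T_{-\zeta_n}$ composes it with the other CDF, producing a non-uniform limit, and I must bound $\limsup\mathbb{P}(E_{-\zeta_n}\cap E_{\zeta_n}^c)$ by a $\tilde\alpha(t)\to 0$. I would do this by observing that the sole $\zeta$-dependent input to the limiting coverage is the scalar $\Omega_{12}V_\tau$ constrained by $t$, bounding the L\'evy distance between the two candidate limit laws by a continuous increasing function of $|\Omega_{12}|V_\tau$ vanishing at the origin, and taking $\tilde\alpha(t)$ to be the supremum of the resulting over-coverage over admissible limits with $\Omega_{12}V_\tau\le t$. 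Continuity of this over-coverage in the single scalar then gives $\lim_{t\to 0}\tilde\alpha(t)=0$, and the subsequence reduction delivers the uniform limsup in part (ii). Verifying the genuine limit laws and critical-value convergence along every drifting subsequence is exactly the hypothesis needed to close the argument through Corollary 2.1 of \cite{andrews2020generic}.
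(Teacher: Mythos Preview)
Your proposal is correct and follows essentially the same route as the paper: reduce to drifting subsequences indexed by $(nV_{\tau n}^*,V_{\tau n}^*,\Omega_n,\zeta_n)$, verify the hypotheses of Corollary~2.1 in \cite{andrews2020generic} case by case, use the probability-integral-transform at the regime-specific rate to get $T_{\zeta_n}\to^d\mathrm{Unif}(0,1)$ for the lower bound, and exploit $\zeta$-invariance of the limit law near the boundary to make the two $E_\zeta$ events coincide asymptotically. The paper splits into two regimes (near-homogeneity $n_\ell V_{\tau n_\ell}^*\to v<\infty$ and mild heterogeneity $n_\ell V_{\tau n_\ell}^*\to\infty$ with $V_{\tau n_\ell}^*\to V_\tau^*\in[0,\infty)$) rather than your three, but your intermediate and strong cases are exactly the $V_\tau^*=0$ and $V_\tau^*>0$ subcases of the paper's second regime, so the partitions are equivalent.

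The only substantive difference is in how the excess coverage $\tilde\alpha(t)$ is extracted in the heterogeneous case. You invoke a L\'evy-distance/continuity argument on the scalar $\Omega_{12}V_\tau$; the paper instead writes the limiting coverage for sign $\zeta$ explicitly as $\kappa(\sigma(\zeta),\sigma(\zeta^*))=\Phi\bigl(\tfrac{\sigma(\zeta^*)}{\sigma(\zeta)}z_{1-\alpha/2}\bigr)-\Phi\bigl(\tfrac{\sigma(\zeta^*)}{\sigma(\zeta)}z_{\alpha/2}\bigr)$ and Taylor-expands in $\sigma(\zeta^*)-\sigma(\zeta)$ to obtain the concrete bound $\tilde\alpha(t)=t\,|z_{\alpha/2}|/(2\lambda_{\min}\sqrt{2\pi})$, using $|\sigma(\zeta^*)^2-\sigma(\zeta)^2|\le 2|\sqrt{V_\tau^*}\,\Omega_{12}|\le 2t$. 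Your abstract route is fine and arguably cleaner conceptually; the paper's buys an explicit constant.
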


\begin{proof}	
	Let $\{\gamma_n \in \Gamma: n \in \mathbb{N}\}$ denote a sequence of distributions. Our goal is to verify that the confidence interval satisfies the assumptions of Corollary 2.1(c) in \cite{andrews2020generic}. Define a sequence of parameters, $h_n := (n V_{\tau n}^*,V_{\tau n}^*,vec(\Omega_n),\zeta_n)$. By Assumption \ref{assump:clt_regcoef}, each element is contained in $\mathcal{H}$, a subset of the extended Euclidean space in which $\Omega_n$ is positive-definite with bounded eigenvalues. The quantity $nV_{\tau n}^*$ is positive but unbounded, and can converge to $+\infty$. Assumption $B$ in \cite{andrews2020generic} is stated in terms of subsequences and the first step is to write the problem in this way. We show that for $h \in \mathcal{H}$ and all subsequences $\{n_\ell\}_{\ell =1}^{\infty}$ where $h_{n_\ell} \to h \in \mathcal{H}$,
	$$ 	1-\alpha \le \lim_{n_\ell \to \infty} \mathbb{P}_{\gamma_{n_\ell}}\left( V_{\tau n_\ell}^* \in \widehat{CI}_{\alpha n} \right)  \le  (1-\alpha) + \tilde{\alpha}(t), \qquad  \tilde{\alpha}(t) \ge 0, \alpha \in [0,1]$$	
	We break down the proof by cases. 	
	\textit{(a) Near homogeneity case:} Suppose that  $n_\ell V_{\tau n_\ell}^* \to^p v \in [0,\infty)$, $V_{\tau n_\ell}^* \to^p 0$, $\zeta_{n_\ell} \to \zeta^* \in \{0,1\}$ and $\Omega_{n_\ell} \to \Omega$ as $n \to \infty$, where $\Omega$ is positive-definite. For this case, $\sqrt{n_\ell}V_{\tau n_\ell}^* = o(1)$. By applying Lemma \ref{lem:decomposition_remainder_lvci_reg},
	\begin{align*} n_\ell( \widehat{V}_{\tau n_\ell }^*-V_{\tau n_\ell}^*) &=   % (e_1'\Omega_{n_\ell}^{1/2}Z_{n_\ell})^2 + 2 \zeta_{n_\ell} \sqrt{n_\ell V_{\tau n_\ell} ^*}(e_1'\Omega_{n_\ell}^{1/2}Z_{n_\ell}) +  \sqrt{n_\ell}V_{\tau n_\ell} ^*(e_2'\Omega_{n_\ell}^{1/2}Z_{n_\ell}) + o_p(1) \\
		(e_1'\Omega^{1/2}Z_{n_\ell})^2 + 2 \zeta^*\sqrt{v}e_1'\Omega^{1/2}Z_{n_\ell} + o_p(1) \\
		&= (e_1'\Omega^{1/2}Z_{n_\ell} + \zeta^* \sqrt{v})^2 - v + o_p(1).
	\end{align*}
	Let $Z \sim \mathcal{N}(0,I_{2\times 2})$ (independent of $n_\ell$). Since $\widehat{\Omega}_{n_\ell} = \Omega + o_p(1)$, then the estimated empirical process at $V_{\tau n_\ell}^*$ has the same limiting distribution as the estimator. For a fixed $\zeta \in \{-1,1\}$ (that may differ from $\zeta^*$),
	\begin{align*} n_\ell G(n,V_{\tau n_\ell}^*,\widehat{\Omega}_{n_\ell},Z,\zeta) 
		&= (e_1'\Omega^{1/2}Z + \zeta \sqrt{v})^2 - v + o_p(1).
	\end{align*}
	% (e_1'\widehat{\Omega}_{n_\ell}^{1/2}Z)^2 + 2 \zeta \sqrt{n_\ell V_{\tau n_\ell}^*}(e_1'\widehat{\Omega}_{n_\ell}Z) + \sqrt{n_\ell}V_{\tau n_\ell}^*e_2'\widehat{\Omega}_{n_\ell}^{1/2}Z   + o_p(1) \\
	Define the limiting CDF as $H_{\Omega,v,\zeta}(\tilde{v}) := \mathbb{P}_{\gamma_{n_\ell}}\left( (e_1'\Omega^{1/2} Z + \zeta\sqrt{v})^2 - v \le \tilde{v} \right)$, where $Z \sim \mathcal{N}(0,I_{2\times 2})$. Since $Z$ has mean zero, $H_{\Omega,v,\zeta = 1}(\tilde{v}) = H_{\Omega,v,\zeta = -1}(\tilde{v}) = H_{\Omega,\nu}(\tilde{v})$, which does not depend on $\zeta$. Since $\Omega$ is positive-definite, the function $H_{\Omega,v}(\tilde{v})$ is continuous and strictly increasing. Let $\widetilde{F}_{n_\ell,V_{\tau n_\ell}^*,\widehat{\Omega}_{n \ell},\zeta_{n_\ell}}(\tilde{v}) = F_{n_\ell,V_{\tau n_\ell}^*,\widehat{\Omega}_{n_\ell},\zeta_{n_\ell}}(\tilde{v}/n_\ell) \to H_{\Omega,v}(\tilde{v})$. Since $H_{\Omega,v}$ is continuous, then \cite[Theorem 2.6.1]{lehmann1999elements} implies that this convergence is uniform in $\tilde{v}$, and since the limiting CDF is strictly increasing,
	$$ F_{n_\ell,V_{\tau}^*,\widehat{\Omega}_{n},\zeta}(\widehat{V}_{\tau n_\ell}-V_{\tau}^*) =  \tilde{F}_{n_\ell,V_{\tau n_\ell}^*,\widehat{\Omega}_{n},\zeta}(n_\ell(\widehat{V}_{\tau n_\ell}-V_{\tau}^*)) \to^d U_{n_\ell},$$
	for all $\zeta \in \{-1,1\}$ where $U_{n_\ell}\sim  \sim Uniform[0,1]$. The test statistic converges to a fixed distribution regardless of the choice of $\zeta$. Similarly, $ q_{\alpha/2}(n_\ell,V_{\tau n_\ell}^*,\Omega,\zeta) \to^p \min\{\alpha/2,H_{\Omega,v}(0)\}$. Define a random variable, $\widehat{R}_{n_\ell,\zeta} := F_{n_\ell,V_{\tau}^*,\widehat{\Omega}_{n},\zeta}(\widehat{V}_{\tau n_\ell}-V_{\tau n_\ell}^*) - q_{\alpha/2}(n_\ell,V_{\tau n_\ell}^*,\widehat{\Omega}_{n_\ell},\zeta)  \to^d U_{n_\ell} - \min\{\alpha/2,H_{\Omega,v}(0)\}$. By definition, $ V_{\tau n_\ell}^* \in \widehat{CI}_{\alpha n} \iff \bigcup_{\zeta \in \{-1,1\}}\left\{\widehat{R}_{n_\ell,\zeta} \in [0,1-\alpha] \right\}.$ 
	As $n_{\ell} \to \infty$,
	\begin{align*}
		\begin{split}
			\mathbb{P}_{\gamma_{n_\ell}}\left( V_{\tau n_\ell}^* \in \widehat{CI}_{\alpha n k} \right) &\ge \max_{\zeta \in \{-1,1\}} \mathbb{P}_{\gamma_{n_\ell}}\left( \widehat{R}_{n_\ell,\zeta} \in [0,1-\alpha]  \right) \ge \mathbb{P}_{\gamma_{n_\ell}}\left( \widehat{R}_{n_\ell,\zeta_{n_\ell}} \in [0,1-\alpha]  \right) \\
			&= \mathbb{P}_{\gamma_{n_\ell}}\left(0\le U_{n_\ell} - \min\{\alpha/2,H_{\Omega,v}(0)\} \le 1-\alpha \right)  + o(1)\\
			&= (1-\alpha) + o(1).
		\end{split}
	\end{align*}
	Since the limiting distribution doesn't depend on $\zeta$, we can apply the continuous mapping theorem to show that $ \widehat{R}_{n_\ell}^{\max} := \min_{\zeta \in \{-1,1\}} \widehat{R}_{n_\ell,\zeta}$ and  $\widehat{R}_{n_\ell}^{\min}:=\max_{\zeta \in \{-1,1\}}\widehat{R}_{n_\ell,\zeta} $ both converge to the same limit, $U_{n_\ell} - \min\{\alpha/2,H_{\Omega,v}(0)\}$. As $n_{\ell} \to \infty$,
	\begin{align}
		\begin{split}
			\mathbb{P}_{\gamma_{n_\ell}k}\left( V_{\tau n_\ell}^* \in \widehat{CI}_{\alpha n k} \right) &\le \mathbb{P}_{\gamma_{n_\ell}}\left( 0 \le \widehat{R}_{n_\ell}^{\max},  \widehat{R}_{n_\ell}^{\min} \le 1-\alpha  \right) %\\
			%&= \mathbb{P}_{\gamma_{n_\ell}}\left(0\le U_{n_\ell} - \min\{\alpha/2,H_{\Omega,v}(0)\} \le 1-\alpha \right) + o(1) \\
			= (1-\alpha) + o(1).
		\end{split}
		\label{eq:upperbound_probci}
	\end{align}
	For this class of subsequences the confidence interval produces exact coverage.
	
	\textit{Mild heterogeneity case:} Suppose that $n_\ell V_{\tau n_\ell}^* \to \infty$ and $V_{\tau n_\ell}^* \to V_{\tau}^*$, where $V_{\tau} \in [0,\infty)$, $\zeta_{n_\ell} \to \zeta^*$, and $\Omega_{n_\ell} \to \Omega$. Then we can rescale by $\sqrt{n_\ell /V_{\tau n_\ell}^*}$.
	\begin{align*} \sqrt{\frac{n_\ell}{V_{\tau n_\ell}^*}}( \widehat{V}_{\tau n_\ell}^*-V_{\tau n_\ell}^*)  
		&= \frac{(e_1'\Omega_{n_\ell}^{1/2}Z_{n_\ell})^2}{\sqrt{n_\ell V_{\tau n_\ell}^*}} +  2 \zeta_{n_\ell} e_1'(\Omega_{n_\ell}^{1/2}Z_{n_\ell})+ \sqrt{V_{\tau n_\ell}^*}(e_2'\Omega_{n_\ell}^{1/2}Z_{n_\ell}) + o_p(1)\\
		&=  2 \zeta^* (e_1'\Omega^{1/2}Z_{n_\ell})+ \sqrt{V_{\tau}^*}(e_2'\Omega^{1/2}Z_{n_\ell}) + o_p(1). 
	\end{align*}
	The limiting distribution is normal. For convenience, we write this as $ \sqrt{n_\ell/V_{\tau n_\ell}^*}( \widehat{V}_{\tau n_\ell}^*-V_{\tau n_\ell}^*) = \sigma(\zeta^*)\tilde{Z}_{n_\ell} + o_p(1)$, where $\tilde{Z}_{n_\ell} \sim \mathcal{N}(0,1)$ and $\sigma(\zeta)^2 := \Omega_{11} + V_{\tau}^*\Omega_{22} + \zeta \sqrt{V_\tau^*} \Omega_{12}$ for $\zeta \in \{1,-1\}$. Since the norm of $[1,\zeta\sqrt{V_{\tau}^*}]'$ is larger than one, it follows that $\sigma(\zeta)^2 \ge \lambda_{\min}$, where $\lambda_{\min}$ is the smallest eigenvalue of $\Omega$. In this case, the limiting CDF is $H_{\Omega,V _{\tau}^*,\zeta}(\tilde{v}) := \Phi(\tilde{v}/\sigma(\zeta))$ where $\Phi(\cdot)$ is the CDF of a standard normal. Let $z_{\alpha} = \Phi^{-1}(\alpha)$ denote the $\alpha-$quantile, and $\phi(\cdot)$ the marginal of a standard normal.
	\begin{align}
		\label{eq:coverage_mildheterogeneity}
		\begin{split}		
			\mathbb{P}_{\gamma_{n_\ell}}(\widehat{R}_{n_\ell,\zeta} \in [0,1-\alpha]) &= \mathbb{P}\left(-\alpha/2 \le \Phi\left(\frac{\sigma(\zeta^*)}{\sigma(\zeta)}\tilde{Z}_{n_\ell}\right) \le 1-\alpha/2\right) + o(1) \\
			&= \underbrace{\Phi\left( \frac{\sigma(\zeta^*)}{\sigma(\zeta)}z_{1-\alpha/2}\right) - \Phi\left( \frac{\sigma(\zeta^*)}{\sigma(\zeta)}z_{-\alpha/2}\right)}_{\kappa(\sigma(\zeta),\sigma(\zeta^*))} + o(1).
		\end{split}
	\end{align}
	To obtain the lower bound,$
	\mathbb{P}_{\gamma_{n_\ell}}\left( V_{\tau n_\ell}^* \in \widehat{CI}_{\alpha n k} \right) \ge \mathbb{P}_{\gamma_{n_\ell}}\left( \widehat{R}_{n_\ell,\zeta^*} \in [0,1-\alpha]  \right)  = (1-\alpha) + o(1)$. To obtain the upper bound, I write down a Taylor expansion. There is a $\tilde{\sigma}\ge \sqrt{\lambda_{\min}}$ between $\sigma(\zeta^*)$ and $\sigma(\zeta)$ such that
	\begin{align*} \Vert \kappa(\sigma_\zeta,\sigma_{\zeta^*}) - (1-\alpha) \Vert 
		&\le \left\Vert \left[\phi\left(\frac{\tilde{\sigma}z_{1-\alpha/2}}{\sigma(\zeta)}\right)z_{1-\alpha/2} - \phi\left(\frac{\tilde{\sigma}z_{-\alpha/2}}{\sigma(\zeta)}\right)z_{-\alpha/2} \right]\frac{[\sigma(\zeta^*) - \sigma(\zeta)]}{\sigma(\zeta)} \right\Vert \\
		&\le \frac{|z_{-\alpha/2}|}{\sqrt{2\pi \lambda_{min}}}\Vert  \sigma(\zeta^*) -\sigma(\zeta) \Vert .
	\end{align*}
	By another Taylor expansion, $\Vert \sigma(\zeta^*)-\sigma(\zeta)\Vert  \le \frac{1}{2 \sqrt{\lambda_{\min}}}\Vert \sigma(\zeta^*)^2 - \sigma(\zeta)^2\Vert $. If in addition, $\Vert \sqrt{V_{\tau}*}\Omega_{12}\Vert  \le t$, then $\Vert  \sigma(\zeta^*)-\sigma(\zeta)\Vert  \le t$ and we can define a non-negative function $\tilde{\alpha}(t) = t \Vert z_{-\alpha/2}\Vert /(2\lambda_{\min}\sqrt{2\pi})$, which satisfies $\lim_{t \to 0}\tilde{\alpha}(t) = 0$.	Therefore we have bounded the coverage over an exhaustive class of subsequences of distributions. This satisfies Assumption $B$ in \cite{andrews2020generic}. By their Corollary 2.1,
	$$ 1-\alpha \le \underset{n \to \infty}{\lim\inf}\inf_{\gamma \in \Gamma} \mathbb{P}_{\gamma}\left( V_{\tau}^*(\gamma) \in \widehat{CI}_{\alpha n} \right) = \underset{n \to \infty}{\lim\sup}\sup_{\gamma \in \Gamma} \mathbb{P}_{\gamma}\left( V_{\tau }^*(\gamma) \in \widehat{CI}_{\alpha n} \right) \le 1-\alpha + \tilde{\alpha}(t).$$

\end{proof}	

\begin{lem}
	\label{lem:designmatrix_dml}
	 Let $(S_{-k}(X_i),M_{-k}(X_i),W_i)$ and $\lambda(X_i)$ be the set of regressors and weights, respectively, that were defined in \eqref{eq:dml_defn_auxiliaryfunctions}. Define $Q_{ww,nk} := \mathbb{E}[\lambda(X_i)W_iW_i' \mid \mathcal{I}_{-nk}]$ and let $\Pi_{nk}$ be a $4 \times 4$ diagonal matrix with entries $(1,1,1,V_{xnk}^{-1/2})$. If $(X_i,D_i)$ are independent of the data in $\mathcal{I}_{-nk}$, then $\Pi_{nk}'Q_{ww,nk}\Pi_{nk}$ has the form in \eqref{eq:designmatrix_tildew_normalized}.

\end{lem}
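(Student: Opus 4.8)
The plan is to compute the $4\times 4$ matrix $Q_{ww,nk} := \mathbb{E}[\lambda(X_i)W_iW_i' \mid \mathcal{I}_{-nk}]$ entry by entry and then apply the diagonal rescaling by $\Pi_{nk}$. Throughout, all expectations are taken conditional on $\mathcal{I}_{-nk}$, which is the key simplification: since $\widehat{\tau}_{-k}$ and $\widehat{\mu}_{0,-k}$ are estimated on the complementary folds, the functions $\lambda(\cdot)$, $M_{-k}(\cdot)$, and $S_{-k}(\cdot)$ appearing in \eqref{eq:dml_defn_auxiliaryfunctions} are all deterministic (that is, $X_i$-measurable) once we condition on $\mathcal{I}_{-nk}$. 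The only genuine randomness in $W_iW_i'$ conditional on $(X_i,\mathcal{I}_{-nk})$ then enters through the treatment indicator $D_i$.

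First I would record the two moment facts that drive everything. Under Assumption \ref{assump:unconfound_sutva}, $D_i \mid X_i$ is Bernoulli with mean $p(X_i)$, so conditionally on $X_i$ we have $\mathbb{E}[D_i - p(X_i) \mid X_i] = 0$ and $\mathbb{E}[(D_i - p(X_i))^2 \mid X_i] = p(X_i)(1-p(X_i))$. Because $\lambda(X_i) = [p(X_i)(1-p(X_i))]^{-1}$, the second identity produces the crucial cancellation $\lambda(X_i)\,\mathbb{E}[(D_i - p(X_i))^2 \mid X_i] = 1$, which is what collapses the lower block to the identity.

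Next I would evaluate each entry by the law of iterated expectations, conditioning first on $(X_i,\mathcal{I}_{-nk})$. Every entry containing an odd power of $(D_i - p(X_i))$ — namely positions $(1,3)$, $(1,4)$, $(2,3)$, and $(2,4)$ — vanishes, because $\mathbb{E}[D_i - p(X_i)\mid X_i] = 0$ and all remaining factors are $X_i$-measurable. The upper-left $2\times2$ block carries no $D_i$ dependence and yields $\mathbb{E}[\lambda(X_i)\mid\mathcal{I}_{-nk}]$, $\mathbb{E}[\lambda(X_i)M_{-k}(X_i)\mid\mathcal{I}_{-nk}]$, and $\mathbb{E}[\lambda(X_i)M_{-k}(X_i)^2\mid\mathcal{I}_{-nk}]$ directly. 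For the entries with $(D_i - p(X_i))^2$ I invoke the cancellation above: the $(3,3)$ entry becomes $\mathbb{E}[\lambda(X_i)p(X_i)(1-p(X_i))\mid\mathcal{I}_{-nk}] = 1$; the $(3,4)$ entry becomes $\mathbb{E}[S_{-k}(X_i)\mid\mathcal{I}_{-nk}] = 0$, using that $S_{-k}$ is centered by construction in \eqref{eq:dml_defn_auxiliaryfunctions}; and the $(4,4)$ entry becomes $\mathbb{E}[S_{-k}(X_i)^2 \mid \mathcal{I}_{-nk}] = V_{xnk}$. This gives a block-diagonal $Q_{ww,nk}$ with the stated $2\times2$ upper block, a $1$ in position $(3,3)$, and $V_{xnk}$ in position $(4,4)$.

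Finally, since $\Pi_{nk}$ is diagonal with fourth entry $V_{xnk}^{-1/2}$, the congruence $\Pi_{nk}'Q_{ww,nk}\Pi_{nk}$ leaves the first three coordinates untouched and rescales the $(4,4)$ entry to $V_{xnk}^{-1/2}\cdot V_{xnk}\cdot V_{xnk}^{-1/2} = 1$, producing exactly the matrix in \eqref{eq:designmatrix_tildew_normalized}. There is no substantial obstacle in this argument; the computation is entirely routine. The only point demanding care is the bookkeeping of the conditioning on $\mathcal{I}_{-nk}$, so that the cross-fitted regressors are legitimately treated as fixed functions of $X_i$ and the Bernoulli structure of $D_i\mid X_i$ can be applied cleanly — this is precisely what renders the relevant moments deterministic and lets the propensity weighting reduce the lower block to the identity.
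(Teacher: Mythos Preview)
Your proof is correct and follows essentially the same route as the paper's: both compute $Q_{ww,nk}$ entry by entry using iterated expectations, exploit $\mathbb{E}[D_i-p(X_i)\mid X_i]=0$ and $\lambda(X_i)\,\mathbb{E}[(D_i-p(X_i))^2\mid X_i]=1$ to obtain the block-diagonal structure, then use the centering of $S_{-k}$ and the definition of $V_{xnk}$ before applying the $\Pi_{nk}$ rescaling. The only cosmetic difference is that the paper first displays the full $4\times4$ matrix $\lambda(X_i)W_iW_i'$ explicitly before taking expectations, whereas you proceed directly entry by entry.
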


\begin{proof}[Proof of Lemma \ref{lem:designmatrix_dml}]
	Let $V_i = (D_i-p(X_i))$. By definition, $\lambda(X_i)W_iW_i'$ is 
	\begin{align}	 
	 &\lambda(X_i)\begin{pmatrix}
		1 & M_{-k}(X_i) & V_i  & V_iS_{-k}(X_i) \\
		M_{-k}(X_i) & M_{-k}(X_i)^2 & M_{-k}(X_i)V_i &  M_{-k}(X_i) V_iS_{-k}(X_i) \\
		V_i & M_{-k}(X_i) V_i &   V_i^2 &  V_i^2S_{-k}(X_i) \\
		V_iS_{-k}(X_i) & M_{-k}(X_i) V_iS_{-k}(X_i) & V_i^2S_{-k}(X_i)  & V_i^2S_{-k}(X_i)^2
	\end{pmatrix}.
	\label{eq:designmatrix_nonexpect_proof}
	\end{align}
	Since $(X_i,D_i)$ are independent of the data in $\mathcal{I}_{-nk}$, then $\mathbb{E}[V_i \mid X_i = x, \mathcal{I}_{-nk}]$ does not depend on $\mathcal{I}_{-nk}$ and is equal to $\mathbb{E}[V_i \mid X_i = x] = \mathbb{E}[D_i \mid X=x] - p(x) = 0$. By a similar reasoning, $\mathbb{E}[V_i^2 \mid X_i = x, \mathcal{I}_{-nk}] = p(X_i)(1-p(X_i)) = \lambda(X_i)^{-1}$. Using both conditional moment results, we can show that $\mathbb{E}[\lambda(X_i)W_iW_i' \mid X_i, \mathcal{I}_{-nk}]$ is equal to
 \begin{align*}
		Q_{ww,nk} = \mathbb{E}\left[\left.  \begin{matrix} 	\lambda(X_i)  & 	\lambda(X_i) M_{-k}(X_i) & 0 & 0 \\
				\lambda(X_i)  M_{-k}(X_i) & 	\lambda(X_i)  M_{-k}(X_i)^2 & 0 & 0 \\
			0 & 0 & 1 & S_{-k}(X_i) \\ 
			0 & 0 & S_{-k}(X_i) & S_{-k}(X_i)^2 \\ \end{matrix} \quad \right| \mathcal{I}_{-nk} \right].
	\end{align*}
	
	We substitute $\mathbb{E}[S_{-k}(X_i) \mid \mathcal{I}_{-nk}] = 0$ and $\mathbb{E}[S_{-k}(X_i)^2 \mid \mathcal{I}_{-nk}] = V_{xnk}$. Finally, $\Pi_{nk}'Q_{ww,nk}\Pi_{nk}$ only normalizes the lower right corner to 1.
	
\end{proof}

\begin{lem}[Convergence of cross-fitted sums]
	\label{lem:convergence_standardizedsums}
	Consider a sequence of distributions $\{\gamma_n\}_{n=1}^\infty$ over a collection of random matrices  $\{Z_{i1},\ldots,Z_{iL}\}_{i \in \mathcal{I}_{nk}}$ where $L$ is a finite constant, $Z_{i\ell} \in \mathbb{R}^{M} \times \mathbb{R}^B$. Define $\widehat{\zeta}_{nk} := \sum_{i \in \mathcal{I}_{-nk}}\sum_{\ell=1}^L Z_{i\ell}$. If for all $\ell \in \{1,\ldots,L\}$, (i) the observations are i.i.d. conditional on $\mathcal{I}_{-nk}$, (ii) $\mathbb{E}_{\gamma_n}[Z_{i\ell} \mid \mathcal{I}_{-nk}] = \uline{0}_{M\times B}$, and (iii) $\mathbb{E}_{\gamma_n}[\Vert Z_{i \ell} \Vert^2]$ has a uniform upper bound and $\gamma_n$, then as $n_k \to \infty$, (a) $n_k^{-1/2}\widehat{\zeta}_{nk} = O_p(1)$, (b) If in addition, $\mathbb{E}_{\gamma_n}[\Vert Z_{i\ell}\Vert^2] = o(1)$, then $n_k^{-1/2}\widehat{\zeta}_{nk} = o_p(1)$, (c) Now suppose (ii) and (iii) do not necessarily hold, but instead $n_k^{r}\mathbb{E}_{\gamma_n}[\Vert Z_{i\ell}\Vert] = o(1)$ for all $\ell$ and some $r > 0$, then $n_k^{r-1}\widehat{\zeta}_{nk} = o_p(1)$.
\end{lem}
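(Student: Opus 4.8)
The plan is to prove all three claims by the same two-moment recipe: bound a moment of the normalized sum and then invoke Markov's inequality, using that $L$ is a fixed finite constant so that, after the triangle-inequality split $\widehat{\zeta}_{nk} = \sum_{\ell=1}^L \big(\sum_{i} Z_{i\ell}\big)$, it suffices to control each inner sum $S_\ell := \sum_{i} Z_{i\ell}$ separately and then add the finitely many pieces. Throughout I would equip the matrix space with the Frobenius norm and its inner product $\langle A, B\rangle = \mathrm{tr}(A'B)$, and I write the sum over $i$ as ranging over the $n_k$ observations of fold $k$, which are i.i.d.\ conditional on $\mathcal{I}_{-nk}$ by condition~(i).

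For part~(a), I would compute the conditional second moment of $n_k^{-1/2} S_\ell$. Conditional mean zero (condition~(ii)) together with conditional independence makes the off-diagonal terms vanish, since for $i \ne j$,
$$ \mathbb{E}_{\gamma_n}\!\left[\langle Z_{i\ell}, Z_{j\ell}\rangle \mid \mathcal{I}_{-nk}\right] = \left\langle \mathbb{E}_{\gamma_n}[Z_{i\ell}\mid \mathcal{I}_{-nk}],\, \mathbb{E}_{\gamma_n}[Z_{j\ell}\mid \mathcal{I}_{-nk}]\right\rangle = 0, $$
so that $\mathbb{E}_{\gamma_n}[\|n_k^{-1/2} S_\ell\|^2 \mid \mathcal{I}_{-nk}] = \mathbb{E}_{\gamma_n}[\|Z_{1\ell}\|^2 \mid \mathcal{I}_{-nk}]$. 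Taking iterated expectations and applying the uniform bound in condition~(iii) gives $\mathbb{E}_{\gamma_n}[\|n_k^{-1/2} S_\ell\|^2] = \mathbb{E}_{\gamma_n}[\|Z_{1\ell}\|^2] \le C$ uniformly in $n$, so Markov's inequality applied to the square yields $n_k^{-1/2} S_\ell = O_p(1)$; summing the $L$ terms proves the claim. For part~(b) the same identity gives $\mathbb{E}_{\gamma_n}[\|n_k^{-1/2} S_\ell\|^2] = \mathbb{E}_{\gamma_n}[\|Z_{1\ell}\|^2] = o(1)$, i.e.\ $L^2$ convergence to zero, hence convergence in probability, and the finite sum is again $o_p(1)$.

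Part~(c) discards both the mean-zero and second-moment hypotheses, so instead of a variance I would bound the first absolute moment directly. By the triangle inequality and the fact that the $Z_{i\ell}$ are identically distributed across $i$,
$$ \mathbb{E}_{\gamma_n}\!\left[\left\| n_k^{r-1}\widehat{\zeta}_{nk}\right\|\right] \le n_k^{r-1}\sum_{\ell=1}^L \sum_{i} \mathbb{E}_{\gamma_n}[\|Z_{i\ell}\|] = n_k^{r}\sum_{\ell=1}^L \mathbb{E}_{\gamma_n}[\|Z_{1\ell}\|] = \sum_{\ell=1}^L n_k^{r}\,\mathbb{E}_{\gamma_n}[\|Z_{1\ell}\|] = o(1), $$
where the last step uses the hypothesis $n_k^r \mathbb{E}_{\gamma_n}[\|Z_{i\ell}\|] = o(1)$ for each of the finitely many $\ell$ together with the cancellation of the $n_k^{-1}$ against the $n_k$ summands. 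This is $L^1$ convergence to zero, so Markov's inequality gives $n_k^{r-1}\widehat{\zeta}_{nk} = o_p(1)$.

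The argument is elementary, and I expect the only delicate points to be bookkeeping rather than substance: verifying that the Frobenius cross terms genuinely vanish (which needs conditional independence and conditional mean zero jointly, not either alone), keeping the conditional and unconditional expectations aligned when passing from condition~(iii) to an unconditional bound via iterated expectations, and tracking the powers of $n_k$ in part~(c) so that the $n_k$ summands exactly absorb the $o(n_k^{-r})$ decay to leave $n_k^{r}\cdot o(n_k^{-r}) = o(1)$.
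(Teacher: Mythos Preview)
Your proposal is correct and follows essentially the same approach as the paper: a second-moment computation exploiting conditional i.i.d.\ and conditional mean zero to kill cross terms, followed by Chebyshev/Markov for parts~(a) and~(b), and a first-moment bound via the triangle inequality plus Markov for part~(c). The only cosmetic difference is that the paper reduces to scalar coordinates $Z_{i\ell mb}$ before bounding, whereas you work directly with the Frobenius norm and its inner product; both routes arrive at the same inequalities.
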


\begin{proof}
	Since $M,L,B$ are all finite, it suffices to consider $\widehat{\zeta}_{nk\ell mb} := \sum_{i \in \mathcal{I}_{-nk}} Z_{i\ell mb}$, where $Z_{i\ell mb}$ is the coordinate $(m,b)$ of $Z_{i\ell}$. 	By the law of iterated expectations, $R_{nk\ell mb} := \mathbb{E}_{\gamma_n}[(\widehat{\zeta}_{nk\ell mb}-\mathbb{E}_{\gamma_n}[\widehat{\zeta}_{nk\ell mb} \mid \mathcal{I}_{-nk}])^2]$ is equal to $\mathbb{E}_{\gamma_n}[\mathbb{E}_{\gamma_n}[(\widehat{\zeta}_{nk\ell mb}-\mathbb{E}_{\gamma_n}[\widehat{\zeta}_{nk\ell mb} \mid \mathcal{I}_{-nk}])^2 \mid \mathcal{I}_{-nk}]]$. Substituting the definition of conditional variance, 	
	\begin{equation}
	\label{eq:decomp_crossfitted}
	 R_{nk\ell mb} = \mathbb{E}_{\gamma_n}\left[\mathbb{V}_{\gamma_n}[\widehat{\zeta}_{nk \ell mb} \mid \mathcal{I}_{-nk}]+\mathbb{E}_{\gamma_n}[\widehat{\zeta}_{nk\ell mb} \mid \mathcal{I}_{-nk}]^2\right].
	\end{equation}	
	The first term of \eqref{eq:decomp_crossfitted} is $O(n_k^{-1})$. By Assumption (i), $\widehat{\zeta}_{nk\ell mb}$ is a sum of $n_k$ variables that are i.i.d. conditional on $\mathcal{I}_{-nk}$, and hence $\mathbb{V}_{\gamma_n}[\widehat{\zeta}_{nk\ell mb} \mid \mathcal{I}_{-nk}]  = n_k \mathbb{V}_{\gamma_n}\left( Z_{i\ell mb} \mid \mathcal{I}_{nk}\right) = n_k \mathbb{E}_{\gamma_n}\left[ Z_{i\ell mb}^2 \mid \mathcal{I}_{nk}\right]$. By the law of iterated expectations, $\mathbb{E}_{\gamma_n}\left[\mathbb{V}_{\gamma_n}[\widehat{\zeta}_{nk\ell mb} \mid \mathcal{I}_{-nk}]\right] = n_k\mathbb{E}_{\gamma_n}\left[Z_{i\ell mb}^2\right]$. The second term of \eqref{eq:decomp_crossfitted} is zero by Assumption (ii).
	
	To prove (a), we apply Chebyshev's inequality $\mathbb{P}(n_k^{-1}\widehat{\zeta}_{nk\ell mb}> t) \le \mathbb{E}_{\gamma_n}[Z_{i\ell mb}^2]/t^2$ for some $t>0$. This shows that $n_k^{-1/2}\widehat{\zeta}_{nk\ell mb} = O_p(1)$. To prove part (b), we use the condition that $\mathbb{E}_{\gamma_n}[Z_{i\ell mb}^2] = o(1)$ to show that $n_k^{-1/2}\widehat{\zeta}_{nk\ell mb} = o_p(1)$. To prove (c), we apply the triangle inequality, $\Vert n_k^{r-1}\widehat{\zeta}_{nk\ell mb} \Vert \le n_k^{r-1} \sum_{i \in \mathcal{I}_{-nk}}\Vert Z_{i\ell mb}\Vert $. By Markov's inequality, $\mathbb{P}(\Vert n_k^{r-1}\widehat{\zeta}_{nk\ell mb} \Vert > t) \le n_k^r\mathbb{E}[\Vert Z_{i\ell mb}\Vert ]/t = o(1)$, hence $n_k^{r-1}\widehat{\zeta}_{nk\ell mb} = o_p(1)$.
	
\end{proof}

\begin{lem}[Derivatives of Influence Function]
	\label{lem:derivative_influence_function}
	Let $\varphi_i(\eta) := \varphi(Y_i,D_i,X_i,\eta)$, for $i \in \mathcal{I}_{nk}$, and $U_i = Y_i - \mathbb{E}[Y_i \mid D_i,X_i]$. Suppose that $\{Y_i,D_i,X_i\}_{i \in \mathcal{I}_{nk}}$ is independent of $\{Y_i,D_i,X_i\}_{i \in \mathcal{I}_{-nk}}$ for all $k \in \{1,\ldots,K\}$, and consider a set of $\eta \in \mathcal{T}$ where the propensity score is bounded in $[\delta,1-\delta]$ for $\delta\in (0,1/2]$. Then (a) $\mathbb{E}\left[\frac{\partial \varphi_i(\eta)}{\partial(\eta_1,\eta_2,\eta_4)'}\mid X= x,\mathcal{I}_{-nk}\right] = [0,0,-2(\tau(x)-\tau_{av})]$ almost surely, (b) $\mathbb{E}\left[ \left\Vert \partial \varphi_i(\eta)/\partial \eta_\ell\right\Vert^4 \right]^{1/4} \le (8/\delta)(\mathbb{E}[\Vert \eta(X_i)\Vert^4]^{1/4} + \mathbb{E}\left[ \left\Vert U_i \right\Vert^4 \right]^{1/4})$, and (c) $\left\Vert \frac{\partial^2 \varphi_i(\tilde{\eta})}{\partial(\eta_1,\eta_2,\eta_4)\partial(\eta_1,\eta_2,\eta_4)'} \right\Vert \le 18 \times \sqrt{3}/\delta$ almost surely, for $\tilde{\eta} \in \mathcal{T}$.

\end{lem}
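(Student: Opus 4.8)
The plan is to isolate the bias-correction bracket
$$A(\eta) := \frac{d(y-\eta_2-\eta_1)}{\eta_3} - \frac{(1-d)(y-\eta_2)}{1-\eta_3},$$
since $\varphi = (\eta_1-\eta_4)^2 + 2(\eta_1-\eta_4)A(\eta)$ and $A$ is affine in $(\eta_1,\eta_2)$ and free of $\eta_4$. Using $\partial_{\eta_1}A = -d/\eta_3$ and $\partial_{\eta_2}A = -d/\eta_3 + (1-d)/(1-\eta_3)$, differentiating once gives
$$\partial_{\eta_1}\varphi = 2(\eta_1-\eta_4)(1 - d/\eta_3) + 2A(\eta),\qquad \partial_{\eta_2}\varphi = 2(\eta_1-\eta_4)\Big(-\tfrac{d}{\eta_3}+\tfrac{1-d}{1-\eta_3}\Big),\qquad \partial_{\eta_4}\varphi = -2(\eta_1-\eta_4) - 2A(\eta).$$
These three expressions are the workhorse for all three parts.

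For part (a), I would first note that because the true $\eta$ is a fixed (nonrandom) function and the fold-$k$ observations are independent of $\mathcal{I}_{-nk}$, the conditional expectation given $(X=x,\mathcal{I}_{-nk})$ collapses to the ordinary conditional expectation given $X=x$ evaluated at the truth. Under Assumption \ref{assump:unconfound_sutva} one has $\mathbb{E}[d(y-\mu_0-\tau)\mid X=x]=p(x)\,\mathbb{E}[Y-\mu_1\mid D=1,X=x]=0$ and the analogous control identity, so $\mathbb{E}[A(\eta)\mid X=x]=0$; moreover $\mathbb{E}[d/p\mid X=x]=1$ and $\mathbb{E}[(1-d)/(1-p)\mid X=x]=1$. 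Substituting these into the gradient, the $\eta_1$ and $\eta_2$ conditional means cancel to zero and the $\eta_4$ derivative leaves $-2(\tau(x)-\tau_{av})$, which is the claim.

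For part (b) I would again evaluate at the true $\eta$ and use the identity $A(\eta) = U_i\big(d/p - (1-d)/(1-p)\big)$, which holds because $y-\mu_0-\tau = U_i$ on $\{d=1\}$ and $y-\mu_0 = U_i$ on $\{d=0\}$; overlap then yields the pointwise bound $|A(\eta)|\le |U_i|/\delta$. Under Assumption \ref{assump:unconfound_sutva}.(ii) the coefficients $|1-d/\eta_3|$ and $|{-d/\eta_3}+(1-d)/(1-\eta_3)|$ are each at most $1/\delta$. Applying Minkowski's inequality to each partial, splitting $\|\tau-\tau_{av}\|_{4}\le \|\tau\|_4+\|\tau_{av}\|_4\le 2\,\mathbb{E}[\|\eta(X_i)\|^4]^{1/4}$ and bounding the $A$-contribution by $\tfrac{2}{\delta}\mathbb{E}[\|U_i\|^4]^{1/4}$, gives the stated $(8/\delta)$ bound (this is generous; the tight constant is $4/\delta$).

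For part (c), the key simplification is that all second derivatives of $A$ vanish, so the Hessian in $(\eta_1,\eta_2,\eta_4)$ has entries that are affine in $d/\eta_3$ and $(1-d)/(1-\eta_3)$ plus integer constants; explicitly the diagonal is $(2-4d/\eta_3,\,0,\,2)$, and the off-diagonals are $\pm 2(-d/\eta_3+(1-d)/(1-\eta_3))$ and $\pm(-2+2d/\eta_3)$. At the intermediate $\tilde\eta\in\mathcal{T}$ overlap still holds, so every entry is bounded in absolute value by $6/\delta$, and a crude bound on the spectral norm in terms of the entries (a $3\times3$ matrix) is at most $18\sqrt{3}/\delta$. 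I expect the only genuine care points to be the moment cancellations in part (a) — getting $\mathbb{E}[A(\eta)\mid X=x]=0$ and the two propensity identities exactly right, and verifying that conditioning on $\mathcal{I}_{-nk}$ is innocuous — together with consistent constant bookkeeping across the parts; the differentiation itself is mechanical once $A$ is isolated.
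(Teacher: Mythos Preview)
Your proposal is correct and follows essentially the same route as the paper: both compute the gradient of $\varphi$ explicitly, verify the conditional mean identities for part (a), rewrite the bracket $A(\eta)$ in terms of $U_i$ and apply Minkowski for part (b), and bound each Hessian entry by $6/\delta$ and then the operator norm by $18\sqrt{3}/\delta$ for part (c). Your organization via isolating $A(\eta)$ is slightly cleaner, and your Hessian signs (e.g., $\partial_{\eta_4\eta_4}\varphi=+2$) are in fact correct where the paper has a couple of harmless sign typos, but the substance is identical.
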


\begin{proof}
	
	Part (a): The jacobian of $\varphi_i(\eta)$, $\frac{\partial \varphi_i(\eta)}{\partial(\eta_1,\eta_2,\eta_4)}$ is
	\begin{align*}
		&\begin{pmatrix} 
			2(\tau(X_i)-\tau_{av})\left[1-\frac{D_i}{p(X_i)} \right]+
			2 \left[ \frac{D_i(Y_i - \mu_0(X_i)-D_i\tau(X_i))}{p(X_i)}-\frac{(1-D_i)(Y_i - \mu_0(X_i))}{1-p(X_i)} \right] \\
			2(\tau(X_i)-\tau_{av})\left[ - \frac{D_i}{p(X_i)} + \frac{1-D_i}{1-p(X_i)} \right] \\
			-2(\tau(X_i)-\tau_{av}) -2\left[ \frac{D_i(Y_i - \mu_0(X_i)-D_i\tau(X_i))}{p(X_i)}-\frac{(1-D_i)(Y_i - \mu_0(X_i))}{1-p(X_i)} \right]
		\end{pmatrix}
	\end{align*}
	Part (a): Substituting $\mathbb{E}[D_i \mid X_i = x,\mathcal{I}_{-nl}] = p(x)$ and $\mathbb{E}[D_iY_i \mid X=x,\mathcal{I}_{-nk}] = p(x)(\mu_0(x)+\tau(x))$, then $\mathbb{E}\left[\partial \varphi_i(\eta)/\partial(\eta_1,\eta_2,\eta_4)'\mid X= x,\mathcal{I}_{-nk}\right] = [0,0,-2(\tau(x)-\tau)]'$. \\

	Part (b): By construction, $U_i = Y_i - \mu_0(X_i)-D_i\tau(X_i)$. Then the jacobian simplifies to $\partial \varphi_i(\eta)/\partial(\eta_1,\eta_2,\eta_4)' = G_{1i} + G_{2i}$, where $G_{1i} := 2(\tau(X_i)-\tau_{av})\times [(1-D_i/p(X_i)),(-D_i/p(X_i)+(1-D)/(1-p(X_i))),-1]$ and $G_{2i} := 2(D_i/p(X_i)-(1-D_i)/(1-p(X_i)))\times [U_i,0,-U_i]$. Since $p(x) \in [\delta,1-\delta]$, $\Vert G_{1i} \Vert \le (4/\delta)\Vert \tau(X_i) - \tau_{av}\Vert \le (8/\delta)\Vert \eta(x) \Vert $ and $\Vert G_{2i} \Vert \le (8/\delta) \Vert U_i \Vert$. Therefore, by the triangle inequality, $\mathbb{E}\left[ \left\Vert \partial \varphi_i(\eta)/\partial \eta_\ell\right\Vert^4 \right]^{1/4} \le (8/\delta)(\mathbb{E}[\Vert \eta(X_i)\Vert^4]^{1/4} + \mathbb{E}\left[ \left\Vert U_i \right\Vert^4\right])$.

	Part (c): The hessian of $\varphi_i(\tilde{\eta})$, which I denote by $H(\tilde{\eta})$, is symmetric and
	\begin{align*}
		&\frac{\partial^2 \varphi_i(\tilde{\eta})}{\partial(\eta_1,\eta_2,\eta_4)\partial(\eta_1,\eta_2,\eta_4)'} =
		\begin{pmatrix} 
			2\left[1-\frac{D_i}{\tilde{p}(X_i)} \right] -2 \frac{D_i}{\tilde{p}(X_i)} & \cdot & \cdot \\
			2\left[ - \frac{D_i}{\tilde{p}(X_i)} + \frac{1-D_i}{1-\tilde{p}(X_i)} \right] & 0 & \cdot  \\
			2\left[ 1 - \frac{D_i}{\tilde{p}(X_i)} \right] & -2 \left[ \frac{D_i}{\tilde{p}(X_i)}-\frac{(1-D_i)}{1-\tilde{p}(X_i)} \right] & -  2
		\end{pmatrix}
	\end{align*}
	Since $\tilde{p}(x) \in [\delta,1-\delta]$, then $ \Vert D_i/\tilde{p}(X_i) \Vert \le 1/\delta$, $\Vert 1-D_i/\tilde{p}(X_i) \Vert \le (1+1/\delta) \le 2/\delta$, and $\Vert D_i/\tilde{p}(X_i) - (1-D_i)/(1-\tilde{p}(X_i)) \Vert \le 2/\delta$. This means that each of the entries  of $H(\tilde{\eta})$ is bounded by $6/\delta$. By Lemma \ref{lem:boundoperatornorm}, $\Vert H(\tilde{\eta}) \Vert \le 3 \times (6/\delta) \times \sqrt{3} = 18 \times \sqrt{3}/\delta$.

\end{proof}

\begin{lem}[Decomposition of Nuisance Functions]
	\label{lem:decomposition_nuisance}
	Define $\widehat{\eta}_{-k,\widehat{\theta}_{nk}}(x)$ as in \eqref{eq:pseudonuisancefunctions_dml}, $\theta_n := (0,1,\tau_{n,av},1)$, $(\widehat{\lambda}_{nk}-\lambda_{nk}):= [(\widehat{\theta}_{nk}-\theta_{n}),\widehat{\tau}_{nk,av}(\widehat{\theta}_{nk}-\theta_{n}),(\widehat{\tau}_{nk,av}-\tau_{n,av})]'$, and let $\{e_j\}_{j=1}^4$ be $4 \times 1$ vectors with 1 in the $j^{th}$ coordinate and zero otherwise. Then there exist $(x,\mathcal{I}_{-nk})-$measurable matrices $\Psi_{-nk}(x)$, $\Delta_{-nk}(x)$, such that
	\begin{equation}
	\label{eq:decomp_estimated_nuisance_lemstatement} \widehat{\eta}_{-k,\widehat{\theta}_{nk}}(x) - \eta(x) = \Psi_{-nk}(x)(\widehat{\lambda}_{nk}-\lambda_{nk}) + \Delta_{-nk}(x),
	\end{equation}
	and for some constant $C< \infty$, (a) $e_3'[\widehat{\eta}_{-k,\widehat{\theta}_{nk}}(x) - \eta(x)] = 0$, (b) $e_3'\Delta_{-nk}(x) = e_4'\Delta_{-nk}(x) = 0$,  (c)  $e_4'\Psi_{-nk} = c$, for $c \in \mathbb{R}^{9}$, (d) $\Vert \Psi_{-nk}(x) \Vert \le C \times \left[1+ 2\Vert \widehat{\eta}_{-k}(x) \Vert\right]$ a.s., (e) $\Vert \Delta_{-nk}(x) \Vert \le C \times \Vert \widehat{\eta}_{-k}(x) - \eta(x) \Vert$ a.s.
	
\end{lem}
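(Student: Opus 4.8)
The plan is to prove the decomposition by an explicit algebraic expansion of $\widehat{\eta}_{-k,\widehat{\theta}_{nk}}(x)$ from \eqref{eq:pseudonuisancefunctions_dml}, and then to split the error $\widehat{\eta}_{-k,\widehat{\theta}_{nk}}(x)-\eta(x)$ into a piece that is linear in the regression error $\widehat{\theta}_{nk}-\theta_n$ and a piece that involves only the first-stage prediction error. First I would record the explicit forms: since $\widehat{W}(x,1)-\widehat{W}(x,0) = [0,0,1,\widehat{S}_{-k}(x)]'$ and $\widehat{W}(x,0) = [1,\widehat{M}_{-k}(x),-p_n(x),-p_n(x)\widehat{S}_{-k}(x)]'$ with $\widehat{M}_{-k}(x) = \widehat{\mu}_{0,-k}(x)+p_n(x)\widehat{\tau}_{-k}(x)$, the CATE and $\mu_0$ rows of $\widehat{\eta}_{-k,\widehat{\theta}_{nk}}(x)$ equal $\widehat{\beta}_{1nk}+\widehat{\beta}_{2nk}\widehat{S}_{-k}(x)$ and $\widehat{W}(x,0)'\widehat{\theta}_{nk}$, the third row is $p_n(x)$, and the fourth is $\widehat{\tau}_{nk,av}$.

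The central step is the additive split $\widehat{\eta}_{-k,\widehat{\theta}_{nk}}(x)-\eta(x) = [\widehat{\eta}_{-k,\widehat{\theta}_{nk}}(x)-\widehat{\eta}_{-k,\theta_n}(x)] + [\widehat{\eta}_{-k,\theta_n}(x)-\eta(x)]$, where $\widehat{\eta}_{-k,\theta_n}$ replaces $\widehat{\theta}_{nk}$ by $\theta_n = (0,1,\tau_{n,av},1)$ while keeping the estimated regressors. The first bracket equals $(\widehat{W}(x,1)-\widehat{W}(x,0))'(\widehat{\theta}_{nk}-\theta_n)$ in the CATE row and $\widehat{W}(x,0)'(\widehat{\theta}_{nk}-\theta_n)$ in the $\mu_0$ row, hence is linear in $\widehat{\theta}_{nk}-\theta_n$. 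Because $\widehat{S}_{-k}(x) = \widehat{\tau}_{-k}(x)-\widehat{\tau}_{nk,av}$ multiplies $\widehat{\theta}_{nk}-\theta_n$, expanding the product generates cross terms of the form $\widehat{\tau}_{nk,av}(\widehat{\theta}_{nk}-\theta_n)$; I would treat these, together with $\widehat{\theta}_{nk}-\theta_n$ itself and $\widehat{\tau}_{nk,av}-\tau_{n,av}$, as the nine coordinates of $\widehat{\lambda}_{nk}-\lambda_{nk}$, so that every coefficient left in front of them involves only $\widehat{\tau}_{-k}(x), \widehat{M}_{-k}(x), p_n(x)$ and constants. Collecting these coefficients, which are deterministic given $(x,\mathcal{I}_{-nk})$, into the $4\times 9$ matrix $\Psi_{-nk}(x)$ gives the first term of \eqref{eq:decomp_estimated_nuisance_lemstatement}; the fourth row of $\Psi_{-nk}$ selects only the constant-in-$x$ coordinate $\widehat{\tau}_{nk,av}-\tau_{n,av}$, which yields claim (c).

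For the second bracket I would substitute $\theta_n$ and simplify. The CATE row of $\widehat{\eta}_{-k,\theta_n}$ reduces to $\tau_{n,av}+\widehat{S}_{-k}(x) = \widehat{\tau}_{-k}(x) - (\widehat{\tau}_{nk,av}-\tau_{n,av})$ and the $\mu_0$ row to $\widehat{\mu}_{0,-k}(x) + p_n(x)(\widehat{\tau}_{nk,av}-\tau_{n,av})$; the $(\widehat{\tau}_{nk,av}-\tau_{n,av})$ pieces are reassigned to $\Psi_{-nk}(x)$, consistently with the previous paragraph, while the genuine first-stage residuals $\widehat{\tau}_{-k}(x)-\tau(x)$ and $\widehat{\mu}_{0,-k}(x)-\mu_0(x)$ remain. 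This is exactly where Lemma \ref{lem:projection_weak} enters: with the true functions $M(x) = \mu_0(x)+p(x)\tau(x)$ and $S(x)=\tau(x)-\tau_{av}$ the parameter $\theta_n$ reproduces $\mu_d(x)$, so after cancellation no residual other than these first-stage errors survives. Setting $\Delta_{-nk}(x) = (\widehat{\tau}_{-k}(x)-\tau(x),\ \widehat{\mu}_{0,-k}(x)-\mu_0(x),\ 0,\ 0)'$ delivers the second term. Claims (a) and (b) then follow, since the third component is identically $p_n(x)-p(x)=0$ as the propensity is known, and the fourth component is carried entirely by $\Psi_{-nk}$.

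The two norm bounds follow by inspection of the collected coefficients: every entry of $\Psi_{-nk}(x)$ is a bounded propensity factor $p_n(x)\in[\delta,1-\delta]$ times either a constant or one of $\widehat{\tau}_{-k}(x), \widehat{M}_{-k}(x)$, all controlled by $1+\|\widehat{\eta}_{-k}(x)\|$, giving claim (d); and $\Delta_{-nk}(x)$ contains only the two first-stage residuals, giving claim (e). The main obstacle is purely bookkeeping: keeping the nine-coordinate vector $\widehat{\lambda}_{nk}-\lambda_{nk}$ consistent across both rows and both brackets so that the cross terms $\widehat{\tau}_{nk,av}(\widehat{\theta}_{nk}-\theta_n)$ are absorbed into $\Psi_{-nk}(x)$ with coefficients genuinely free of the in-fold average $\widehat{\tau}_{nk,av}$. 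This linearization is precisely what makes $\Psi_{-nk}(x)$ measurable with respect to $(x,\mathcal{I}_{-nk})$ and is what the moment and conditional-mean arguments in the proof of Theorem \ref{thm:rootn_consistency_and_efficiency} rely on.
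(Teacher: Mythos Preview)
Your proposal is correct and follows essentially the same route as the paper: the additive split at $\theta_n$, the expansion of $\widehat{S}_{-k}(x)=\widehat{\tau}_{-k}(x)-\widehat{\tau}_{nk,av}$ to separate the $\widehat{\tau}_{nk,av}(\widehat{\theta}_{nk}-\theta_n)$ cross terms into the nine-coordinate vector, and the identification of $\Delta_{-nk}(x)$ with the first-stage residuals $(\widehat{\tau}_{-k}(x)-\tau(x),\widehat{\mu}_{0,-k}(x)-\mu_0(x),0,0)'$ all match the paper's construction. The paper simply packages the first step via the matrix $\widehat{B}(x)=\Psi_{1,-nk}(x)+\widehat{\tau}_{nk,av}\Psi_{2,-nk}(x)$ and then appends $\Psi_{3,-nk}(x)$ for the $(\widehat{\tau}_{nk,av}-\tau_{n,av})$ column, arriving at exactly the $\Psi_{-nk}(x)$ and $\Delta_{-nk}(x)$ you describe.
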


\begin{proof}[Proof of Lemma \ref{lem:decomposition_nuisance}]
	Define $\widehat{B}(x) := [e_1 (\widehat{W}(x,1) -\widehat{W}(x,0))' +e_2 \widehat{W}(x,0)']$.
	\begin{align*}
		\widehat{\eta}_{-k,\widehat{\theta}_{nk}}(x)' &=  \widehat{B}(x) + e_3 p_n(x) + e_4 \widehat{\tau}_{nk,av}, \\ 
		\widehat{\eta}_{-k,\theta_{n}}(x) &=  \widehat{B}(x)\theta_{n} + e_3 p_n(x) + e_4 \widehat{\tau}_{nk,av}, \\ 
		\eta(x) &= e_1 \tau_n(x) + e_2 \mu_{0n}(x) + e_3 p_n(x) + e_4 \tau_{n,av}.
	\end{align*}
	The estimation error can be decomposed as
	\begin{equation} \widehat{\eta}_{-k,\widehat{\theta}_{nk}}(x) - \eta(x) = \widehat{B}(x)(\widehat{\theta}_{nk}-\theta_{n}) + e_4[\widehat{\tau}_{nk,av}-\tau_{n,av}] + [\widehat{B}(x)'\theta_{n} - e_1\tau_n(x)+e_2\mu_{0n}(x)].
	\label{eq:decomp_etahat_twodecomps}
	\end{equation}
	By definition $\widehat{B}(x) = \Psi_{1,-nk}(x)+\widehat{\tau}_{nk,av}\Psi_{2,-nk}(x)$, with auxiliary matrices $\Psi_{1,-nk} := e_1[0,0,1,\widehat{\tau}_{-k}(x)]+e_2 [1,M_{-k}(X), -p_n(x), -p_n(x)\widehat{\tau}_{-k}(x)] $ and $\Psi_{2,-nk} := e_1[0,0,0,-1] + e_2 [0,0,0,p_n(x)]$. Substituting the  parameter $\theta_n := (0,1,\tau_{n,av},1)$ and grouping common terms, $\widehat{B}(x)'\theta_{n} -e_1 \tau_n(x) - e_2\mu_{0n}(x)= e_1 [(\widehat{\tau}_{-k}(x)-\tau_n(x)) + (\tau_{n,av}-\widehat{\tau}_{nk,av})]+ e_2[M_{-k}(x) -\mu_{0n}(x) -p_n(x)\widehat{\tau}_{-k}(x)+p_n(x)(\widehat{\tau}_{nk,av}-\tau_{n,av})]$. We can simplify the second term of this expression by substituting $M_{-k}(x) = \widehat{\mu}_{0,-k}(x) + p_n(x)\widehat{\tau}_{-k}(x)$, which produces $e_2[(\widehat{\mu}_{0n}(x) -\mu_{0n}(x)) +p_n(x) (\widehat{\tau}_{nk,av}-\tau_{n,av})]$. Consequently, the second and third terms of \eqref{eq:decomp_etahat_twodecomps} can be written as $\Psi_{3,-nk}(x)(\widehat{\tau}_{nk,av}-\tau_{n,av}) + \Delta_{-nk}(x)$, where $\Psi_{3,-nk}(x) := -e_1 +e_2p_n(x) + e_4$ and $\Delta_{-nk}(x) := e_1(\tau_{-k}(x)-\tau_n(x)) + e_2 (\widehat{\mu}_{-k}(x)-\mu_{0n}(x))$. 
	
	Define $\Psi_{-nk}(x) := [\Psi_{1,-nl}(x),\Psi_{2,-nk}(x),\Psi_{3,-nk}(x)]$ and the parameter error as $(\widehat{\lambda}_{nk}-\lambda_{nk}) := [(\widehat{\theta}_{nk}-\theta_{n})',\widehat{\tau}_{nk,av}(\widehat{\theta}_{nk}-\theta_{n})',(\widehat{\tau}_{nk,av}-\tau_{n,av})]'$. Combining the results,
	$$ \widehat{\eta}_{-k,\widehat{\theta}_{nk}}(x) - \eta(x) = \Psi_{-nk}(x)(\widehat{\lambda}_{nk}-\lambda_{nk}) + \Delta_{-nk}(x). $$
	Measurability with respect to $(x,\mathcal{I}_{-nk})$ can be verified by inspection.  Property (a) follows from the fact that the propensity score is known, and (b) because $\Psi_{-nk}(x)$ and $\Delta_{-nk}(x)$ depend on vectors $e_1,e_2$, which are orthogonal to $e_3,e_4$. Property (c) follows by the fact that $e_4'\Psi_{-nk} = [\uline{0}_{1 \times 8},1]$. To prove part (d), we apply Lemma \ref{lem:boundoperatornorm} to show that $\Psi_{-nk}(x)$ is bounded by $9\sqrt{4}$ times the largest absolute value of the matrix. Since $p_n(x) \le 1$, then the largest value is bounded by $1+\Vert \widehat{\mu}_{0,-k}(x) \Vert + \Vert \widehat{\tau}_{-k}(x) \Vert$ which is less than $1+ 2\Vert \widehat{\eta}_{-k}(x) \Vert$. This means that $\Vert \Psi_{-nk}(x)\Vert \le 12 \times \sqrt{4} \times \left[1+ 2\Vert \widehat{\eta}_{-k}(x) \Vert\right]$. To prove, part (e) we once again apply Lemma \ref{lem:boundoperatornorm}. The quantity $\Delta_{-nk}(x)$ is a $4 \times 1$ vector, whose individual entries are bounded by $\Vert \widehat{\mu}_{0,-k}-\mu_{0n}(x) \Vert + \Vert \widehat{\tau}_{-k}(x)-\tau_n(x) \Vert$, which is weakly less than $2 \Vert \widehat{\eta}_{-k}(x) - \eta(x) \Vert$.
	
\end{proof}

\begin{lem}[Convergence of regression parameters]
	\label{lem:convergence_regression_parameters}
	Consider a sequence of distributions $\{\gamma_n\}_{n=1}^{\infty}$ that satisfy Assumptions \ref{assump:unconfound_sutva},  \ref{assump:momentbounds_dml}, \ref{assump:boundsquantities}, \ref{assump:randomsampling}, \ref{assump:convergencenuisance}, and \ref{assump:regularityconditions}, and that $V_{\tau n} \to V_{\tau} > 0$. Define oracle regressors, $W_i^* :=[1,\mu_{0n}(X_i) + p_n(X_i)\tau_n(X_i),(D_i-p_n(x_i)), (D_i-p_n(X_i))(\tau_n(X_i)-\tau_{av,n})]'$ and $\theta_n := \mathbb{E}_{\gamma_n}[\lambda_n(X_i)W_i^*W_i^{*'}]^{-1}\mathbb{E}_{\gamma_n}[\lambda_n(X_i)W_i^*Y_i]$.  Then (a)  $\widehat{\tau}_{nk,av}-\tau_{n,av} = o_p(n_k^{-1/4})$, (b) $\theta_n = (0,1,\tau_{n,av},1)'$, and (c) $\widehat{\theta}_{nk} - \theta_{n} = o_p(n_k^{-1/4})$.
\end{lem}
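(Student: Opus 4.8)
The plan is to establish the three claims in the order (b), (a), (c), since part (b) pins down the target $\theta_n$ and feeds into the others. For part (b), observe that the oracle regressors $W_i^*$ are precisely those generated by the efficient choices $M(x)=\mu_{0n}(x)+p_n(x)\tau_n(x)$ and $S(x)=\tau_n(x)-\tau_{n,av}$ of Lemma \ref{lem:projection_weak}, evaluated at the true nuisance functions. Because $V_{\tau n}\to V_\tau>0$, the centered slope variable has variance $\mathbb{V}_{\gamma_n}(S(X))=V_{\tau n}$ bounded away from zero for large $n$, so the population weighted projection has a unique minimizer; Lemma \ref{lem:projection_weak} identifies it as $\theta_n=(0,1,\tau_{n,av},1)'$, which is exactly claim (b). The same argument, together with Assumption \ref{assump:unconfound_sutva}.(ii) and Assumption \ref{assump:momentbounds_dml}.(i)--(ii) for the upper block and $V_{\tau n}\to V_\tau>0$ for the lower block, shows $A_n:=\mathbb{E}_{\gamma_n}[\lambda_n(X_i)W_i^*W_i^{*\prime}]$ is invertible with eigenvalues bounded away from zero and from above for all large $n$.

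For part (a), decompose $\widehat{\tau}_{nk,av}-\tau_{n,av}=A_{n1}+A_{n2}$, where $A_{n1}:=\tfrac{1}{n_k}\sum_{i\in\mathcal{I}_{nk}}\big(\widehat{\tau}_{-k}(X_i)-\mathbb{E}_{\gamma_n}[\widehat{\tau}_{-k}(X)\mid\mathcal{I}_{-nk}]\big)$ and $A_{n2}:=\mathbb{E}_{\gamma_n}[\widehat{\tau}_{-k}(X)-\tau_n(X)\mid\mathcal{I}_{-nk}]$. Conditional on $\mathcal{I}_{-nk}$ the function $\widehat{\tau}_{-k}$ is deterministic and the summands in $A_{n1}$ are i.i.d.\ with mean zero and conditional variance bounded by $\mathbb{E}_{\gamma_n}[\widehat{\tau}_{-k}(X)^2\mid\mathcal{I}_{-nk}]\le 1/\delta$ by Assumption \ref{assump:momentbounds_dml}.(v), so $A_{n1}=O_p(n_k^{-1/2})=o_p(n_k^{-1/4})$ (this is precisely Lemma \ref{lem:convergence_standardizedsums}.(a) applied conditionally, then integrated via iterated expectations). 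For $A_{n2}$, Jensen's inequality gives $|A_{n2}|\le\sqrt{\mathbb{E}_{\gamma_n}[(\widehat{\tau}_{-k}(X)-\tau_n(X))^2\mid\mathcal{I}_{-nk}]}=:\omega_{-nk}$, and since $\mathbb{E}_{\gamma_n}[\omega_{-nk}^2]=\omega(\gamma_n)^2=o(n_k^{-1/2})$ by Assumption \ref{assump:convergencenuisance}, Markov's inequality yields $\omega_{-nk}=o_p(n_k^{-1/4})$. Combining the two terms gives claim (a).

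For part (c), set $\widehat{A}_{nk}:=\tfrac{1}{n_k}\sum_{i\in\mathcal{I}_{nk}}\lambda(X_i)\widehat{W}_i\widehat{W}_i'$ and $\widehat{b}_{nk}:=\tfrac{1}{n_k}\sum_{i\in\mathcal{I}_{nk}}\lambda(X_i)\widehat{W}_iY_i$, with $b_n:=\mathbb{E}_{\gamma_n}[\lambda_n(X_i)W_i^*Y_i]=A_n\theta_n$. Using $b_n=A_n\theta_n$ and the identity
$$\widehat{\theta}_{nk}-\theta_n=\widehat{A}_{nk}^{-1}\big[(\widehat{b}_{nk}-b_n)-(\widehat{A}_{nk}-A_n)\theta_n\big],$$
it suffices (since $\widehat{A}_{nk}^{-1}=O_p(1)$ once $\widehat{A}_{nk}-A_n=o_p(1)$, by continuity of inversion and the bounded inverse of $A_n$) to show $\widehat{A}_{nk}-A_n=o_p(n_k^{-1/4})$ and $\widehat{b}_{nk}-b_n=o_p(n_k^{-1/4})$. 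I would split each into a pure sampling piece, in which $\widehat{W}_i$ is replaced by the oracle $W_i^*$ and which is $O_p(n_k^{-1/2})$ by a conditional law of large numbers under the fourth-moment bounds of Assumptions \ref{assump:momentbounds_dml} and \ref{assump:regularityconditions}, plus a generated-regressor piece collecting all terms with a factor $(\widehat{W}_i-W_i^*)$. Coordinatewise, $\widehat{W}_i-W_i^*$ is nonzero only in the second and fourth components, and (using part (a) for the centering and the known propensity score) its conditional $L_2$ norm is controlled by $\mathbb{E}_{\gamma_n}[\|\widehat{\eta}_{-k}(X)-\eta(X)\|^2\mid\mathcal{I}_{-nk}]+(\widehat{\tau}_{nk,av}-\tau_{n,av})^2=o_p(n_k^{-1/2})$ by Assumptions \ref{assump:convergencenuisance} and \ref{assump:regularityconditions}.(iv) and part (a); hence $\|\widehat{W}_i-W_i^*\|_{L_2}=o_p(n_k^{-1/4})$. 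A Cauchy--Schwarz bound then renders every cross term $o_p(n_k^{-1/4})$ and every quadratic term $o_p(n_k^{-1/2})$, and the analogous argument handles $\widehat{b}_{nk}-b_n$ using $\mathbb{E}_{\gamma_n}[\|Y_i\|^4]<\infty$. The main obstacle is this generated-regressor control: it is exactly where the cross-fitting independence of $\widehat{\eta}_{-k}$ from fold $k$ is indispensable, because it lets me treat $\widehat{\eta}_{-k}$ as deterministic under $\mathbb{E}_{\gamma_n}[\,\cdot\mid\mathcal{I}_{-nk}]$ and invoke Lemma \ref{lem:convergence_standardizedsums} to convert conditional moment bounds into the unconditional $n_k^{-1/4}$ rate, with the $\sqrt{n_k}\,\mathbb{E}_{\gamma_n}[\|\widehat{\eta}_{-k}-\eta\|^2]=o(1)$ condition providing exactly the slack needed.
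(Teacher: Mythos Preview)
Your proposal is correct and follows essentially the same route as the paper's proof: part (b) via Lemma \ref{lem:projection_weak}, part (a) by splitting into a mean-zero sampling piece of order $O_p(n_k^{-1/2})$ plus a first-stage error controlled by $\omega(\gamma_n)$, and part (c) by the identity $\widehat{\theta}_{nk}-\theta_n=\widehat{A}_{nk}^{-1}[(\widehat{b}_{nk}-b_n)-(\widehat{A}_{nk}-A_n)\theta_n]$ together with a sampling/generated-regressor decomposition and Cauchy--Schwarz. The only notable bookkeeping differences are that in (a) the paper centers at $\tau_n(X_i)$ rather than at the conditional mean of $\widehat{\tau}_{-k}$, and in (c) the paper explicitly separates the centering error $\widehat{\tau}_{nk,av}-\tau_{n,av}$ from the $\mathcal{I}_{-nk}$-measurable part of $\widehat{W}_i-W_i^*$ before bounding, whereas you absorb it into a single pointwise bound; both treatments are equivalent and rely on the same cross-fitting independence and rate conditions.
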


\begin{proof}
	Part (a): Decompose, $n_k^{1/4}(\widehat{\tau}_{nk,av} - \tau_{av,n}) =  n_k^{-1/4}\left[ \frac{1}{\sqrt{n_k}}\sum_{i \in \mathcal{I}_{nk}} \tau_n(X_i)-\tau_{n,av}\right] + n_k^{1/4-1}\sum_{i \in \mathcal{I}_{nk}} [\widehat{\tau}_{-k}(X_i) - \tau_{n}(X_i)]$. The first term is a centered random variable that is $o_p(1)$. By Assumption \ref{assump:regularityconditions}.(iv), and the Cauchy-Schwartz inequality, $\mathbb{E}_{\gamma_n}[n_k^{1/4}\Vert \widehat{\tau}_{-k}(X_i) - \tau_n(X_i)] \Vert \le \mathbb{E}_{\gamma_n}[n_k^{1/2}\Vert \widehat{\eta}_{-k}(X_i) -\eta(X_i)\Vert^2]^{1/2} \to o(1)$. By applying Lemma \ref{lem:convergence_standardizedsums}.(c),  $n_k^{1/4}(\widehat{\tau}_{nk,av} - \tau_{av,n}) = o_p(1)$.
	
	Part (b): For given $\{X_i = x,D_i = d\}$, $W_i^{*'}(0,1,\tau_{n_av},1)' = \mu_{d}(x)$. Therefore, by applying Lemma \ref{lem:projection_weak}, $\theta_n = (0,1,\tau_{n,av},1)'$.

	Parts (c):  Define $\widehat{Q}_{ww} :=  \frac{1}{n_k}\sum_{i \in \mathcal{I}_{nk}} \lambda_n(X_i)\widehat{W}_{i}\widehat{W}_i'$, $Q_{ww}:= \mathbb{E}_{\gamma_n}[\lambda_n(X_i) W_i^*W_i^{*'}]$, $M_n(x) = \mu_{0n}(x) + p_n(x)\tau_n(x)$. Following similar derivations to Lemma \ref{lem:designmatrix_dml},
	\begin{equation}
	\label{eq:designmatrix_ww}
	Q_{ww} =  \begin{pmatrix} \mathbb{E}_{\gamma_n}[\lambda(X_i)] & \mathbb{E}_{\gamma_n}[\lambda(X_i)(\mu_{n}(X_i))] & 0 & 0 \\
	\mathbb{E}_{\gamma_n}[\lambda(X_i)(\mu_{n}(X_i))] & \mathbb{E}_{\gamma_n}[\lambda(X_i)(\mu_{n}(X_i))^2] & 0 & 0 \\
	0 & 0 & 1 & 0\\ 
	0 & 0 & 0 & V_{\tau n}\\ \end{pmatrix}.
	\end{equation}
	The upper left block has bounded eigenvalues by Assumption \ref{assump:momentbounds_dml}.(i) and $V_{\tau n}$ is asymptotically bounded. Therefore $Q_{ww}$ is positive definite with bounded eigenvalues. Furthermore, $\widehat{Q}_{ww} - Q_{ww}$ can be decomposed as:	
	\begin{equation}
	\left[ \frac{1}{n_k}\sum_{i \in \mathcal{I}_{nk}}\lambda_n(X_i)W_i^*W_i^{*'} - \mathbb{E}_{\gamma_n}[\lambda_n(X_i)W_i^*W_i^{*'}] \right] + \left[ \frac{1}{n_k}\sum_{i \in \mathcal{I}_{nk}} \lambda_n(X_i)(\widehat{W_i}\widehat{W}_i' -  \widehat{W_i}^*\widehat{W}_i^{*'}) \right].
	\label{eq:decomp_Wtrue_times_Wtrue_proof}
	\end{equation}
	The first term of \eqref{eq:decomp_Wtrue_times_Wtrue_proof} is an average of mean-zero random variables and bounded variance, then it is $O_p(n_k^{-1/2}) = o_p(n_k^{-1/4})$. To prove that it has bounded variance, apply Lemma \ref{lem:boundoperatornorm}, then $\Vert W_i^*\Vert \le \sqrt{4}(1+ \Vert \mu_{0n}(X_i) \Vert + \Vert \tau_n(X_i) \Vert) \le \sqrt{4}(1+ 2\Vert \eta(X_i) \Vert)$. Since $p_n(x) \in [\delta,1-\delta]$ and $\lambda_n(X_i) = [p_n(X_i)(1-p_n(X_i))]^{-1}$, then $\mathbb{E}_{\gamma_n}[\Vert \lambda_n(X_i) W_i^*\Vert^2]^{1/2} \le (1/\delta^2) \mathbb{E}_{\gamma_n}[\Vert W_i^*\Vert^4]^{1/4} \le (\sqrt{4}/\delta^2)(1+2 \mathbb{E}_{\gamma_n}[\Vert \eta(X_i) \Vert^4]^{1/4})$, which is bounded by Assumption \ref{assump:regularityconditions}.(ii).
	
	To bound the second term of \eqref{eq:decomp_Wtrue_times_Wtrue_proof}, we apply the triangle inequality, $\Vert\lambda_n(X_i) (\widehat{W}_i\widehat{W}_i' - W_i^*W_i^{*'})\Vert \le (1 / \delta^2)\widehat{\phi}_i$, where $\widehat{\phi}_i := 2 \Vert W_i^{*'}\Vert \  \Vert \widehat{\zeta}_i \Vert + \Vert \widehat{\zeta}_i \Vert^2$ and $\widehat{\zeta}_i := \widehat{W}_{i} - W_i^*$.  Our goal is to show that $\frac{1}{n_k}\sum_{i \in \mathcal{I}_{-nk}} \widehat{\phi}_i  = o_p(n_k^{-1/4})$. Let $e_\ell$ be a $4 \times 1$ vector with one in the $\ell^{th}$ entry and zero otherwise. We can further decompose  $\widehat{\zeta}_i =\widehat{\Delta}_{nk} \widehat{\zeta}_{-nk,1}(X_i) +  \widehat{\zeta}_{-nk,2}(X_i) $, into components that map into our assumptions: $\widehat{\Delta}_{nk} := (\widehat{\tau}_{nk,av}-\tau_{n,av})$, $ \widehat{\zeta}_{-nk,1}(X_i) := e_4 (D_i-p_n(X_i))$ and $ \widehat{\zeta}_{-nk,2}(X_i) := e_1(M_{-k}(X_i)-M_n(X_i)) + e_2(D_i-p_n(X_i))(\widehat{\tau}_{-k}(X_i)-\tau_n(X_i))$. By construction, $\Vert \zeta_{-nk,1}(X_i) \Vert  \le 1$. Applying the triangle inequality and grouping terms, $\widehat{\phi}_i \le \Vert \widehat{\Delta}_{nk}\Vert ^2 + \Vert \widehat{\Delta}_{nk} \Vert \widehat{\phi}_{i1} + \widehat{\phi}_{i2}$, where $\widehat{\phi}_{i1} := ( 2 \Vert W_i^*\Vert + 2 \Vert \widehat{\zeta}_{-nk,2} \Vert)$, and $\widehat{\phi}_{i2} :=  (2 \Vert W_i^*\Vert \ \Vert \widehat{\zeta}_{-nk,2}(X_i) \Vert + \Vert \widehat{\zeta}_{-nk,2}(X_i) \Vert^2 )$. By Assumption \ref{assump:regularityconditions}.(iv), $ \mathbb{E}_{\gamma_n}[\Vert \widehat{\zeta}_{-nk,2}(X_i) \Vert^2]^{1/2} \le 3 \mathbb{E}_{\gamma_n}[\Vert \widehat{\eta}_{-k}(X_i) - \eta(X_i) \Vert^2]^{1/2} = o(n_k^{-1/4})$. Therefore, by the Cauchy-Schwarz inequality, $\mathbb{E}_{\gamma_n}[\widehat{\phi}_{i\ell}^2]^{1/2} = o(n_k^{-1/4})$ for $\ell \in \{1,2\}$. Then by Lemma \ref{lem:convergence_standardizedsums}.(c), $\left[ \frac{1}{n_k} \sum_{i \in \mathcal{I}_{-nk}} \widehat{\phi}_{i_\ell} \right] = o_p(n_k^{-1/4})$ for $\ell \in \{1,2\}$. Combining terms,
	$$\left\Vert \frac{n_{k}^{1/4}}{n_k}\sum_{i \in \mathcal{I}_{-nk}} \lambda_n(X_i)(\widehat{W_i}\widehat{W}_i' -  \widehat{W_i}^*\widehat{W}_i^{*'}) \right\Vert \le \frac{1}{\delta^2}\sum_{\ell = 0}^2 n_k^{1/4} \Vert \widehat{\Delta}_{nk}\Vert ^{\ell -2}\left[ \frac{1}{n_k} \sum_{i \in \mathcal{I}_{-nk}} \widehat{\phi}_{i_\ell} \right]= o_p(1).$$
	Define $\widehat{Q}_{wy} := \frac{1}{n_k} \sum_{i \in \mathcal{I}_{nk}} \lambda_n(X_i) \widehat{W}_{i}Y_i$ and $Q_{wy}:= \mathbb{E}_{\gamma_n}[\lambda_n(X_i)W_i^*Y_i]$. We can apply similar arguments as above to show that $\widehat{Q}_{wy}- Q_{wy} = o_p(n_k^{-1/4})$.
	
	Substituting the definition, $\theta_{n} := Q_{ww}^{-1}Q_{wy}$ and rearranging terms, $n_k^{1/4}(\widehat{\theta}_{nk} -\theta_{n}) = n_k^{1/4}(\widehat{Q}_{ww}^{-1}\widehat{Q}_{wy}-\theta_{nk}) = \widehat{Q}_{ww}^{-1} n_k^{1/4}(\widehat{Q}_{wy}-Q_{wy})-n_k^{1/4}(Q_{ww}^{-1}Q_{wy}-\widehat{Q}_{ww}^{-1}Q_{wy})$.
	The first term is $o_p(n_k^{-1/4})$. The second term can be rewritten as $n_k^{1/4}(Q_{ww}^{-1}-\widehat{Q}_{ww}^{-1})Q_{wy} = o_p(n_k^{-1/4})$. To prove this, note that $\Vert Q_{ww}^{-1}-\widehat{Q}_{ww}^{-1} \Vert =\Vert Q_{ww}^{-1}(\widehat{Q}_{ww}-Q_{ww})\widehat{Q}_{ww}^{-1}\Vert \le \Vert Q_{ww}^{-1}\Vert \ \Vert \widehat{Q}_{ww}-Q_{ww} \Vert \ \Vert \widehat{Q}_{ww}^{-1}\Vert$.  The right-hand side is $o_p(n_k^{-1/4})$ since $Q_{ww}$ has eigenvalues bounded away from zero and $\widehat{Q}_{ww}$ converges to its true value at rate $n_k^{1/4}$. Combining the results produces $n_k^{1/4}(\widehat{\theta}_{nk} -\theta_{n}) = o_p(n_k^{-1/4})$.
\end{proof}

\begin{lem}[Bound on Operator Norm]
	\label{lem:boundoperatornorm}
	Let $H$ be an $M \times L$ matrix and let $\Vert H \Vert = \sup_{\{z \in \mathbb{R}^L: \Vert z \Vert = 1\}} \Vert Hz \Vert$ be corresponding matrix operator norm. The absolute value of the individual entries of $H$ is bounded  a constant $C$. Then $\Vert H \Vert \le LC\sqrt{M}$. 
\end{lem}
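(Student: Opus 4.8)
The plan is to control the operator norm by bounding $\Vert Hz \Vert$ uniformly over the unit sphere, since $\Vert H \Vert = \sup_{\Vert z \Vert = 1} \Vert Hz \Vert$. First I would fix an arbitrary $z \in \mathbb{R}^L$ with $\Vert z \Vert = 1$ and examine a single coordinate of the image vector, $(Hz)_i = \sum_{j=1}^L H_{ij} z_j$. Applying the triangle inequality together with the entrywise bound $|H_{ij}| \le C$ gives
$$ |(Hz)_i| \le \sum_{j=1}^L |H_{ij}|\,|z_j| \le C \sum_{j=1}^L |z_j| = C \Vert z \Vert_1. $$
Since $\Vert z \Vert_1 \le L \Vert z \Vert_\infty \le L \Vert z \Vert = L$, each coordinate satisfies $|(Hz)_i| \le CL$.

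The second step is to aggregate across the $M$ coordinates. Squaring and summing,
$$ \Vert Hz \Vert^2 = \sum_{i=1}^M (Hz)_i^2 \le M (CL)^2 = C^2 L^2 M, $$
so that $\Vert Hz \Vert \le CL\sqrt{M}$. Because this bound is uniform in $z$ over the unit sphere, taking the supremum delivers $\Vert H \Vert \le CL\sqrt{M}$, which is the claim.

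There is no real obstacle here: the statement is an elementary norm inequality, and the only substantive choice is which intermediate inequality to route through. I deliberately use the crude chain $\Vert z \Vert_1 \le L \Vert z \Vert_\infty \le L$ to land exactly on the stated constant $LC\sqrt{M}$; one could instead invoke Cauchy--Schwarz, $\Vert z \Vert_1 \le \sqrt{L}\,\Vert z \Vert_2$, or simply bound the operator norm by the Frobenius norm $\Vert H \Vert \le \Vert H \Vert_F = \big(\sum_{i,j} H_{ij}^2\big)^{1/2} \le C\sqrt{ML}$, both of which yield the sharper factor $C\sqrt{ML}$ and a fortiori imply $\Vert H \Vert \le LC\sqrt{M}$. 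Since the lemma is only used to convert entrywise bounds into operator-norm bounds in the proofs of Lemmas \ref{lem:derivative_influence_function} and \ref{lem:decomposition_nuisance}, the looser constant is entirely sufficient and I would keep the direct argument for transparency.
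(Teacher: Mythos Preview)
Your proof is correct and follows essentially the same argument as the paper: bound each coordinate of $Hz$ by $CL$ using $|z_\ell|\le 1$ and the entrywise bound, then aggregate over the $M$ coordinates to get $CL\sqrt{M}$. The only cosmetic difference is that you route through $\Vert z\Vert_1$ explicitly, whereas the paper bounds $\sum_\ell |H_{m\ell}||z_\ell|\le LC$ directly.
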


\begin{proof}[Proof of Lemma \ref{lem:boundoperatornorm}]
	Let $H_{m}$ be the $m^{th}$ row and $H_{m\ell}$ be the $(m,\ell)$ entry. Then $\Vert H \Vert =   \sup_{\{z \in \mathbb{R}^L: \Vert z \Vert = 1\}}\sqrt{\sum_{m =1}^M \left[ H_{m}z\right]^2 }= \sup_{\{z \in \mathbb{R}^L: \Vert z \Vert = 1\}} \sqrt{\sum_{m = 1}^M \left[ \sum_{\ell =1}^L H_{m\ell}z_\ell\right]^2  }$.  Since $|z_\ell| \le 1$ and $\Vert H_{m\ell} \Vert \le C$, $\Vert H \Vert = \sup_{\{z \in \mathbb{R}^L: \Vert z \Vert = 1\}}  \sqrt{\sum_{m=1}^M \left[ \sum_{\ell=1}^L\left| H_{m\ell} \right| \ \left| z_\ell \right| \right]^2  } \le CL\sqrt{M}$.
\end{proof}

\end{appendices}

\end{document}